\theoremstyle{plain}
\newenvironment{defn}{\vspace*{1ex}\begin{definition}}{\end{definition}\vspace*{1ex}}
\newif\iffull\fulltrue
\newif\ifdraft\draftfalse
\newenvironment{pfof}[1]{\paragraph*{Proof of {#1}.}}{}
\def\qed{\(\Box\)}
\newcommand\mylongseq[2]{{\overrightarrow{#1}}^{#2}}
\newcommand\balanced{\mathbf{bal}}
\newcommand\Wlang{\mathcal{L}_{\mathtt{w}}}
\newcommand\uty{\delta}
\newcommand\rname[1]{(\rn{#1})}
\newcommand\subterm{\preceq}
\newcommand\ttytosty[1]{\mathbin{[\![}#1\mathbin{]\!]}}
\newcommand\utytosty[1]{\mathbin{[\![}#1\mathbin{]\!]}}
\newcommand\remeps[1]{#1{\uparrow_{\Te}}}
\newcommand\TTE{\Xi}
\newcommand\tty{\xi}
\newcommand\Tempty{\T_{\epsilon}}
\newcommand\Tplus{\T_{+}}
\newcommand\comment[1]{}
\newcommand\tropt[1]{}
\newcommand\TREE[1]{\mathbf{Tree}_{#1}}
\newcommand\notred{\mathrel{\;\not\!\!\red}}
\newcommand\emphd[1]{\emph{#1}}
\newcommand\pK{\p_{\mathtt{ST}}}
\newcommand\SE{\mathcal{K}}
\newcommand\Lang{\mathcal{L}}
\newcommand\LLang{\mathcal{L}_{\mathtt{leaf}}}
\newcommand\leaves{\textbf{leaves}}
\newcommand\DSubst[1]{#1}
\newcommand\mvsub{\theta}
\newcommand\NONTERMS{\mathcal{N}}
\newcommand\tyTOsty[1]{\mathbin{[\![}#1\mathbin{]\!]}}
\newcommand\tr{\Rightarrow}
\newcommand\order{\mathtt{order}}
\newcommand\app[1]{{#1}\,}
\newcommand\tset[1]{\set{#1}}
\newcommand\FV{\mathbf{FV}}
\newcommand\nk[1]{{\footnotesize \color{blue}{[#1 -nk]}}}
\newcommand\T{\mathtt{o}}
\newcommand\DCOL{\mathbin{::}}
\newcommand\IT{\bigwedge}
\newcommand\TE{\Gamma}
\newcommand\sty{\kappa}
\newcommand\ity{\sigma}
\newcommand\COL{\mathbin{:}}
\newcommand\p{\vdash}
\newcommand\GRAM{\mathcal{G}}
\newcommand\TERMS{\Sigma}
\newcommand\Ta{\mathtt{a}}
\newcommand\Tb{\mathtt{b}}
\newcommand\Te{\mathtt{e}}
\newcommand\Teps{\Te}
\newcommand\TT[1]{\mathtt{#1}}
\newcommand\RULES{\mathcal{R}}
\newcommand\red{\longrightarrow}
\newcommand\reds{\longrightarrow^*}
\newcommand\redwith[1]{\longrightarrow_{#1}}
\newcommand\redswith[1]{\longrightarrow^*_{#1}}
\newcommand\arity{\mathtt{ar}}
\newcommand\set[1]{\{#1\}}
\newcommand\Hra{\rightarrow}
\newcommand\ra{\rightarrow}
\newcommand\dom{\mathit{dom}}
\newcommand\REL{\mathcal{R}}
\newcommand\InfruleS[3]{\begin{minipage}{#1\textwidth}\infrule{#2}{#3}\end{minipage}}
\newcommand\InfruleSR[4]{\begin{minipage}{#1\textwidth}\infrule[#2]{#3}{#4}\end{minipage}}
\newcommand\UE{\Gamma}
\newcommand\myemph[1]{\emph{#1}}
\newcommand{\ot}{\mathtt{o}}
\newcommand{\lap}{\langle}
\newcommand{\rap}{\rangle}
\newcommand{\urr}[2]{#1 \DCOL_{\mathrm{u}} #2}
\newcommand{\brr}[2]{#1 \DCOL_{\mathrm{b}} #2}
\newcommand{\envdom}[1]{\dom(#1)}
\newcommand{\ibr}{\mathbin{*}}
\newcommand{\des}[1]{#1^{\bullet}} %
\newcommand{\sbs}[1]{#1^{\bullet}} %
\newcommand{\lse}{\lesssim}
\newcommand{\gse}{\gtrsim}
\newcommand{\se}{\sim}
\newcommand{\vsim}{\sim_{\mathrm{v}}}
\newcommand{\las}{\sqsubseteq}
\newcommand{\Y}{\ensuremath{Y}}
\newcommand{\lambday}{{\ensuremath{\lambda}\Y}}
\newcommand{\lambdayg}[1]{\ensuremath{\lambday_{\!#1}}}
\newcommand{\defe}{:=}
\newcommand{\odr}[1]{\order(#1)}
\newcommand{\rt}{\mathtt{oR}}
\newcommand{\rf}[1]{(#1)^{\sharp}}
\renewcommand{\st}[1]{(#1)^{\sharp}_{\ot}}
\newcommand{\oty}{\rho}
\newcommand{\RE}{\Phi}
\newcommand{\fg}[1]{(#1)^{\flat}}
\newcommand{\appseq}[2]{{\overrightarrow{#1}}^{#2}}
\newcommand{\ind}[2]{#1 \le #2}
\newcommand{\appi}[4]{\app{#1}{\appseq{#2}{\ind{#3}{#4}}}}
\newcommand{\reason}[1]{\text{(#1)}}
\newenvironment{myitemize}{%
\begin{itemize}\setlength{\parindent}{15pt}
}{%
\end{itemize}
}
\newcommand{\myitem}{\item}
\newcommand{\mn}[2]{#1 \setminus #2}
\newcommand{\res}[1]{\lap #1 \rap}
\newcommand{\eps}{\Te} %
\newcommand{\onm}{All}
\newcommand{\rnm}{Right}
\newcommand{\rtVar}{RTy-Var}
\newcommand{\rtBrO}{RTy-Br\onm}
\newcommand{\rtBrR}{RTy-Br\rnm}
\newcommand{\rtAlph}{RTy-Alph}
\newcommand{\rtEpsO}{RTy-Eps\onm}
\newcommand{\rtEpsR}{RTy-Eps\rnm}
\newcommand{\rtNtO}{RTy-Nt\onm}
\newcommand{\rtNtR}{RTy-Nt\rnm}
\newcommand{\rtApp}{RTy-App}
\newcommand{\rtSet}{RTy-Set}
\newcommand{\rtAbs}{RTy-Abs}
\newcommand{\VT}{\mathbb{V}}
\newcommand{\ip}[1]{[\![#1]\!]}
\newcommand{\coset}[1]{[#1]_{\vsim}}
\newcommand{\cterm}[1]{\mathrm{Term}_{#1}}
\newcommand{\br}{\TT{br}}
\newcommand{\Br}{\mathit{Br}}
\newcommand{\pp}{\p_{\mathrm{s}}}
\newcommand{\TEpz}{\TE'}
\newcommand{\empword}{\varepsilon}
\newcommand\wtt[1]{#1^{\star}} %
\newcommand{\LLangE}{\LLang^{\empword}}
\newcommand{\conv}[1]{{#1}^{\circ}}
\newcommand\ipc[1]{\ip{#1}^{\circ}}
\newcommand\cfun{\to}
\newcommand\id{\mathit{id}}
\newcommand\concat{\cdot}
\newcommand\sem[1]{\mathbf{Tr}_{#1}}
\newcommand\vtg[1]{\sem{#1}}
\newcommand\R{R'}
\newcommand\upsymbol{\triangleleft}
\newcommand\up[1]{{#1}^{\upsymbol}}
\newcommand\pup[1]{\up{(#1)}}
\newcommand\eup[2]{{(#1)}^{\upsymbol}_{#2}}
\newcommand\nSE{[\,]}
\newcommand\cSE[2]{ #1 , #2 }
\newcommand\K{K}
\newcommand\asd[1]{{\footnotesize \color[RGB]{105,10,130}[#1 -asd]}}
\renewcommand\nk[1]{}
\renewcommand\asd[1]{}
\title{On Word and Frontier Languages of Unsafe Higher-Order Grammars}
\title{On Word and Frontier Languages of Unsafe Higher-Order Grammars\footnote{%
A full version~\cite{1604.01595} of the paper is available at
\url{http://arxiv.org/abs/1604.01595}.}}
\titlerunning{On Unsafe Path and Frontier Languages}
\author[1]{Kazuyuki Asada}
\author[1]{Naoki Kobayashi}
\affil[1]{The University of Tokyo}
\subjclass{F.4.3 Formal Languages}
\keywords{intersection types, higher-order grammars}%
\begin{document}
\maketitle
\ifdraft
version: \today
\fi
\setcounter{footnote}{0}
\iffull
\begin{abstract}
Higher-order grammars are an extension of regular and context-free grammars,
where non-terminals may take parameters. They have been extensively studied in
1980's, and restudied recently in the context of model checking and program verification.
We show that the class of unsafe order-(\(n+1\)) word languages coincides with 
the class of frontier languages of unsafe order-\(n\) tree languages. 
We use intersection types for transforming an order-(\(n+1\)) word grammar
to a corresponding order-\(n\) tree grammar.
The result has been proved for safe languages by Damm in 1982, but 
it has been open for unsafe languages, to our knowledge.
Various known results on higher-order grammars can be obtained as almost immediate
corollaries of our result.
\end{abstract}
\else
\begin{abstract}
Higher-order grammars are an extension of regular and context-free grammars,
where non-terminals may take parameters. They have been extensively studied in
1980's, and restudied recently in the context of model checking and program verification.
We show that the class of unsafe order-(n+1) word languages coincides with 
the class of frontier languages of unsafe order-n tree languages. 
We use intersection types for transforming an order-(n+1) word grammar
to a corresponding order-n tree grammar.
The result has been proved for safe languages by Damm in 1982, but 
it has been open for unsafe languages, to our knowledge.
Various known results on higher-order grammars can be obtained as almost immediate
corollaries of our result.
\end{abstract}
\fi %
\section{Introduction}
\label{sec:intro}
Higher-order grammars are an extension of regular and context-free grammars,
where %
non-terminals may take trees or (higher-order)
functions on trees as parameters.
They were extensively studied in the 1980's~\cite{DammTCS,Engelfriet91,EngelfrietHTT},
and recently reinvestigated in the context of model checking~\cite{Knapik01TLCA,Ong06LICS}
and applied to program verification~\cite{Kobayashi13JACM}.

The present paper shows that the class of unsafe order-(\(n+1\)) word languages coincides with 
the class of ``frontier languages'' of unsafe order-\(n\) tree languages.
Here, the frontier of a tree is the sequence of symbols that occur in the leaves of the tree
from left to right,
and the frontier language of a tree language consists of the frontiers of elements of
the tree language.
The special case where \(n=0\) corresponds to the well-known fact that
the frontier language of a regular tree language is a context-free language.
The result has been proved by Damm~\cite{DammTCS} for grammars with the 
safety restriction (see~\cite{Salvati15OI} for a nice historical account of
the safety restriction), but it has been open for unsafe grammars, to our 
knowledge.\footnote{Kobayashi et al.~\cite{KIT14FOSSACS} mentioned the result, referring
to the paper under preparation: ``On Unsafe Tree and Leaf Languages,''  which is actually
the present paper.}

Damm's proof relied on the safety restriction (in particular, the fact
that variable renaming is not required for safe
grammars~\cite{Ong09safety}) and does not apply (at least directly) to
the case of unsafe grammars. We instead use intersection types to
transform an order-(\(n+1\)) word grammar \(\GRAM\) to an order-\(n\)
tree grammar \(\GRAM'\) such that the frontier language of \(\GRAM'\)
coincides with the language generated by \(\GRAM\). Intersection types
have been used for recent other studies of higher-order grammars and
model
checking~\cite{Kobayashi13JACM,KIT14FOSSACS,Kobayashi13LICS,KO09LICS,SalvatiW15,Parys14,KMS13HOSC,Tsukada14LICS};
our proof in the present paper provides even more evidence that
intersection types are a versatile tool for studies of higher-order
grammars. Compared with the previous work on intersection types for higher-order grammars,
the technical novelties include: (i) our intersection types (used in Section~\ref{sec:leaf})
are mixtures of non-linear and linear intersection types and (ii) our type-based transformation involves
global restructuring of terms. These points have made the correctness 
of the transformations non-trivial and delicate.

As stressed by Damm~\cite{DammTCS} at the beginning of his paper, 
the result will be useful for analyzing properties of
higher-order languages by induction on the order of grammars. Our result allows
properties on (unsafe) order-\(n\) languages to be 
reduced to those on order-\((n-1)\) tree languages, and then the latter
may be studied by investigating those on the path languages of order-\((n-1)\) tree languages,
which are order-\((n-1)\) word languages. 
As a demonstration of this, we discuss an application to (a special case of) 
the diagonal problem 
for unsafe languages~\cite{Parys16diagonality} in Section~\ref{sec:app}, along with other applications.

The rest of this paper is structured as follows. Section~\ref{sec:pre} reviews the definition of
higher-order grammars, and states the main result.
Sections~\ref{sec:leaf} and \ref{sec:leaf2} prove the result by providing
 the (two-step) transformations from order-\((n+1)\) word grammars to order-\(n\) tree grammars.
Section~\ref{sec:app} discusses applications of the result. Section~\ref{sec:related} discusses
related work and Section~\ref{sec:conc} concludes the paper.
\iffull\else For the space restriction, we omit some details and proofs,
which are found in the full version~\cite{1604.01595}. \fi
\section{Preliminaries}
\label{sec:pre}

This section defines higher-order grammars and the languages generated by them,
and then explains the main result.
Most of the following definitions follow those in \cite{KIT14FOSSACS}.

A higher-order grammar consists of non-deterministic rewriting rules of the form \(A \to t\), where
\(A\) is a non-terminal and \(t\) is a simply-typed \(\lambda\)-term that may contain non-terminals
and terminals (tree constructors).
\begin{defn}[types and terms]
The set of \emphd{simple types},\footnote{We sometimes call simple types 
\emph{sorts} in this paper, to avoid confusion
with intersection types introduced later for grammar transformations.} ranged over by \(\sty\), is given by:
\(
\sty %
::= \T \mid \sty_1\ra \sty_2
\).
The order and arity of a simple type \(\sty\), written
\(\order(\sty)\) and \(\arity(\sty)\), are defined respectively by:
\[
\begin{array}{l}
\order(\T)=0\qquad \order(\sty_1\ra\sty_2)=\max(\order(\sty_1)+1,\order(\sty_2))\\
\arity(\T)=0\qquad \arity(\sty_1\ra\sty_2)=1+\arity(\sty_2)\\
\end{array}
\]
The type \(\T\) describes trees, and \(\sty_1\to\sty_2\) describes functions
from \(\sty_1\) to \(\sty_2\).
The set of \(\lambda\)-\emphd{terms}, ranged over by \(t\), is defined by:
\(
t %
::= x \mid A \mid a \mid \app{t_1}{t_2}\mid \lambda x\COL\sty.t
\).
Here, \(x\) ranges over variables, \(A\) over symbols called non-terminals,
and \(a\) over symbols called terminals. We assume that each terminal \(a\) has
a fixed arity; we write \(\TERMS\) for the map from terminals to their arities.
A term \(t\) is called an \emphd{applicative term} (or simply
a \emphd{term}) if it does not contain \(\lambda\)-abstractions.
A (simple) type environment \(\SE\) is a
sequence of type bindings of the form \(x\COL\sty\)
such that if \(\SE\) contains \(x\COL\sty\) and \(x'\COL\sty'\) in different positions
then \(x \neq x'\).
In type environments, {non-terminals} are also treated as variables. 
A \(\lambda\)-term \(t\) has type \(\sty\) under \(\SE\) if \(\SE\pK t:\sty\)
is derivable from the following typing rules.\\
\InfruleS{0.45}{}{\SE,\,x\COL\sty,\,\SE' \pK x:\sty}
\InfruleS{0.45}{}{\SE \pK a:\underbrace{\T\ra\cdots\ra\T}_{\TERMS(a)}\ra\T}

\InfruleS{0.45}{\SE\pK t_1:\sty_2\ra\sty\andalso \SE\pK t_2:\sty_2}
  {\SE\pK \app{t_1}{t_2}:\sty}
\InfruleS{0.45}{\SE,\,x\COL\sty_1 \pK t:\sty_2}
  {\SE\pK \lambda x\COL\sty_1.t:\sty_1\ra\sty_2}\\
We call \(t\) a (finite, \(\TERMS\)-ranked) \emphd{tree} if \(t\) is an applicative
term consisting of only terminals, and \(\pK t:\T\) holds.
We write \(\TREE{\TERMS}\) for the set of \(\TERMS\)-ranked trees,
and use the meta-variable \(\pi\) for a tree.
\end{defn}

We often omit type annotations and just write 
\(\lambda x.t\) for \(\lambda x\COL\sty.t\).
We consider below only well-typed \(\lambda\)-terms of the form
\(\lambda x_1.\cdots\lambda x_k.t\), where \(t\) is an applicative term.
We are now ready to define higher-order grammars.

\begin{defn}[higher-order grammar]
A \emphd{higher-order grammar} is a quadruple \((\TERMS,\NONTERMS,\RULES,S)\), where
\begin{inparaenum}[(i)]
\item \(\TERMS\) is a ranked alphabet;
\item \(\NONTERMS\) is a map from a finite set of non-terminals 
to their types;
\item \(\RULES\) is a finite set of \emphd{rewriting rules} of the form
\( A\,\to \lambda x_1.\cdots \lambda x_\ell.t\),
where \(\NONTERMS(A) = \kappa_1\ra\cdots\ra\kappa_\ell\ra\T\),
\(t\) is an applicative term, and 
\(\NONTERMS,x_1\COL\kappa_1,\ldots,x_\ell\COL\kappa_\ell\pK t:\T\) holds for some 
\iffull
\(\kappa_1,\ldots,\kappa_\ell\)\footnote{%
We assume some total order on non-terminals
and %
regard \(\NONTERMS\) as a sequence;
the condition \(\NONTERMS,x_1\COL\kappa_1,\ldots,x_\ell\COL\kappa_\ell\pK t:\T\)
does not depend on the order.}.
\else
\(\kappa_1,\ldots,\kappa_\ell\).
\fi
\item \(S\) is a non-terminal called \emphd{the start symbol},
and \(\NONTERMS(S)=\T\). 
\end{inparaenum}
The \emphd{order} of a grammar \(\GRAM\), written
\(\order(\GRAM)\), is the largest order 
of the types of non-terminals.
We sometimes write \(\TERMS_\GRAM,\NONTERMS_\GRAM,\RULES_\GRAM,S_\GRAM\) for the four components
of \(\GRAM\).

For a grammar \(\GRAM=(\TERMS,\NONTERMS,\RULES,S)\), 
the rewriting relation \(\redwith{\GRAM}\) is defined by:\\[1ex]
\InfruleS{0.46}{(A\to \lambda x_1.\cdots \lambda x_k.t)\in \RULES}
{A\,t_1\,\cdots\,t_k \redwith{\GRAM} [t_1/x_1,\ldots,t_k/x_k]t}
\InfruleS{0.48}{t_i\redwith{\GRAM}t_i'\andalso i\in\set{1,\ldots,k}\andalso \TERMS(a)=k}
 {a\,t_1\,\cdots\,t_k \redwith{\GRAM}
  a\,t_1\,\cdots\,t_{i-1}\,{t'_{i}}\,t_{i+1}\,\cdots\,t_k
}\\[1ex]
Here, \([t_1/x_1,\ldots,t_k/x_k]t\) is the term obtained by substituting 
\(t_i\) for the free occurrences of \(x_i\) in \(t\).
We write \(\redswith{\GRAM}\) for the reflexive transitive closure of \(\redwith{\GRAM}\).

The \emphd{tree language generated by} \(\GRAM\), written \(\Lang(\GRAM)\), is
the set \(\set{\pi\in\TREE{\TERMS_\GRAM} \mid S\redswith{\GRAM} \pi}\).
We call a grammar \(\GRAM\) a \emphd{word grammar} if 
all the terminal symbols have arity \(1\) except the special terminal \(\TT{e}\), whose
arity is \(0\).
The \emphd{word language} generated by a word grammar \(\GRAM\), written \(\Wlang(\GRAM)\), is 
\(\set{a_1\cdots a_n \mid a_1(\cdots (a_n\,\TT{e})\cdots)\in\Lang(\GRAM)}\).
The frontier word of a tree \(\pi\), written \(\leaves(\pi)\),
is the sequence of symbols in the leaves of \(\pi\). It is
defined inductively by:
\(\leaves(a) = a\)  when \(\TERMS(a)=0\), and \(\leaves(a\,\pi_1\,\cdots\,\pi_k) =
\leaves(\pi_1)\cdots \leaves(\pi_k)\) when \(\TERMS(a)=k>0\).  
The \emphd{frontier language} generated by \(\GRAM\), written \(\LLang(\GRAM)\),
is the set:
\(\set{\leaves(\pi) \mid S\redswith{\GRAM} \pi\in\TREE{\TERMS_\GRAM}}\).
In our main theorem,
we assume that there is a special nullary symbol \(\Te\)
and consider \(\TT{e}\in \LLang(\GRAM)\) as the empty word \(\empword\); i.e.,
we consider \(\LLangE(\GRAM)\) defined by:
\[
\LLangE(\GRAM) \defe
(\LLang(\GRAM) \setminus \set{\Te}) \cup \set{\empword \mid \Te \in \LLang(\GRAM)}.
\]
\end{defn}

We note that the classes of order-0 and order-1 word languages coincide with those
of regular and context-free languages respectively.
We often write 
\(A\,x_1\,\cdots\,x_k\to t\) for the rule \(A\to \lambda x_1.\cdots \lambda x_k.t\).
When considering the frontier language of a tree grammar, we assume,
without loss of generality, that
the ranked alphabet \(\TERMS\) has a unique binary symbol \(\TT{br}\), and 
that all the other terminals have arity \(0\).

\begin{example}
\label{ex:grammar}
Consider the order-2 (word) grammar \(\GRAM_1 = (\set{\TT{a}\COL 1, \TT{b}\COL 1, \TT{e}\COL 0},
\set{S\COL \T, F\COL (\T\to\T)\to \T, A\COL (\T\to\T)\to (\T\to\T),
B\COL (\T\to\T)\to (\T\to\T)},\REL_1,S)\), where \(\REL_1\) consists of:
\[
\begin{array}{l}
S\to F\,\TT{a}\qquad S\to F\,\TT{b}\qquad A\,f\,x \to \TT{a}(f\,x)\qquad B\,f\,x \to \TT{b}(f\,x),\\
F\,f\to f(f\,\TT{e})\qquad
F\,f\to F\,(A\,f)\qquad
F\,f\to F\,(B\,f).
\end{array}
\]
\(S\) is reduced, for example, as follows.
\[S \red{} F\,\TT{b}\red{} F\,(A\,\TT{b})
\red{} (A\,\TT{b})(A\,\TT{b}\,\TT{e}) 
\red{} \TT{a}\,(\TT{b}\,(A\,\TT{b}\,\TT{e}))
\red{}
\TT{a}(\TT{b}\,(\TT{a}(\TT{b}\,\TT{e}))).\]
The word language \(\Wlang(\GRAM_1)\) is
\(\set{ ww\mid w\in\set{\TT{a},\TT{b}}^+}\).

Consider the order-1 (tree) grammar \(\GRAM_2 = (\set{\TT{br}\COL 2,
\TT{a}\COL 0,\TT{b}\COL 0,\TT{e}\COL 0},
\set{S\COL\T,F\COL\T\to\T}, \REL_2,S)\), where \(\REL_2\) consists of:
\[
S\to F\,\TT{a}\qquad S\to F\,\TT{b}\qquad F\,f\to \TT{br}\,f\,f\qquad
F\,f\to F(\TT{br}\;\TT{a}\,f)\quad
F\,f\to F(\TT{br}\;\TT{b}\,f).
\]
The frontier language \(\LLangE(\GRAM_2)\) coincides with \(\Wlang(\GRAM_1)\) above.

\end{example}

The following is the main theorem we shall prove in this paper.
\begin{theorem}
\label{th:main}
For any order-\((n+1)\) word grammar \(\GRAM\) (\(n\geq 0\)),
there exists an order-\(n\)
tree grammar \(\GRAM'\) %
such that \(\Wlang(\GRAM) = \LLangE(\GRAM')\).
\end{theorem}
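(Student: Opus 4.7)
The plan is to realize the theorem by a two-stage transformation that lowers the order by one, using intersection types to justify that the word language is preserved as the frontier language of the resulting tree grammar. The starting observation is that in a word grammar every subterm of type $\T$ is a continuation-passing-style computation whose eventual continuation slot becomes $\TT{e}$; thus a function of type $\T\to\T$ may be thought of as ``the word prefix it prepends to its argument'', which is itself a word and hence representable as a tree with that word as its frontier. Lowering the order accordingly amounts to replacing each argument position of type $\T\to\T$ by one of type $\T$, while using $\TT{br}$ to handle duplication of such arguments.

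For the first stage (Section~\ref{sec:leaf}) I would design an intersection type system, mixing non-linear and linear components, so that every subterm whose simple type contains $\T\to\T$ is refined by a typing that records how often, and in what contexts, the $\T$-argument is consumed. The linear part of an intersection type enforces that the left-to-right order of leaf emissions in the tree matches the order of terminal emissions in the word reduction, while the non-linear part accommodates arbitrary duplication of higher-order values. Using this system, the word grammar $\GRAM$ is put into an equivalent intermediate form in which all reductions respect the refined typing.

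For the second stage (Section~\ref{sec:leaf2}) I would perform a global term-restructuring defunctionalization directed by the intersection typing. Each non-terminal $A$ of simple type $\sty$ is replaced by a family of non-terminals $A^{\sigma}$, one per refinement $\sigma$ of $\sty$; each abstraction $\lambda x\COL(\T\to\T).t$ with $x$ at refined type $\sigma_1\wedge\cdots\wedge\sigma_k$ becomes an abstraction taking $k$ arguments of type $\T$, one per component; each application is expanded to pass one copy per component, and multiple occurrences of a $\T$-argument that should be concatenated in the frontier are combined via $\TT{br}$. The net effect is that every occurrence of $\T\to\T$ inside any type is flattened to $\T$, dropping the grammar's order from $n+1$ to $n$, and the final terminal $\TT{e}$ of the word grammar is interpreted in the tree grammar by the special nullary symbol whose frontier contribution is the empty word, yielding the $\LLangE$ convention.

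The hard part will be the correctness of the combined transformation: that the frontier of every tree generated by $\GRAM'$ is a word generated by $\GRAM$, and conversely. The forward direction asks for a simulation of a $\GRAM'$-reduction by a $\GRAM$-reduction, which is delicate because the tree grammar commits to a particular duplication pattern (through the choice of refinement $\sigma$) earlier than the word grammar does; one has to reshuffle $\TT{br}$-nodes to match the order of leaf emissions. The backward direction depends crucially on a subject-expansion property for the intersection type system so that any refined typing of a $\GRAM$-reduct can be pulled back to a refined typing before the step, giving a witness for the corresponding $\GRAM'$-derivation. The interaction between the linear/non-linear mixture in the typing, the global restructuring of terms, and the need to preserve left-to-right leaf order is the chief technical obstacle, and is what makes the proof genuinely more involved than in Damm's safe case.
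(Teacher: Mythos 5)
Your overall strategy---an intersection-type-directed transformation mixing linear and non-linear components, replicating arguments per intersection component, flattening \(\T\to\T\) to \(\T\), re-attaching detached order-0 material with \(\TT{br}\), and using subject expansion for the word-to-frontier direction---is essentially the route the paper takes. But there is a concrete gap in how you dispose of the terminal \(\Te\). When an order-0 argument is detached and re-attached as a \(\TT{br}\)-sibling, the site where it was consumed must still be filled by a term of type \(\T\), and the only available filler is \(\Te\); for instance, transforming \(A\,(\Tb\,\Te)\) with \(A\,x\to\Ta\,x\) produces the tree \(\TT{br}\,(\TT{br}\,\Ta\,\Te)\,(\TT{br}\,\Tb\,\Te)\), whose frontier is \(\Ta\Te\Tb\Te\), not \(\Ta\Tb\). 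Your proposal claims these occurrences are harmless because \(\TT{e}\) ``is interpreted \ldots by the special nullary symbol whose frontier contribution is the empty word, yielding the \(\LLangE\) convention.'' That is not what \(\LLangE\) does: by definition it only replaces the one-letter frontier word \(\Te\) by \(\empword\); it does not erase occurrences of \(\Te\) inside longer frontiers. So the construction you outline yields at best \(\Wlang(\GRAM)=\remeps{\LLang(\GRAM')}\), which is strictly weaker than the claimed \(\Wlang(\GRAM)=\LLangE(\GRAM')\); your two stages (first typing, then restructuring) do not include the step that bridges this difference.

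Closing this gap is not cosmetic: the paper devotes a whole second type-directed transformation to it, with intersection types \(\Tempty\) and \(\Tplus\) recording whether a subterm can generate a tree whose leaves are all \(\Te\), so that \(\TT{br}\,t_0\,t_1\) may be collapsed to its non-empty side (or to \(\Te\)). Its soundness is delicate because assigning \(\Tempty\) must be witnessed by an actual derivation---otherwise a divergent non-terminal such as \(A\to A\) could be typed \(\Tempty\) and erased, wrongly creating trees---which is why the rule for non-terminals carries an extra premise. If instead you intend your single restructuring pass never to emit spurious \(\Te\)'s, you would have to decide at transformation time whether each order-0 argument generates the empty word, which is exactly the \(\Tempty\)/\(\Tplus\) information; the work cannot be waved away. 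A second, smaller omission: the correctness of attaching the detached argument on the right rests on a key lemma stating that in the image of a term whose (linear, unbalanced) \(\T\)-variable \(x\) was erased, the placeholder sits at the rightmost leaf position, so that \(([\Te/x]v)\ibr U\se[U/x]v\); you gesture at ``reshuffling \(\TT{br}\)-nodes'' but give no mechanism, and this lemma is precisely where the linear part of the typing does its work.
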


The converse of the above theorem also holds:
\begin{theorem}
\label{th:main-converse}
For any order-\(n\) tree grammar \(\GRAM'\) 
such that no word in \(\LLangE(\GRAM')\) contains \(\Te\),
there exists a word grammar \(\GRAM\) {of order at most \(n+1\)} such that
\(\Wlang(\GRAM) = \LLangE(\GRAM')\).
\end{theorem}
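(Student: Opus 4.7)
The plan is to construct \(\GRAM\) from \(\GRAM'\) via a continuation-passing-style (CPS) transformation that raises each type by one order. Define \(\phi(\T) = \T \to \T\) and \(\phi(\sty_1 \to \sty_2) = \phi(\sty_1) \to \phi(\sty_2)\); a routine induction gives \(\order(\phi(\sty)) = \order(\sty) + 1\). In \(\GRAM\), each non-terminal \(A \COL \kappa\) of \(\GRAM'\) becomes \(A \COL \phi(\kappa)\) with an extra continuation argument of type \(\T\); each arity-\(0\) terminal \(a \neq \Te\) becomes an arity-\(1\) terminal (prepending \(a\) to the continuation); and \(\Te\) remains arity-\(0\), representing the empty word. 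Branching \(\TT{br}\) is encoded by a fresh non-terminal \(B \COL (\T \to \T) \to (\T \to \T) \to (\T \to \T)\) with rule \(B\,y_1\,y_2\,k \to y_1\,(y_2\,k)\), and \(\Te\) in argument position by a fresh \(E \COL \T \to \T\) with \(E\,k \to k\); their orders \(2\) and \(1\) are at most \(n + 1\) whenever \(n \geq 1\). For each rule \(A\,x_1 \cdots x_\ell \to t\) of \(\GRAM'\), the translated rule is \(A\,x_1 \cdots x_\ell\,k \to \phi(t)\,k\), where \(\phi\) extends homomorphically to applicative terms. For the edge case \(n = 0\), where \(\TT{br}\) is always fully applied by well-typedness, \(B\) is eliminated by inlining \(B\,y_1\,y_2\,k\) as \(y_1\,(y_2\,k)\) inside rule right-hand sides, keeping \(\GRAM\) at order \(1\). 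A fresh start symbol \(S_0 \COL \T\) with rule \(S_0 \to S\,\Te\) completes the construction, so \(\order(\GRAM) \leq n + 1\).

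For the forward inclusion \(\LLangE(\GRAM') \subseteq \Wlang(\GRAM)\), I would prove a substitution lemma \(\phi([t/x]s) = [\phi(t)/x]\phi(s)\) by induction on \(s\), together with a simulation lemma: for applicative \(t \COL \T\), each step \(t \redwith{\GRAM'} t'\) lifts to a reduction sequence in \(\GRAM\). Iterating along \(S \redswith{\GRAM'} \pi\) and then normalizing \(\phi(\pi)\,\Te\) via the \(B\)- and \(E\)-rules yields the word-tree encoding \(a_1(a_2(\cdots(a_m\,\Te)))\) of \(\leaves(\pi) = a_1 \cdots a_m\) (or simply \(\Te\) when \(\pi = \Te\)). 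The hypothesis that no word in \(\LLangE(\GRAM')\) contains \(\Te\) guarantees that this normalization yields exactly the intended word.

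The backward inclusion \(\Wlang(\GRAM) \subseteq \LLangE(\GRAM')\) is the main obstacle. The strategy is to classify each step of a reduction \(S_0 \redswith{\GRAM} \pi_w\) (with \(\pi_w\) the word-tree of \(w\)) as either an \emph{original-rule} step (inherited from \(\GRAM'\) up to the extra continuation) or an \emph{administrative} step (a \(B\)- or \(E\)-rewrite, or its inlined analogue), and to show that administrative steps can be postponed past original-rule steps, since they merely restructure continuations and create no new redexes for original rules. After postponement, the original-rule segment projects by inverse CPS to a reduction \(S \redswith{\GRAM'} \pi\), while the trailing administrative segment evaluates \(\phi(\pi)\,\Te\) to \(\pi_w\). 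Making this precise requires a structural invariant stating that every term reachable from \(S_0\) in \(\GRAM\) decomposes, modulo partial administrative unfolding, as \(\phi(t)\,\Te\) for some \(t\) reachable from \(S\) in \(\GRAM'\). The delicate case is an original-rule step happening inside an argument of a partially unpacked \(B\): one must show the invariant is preserved after repacking the reduced subterm into CPS form---an argument standard in CPS-correctness proofs but requiring care in this rewriting-under-terminals setting.
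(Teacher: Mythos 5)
Your construction is exactly the paper's: the type translation \(\T\mapsto\T\to\T\) applied homomorphically, nullary terminals turned unary, \(\TT{br}\) and \(\Te\) replaced by non-terminals \(\mathit{Br}\) and \(E\), an extra order-0 continuation argument on every rule, a start rule \(S'\to S\,\Te\), and inlining of \(\mathit{Br}\) in the order-0 case (the paper instead introduces a non-terminal \(A_{s,t}\) per occurrence of \(\TT{br}\,s\,t\), which is the same thing). Where you genuinely diverge is the correctness proof. The paper argues denotationally: it interprets \(\GRAM'\) both in the standard cpo model and in a ``word-function'' model reflecting the translation, defines a logical relation at base type saying that a tree language and a function on word languages agree up to frontier-concatenation, and gets both inclusions at once from the abstraction theorem (checking only the constants: \(\Te\) is the identity, \(\TT{br}\) is composition, \(Y\) by chain-closure) together with adequacy. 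This buys a short, uniform argument that never has to analyze reduction sequences. Your syntactic route is more elementary but has to confront exactly the bookkeeping the paper avoids.

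And in that bookkeeping there is a concrete flaw. In the grammars of this paper, reduction is permitted only at the head and inside arguments of \emph{terminals}, never inside arguments of non-terminals. Consequently, an administrative \(\mathit{Br}\)-unfolding is frequently a \emph{prerequisite} for an original-rule redex to become reducible: in \(\mathit{Br}\,(A\,\bar{u})\,v\,k\) the non-terminal \(A\) sits in an argument position of \(\mathit{Br}\) and cannot fire until the administrative step exposes it as \((A\,\bar{u})\,(v\,k)\). So your claim that administrative steps ``create no new redexes for original rules'' and can be postponed past them is false, and the planned factorization into an original-rule segment followed by an administrative segment does not exist in general. The same issue infects the forward direction as literally stated: for \(t=\TT{br}\,t_1\,t_2\) with \(t_1\redwith{\GRAM'}t_1'\), one does \emph{not} have \(\phi(t)\,\Te\redswith{\GRAM}\phi(t')\,\Te\), since the image of the redex is buried under the non-terminal \(\mathit{Br}\); the lifting only holds up to interleaved administrative unfoldings (equivalently, one must prove a continuation-parameterized statement such as: if \(t\redswith{\GRAM'}\pi\) and \(K\) reduces to the word-tree of \(w\), then \(\phi(t)\,K\) reduces to the word-tree of \(\remeps{\leaves(\pi)}\cdot w\)). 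The structural invariant you mention at the end --- every term reachable from \(S'\) in \(\GRAM\) is, modulo partial administrative unfolding, of the form \(\phi(t)\,\Te\) for some \(t\) reachable in \(\GRAM'\) --- is the correct repair and makes both directions go through without any postponement lemma; but as written your backward argument rests on the unfixable commutation claim rather than on that invariant, so the proof as proposed has a genuine gap.
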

Since the construction of \(\GRAM\) is easy, we sketch it here;
\iffull
see Appendix~\ref{sec:ProofConverseMainTheorem} for a proof.
\else
\fi
For \(n\geq 1\), the grammar \(\GRAM\) is obtained by 
{(i) changing the arity of each nullary terminal 
\(a\; (\neq\Te)\) to one, i.e., \(\TERMS_{\GRAM}(a)\defe 1\),
(ii) replacing the terminal \(\Te\) with a new non-terminal \(E\) of type \(\T\to\T\), defined by
\(E\,x\to x\), and also} the unique binary terminal
\(\TT{br}\) with a new non-terminal \(\mathit{Br}\) of type \((\T\to\T)\to(\T\to\T)\to(\T\to\T)\), 
defined by \(\mathit{Br}\,f\,g\,x\to f(g\, x)\),
(iii) applying \(\eta\)-expansion {to the right hand side of each (original) rule} to add an order-0 argument,
and (iv) adding new start symbol \(S'\) with rule \(S' \to S \Te\).
For example, given the grammar \(\GRAM_2\) above,
the following grammar is obtained:
\[
\begin{array}{l}
{S' \to S\,\Te} \qquad
S\,x \to F\,{\Ta}\,x \qquad
S\,x\to F\,{\Tb}\,x\\
F\,f\,x\to \mathit{Br}\,f\,f\,x\qquad
F\,f\,x\to F(\mathit{Br}\,{\Ta}\,f)\,x\qquad
F\,f\,x\to F(\mathit{Br}\,{\Tb}\,f)\,x\\
{E\,x\to x} \qquad
\mathit{Br}\,f\,g\,x\to f(g\,x).
\end{array}
\]

Theorem~\ref{th:main} is proved by two-step grammar transformations, both of which
are based on intersection types.
In the first step, we transform an order-\((n+1)\) word grammar \(\GRAM\)
to an order-\(n\) tree grammar \(\GRAM''\) such that 
\(\Wlang(\GRAM) = \remeps{\LLang(\GRAM'')}\),
where \(\remeps{\Lang}\) is the word language
obtained from \(\Lang\) by removing {all the occurrences of} the special terminal \(\Te\); that is,
the frontier language of \(\GRAM''\) is almost the same as \(\Wlang(\GRAM)\), except
that the former may contain multiple occurrences of the special, dummy symbol \(\TT{e}\).
In the second step, we clean up the grammar to eliminate \(\TT{e}\) (except that a singleton tree
\(\TT{e}\) may be generated when \(\epsilon\in\Wlang(\GRAM)\)).
The first and second steps shall be formalized in Sections~\ref{sec:leaf} and
\ref{sec:leaf2} respectively.

For the target of the transformations,
we use the following extended terms, in which a \emph{set} of terms may occur
in an argument position:
\[
\begin{aligned}
u \mbox{ (extended terms) } &::= x \mid A \mid a \mid u_0 U \mid \lambda x. u
\\
U &::= \tset{u_1,\ldots,u_k} \ {(k \ge 1)}.
\end{aligned}
\]
Here, \(u_0\,u_1\) is interpreted as just a shorthand for \(u_0\tset{u_1}\).
Intuitively, \(\tset{u_1,\ldots,u_k}\) is considered a non-deterministic choice
\(u_1+\cdots+u_k\), which (lazily) reduces to \(u_i\) non-deterministically.
The typing rules are extended accordingly by:\\[1ex]
\InfruleS{0.48}{\SE\pK u_0:\sty_1\ra\sty\andalso \SE\pK U:\sty_1}
  {\SE\pK \app{u_0}{U}:\sty}
\InfruleS{0.48}{\SE\pK u_i:\sty \mbox{ for each $i\in\set{1,\ldots,k}$}}
  {\SE\pK \set{u_1,\ldots,u_k}:\sty}

An \emph{extended higher-order grammar} is the same as
a higher-order grammar, except that 
each rewriting rule in \(\RULES\) may be of the form \(\lambda x_1\cdots\lambda x_\ell.u\),
where \(u\) may be an applicative extended term.
The reduction rule for non-terminals is replaced by:
\infrule{(A\to \lambda x_1\,\cdots\,\lambda x_k.u)\in \RULES\andalso
  u'\in \DSubst{[U_1/x_1,\ldots,U_k/x_k]}{u}}
{A\,U_1\,\cdots\,U_k \redwith{\GRAM} u'}
where the substitution \(\DSubst{\mvsub}{u}\) is defined by:
\[
\begin{array}{l}
\DSubst{\mvsub}{a} = \set{a}\qquad
\DSubst{\mvsub}{x} = 
\begin{cases}
\mvsub(x) &\text{ (if \(x \in \dom(\mvsub)\))}
\\
\set{x} &\text{ (otherwise)}
\end{cases}\\
\DSubst{\mvsub}{(u_0 U)}
 = \set{v (\DSubst{\mvsub}{U}) \mid v\in\DSubst{\mvsub}{u_0}}\qquad
\DSubst{\mvsub}{\tset{u_1,\ldots,u_k}} = \DSubst{\mvsub}{u_1}\cup\cdots\cup\DSubst{\mvsub}{u_k}\,.
\end{array}
\]
Also, the other reduction rule is replaced by the following two rules:
\infrule{u\redwith{\GRAM}u'\andalso i\in\set{1,\ldots,k}\andalso \TERMS(a)=k}
 {a\,U_1\,\cdots\,U_{i-1}\,\set{u}\,U_{i+1}\,\cdots\,U_k
 \redwith{\GRAM}
  a\,U_1\,\cdots\,U_{i-1}\,\set{u'}\,U_{i+1}\,\cdots\,U_k
}
\infrule{u \in U_i \andalso U_i\text{ is not a singleton} \andalso i\in\set{1,\ldots,k}\andalso \TERMS(a)=k}
 {a\,U_1\,\cdots\,U_k \redwith{\GRAM}
  a\,U_1\,\cdots\,U_{i-1}\,\set{u}\,U_{i+1}\,\cdots\,U_k
}

Note that unlike in the extended grammar introduced in \cite{KIT14FOSSACS},
there is no requirement that each \(u_i\) in \(\tset{u_1,\ldots,u_k}\) is used at least once.
Thus, the extended syntax does not change the expressive power of grammars.
A term set \(\tset{u_1,\ldots,u_k}\) can be replaced by
\(A\,x_1\,\cdots\,x_\ell\) with the rewriting rules
\(A\,x_1\,\cdots\, x_\ell \Hra u_i\), 
where \(\set{x_1,\ldots,x_\ell}\) is the set of variables occurring in some of \(u_1,\ldots,u_k\).
In other words, for any order-\(n\) extended grammar \(\GRAM\), there is an (ordinary)
order-\(n\) grammar \(\GRAM'\) such that \(\Lang(\GRAM)=\Lang(\GRAM')\).

\section{Step 1: from order-\((n+1)\) grammars to order-\(n\) tree grammars}
\label{sec:leaf}
In this section, we show that
for any order-\((n+1)\) grammar \(\GRAM =(\TERMS,\NONTERMS,\RULES,S)\) such that 
\(\TERMS(\Te)=0\) and 
\(\TERMS(a)= 1\) for every \(a\in\dom(\TERMS)\setminus\set{\Te}\), 
there exists an order-\(n\) grammar {\(\GRAM'\)
 such that
\(\TERMS_{\GRAM'} = \set{\TT{br}\mapsto 2,\Te\mapsto 0}\cup\set{a\mapsto 0\mid \TERMS(a)=1}\)
and}
\(\Wlang(\GRAM)=\remeps{\LLang(\GRAM')}\).

For technical convenience, we assume below that, for every type \(\sty\) occurring in
\(\NONTERMS_\GRAM(A)\) for some \(A\), 
if \(\sty\) is of the form \(\T\ra\sty'\), then \(\order(\sty')\leq 1\).
This does not lose generality, since any function \(\lambda x\COL\T.t\) of type
\(\T\ra\sty'\) with \(\order(\sty')>1\) can be replaced by the term \(\lambda x'\COL\T\ra\T.[x'\Te/x]t\)
of type \((\T\to\T)\to\sty'\) (without changing the order of the term),
and any term \(t\) of type \(\T\) can be replaced by the term \(\K\,t\) of type 
\(\T\ra\T\), where 
\(\K\) is a non-terminal of type \(\T\ra\T\ra\T\), with rule \(\K\,x\,y\Hra x\).
\iffull
See Appendix~\ref{sec:TransAssumpOrder} for the details of this transformation.
\else
See \cite{1604.01595} for the details of this transformation.
\fi

The basic idea of the transformation is to remove all the order-0 arguments (i.e.,
arguments of tree type \(\T\)).
This reduces the order of each term by \(1\); for example, terms of types \(\T\to \T\) 
and \((\T\to \T)\to \T\) will respectively be transformed to those of types 
\(\T\) and \(\T\to\T\).
Order-0 arguments can indeed be removed as follows. Suppose we have a term \(t_1\, t_2\)
where \(t_1\COL\T\to \T\). If \(t_1\) does not use the order-0 argument \(t_2\), then
we can simply replace \(t_1\, t_2\) with \(t_1^\#\) (where
\(t_1^\#\) is the result of recursively applying the transformation to \(t_1\)).
If \(t_1\) uses the argument \(t_2\), the word generated by \(t_1\,t_2\) must be of the form
\(w_1w_2\), where \(w_2\) is generated by \(t_2\); in other words, \(t_1\) can only append
a word to the word generated by \(t_2\). Thus, \(t_1\,t_2\) can be transformed to
\(\TT{br}\;t_1^\#\;t_2^\#\), which can generate a tree whose frontier coincides with \(w_1w_2\)
(if \(\TT{e}\) is ignored).
As a special case, a constant word \(\TT{a}\,\TT{e}\) can be transformed to
\(\TT{br}\;\TT{a}\;\TT{e}\).
As a little more complex example, consider the term \(A\,(\TT{b}\,\TT{e})\), where
\(A\) is defined by \(A\,x\to \TT{a}\,x\).
Since \(A\) uses the argument, 
the term \(A\,(\TT{b}\,\TT{e})\) is transformed to
\(\TT{br}\;A\;(\TT{br}\;\Tb\;\Te)\).
Since \(A\) no longer takes an argument, we substitute \(\Te\) for \(x\)
in the body of the rule for \(A\) (and apply the transformation recursively to \(\TT{a}\,\TT{e}\)). 
The resulting rule for \(A\) is: \(A\to \TT{br}\;\TT{a}\;\TT{e}\).
Thus, the term after the transformation generates 
the tree \(\TT{br}\,(\TT{br}\;\TT{a}\,\TT{e})\,(\TT{br}\,\Tb\,\Te)\). Its frontier
word is \(\Ta\Te\Tb\Te\), which is equivalent to the word \(\Ta\Tb\) generated by the original
term, up to removals of \(\Te\); recall that redundant occurrences of \(\Te\) will be removed
by the second transformation.
Note that the transformation sketched above depends on whether each order-0 argument is
actually used or not. Thus, we introduce intersection types to express such information,
and define the transformation as a type-directed one.

Simple types are refined to the following intersection types.
\[
\begin{array}{l}
 \uty ::= \T \mid \ity \ra \uty
\qquad
 \ity ::= \uty_1\land \cdots \land \uty_k \quad (k \ge 0)
\end{array}
\]
We write \(\top\) for \(\uty_1\land \cdots \land \uty_k\) when \(k=0\).
We assume some total order \(<\) on intersection types,
and require that \(\uty_1<\cdots < \uty_k\) whenever 
\(\uty_1\land \cdots \land \uty_k\) occurs in an intersection type.
Intuitively, \((\uty_1\land\cdots \land \uty_k)\to \uty\) describes a function
that uses an argument according to types \(\uty_1,\ldots,\uty_k\), and the returns a value of
type \(\uty\). As a special case, the type \(\top\to \T\) describes a function that ignores
an argument, and returns a tree. 
Thus, according to the idea of the transformation sketched above,
if \(x\) has type \(\top\to \T\), \(x\,t\) would be transformed to \(x\);
 if \(x\) has type \(\T\to \T\), \(x\,t\) would be transformed to \(\TT{br}\;x\;t^\#\).
In the last example above, the type \(\T\to\T\) should be interpreted as a function that uses
the argument \emph{just once}; otherwise the transformation to \(\TT{br}\;x\;t^\#\) would be incorrect. 
Thus, the type \(\T\) should be treated as a linear type, 
for which weakening and dereliction are disallowed.
In contrast, we need not enforce, for example, that a value of the intersection type \(\T\to \T\) 
should be used just once.
Therefore, we classify intersection types into two kinds; one called \emph{balanced},
which may be treated as non-linear types, and the other called \emph{unbalanced},
which must be treated as linear types.
For that purpose,
we introduce two refinement relations \(\brr{\uty}{\sty}\) and
\(\urr{\uty}{\sty}\); the former means that \(\uty\) is a balanced intersection type of sort \(\sty\),
and the latter means that \(\uty\) is an unbalanced intersection type of sort \(\sty\).
The relations are defined as follows, by mutual induction; \(k\) may be \(0\).\\[1ex]
\InfruleS{0.55}{\urr{\uty_j}{\sty} \andalso j\in\set{1,\ldots,k} \\
\brr{\uty_i}{\sty} \mbox{ (for each $i\in\set{1,\ldots,k}\setminus\set{j}$)}
}{\urr{\uty_1\land \cdots \land \uty_k}{\sty}}
\InfruleS{0.45}{
\brr{\uty_i}{\sty} \mbox{ (for each $i\in\set{1,\ldots,k}$)}
}{\brr{\uty_1\land \cdots \land \uty_k}{\sty}}\\[1ex]
\InfruleS{0.2}{}{\urr{\T}{\T}}
\InfruleS{0.25}{
\brr{\ity}{\sty} \andalso \urr{\uty}{\sty'}
}{
\urr{\ity \ra \uty}{\sty \ra \sty'}
}
\InfruleS{0.25}{
\urr{\ity}{\sty} \andalso \urr{\uty}{\sty'}
}{
\brr{\ity \ra \uty}{\sty \ra \sty'}
}
\InfruleS{0.25}{
\brr{\ity}{\sty} \andalso \brr{\uty}{\sty'}
}{
\brr{\ity \ra \uty}{\sty \ra \sty'}
}\\
\noindent
A type \(\uty\) is called \myemph{balanced} if \(\brr{\uty}{\sty}\) for some \(\sty\),
and called \myemph{unbalanced} if \(\urr{\uty}{\sty}\) for some \(\sty\). 
Intuitively, unbalanced types describe trees or 
closures that contain the end of a word (i.e., symbol \(\Te\)).
Intersection types that are neither balanced nor unbalanced 
are considered ill-formed, and excluded out. 
For example, the type \(\T\to\T\to\T\) (as an intersection type) is ill-formed;
since \(\T\) is unbalanced, \(\T\to\T\) must also be unbalanced according to 
the rules for arrow types, but it is actually balanced.
Note that, in fact, no term can have the intersection type \(\T\to\T\to\T\) in a word grammar.
We write \(\uty\DCOL\sty\) if \(\brr{\uty}{\sty}\) or \(\urr{\uty}{\sty}\).

We introduce a type-directed transformation relation \(\UE\p t:\uty\tr u\) for terms,
where \(\UE\) is a set of type bindings of the form \(x\COL\uty\), called a \emph{type
environment}, \(t\) is a source term, and \(u\) is the image of the transformation,
which may be an extended term.
We write \(\UE_1\cup\UE_2\) for the union of \(\UE_1\) and \(\UE_2\); it is defined only
if, whenever \(x\COL\uty\in\UE_1\cap \UE_2\), \(\uty\) is balanced. In other words,
unbalanced types are treated as linear types, whereas balanced ones as non-linear (or
idempotent) types. We write \(\balanced(\TE)\) if \(\uty\) is balanced for every
\(x\COL\uty\in\TE\).

The relation \(\UE \p t:\uty\tr u\) is defined inductively by the following rules.\\[1ex]

\typicallabel{Tr1-Const}
\InfruleSR{0.4}{Tr1-Var}{\balanced(\TE)}
{\TE,x\COL\uty \p x\COL\uty\tr x_{\uty}}
\InfruleSR{0.55}{Tr1-NT}
{\uty\DCOL \NONTERMS(A)\andalso \balanced(\TE)}
{\TE\p A\COL\uty \tr A_\uty}\\[1ex]

\InfruleSR{0.4}{Tr1-Const0}
{\balanced(\TE)}
{\TE\p \Te\COL\T\tr \Te}
\InfruleSR{0.55}{Tr1-Const1}
{\TERMS(a)=1\andalso \balanced(\TE)}
{\TE \p a\COL\T\ra\T\tr a}

\infrule[Tr1-App1]
  {\TE_0 \p s\COL \uty_1\land\cdots\land \uty_k\ra\uty\tr v \\
\TE_i\p t\COL\uty_i\tr U_i \text{ and }\uty_i\neq \T
\text{ (for each $i \in \set{1,\dots,k}$)} %
}
{\TE_0\cup\TE_1\cup\cdots\cup\TE_k\p st:\uty\tr vU_1\cdots U_k}

\infrule[Tr1-App2]
{\TE_0\p s\COL \T\ra\uty\tr V\andalso
\TE_1\p t\COL\T\tr U}
{\TE_0\cup\TE_1 \p st\COL\uty\tr \TT{br}\,V\,U}

\infrule[Tr1-Set]
{\TE\p t\COL \uty \tr u_i\mbox{ (for each $i\in\set{1,\ldots,k}$)}
{\andalso k \ge 1}\\
}
{\TE \p t\COL\uty \tr \set{u_1,\ldots,u_k}}

\infrule[Tr1-Abs1]
{\TE,x\COL\uty_1,\ldots,x\COL\uty_k \p t:\uty\tr u\andalso x \notin \envdom{\TE}\\
\uty_i\neq \T\mbox{ for each $i\in\set{1,\ldots,k}$}}
{\TE\p \lambda x.t:\uty_1\land\cdots\land\uty_k\ra\uty\tr 
\lambda x_{\uty_1}\cdots\lambda x_{\uty_k}.u}

\infrule[Tr1-Abs2]
{\TE,x\COL\T\p t:\uty\tr u}
{\TE\p \lambda x.t:\T\ra\uty \tr [\Teps/x_{\T}]u}

In rule \rname{Tr1-Var}, a variable is replicated for each type.
This is because the image of the transformation of a term substituted for \(x\) 
is different depending on the type of the term; accordingly, in rule \rname{Tr1-Abs1},
bound variables are also replicated, and in rule \rname{Tr1-App1}, arguments are
replicated.
In rule \rname{Tr1-NT}, a non-terminal is also replicated for each type.
In rules \rname{Tr1-Const0} and \rname{Tr1-Const1}, constants are mapped to themselves;
however, the arities of all the constants become \(0\).
In these rules, \(\TE\) may contain only bindings on balanced types.

In rule \rname{Tr1-App1}, the first premise indicates that 
the function \(s\) uses the argument \(t\) according to types \(\uty_1,\ldots,\uty_k\).
Since the image of the transformation of \(t\) depends on its type,
we replicate the argument to \(U_1,\ldots,U_k\). For each type \(\uty_i\), the result
of the transformation is not unique (but finite); thus, we represent the image of
the transformation as a \emph{set} \(U_i\) of terms. (Recall the remark at the end of
Section~\ref{sec:pre} that a set of terms can be replaced by an ordinary term
by introducing auxiliary non-terminals.) For example, consider a term
\(A (x\,y)\). It can be transformed to \(A_{\uty_1\to\uty}\set{x_{\uty_0\to\uty_1}y_{\uty_0}, 
x_{\uty'_0\to\uty_1}y_{\uty_0'}}\) under the type environment
\(\set{x\COL \uty_0\to\uty_1, x\COL\uty_0'\to\uty_1, y\COL\uty_0,y\COL\uty_0'}\).
Note that \(k\) in rule \rname{Tr1-App1} (and also \rname{Tr1-Abs1}) may be \(0\),
in which case the argument disappears in the image of the transformation.

In rule \rname{Tr1-App2},
as explained at the beginning of this section,
the argument \(t\) of type \(\T\) is removed from \(s\)
and instead attached as a sibling node of the tree generated by (the transformation image of) \(s\).
Accordingly, in rule \rname{Tr1-Abs2}, the binder for \(x\) is 
removed and \(x\) in 
the body of the abstraction
is replaced with the empty tree \(\Te\).
In rule \rname{Tr1-Set}, type environments are shared. This is because \(\set{u_1,\ldots,u_k}\)
represents the choice \(u_1+\cdots+u_k\); unbalanced (i.e. linear) values should be used in the same manner
in \(u_1,\ldots,u_k\).

The transformation rules for rewriting rules and grammars 
are given by:

\infrule[Tr1-Rule]
{\emptyset \p \lambda x_1.\cdots\lambda x_k.t:\uty
 \tr \lambda x'_1.\cdots\lambda x'_\ell.u \andalso \uty\DCOL\NONTERMS(A)}
{(A\,x_1\,\cdots\,x_k\Hra t)\tr
 (A_{\uty}\,x'_1\,\cdots\,x'_{\ell} \Hra u)}

\infrule[Tr1-Gram]
{
{\TERMS' = \set{\TT{br}\mapsto 2,\Te\mapsto 0}\cup\set{a\mapsto 0\mid \TERMS(a)=1}}
\\
\NONTERMS' = \set{A_\uty\COL \utytosty{\uty\DCOL\sty} \mid \NONTERMS(A)=\sty \land \uty\DCOL\sty}\andalso
\RULES' = \set{ r' \mid \exists r\in \RULES. r\tr r'}
}
{(\TERMS,\NONTERMS,\RULES,S)\tr 
({\TERMS'},\NONTERMS',\RULES',S_\T)}

Here, \(\utytosty{\uty\DCOL\sty}\) is defined by:
\[
\begin{array}{l}
\utytosty{\uty\DCOL\sty} =
\T \qquad \mbox{ (if $\order(\sty) \le 1$)}
\\
\utytosty{(\uty_1\land\cdots\land\uty_k \ra \uty)\DCOL(\sty_0\to\sty)} =
\utytosty{\uty_1\DCOL\sty_0} \ra \dots \ra \utytosty{\uty_k\DCOL\sty_0} \ra \utytosty{\uty\DCOL\sty}
\\
\mspace{520mu}\mbox{ (if $\order(\sty_0\to\sty)>1$)}
\end{array}
\]

\begin{example}
\label{ex:tr1-gram1}
Recall the grammar \(\GRAM_1\) in Example~\ref{ex:grammar}.
For the term \(\lambda f.\lambda x.\Ta(f\,x)\) of the rule for \(A\), we have
the following derivation:
\[
\infers[Abs1]{\emptyset\p \lambda f.\lambda x.\Ta(f\,x):
(\T\to\T)\to\T\to\T\tr \lambda f_{\T\to\T}.\TT{br}\,\Ta\,(\TT{br}\,f_{\T\to\T}\,\Te)
}
{\infers[Abs2]{f\COL\T\to\T\p \lambda x.\Ta(f\,x):\T\to\T\tr \TT{br}\,\Ta\,(\TT{br}\,f_{\T\to\T}\,\Te)}
{\infers[App2]{f\COL\T\to\T, x\COL\T\p \Ta(f\,x):\T\tr \TT{br}\,\Ta\,(\TT{br}\,f_{\T\to\T}\,x_{\T})}
{\infers[Const1]{\emptyset\p \Ta:\T\to\T\tr \Ta}{}
 & \infers[App2]{f\COL\T\to\T, x\COL\T\p f\,x:\T\tr \TT{br}\,f_{\T\to\T}\,x_{\T}}
  {\infers[Var]{f\COL\T\to\T\p f:\T\to\T\tr f_{\T\to\T}}{} & 
   \infers[Var]{x\COL\T \p x:\T\tr x_\T}{}}}
}}
\]
Notice that the argument \(x\) has been removed, and the result of the transformation
has type \(\T\to\T\).
The whole grammar is transformed to the grammar consisting of the following rules.
\[
\begin{array}{l}
S_{\T} \to F_{(\T\to\T)\to\T}\,\Ta
\qquad
S_{\T} \to F_{(\T\to\T)\to\T}\,\Tb\\
A_{(\T\to\T)\to\T\to\T}\,f_{\T\to\T} \to \TT{br}\,\Ta\,(\TT{br}\,f_{\T\to\T}\,\Te)\qquad
B_{(\T\to\T)\to\T\to\T}\,f_{\T\to\T} \to \TT{br}\,\Tb\,(\TT{br}\,f_{\T\to\T}\,\Te)\\
F_{(\T\to\T)\to\T}\,f_{\T\to\T} \to \TT{br}\,f_{\T\to\T}\,(\TT{br}\,f_{\T\to\T}\,\Te)\qquad
F_{(\T\to\T)\to\T}\,f_{\T\to\T} \to F_{(\T\to\T)\to\T}(A_{(\T\to\T)\to\T\to\T}\,f_{\T\to\T})\\
F_{(\T\to\T)\to\T}\,f_{\T\to\T} \to F_{(\T\to\T)\to\T}(B_{(\T\to\T)\to\T\to\T}\,f_{\T\to\T}).
\end{array}
\]
Here, we have omitted rules that are unreachable from \(S_\T\).
For example, the rule
\[
F_{(\top\to\T)\land (\T\to\T)\to\T}\,f_{\top\to\T}\,f_{\T\to\T}
\to \TT{br}\,f_{\T\to\T}\,f_{\top\to\T}
\]
may be obtained from the following derivation, but it is unreachable from \(S_\T\),
since \(F\) is never called with an argument of type \((\top\to\T)\land (\T\to\T)\).
\[
\infers[Abs1]{\emptyset\p \lambda f.f(f\,\Te):
(\top\to\T)\land (\T\to\T)\to\T\tr \lambda f_{\top\to\T}.\lambda f_{\T\to\T}.
\TT{br}\,f_{\T\to\T}\,f_{\top\to\T}
}
{\infers[App2]{f\COL\top\to\T, f\COL\T\to\T\p f(f\,\Te):\T\tr \TT{br}\,f_{\T\to\T}\,f_{\top\to\T}}
{\infers[Var]{f\COL\T\to\T\p f\tr f_{\T\to\T}}{}
 & \infers[App1]{f\COL\top\to\T \p f\,\Te:\T\tr f_{\top\to\T}}{
 \infers[Var]{f\COL\top\to\T \p f:\top\to\T\tr f_{\top\to\T}}{}
}}}
\]
\end{example}

The following theorem states the correctness of the first transformation.
\iffull
A proof is given in Appendix~\ref{sec:proof-step1}.
\else
\fi
\begin{theorem}
\label{th:tr1-correctness}
Let \(\GRAM\) be an order-\((n+1)\) word grammar.
If \(\GRAM\tr \GRAM''\), then \(\GRAM''\) is an (extended) grammar {of order at most \(n\)}.
Furthermore, \(\Wlang(\GRAM) = \remeps{\LLang(\GRAM'')}\).
\end{theorem}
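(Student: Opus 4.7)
The plan is to establish the two assertions separately. For the order bound, I would prove by induction on sorts $\sty$ that whenever $\uty \DCOL \sty$, the type $\utytosty{\uty \DCOL \sty}$ has order at most $\max(\order(\sty) - 1, 0)$. The only interesting case is when $\sty = \sty_0 \to \sty'$ has order greater than $1$: by the working assumption made at the beginning of Section~\ref{sec:leaf} (no subtype $\T \to \sty''$ of a non-terminal sort has $\order(\sty'') \ge 2$), we must have $\sty_0 \neq \T$, hence $\order(\sty_0) \ge 1$, and the induction hypothesis applied to each $\uty_i \DCOL \sty_0$ and to $\uty \DCOL \sty'$ gives the claimed bound by a direct computation. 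Since every non-terminal of $\GRAM$ has sort of order at most $n+1$, every non-terminal $A_\uty$ of $\GRAM''$ has type of order at most $n$.

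For the language equality, the central technical tool is a substitution lemma: whenever $\TE, x\COL\uty_1, \ldots, x\COL\uty_m \p t\COL\uty \tr u$ and $\TE_i \p s\COL\uty_i \tr v_i$ for each $i$, the composed environment $\TE \cup \bigcup_i \TE_i$ is well-defined (by the balanced/unbalanced discipline) and witnesses $[s/x]t$ transforming to $[v_1/x_{\uty_1}, \ldots, v_m/x_{\uty_m}]u$. This is proved by induction on the transform derivation; the delicate cases are \rname{Tr1-App1}, where unbalanced parts of the environment must be partitioned across subterms, and \rname{Tr1-Abs2}, where substituting for a $\T$-typed variable corresponds on the target side to substituting $\Te$. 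Using the substitution lemma, I would establish a forward simulation for completeness: if $\emptyset \p t\COL\uty \tr u$ and $t \redwith{\GRAM} t'$, then $u \redswith{\GRAM''} u'$ for some $u'$ with $\emptyset \p t'\COL\uty \tr u'$. Iterating from $S \redswith{\GRAM} \Ta_1(\cdots(\Ta_k\,\Te)\cdots)$ and noting that the transform of such a fully-reduced word-tree is a binary $\TT{br}$-tree whose frontier equals $\Ta_1 \cdots \Ta_k$ modulo $\Te$'s, we obtain $\Wlang(\GRAM) \subseteq \remeps{\LLang(\GRAM'')}$.

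For soundness, I would prove a backward simulation in the other direction: every reduction $u \redwith{\GRAM''} u'$ from a $u$ with $\emptyset \p t\COL\uty \tr u$ lifts (possibly to zero steps on the source) to some $t \redswith{\GRAM} t'$ with $\emptyset \p t' \COL \uty \tr u'$. I expect the main obstacle to lie precisely in this backward direction: the transformation is many-to-one, since a single source term admits many typing derivations and the set constructors on the target side add further non-determinism, and one must show that every choice made in $\GRAM''$ can be coherently projected back to a choice in $\GRAM$, with the balanced/unbalanced partitioning consistent throughout. A secondary subtlety is the frontier accounting: the $\Te$-leaves inserted by rule \rname{Tr1-Abs2} (when order-0 arguments are elided) and by rule \rname{Tr1-App2} (as a sibling of the tree for the argument) are discarded when passing to $\remeps{\LLang(\GRAM'')}$, so a small lemma relating $\leaves$ of a transformed tree to the word generated by the source term is needed to close the argument.
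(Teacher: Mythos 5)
Your order-bound argument is the same as the paper's and is fine, and the overall shape (substitution lemma plus two simulations) is superficially similar; but both of your simulation statements are false as stated, and the missing ingredient is precisely the heart of the paper's proof. Concretely, take the rule \(A\,x\to\Ta\,x\) and \(t=A\,(\Tb\,\Te)\). Then \(\emptyset\p t\COL\T\tr u\) with \(u=\TT{br}\,A_{\T\to\T}\,(\TT{br}\,\Tb\,\Te)\), and the rule generated for \(A_{\T\to\T}\) is \(A_{\T\to\T}\to\TT{br}\,\Ta\,\Te\) (rule \rname{Tr1-Abs2} replaces \(x_\T\) by \(\Te\)). The source step \(t\redwith{\GRAM}\Ta\,(\Tb\,\Te)\) can only be matched in the target by \(u\redwith{\GRAM''}\TT{br}\,(\TT{br}\,\Ta\,\Te)\,(\TT{br}\,\Tb\,\Te)\), which is a normal form and is \emph{not} the image of any transformation derivation of \(\Ta\,(\Tb\,\Te)\): the only images are (sets built from) \(\TT{br}\,\Ta\,(\TT{br}\,\Tb\,\Te)\). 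So neither your forward claim (``\(u\redswith{\GRAM''}u'\) with \(\emptyset\p t'\COL\uty\tr u'\)'') nor its backward analogue holds on the nose; the transformation relation is preserved only up to the associativity/unit equivalence generated by \(\TT{br}\) and \(\Te\) (the paper's \(\vsim\) on trees, and an observational preorder \(\lse\) on open, non-normal terms).

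Repairing this is where all the work lies, and your proposal does not identify the needed tool. Your substitution lemma produces \([U/x_\T]v\) as the transform of the reduct, whereas the target reduction produces \(([\Te/x_\T]v)\ibr U\); relating the two requires the key fact that when \(x\) carries the unbalanced type \(\T\), the variable \(x_\T\) can only ever appear at the rightmost leaf of anything \(v\) reduces to, so that \(([\Te/x_\T]v)\ibr U\se[U/x_\T]v\) (Lemma~\ref{lem:groundsbst}). The paper proves this by introducing a second, auxiliary linear type system (types \(\ot/\rt\) for ``anywhere''/``rightmost'' positions) with a subject-reduction property, together with a context lemma proved via a logical relation over a cpo model (Lemma~\ref{lem:context}), because the equivalence must be applied to arbitrary open, non-normal terms arising mid-simulation---not only when reading off frontiers of final trees, which is all your proposed ``small lemma'' about \(\leaves\) would cover. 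With this in hand the simulations have to be restated up to \(\gse/\vsim\): the paper's completeness direction is a subject-expansion argument walking backwards along the source reduction (Lemma~\ref{lem:red-fwd}, built on a de-substitution lemma and a linearity-restricted set rule), and its soundness direction (Lemma~\ref{lem:red-bwd}) tracks a complete target reduction \(u\red^p\pi\), yielding \(t\redswith{\GRAM}t'\) and \(u'\red^q\pi'\vsim\pi\) with \(q<p\), which is what makes the induction terminate. An outline that keeps the exact transformation relation as its simulation invariant breaks at the very first reduction step that consumes an order-\(0\) argument.
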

\section{Step 2: removing dummy symbols}
\label{sec:leaf2}
\newcommand\toleaf[1]{\tyTOsty{#1}}

We now describe the second step for eliminating redundant symbols \(\Te\),
which have been introduced by \rname{Tr1-Abs2}.
By the remark at the end of Section~\ref{sec:pre}, we assume that the result
of the first transformation is an ordinary grammar, not containing extended terms.
We also assume that \(\TT{br}\) occurs only in the fully applied form. This does not lose generality,
because otherwise we can replace \(\TT{br}\) by a new non-terminal \(\mathit{Br}\)
and add the rule \(\mathit{Br}\,x\,y\Hra \TT{br}\,x\,y\).

The idea of the transformation is to use intersection types to
distinguish between terms that generate trees consisting of only {\(\br\) and} \(\Te\), 
and those that generate trees containing other arity-0 terminals. We assign the type \(\Tempty\)
to the former terms, and \(\Tplus\) to the latter.
A term \(\TT{br}\,t_0\,t_1\) is transformed to 
(i)
\(\TT{br}\,t_0^\#\,t_1^\#\) if both \(t_0\) and \(t_1\) have type \(\Tplus\)
(where \(t_i^\#\) is the image of the transformation of \(t_i\)),
(ii) \(t_i^\#\) 
if \(t_i\) has type \(\Tplus\) and \(t_{1-i}\) has type \(\Tempty\), and 
(iii) \(\Te\) if both \(t_0\) and \(t_1\) have type \(\Tempty\).
As in the transformation of the previous section,
 we replicate each non-terminal and variable for each intersection type. For example,
the nonterminal \(A\COL\T\to\T\) defined by \(A\,x\to x\) would be replicated to
\(A_{\Tplus\to \Tplus}\) and \(A_{\Tempty\to\Tempty}\).

We first define the set of intersection types by:
\[ \tty ::= \Tempty \mid \Tplus \mid \tty_1\land\cdots\land \tty_k\ra \tty\]
We assume some total order \(<\) on intersection types, and require that
whenever we write \(\tty_1\land\cdots\land \tty_k\), \(\tty_1<\cdots< \tty_k\) holds.
We define the refinement relation \(\tty\DCOL\sty\) inductively by:
(i) \(\Tempty\DCOL \T\), (ii) \(\Tplus\DCOL\T\), and (iii) \(
(\tty_1\land\cdots\land \tty_k\ra \tty)\DCOL (\sty_1\to\sty_2)\) if
\(\tty\DCOL\sty_2\) and \(\tty_i\DCOL\sty_1\) for every \(i\in\set{1,\ldots,k}\).
We consider only types \(\tty\) such that \(\tty\DCOL\sty\) for some \(\sty\).
For example, we forbid an ill-formed type like \(\Tplus\land (\Tplus\to\Tplus) \to \Tplus\).

We introduce a type-based transformation relation \(\TTE\p t:\tty \tr u\),
where \(\TTE\) is a type environment (i.e., a set of bindings of the form \(x\COL\tty\)),
 \(t\) is a source term, \(\tty\) is the type of \(t\), and \(u\) is the result of transformation.
The relation is defined inductively by the rules below. \\[1ex]
\typicallabel{Tr2-NT}
\InfruleSR{0.3}{Tr2-Var}{} %
{\TTE,x\COL\tty \p x\COL\tty\tr x_{\tty}}
\InfruleSR{0.3}{Tr2-Const0}
{}
{\TTE\p \Teps\COL\Tempty\tr \Teps}
\InfruleSR{0.35}{Tr2-Const1}
{\TERMS(a)=0\andalso a\neq \Teps}
{\TTE\p a\COL\Tplus\tr a}\\

\infrule[Tr2-Const2]
{\TTE \p t_0\COL\tty_0\tr u_0\andalso \TTE \p t_1\COL\tty_1\tr u_1\\
  (u,\tty) = \left\{
  \begin{array}{ll}
    (\TT{br}\,u_0\,u_1, \Tplus) & \mbox{ if \(\tty_0=\tty_1=\Tplus\)}\\
    (u_i, \Tplus) & \mbox{ if \(\tty_i=\Tplus\) and \(\tty_{1-i}=\Tempty\)}\\
    (\Teps, \Tempty) & \mbox{ if \(\tty_0=\tty_1=\Tempty\)}\\
  \end{array}\right.
}
{\TTE\p \TT{br}\,t_0\,t_1\COL\tty\tr u}

\infrule[Tr2-NT]{\tty\DCOL \NONTERMS(F)\andalso A\,x_1\,\cdots\,x_k\Hra t\in \RULES
\andalso \emptyset\p \lambda x_1.\cdots\lambda x_k.t\COL\tty
\tr \lambda y_1.\cdots\lambda y_\ell.u}
{\TTE\p A\COL\tty\tr A_\tty}

\infrule[Tr2-App]
  {\TTE \p s\COL \tty_1\land\cdots\land \tty_k\ra\tty\tr v\andalso
  \TTE\p t\COL\tty_i\tr U_i \mbox{ (for each $i\in\set{1,\ldots,k}$)}}
{\TTE\p st:\tty\tr vU_1\cdots U_k}

\infrule[Tr2-Set]
  {\TTE\p t\COL\tty\tr u_i \mbox{ (for each $i\in\set{1,\ldots,k}$)} \andalso k\ge 1}
{\TTE\p t:\tty\tr \set{u_1,\ldots,u_k}}
\infrule[Tr2-Abs]
{\TTE,x\COL\tty_1,\ldots,x\COL\tty_k \p t:\tty\tr u}
{\TTE\p \lambda x.t:\tty_1\land\cdots\land\tty_k\ra\tty\tr 
\lambda x_{\tty_1}\cdots\lambda x_{\tty_k}.u}

The transformation of rewriting rules and grammars is defined by:
\infrule[Tr2-Rule]
{\emptyset \p \lambda x_1.\cdots\lambda x_k.t:\tty
 \tr \lambda x'_1.\cdots\lambda x'_\ell.t' \andalso \tty\DCOL\NONTERMS(A)}
{ (A\to \lambda x_1.\cdots\lambda x_k.t)\tr
 (A_{\tty}\to \lambda x'_1.\cdots\lambda x'_{\ell}.t')}

\infrule[Tr2-Gram]
{
\NONTERMS' = \set{A_\tty\COL \ttytosty{\tty} \mid \NONTERMS(A)=\sty \land \tty\DCOL\sty}\\
\RULES' = \set{ r' \mid \exists r\in \RULES. r\tr r'}\cup \set{S'\to S_{\Tempty}, S'\to S_{\Tplus}}
}
{(\TERMS,\NONTERMS,\RULES,S)\tr 
(\TERMS,\NONTERMS',\RULES',S')}
Here, \(\ttytosty{\tty}\) is defined by:
\[
\begin{array}{l}
\ttytosty{\Tempty}=\ttytosty{\Tplus}=\T\qquad
\ttytosty{\tty_1\land\cdots\land\tty_k\ra\tty} =
 \ttytosty{\tty_1}\ra\cdots\ra\ttytosty{\tty_k}\ra\ttytosty{\tty}
\end{array}
\]

We explain some key rules. 
In \rname{Tr2-Var}
we replicate a variable for each type,
as in the first transformation.
The rules \rname{Tr2-Const0} and \rname{Tr2-Const1} are for nullary constants,
which are mapped to themselves. We assign type \(\Tempty\) to \(\Te\) and
\(\Tplus\) to the other constants.
The rule \rname{Tr2-Const2} is for the binary tree constructor \(\TT{br}\). 
As explained above, we eliminate terms that generate empty trees (those consisting of
only \(\br\) and \(\Te\)). For example, if \(\tty_0=\Tempty\) and \(\tty_1=\Tplus\), then 
\(t_0\) may generate an empty tree; thus, the whole term is transformed to \(u_1\).

The rule \rname{Tr2-NT} replicates a terminal for each type, as in the case of variables.
{The middle and rightmost premises require that there is some body \(t\) of \(A\)
that can indeed be transformed according to type \(\tty\).}
Without this condition, for example,
\(A\) defined by the rule \(A\to A\) would be transformed to \(A_{\Tempty}\) by
 \(\emptyset\p A:\Tempty \tr A_{\Tempty}\), but \(A_{\Tempty}\) diverges and does not produce
an empty tree. That would make the rule \rname{Tr2-Const2} unsound: 
when a source term is \(\TT{br}\,A\,\Ta\), it would be transformed to \(\Ta\), but
while the original term does not generate a tree, the result of the transformation does.
In short, 
{the two premises are required}
to ensure that whenever \(\emptyset\p t:\Tempty\tr u\)
holds, \(t\) can indeed generate an empty tree.
In \rname{Tr2-App}, the argument is replicated for each type. 
Unlike in the transformation in the previous section, type environments can be shared
among the premises, since linearity does not matter here.
The other rules for terms are analogous to those in the first transformation.

In rule \rname{Tr2-Gram} for grammars, we prepare a start symbol \(S'\) 
and add the rules \(S'\to S_{\Tempty}, S'\to S_{\Tplus}\). 
We remark that the rewriting rule for
\(S_{\Tempty}\) (resp. \(S_{\Tplus}\)) is generated only if the original grammar
generates an empty (resp. non-empty) tree.
For example, in the extreme case where \(\REL = \set{S\to S}\), 
we have \(\REL'=\set{S'\to S_{\Tempty}, S'\to S_{\Tplus}}\), without any rules to rewrite
\(S_{\Tempty}\) or \(S_{\Tplus}\).

\begin{example}
\label{ex:tr2-gram1}
Let us consider the grammar \(\GRAM_3 = (\TERMS,\NONTERMS,\RULES,S)\)
where \(\NONTERMS = \set{S\COL\T, A\COL\T\to\T, B\COL\T\to\T, F\COL\T\to\T}\),
and \(\RULES\) consists of:
\[
\begin{array}{l}
S \to F\,\Ta
\qquad
S \to F\,\Tb\qquad
A\,f \to \TT{br}\,\Ta\,(\TT{br}\,f\,\Te)\qquad
B\,f \to \TT{br}\,\Tb\,(\TT{br}\,f\,\Te)\\
F\,f \to \TT{br}\,f\,(\TT{br}\,f\,\Te)\qquad
F\,f \to F(A\,f)\qquad
F\,f \to F(B\,f)
\end{array}
\]
It is the same as the grammar obtained in Example~\ref{ex:tr1-gram1}, 
except that redundant subscripts on non-terminals and variables have been removed.
The body of the rule for \(A\) is transformed as follows.
\[
\infers[Abs]{\emptyset \p \lambda f.\TT{br}\,\Ta\,(\TT{br}\,f\,\Te):{\Tplus}\to{\Tplus}\tr 
  \lambda f_{\Tplus}.\TT{br}\,\Ta\,f_{\Tplus}}
{\infers[Const2]{f\COL{\Tplus} \p \TT{br}\,\Ta\,(\TT{br}\,f\,\Te):{\Tplus}\tr \TT{br}\,\Ta\,f_{\Tplus}}
 {\infers[Const1]{f\COL{\Tplus} \p \Ta:\Tplus\tr \Ta}{} &
 \infers[Const2]{f\COL{\Tplus} \p \TT{br}\,f\,\Te:\Tplus\tr f_{\Tplus}}
 {\infers[Var]{f\COL{\Tplus} \p f:\Tplus\tr f_{\Tplus}}{} & 
  \infers[Const0]{f\COL{\Tplus} \p \Te:\Tempty\tr \Te}{}}
}
}
\]
The whole rules are transformed to:
\[
\begin{array}{l}
S' \to S_{\Tplus}\qquad S' \to S_{\Tempty}\qquad
S_{\Tplus} \to F_{\Tplus\to\Tplus}\,\Ta\qquad
S_{\Tplus} \to F_{\Tplus\to\Tplus}\,\Tb\qquad\\
A_{\Tplus\to\Tplus}\,f_{\Tplus}\to \TT{br}\;\Ta\,f_{\Tplus}\qquad
B_{\Tplus\to\Tplus}\,f_{\Tplus}\to \TT{br}\;\Tb\,f_{\Tplus}\qquad
F_{\Tplus\to\Tplus}\,f_{\Tplus}\to  \TT{br}\;f_{\Tplus}\,f_{\Tplus}\\
F_{\Tplus\to\Tplus}\,f_{\Tplus}\to  F_{\Tplus\to\Tplus}(A_{\Tplus\to\Tplus}\,f_{\Tplus})\qquad
F_{\Tplus\to\Tplus}\,f_{\Tplus}\to  F_{\Tplus\to\Tplus}(B_{\Tplus\to\Tplus}\,f_{\Tplus})
\end{array}
\]
Here, we have omitted rules on non-terminals unreachable from \(S'\).

\iffull
If the rules for \(S\) in the source grammar were replaced by:
\[
S \to F\,E\qquad E\to \Ta\qquad E\to \Tb\qquad E\to \Te,
\]
then
{\(F_{\Tempty\to\Tempty}\) and \(F_{\Tempty\land \Tplus\to\Tplus}\) would become reachable.
Hence, the following rules 
generated from \(F\,f \to \TT{br}\,f\,(\TT{br}\,f\,\Te)\) %
would also become reachable:}
\[
\begin{array}{l}
F_{\Tempty\to\Tempty}\,f_{\Tempty}\to \Te
\qquad
F_{\Tempty\land \Tplus\to\Tplus}\,f_{\Tempty}\,f_{\Tplus}\to f_{\Tplus}.
\end{array}
\]
From \(F\,f\to F\,(A\,f)\), many {reachable} rules would be generated. One of the rules is:
\[
F_{\Tempty\land\Tplus\to\Tplus}\,f_{\Tempty}\,f_{\Tplus}
\to F_{\Tplus\to\Tplus} \set{A_{\Tempty\to\Tplus}\,f_{\Tempty}, A_{\Tplus\to\Tplus}\,f_{\Tplus}},
\]
which can be replaced by the following rules without extended terms:
\[
F_{\Tempty\land\Tplus\to\Tplus}\,f_{\Tempty}\,f_{\Tplus}
\to F_{\Tplus\to\Tplus} (C\,f_{\Tempty}\,f_{\Tplus})\qquad
C\,f_1\,f_2 \to A_{\Tempty\to\Tplus}\,f_1\qquad
C\,f_1\,f_2 \to A_{\Tplus\to\Tplus}\,f_2.
\]
\fi
\end{example}
The following theorem claims the correctness of the transformation. 
The proof is given in 
\iffull
Appendix~\ref{sec:proof-step2}.
\else
\cite{1604.01595}.
\fi
The main theorem (Theorem~\ref{th:main}) follows from
Theorems~\ref{th:tr1-correctness}, \ref{th:tr2-correctness}, and the fact that any
order-\(m\) grammar with \(m<n\) can be converted to an order-\(n\)
grammar by adding a dummy non-terminal of order \(n\).
\begin{theorem}
\label{th:tr2-correctness}
Let \(\GRAM=(\TERMS,\NONTERMS,\RULES,S)\) be an order-\(n\) tree grammar.
If \(\GRAM\tr \GRAM'\), then 
\(\GRAM'\) is a tree grammar {of order at most \(n\)}, and 
\(\remeps{\LLang(\GRAM)}=\LLangE(\GRAM')\).
\end{theorem}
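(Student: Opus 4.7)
The plan is to prove the theorem in two parts: order preservation and language equality. For order preservation, I would induct on $\tty$ to show $\order(\ttytosty{\tty}) \le \order(\sty)$ whenever $\tty \DCOL \sty$: the base cases $\ttytosty{\Tempty} = \ttytosty{\Tplus} = \ot$ have order $0$, and in the arrow case $\tty = \tty_1\land\cdots\land\tty_k \ra \tty_0$ with $\sty = \sty_0 \ra \sty'$, the induction hypothesis gives $\order(\ttytosty{\tty_i}) \le \order(\sty_0)$ and $\order(\ttytosty{\tty_0}) \le \order(\sty')$, so $\order(\ttytosty{\tty}) \le \max(\order(\sty_0)+1,\order(\sty')) = \order(\sty)$. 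Consequently every non-terminal $A_\tty$ in $\GRAM'$ has order at most $\order(\NONTERMS(A))$, so $\GRAM'$ has order at most $n$.

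For the language equality I would prove both inclusions via a simulation argument mediated by the transformation relation. The technical core is a substitution lemma: if $\TTE, x\COL\tty_1,\ldots,x\COL\tty_k \p t:\tty \tr u$ and $\TTE \p s:\tty_i \tr v_i$ for each $i$, then $\TTE \p [s/x]t:\tty \tr [v_1/x_{\tty_1},\ldots,v_k/x_{\tty_k}]u$, possibly wrapped with $\tset{\cdots}$ via \rname{Tr2-Set}. From this one obtains subject reduction---if $\emptyset \p t:\tty \tr u$ and $t \redwith{\GRAM} t'$, then there exists $u'$ with $\emptyset \p t':\tty \tr u'$ and $u \redswith{\GRAM'} u'$---and an analogous subject expansion lifting reduction steps in $\GRAM'$ back to $\GRAM$. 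A separate base-case lemma classifies the transformation on closed ground trees: $\emptyset \p \pi:\Tplus \tr \pi'$ implies $\pi'$ is a ground tree with $\leaves(\pi') = \remeps{\leaves(\pi)}$, while $\emptyset \p \pi:\Tempty \tr \pi'$ forces $\pi' = \Te$ and $\leaves(\pi)$ to be a nonempty sequence of $\Te$'s.

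To obtain $\remeps{\LLang(\GRAM)} \subseteq \LLangE(\GRAM')$, I start from $\emptyset \p S:\tty \tr S_\tty$, choosing $\tty \in \set{\Tempty,\Tplus}$ according to the frontier of the target $\pi$, and iterate subject reduction along $S \redswith{\GRAM} \pi$ to obtain $\emptyset \p \pi:\tty \tr \pi'$ together with $S' \redwith{\GRAM'} S_\tty \redswith{\GRAM'} \pi'$; the base-case lemma then matches the leaf sequences. The reverse inclusion $\LLangE(\GRAM') \subseteq \remeps{\LLang(\GRAM)}$ uses subject expansion symmetrically: from $S' \redwith{\GRAM'} S_\tty \redswith{\GRAM'} \pi'$ with $\pi'$ a ground tree, I iterate to produce $t$ with $S \redswith{\GRAM} t$ and $\emptyset \p t:\tty \tr \pi'$; since $\pi'$ contains no non-terminals or variables, inspection of the transformation rules forces $t$ to be ground as well, and the base-case lemma matches leaves.

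The main obstacle is guaranteeing that the type discipline is \emph{semantically truthful}: a closed term typable with $\Tempty$ must actually reduce within $\GRAM$ to an all-$\Te$ tree, rather than diverging or ever exposing a non-$\Te$ leaf. Without this, the third clause of \rname{Tr2-Const2}, which collapses both subtrees to $\Te$, could be unsound. The side condition in \rname{Tr2-NT}---that $A_\tty \to u$ is produced only when some rule $A \to t$ of $\GRAM$ supports a full derivation $\emptyset \p \lambda x_1.\cdots\lambda x_k.t : \tty \tr \lambda y_1.\cdots\lambda y_\ell.u$---is precisely what secures this property inductively: rules for $A_\tty$ appear only when a finite body derivation with type $\tty$ actually exists. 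Secondary technical difficulties include handling the set construct from \rname{Tr2-Set} inside substitutions and the per-type replication of bound variables; both are routine but notationally delicate.
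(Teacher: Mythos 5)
Your order-preservation argument and your soundness-side machinery (substitution lemma, lifting of $\GRAM'$-reductions back to $\GRAM$, ground-tree base case) do match the paper's, but the proposal has two genuine gaps. First, the completeness half rests on a forward ``subject reduction'' lemma that is false: it is not true that $\emptyset \p t:\tty\tr u$ and $t\redwith{\GRAM}t'$ imply $\emptyset\p t':\tty\tr u'$ for some $u'$ with $u\redswith{\GRAM'}u'$. The premise of \rname{Tr2-NT} only requires that \emph{some} rule body for $A$ be transformable at the chosen type, while the source step may use a \emph{different} rule. Concretely, with $\RULES=\set{S\to a,\ S\to\Te}$ one derives $\emptyset\p S:\Tplus\tr S_{\Tplus}$ (justified by the body $a$), yet $S\redwith{\GRAM}\Te$ and $\Te$ admits no transformation at $\Tplus$. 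Choosing $\tty$ from the frontier of the final tree does not repair this, because the intersection types assigned to subterms (arguments of non-terminals) depend on which rules the \emph{rest} of the reduction will apply, which is invisible step by step. This is exactly why the paper proves this direction by subject \emph{expansion}: it types the final tree $\pi$ first, walks the source reduction backwards (de-substitution lemma), and absorbs the resulting mismatch of set-terms with the relation $\subterm$ together with the lemma that $\subterm$ is a simulation for $\GRAM'$-reductions. Some such backward induction (committing to types only after the whole reduction is known) is missing from your plan.

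Second, in the reverse inclusion your claim that ``since $\pi'$ contains no non-terminals or variables, inspection of the transformation rules forces $t$ to be ground'' is wrong: \rname{Tr2-Const2} erases $\Tempty$-typed subterms, so $t$ may still contain non-terminals, e.g.\ $\emptyset\p \br\,A\,a:\Tplus\tr a$ when $A\to\Te\in\RULES$, and these residues are never touched by the $\GRAM'$-side reduction you are lifting. So after the lifting you only know $\emptyset\p t:\tty\tr\pi'$ for a possibly non-ground $t$, and you must still show that $t$ \emph{reduces in $\GRAM$} to a tree $\pi$ with $\remeps{\leaves(\pi)}=\leaves(\pi')$; for the erased parts this is precisely the statement that every closed term transformable at $\Tempty$ reduces to a tree with only $\Te$-leaves. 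You correctly single out this ``semantic truthfulness'' of $\Tempty$ as the crux, but asserting that the \rname{Tr2-NT} side condition secures it ``inductively'' is not a proof: the chosen body may contain further (mutually recursive) non-terminals, and the finiteness of a transformation derivation does not by itself yield termination of the rewriting. In the paper this is a separate weak-normalization lemma proved by a logical-relations (reducibility) argument over the intersection types, inside which the \rname{Tr2-NT} premise supplies the needed sub-derivation for the body; an argument of this strength is required and is absent from your proposal.
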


\section{Applications}
\label{sec:app}

\subsection{Unsafe order-2 word languages = safe order-2 word languages}

As mentioned in Section~\ref{sec:intro}, 
many of the earlier results on higher-order grammars~\cite{DammTCS,Knapik01TLCA}
were for the subclass called \emph{safe} higher-order grammars.
In safe grammars, the (simple) types of terms are restricted to 
\emph{homogeneous types}~\cite{DammTCS} of the form
\(\kappa_1\to \cdots \to \kappa_k\to\T\), where \(\order(\kappa_1)\geq \cdots \geq \order(\kappa_k)\),
and arguments of the same order must be supplied simultaneously. 
For example, if \(A\) has type \((\T\to\T)\to(\T\to\T)\to \T\),
then the term \(f\,(A\,f\,f)\) where \(f\COL\T\to\T\) is valid,
but \(g\,(A\,f)\) where \(g\COL((\T\to\T)\to\T)\to\T, f\COL\T\to\T\) is not:
the partial application \(A\,f\) is disallowed, since \(A\) expects another order-1 argument.
\emph{Unsafe} grammars (which are just called higher-order grammars
in the present paper) are higher-order grammars without the safety restriction.

For order-2 word languages, Aehlig et al.~\cite{AehligSafety} have shown that 
safety is not a genuine restriction. Our result in the present paper provides
an alternative, short proof. 
Given an unsafe order-2 word grammar \(\GRAM\), we can obtain an 
equivalent order-1 grammar \(\GRAM'\) such that \(\Wlang(\GRAM) = \LLangE(\GRAM')\).
Note that \(\GRAM'\) is necessarily safe, since {it is order-1 and hence}
there are no partial applications.
Now, apply the backward transformation
sketched in Section~\ref{sec:pre} to obtain an order-2 word grammar \(\GRAM''\)
such that \(\Wlang(\GRAM'') = \LLangE(\GRAM')\). By the construction of the backward
transformation, \(\GRAM''\) is clearly a safe grammar: Since the type of each term
occurring in \(\GRAM'\) is \(\T\to\cdots\to\T\to\T\), the type of the corresponding term of \(\GRAM''\)
is \((\T\to\T)\to\cdots\to(\T\to\T)\to(\T\to\T)\). Since all the arguments of type \(\T\) are
applied simultaneously in \(\GRAM'\), all the arguments of type \(\T\to\T\) are also
applied simultaneously in \(\GRAM''\). Thus, for any unsafe order-2 word grammar,
there exists an equivalent safe order-2 word grammar.

\subsection{Diagonal problem}

The diagonal problem~\cite{Czerwinski14} asks, given a 
 (word or tree) language \(L\) and a set \(S\) of symbols, whether for all \(n\),
there exists \(w_n\in L\) such that \(\forall a\in S.\, |w_n|_{a}\geq n\).
Here, \(|w|_a\) denotes the number of occurrences of \(a\) in \(w\).
A decision algorithm for the diagonal problem can be used for computing 
 downward closures~\cite{Zetzsche15},
which in turn have applications to program verification.
Hague et al.~\cite{Hague16POPL} recently showed that the diagonal problem is
decidable for safe higher-order word languages, and Clemente et al.~\cite{Parys16diagonality}
extended the result for unsafe tree languages.
For the single letter case of  the  diagonal problem (where \(|S|=1\)),  we can obtain
an alternative proof as follows. First, following the approach of Hague et al.~\cite{Hague16POPL},
we can use logical reflection to reduce the single letter diagonal problem for an unsafe
order-\(n\) tree language to that for the path language of an unsafe order-\(n\) tree language.
We can then use our transformation to reduce the latter to
the single letter diagonal problem for an unsafe order-\((n-1)\) tree language.
Unfortunately, this approach does not apply to the general diagonal problem;
since the logical reflection in the first step yields 
an order-\(n\) language of ``narrow'' trees~\cite{Parys16diagonality} instead of words, 
we need to extend our translation from order-\(n\) word languages to order-(\(n-1\)) tree languages
to one from order-\(n\) narrow tree languages to order-(\(n-1\)) tree languages. Actually, that translation
is the key of Clemente et al.'s proof of the decidability of the (general) diagonal problem~\cite{Parys16diagonality}. 

\subsection{Context-sensitivity of order-3 word languages}

By using the result of this paper 
and the context-sensitivity of order-2 tree languages~\cite{KIT14FOSSACS}, 
we can prove that any order-3 word language is context-sensitive, i.e.,
the membership problem for an order-3 word language can be decided 
in non-deterministic linear space. Given an order-3 word grammar \(\GRAM\), 
we first construct a corresponding order-2 tree grammar \(\GRAM'\) in advance. 
Given a word \(w\), we can construct a tree \(\pi\) whose 
frontier word is \(w\) one by one, and check whether \(\pi\in\Lang(\GRAM')\). Since the size of \(\pi\) is
linearly bounded by the length \(|w|\) of \(w\), 
\(\pi\stackrel{?}{\in}\Lang(\GRAM')\) can be checked in space linear with respect to \(|w|\).
Thus, \(w\in\Wlang(\GRAM)\) can be decided in non-deterministic linear space
(with respect to the size of \(w\)).

\section{Related Work}
\label{sec:related}

As already mentioned in Section~\ref{sec:intro}, higher-order grammars have been
extensively studied in 1980's~\cite{DammTCS,Engelfriet91,EngelfrietHTT}, but
most of those results have been for safe grammars. In particular, 
Damm~\cite{DammTCS} has shown an analogous result for safe grammars, 
but his proof does not extend to the unsafe case. 

As also mentioned in Section~\ref{sec:intro}, intersection types have
been used in recent studies of (unsafe) higher-order grammars. In particular,
type-based transformations of grammars and \(\lambda\)-terms 
have been studied in \cite{KMS13HOSC,KIT14FOSSACS,Parys16diagonality}.
Clement et al.~\cite{Parys16diagonality}, independently from ours, %
gave a
transformation from an order-\((n+1)\) ``narrow'' tree language (which subsumes a word language
as a special case) to
an order-\(n\) tree language; this transformation preserves the number of occurrences of each symbol in each tree.
When restricted to word languages, our result is stronger in that our transformation is
guaranteed to preserve the order of symbols as well, and does not add any additional leaf 
symbols (though
they are introduced in the intermediate step); consequently, our proofs are more involved.
They use different intersection types, but the overall effect of their transformation 
seems similar
to that of our first transformation.
Thus, it may actually be the case that their transformation also preserves the order of symbols,
although they have not proved so.

\section{Conclusion}
\label{sec:conc}

We have shown that for any unsafe order-\((n+1)\) word grammar \(\GRAM\), 
there exists an unsafe order-\(n\) tree grammar \(\GRAM'\) whose frontier language
coincides with the word language \(\Wlang(\GRAM)\). 
The proof is constructive in that we provided (two-step) transformations that indeed construct
\(\GRAM'\) from \(\GRAM\). The transformations are based on a combination of
linear/non-linear intersection types, which may be interesting in its own right.
As Damm~\cite{DammTCS} suggested, we expect the result to be useful for further studies of higher-order
languages; in fact, we have discussed a few applications of the result.
 \subsection*{Acknowledgments}
We would like to thank Takeshi Tsukada for helpful discussions and thank Pawel Parys for information about
the related work~\cite{Parys16diagonality}. We would also like to thank Igor Walukiewicz for spotting an error in
our argument on the diagonal problem in an earlier version of this paper. This work was supported by 
JSPS Kakenhi 23220001 and 15H05706.
\bibliographystyle{plain}

\iffull
\newpage
\appendix
\section*{Appendix}
\section{Proof of Theorem~\ref{th:tr1-correctness}}
\label{sec:proof-step1}

We give a proof of Theorem~\ref{th:tr1-correctness}
in Section~\ref{sec:app1-basic} after preparing some basic definitions.
Lemmas for the proof are given after that.
In Section~\ref{sec:app1-basic} we give basic lemmas.
In Sections~\ref{sec:lemmaForwardDirection} and~\ref{sec:lemmaBackwardDirection},
we give main lemmas for forward and backward directions of the theorem, i.e.,
left-to-right and right-to-left simulations, respectively.
The both lemmas need one key lemma, which is given in Section~\ref{sec:keyLemma}.

Throughout this section, we often write \(\TT{br}\,u_1\,u_2\) as \(u_1 * u_2\).
For \(s\), \(t_1,\ldots,t_n\),
we write an iterated application \((\cdots(s\,t_1)\,t_2\cdots)\,t_n\) as \(\appi{s}{t_i}{i}{n}\).
We also write \([t_1/x_1,\ldots,t_k/x_k]\) as \([t_i/x_i]_{\ind{i}{k}}\).

\subsection{Proof of Theorem~\ref{th:tr1-correctness} and basic definitions and lemmas}\label{sec:app1-basic}

The extended terms can be embedded into the simply typed \lambday-calculus with non-determinism and the same
constants as the terminal symbols (but without any non-terminals);
we represent also the non-determinism in this \lambday-calculus by the set-representation 
\(\set{u_1,\dots,u_n}\) (\(n\ge 1\)).
The embedding transformation is given in the standard way: the mutual recursion allowed in a grammar
is handled by using Beki\v{c} property of \Y-combinator.
Also for this \lambday-calculus, we consider call-by-name reduction. %
We call terms in this calculus simply \emphd{\lambday-terms}, which are also ranged over by \(u\) and \(v\);
but if we use \(u\) and \(v\) without mentioning where they range, they are meant to be extended
applicative terms for a given grammar.
Through this transformation, we identify extended terms in a grammar
with the embedded \lambday-terms.

We define \emphd{\(\Teps\)-observational preorder} \(\lse\) and \emphd{\(\Teps\)-observational equivalence}
\(\se\) as follows.
First we define \(\vsim\) for  trees as the least congruence (w.r.t. the definition of 
trees) satisfying
\(\pi \vsim \Teps \ibr \pi\)
and
\(\pi_1 \ibr (\pi_2 \ibr \pi_3) \vsim (\pi_1 \ibr \pi_2) \ibr \pi_3\).
Now, for two \lambday-terms 
\[
x_1\COL\kappa_1,\dots,x_n\COL\kappa_n \p u, u' : \kappa
\]
we define \(u \lse u'\) if, for any \lambday-term \(C: (\kappa_1\ra\cdots\ra\kappa_n \ra\kappa) \ra \T\)
and for any tree \(\pi\) such that
\(C(\lambda x_1.\cdots\lambda x_n.u) \reds \pi\), there exists \(\pi'\) such that
\(C(\lambda x_1.\cdots\lambda x_n.u') \reds \pi'\) and \(\pi \vsim \pi'\).
And we define \(u \se u'\) if
\(u \lse u'\) and \(u \gse u'\).

We define the set \(\FV(u)\) of \emph{free variables} of an extended term \(u\) as follows:
\begin{align*}
\FV(x) &\defe \set{x}
\\
\FV(a) &\defe \emptyset
\\
\FV(A) &\defe \emptyset
\\
\FV(\app{u}{U}) &\defe \FV(u) \cup \FV(U)
\\
\FV(\set{u_1,\dots,u_k}) &\defe \cup_{i\le k} \FV(u_i)
\end{align*}
For a word \(a_1\cdots a_n\),
we define term \(\wtt{(a_1\cdots a_n)}\) inductively by:
\(\wtt{\epsilon} = \Te\) and \(\wtt{(as)} = \TT{br}\,a\,\wtt{s}\).%

We write \(\TE \pp t : \uty \tr u\) if the judgement is derived
by using the following restricted rule instead of \rname{Tr1-Set}.
\infrule[Tr1-SetS]
{\TE\p t\COL \uty \tr u_i\mbox{ (for each $i\in\set{1,\ldots,k}$)}
{\andalso k \ge 1}\\
k=1\mbox{ if $\uty$ is unbalanced}
}
{\TE \p t\COL\uty \tr \set{u_1,\ldots,u_k}}
Clearly, if \(\TE \pp t : \uty \tr u\) then \(\TE \p t : \uty \tr u\).
We use this restriction in the proof of the forward direction of the theorem.

Now we prove Theorem~\ref{th:tr1-correctness}, whose statement is:
Let \(\GRAM\) be an order-\((n+1)\) word grammar.
If \(\GRAM\tr \GRAM''\), then \(\GRAM''\) is an (extended) grammar of order at most \(n\).
Furthermore, \(\Wlang(\GRAM) = \remeps{\LLang(\GRAM'')}\).
\begin{proof}[Proof of Theorem~\ref{th:tr1-correctness}]
The well-typedness of the right hand side term of every rewriting rule of \(\GRAM''\) can be proved straightforwardly
(in a way similar to Lemma~\ref{lem:UTr-typing} in Section~\ref{sec:keyLemma}). %
By induction on \(\sty\), we can show that
\(\order(\utytosty{\uty\DCOL\sty})\le\order(\sty)-1\) if \(\order(\sty)\ge1\) and
\(\order(\utytosty{\uty\DCOL\sty})=\order(\sty)=0\) otherwise.

Now we show \(\Wlang(\GRAM) = \remeps{\LLang(\GRAM'')}\).
Suppose \(a_1\cdots a_n\in \Wlang(\GRAM)\), i.e.,
\(S \redswith{\GRAM} a_1(\cdots (a_n\Te)\cdots)\).
By Lemma~\ref{lem:leaf-fwd}, we have 
\(\pp a_1(\cdots (a_n\Te)\cdots)\COL\T\tr
 \wtt{(a_1\cdots a_n)}\).
By Lemma~\ref{lem:red-fwd}, we have \(u\) such 
that \( \pp S:\T\tr u\) with \(u(\redwith{\GRAM''}\gse)^*  \wtt{(a_1\cdots a_n)}\).
By the transformation rule, \(u\) must be \(S_\T\). Thus,
we have \(S_\T (\redwith{\GRAM''}\gse)^* \wtt{(a_1\cdots a_n)}\),
which implies \(a_1\cdots a_n\in \remeps{\LLang(\GRAM'')}\) as required.

Conversely, suppose \(a_1\cdots a_n\in \remeps{\LLang(\GRAM'')}\),
i.e.,
\(S_\T\redswith{\GRAM''} \pi\) with \(\remeps{\leaves(\pi)} = a_1\cdots a_n\) for some \(\pi\).
By repeating Lemma~\ref{lem:red-bwd}, we have
\(S\redswith{\GRAM}s\) and \(\p s:\T \tr \pi'\) with \(\pi'\vsim \pi\).
By Lemma~\ref{lem:leaf-bwd}, \(s=a_1(\cdots (a_n\Te)\cdots)\).
Thus, we have \(a_1\cdots a_n\in \Wlang(\GRAM)\) as required.
\end{proof}

\begin{lemma}
\label{lem:leaf-fwd}
\( \pp a_1(\cdots(a_n\,\Te)\cdots):\T \tr \wtt{(a_1\cdots a_n)}\).
\end{lemma}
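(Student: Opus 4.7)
The plan is a straightforward induction on $n$, using only the rules \rname{Tr1-Const0}, \rname{Tr1-Const1}, and \rname{Tr1-App2}; no set-introductions or variable rules are needed, since the term to be transformed is closed and consists only of terminals.

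For the base case $n=0$, the term is just $\Te$, and by \rname{Tr1-Const0} with the empty type environment one obtains $\pp \Te : \T \tr \Te$. By definition $\wtt{\epsilon} = \Te$, so the claim holds.

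For the inductive step, assume $\pp a_2(\cdots(a_n\,\Te)\cdots):\T \tr \wtt{(a_2\cdots a_n)}$. Apply \rname{Tr1-Const1} to get $\pp a_1 : \T\to\T \tr a_1$, and then apply \rname{Tr1-App2} with $\TE_0 = \TE_1 = \emptyset$ to derive $\pp a_1(a_2(\cdots(a_n\,\Te)\cdots)) : \T \tr \TT{br}\,a_1\,\wtt{(a_2\cdots a_n)}$. By the definition of $\wtt{(\cdot)}$, the right-hand side is exactly $\wtt{(a_1 a_2\cdots a_n)}$, completing the step.

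There is essentially no obstacle here: the type environments are all empty, so the side conditions $\balanced(\TE)$ on \rname{Tr1-Const0} and \rname{Tr1-Const1} hold trivially, and the union $\TE_0\cup\TE_1 = \emptyset$ in \rname{Tr1-App2} is well-defined with no linearity conflict. The only thing worth double-checking is that we are using \rname{Tr1-App2} (the order-$0$ argument case) rather than \rname{Tr1-App1}, which is correct because each $a_i$ has sort $\T\to\T$ and its argument has sort $\T$, matching the premise shape of \rname{Tr1-App2}.
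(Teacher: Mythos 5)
Your proof is correct and takes essentially the same route as the paper, which simply observes that the lemma follows by straightforward induction on $n$ using \rname{Tr1-Const0}, \rname{Tr1-Const1} and \rname{Tr1-App2}. The only cosmetic caveat is that in the paper's formal reading of \rname{Tr1-App2} the premises produce term \emph{sets}, so singleton applications of \rname{Tr1-SetS} (with $k=1$) are implicitly interleaved — harmless here, since the paper's shorthand identifies $u_0\,u_1$ with $u_0\,\set{u_1}$.
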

\begin{proof}
This follows by straightforward induction on \(n\).
\end{proof}

\begin{lemma}
\label{lem:leaf-bwd}
Let \(t\) be an applicative term.
If \( \p t:\T\tr \pi\) %
then \(t =a_1(\cdots(a_n\,\Te)\cdots)\) with \(\wtt{(a_1\cdots a_n)} = \pi\).
\end{lemma}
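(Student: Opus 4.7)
My plan is to prove the lemma by structural induction on the applicative term $t$. Since the judgement $\p t : \T \tr \pi$ is derived in the empty type environment, $t$ is necessarily closed; writing $t = h\,t_1\,\cdots\,t_m$ with $h$ the innermost head, $h$ must therefore be either a non-terminal or a terminal. The first key step is to eliminate the non-terminal (and, implicitly, variable) case via a straightforward subinduction on the derivation: whenever the head transforms via \rname{Tr1-NT} (resp.\ \rname{Tr1-Var}) to $A_\uty$ (resp.\ $x_\uty$), this atom is preserved on some branch of the image --- \rname{Tr1-App1} keeps it as the leftmost atom of the result and \rname{Tr1-App2} places the subderivation containing it as the left child of the fresh $\TT{br}$-node rather than discarding it. Since $\pi$ is a tree of target terminals, it cannot contain any non-terminal or free variable, so $h$ is forced to be a terminal.

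Given the arity constraint on the source (only $\Te$ of arity $0$ and all other terminals of arity $1$), either $t = \Te$ or $t = a\,t_1$ with $\TERMS(a) = 1$. For $t = \Te$, the only derivation producing a pure tree is \rname{Tr1-Const0}, giving $\pi = \Te$; taking $n = 0$ yields $\wtt{\varepsilon} = \Te = \pi$. For $t = a\,t_1$, I would show that the top rule must be \rname{Tr1-App2}: the alternative, \rname{Tr1-App1}, would require $\p a : \uty_1 \land \cdots \land \uty_k \to \T \tr v$ with each $\uty_i$ refining $\T$ but $\uty_i \neq \T$; since the only intersection-type refinements of the base sort $\T$ are $\T$ itself and the empty intersection $\top$, this forces $k = 0$, yet no derivation assigns $a$ the type $\top \to \T$ (the only rule applicable to a source terminal is \rname{Tr1-Const1}, which yields $\T\to\T$). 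Hence \rname{Tr1-App2} gives $\pi = \TT{br}\,V\,U$ with $V = a$ (forced by \rname{Tr1-Const1}) and $\p t_1 : \T \tr U$ where $U$ is a tree. Applying the induction hypothesis to $t_1$ yields $t_1 = a_1(\cdots(a_{n'}\,\Te)\cdots)$ and $U = \wtt{(a_1 \cdots a_{n'})}$, whence $\pi = \TT{br}\,a\,\wtt{(a_1 \cdots a_{n'})} = \wtt{(a\,a_1 \cdots a_{n'})}$, giving the desired form with $n = n' + 1$.

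The main obstacle will be making the head-propagation argument of the first paragraph fully rigorous: intermediate applications of \rname{Tr1-App2} introduce fresh $\TT{br}$-nodes that re-arrange the target structure, so the subinduction must track the position of the head atom rather than simply the head of the image. A secondary subtlety is handling \rname{Tr1-Set}: any use with $k \geq 2$ would leave an irreducible non-singleton set in the image and thus prevent $\pi$ from being a pure tree, while singleton uses are absorbed by the shorthand $u_0\,u_1 = u_0\{u_1\}$; these observations need to be verified uniformly across the case analysis.
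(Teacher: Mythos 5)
Your proof is correct and follows essentially the same route as the paper: inversion on the transformation derivation, using the empty environment and the absence of non-terminals in \(\pi\) to force the head to be a terminal, then forcing \rname{Tr1-App2} and concluding by induction. The only cosmetic difference is that you induct on \(t\) where the paper inducts on \(\pi\); the ingredients and level of detail (including the glossed treatment of \rname{Tr1-Set}) are the same.
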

\begin{proof}
This follows by induction on the structure of \(\pi\). 
\begin{itemize}
\item Case \(\pi=\Te\): 
\( \p t:\T\tr \pi\) must have been derived by using \rname{Tr1-Const0}. Therefore \(t=\Te\)
as required.
\item Case \(\pi=\TT{br}\,\pi_1\,\pi_2\):
\( \p t:\T\tr \pi\) must have been derived by using \rname{Tr1-App2}.
Thus, we have:
\[\begin{array}{l}
t = t_1t_2\qquad
  \p t_1\COL\T\ra\T \tr \pi_1
\qquad 
  \p t_2\COL\T \tr \pi_2\qquad \pi = \TT{br}\,\pi_1\,\pi_2
\end{array}
\]
By the condition
\( \p t_1\COL\T\ra\T \tr \pi_1\),
 the head symbol of \(t_1\) must be a terminal.
(Because the type environment is empty, the head cannot be a variable, and because the
output of transformation does not contain a non-terminal, the head cannot be a non-terminal.)
Thus, \(t_1\) is actually a terminal \(a_1\). By the induction hypothesis and
\(  \p t_2\COL\T \tr \pi_2\), we have 
\(t_2=a_2(\cdots(a_n\,\Te)\cdots)\) with \(\wtt{(a_2\cdots a_n)}=\pi_2\). 
Thus, we have 
\(t =a_1(a_2(\cdots(a_n\,\Te)))\), with \(\wtt{(a_1a_2\cdots a_n)} = \pi\) as required. 
\end{itemize}
\end{proof}

\begin{lemma}[Context Lemma]\label{lem:context}
Given two \lambday-terms
\((x_1\COL\kappa_1,\dots,x_n\COL\kappa_n \p u, u' : \kappa)\)
where
\(\kappa= \kappa_{n+1} \ra \cdots \ra \kappa_\ell \ra \T\),
we have \(u \lse u'\) iff
for any closed terms \(U_1,\dots,U_\ell\)
of type \(\kappa_1 , \dots , \kappa_\ell\), respectively,
and for any \(\pi\) such that
\(\appi{(
\lambda x_1.\ldots\lambda x_n.
u
)}{U_i}{i}{\ell} \reds \pi\),
there exists \(\pi'\) such that
\(\appi{(
\lambda x_1.\ldots\lambda x_n.
u'
)}{U_i}{i}{\ell} \reds \pi'\)
and
\(\pi \vsim \pi'\).
(We write \(u \las u'\) if the latter condition of this equivalence holds.)
\end{lemma}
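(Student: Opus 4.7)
The ``only if'' direction is an immediate instantiation. Assume $u \lse u'$; given closed terms $U_1,\ldots,U_\ell$ of types $\kappa_1,\ldots,\kappa_\ell$ and a tree $\pi$ with $\appi{(\lambda x_1.\ldots\lambda x_n. u)}{U_i}{i}{\ell} \reds \pi$, take the context $C \defe \lambda f.\, \appi{f}{U_i}{i}{\ell}$ of type $(\kappa_1\ra\cdots\ra\kappa_\ell\ra\T)\ra\T$. Then $C(\lambda x_1.\ldots\lambda x_n. u)$ head-reduces in one $\beta$-step to $\appi{(\lambda x_1.\ldots\lambda x_n. u)}{U_i}{i}{\ell}$, hence also reduces to $\pi$. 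Applying $u \lse u'$ yields some $\pi'$ with $C(\lambda x_1.\ldots\lambda x_n. u') \reds \pi'$ and $\pi \vsim \pi'$; by confluence of $\lambda Y$-reduction with non-determinism, this reduction factors through the initial head-$\beta$-step, extracting the required reduction from $\appi{(\lambda x_1.\ldots\lambda x_n. u')}{U_i}{i}{\ell}$ to a tree $\vsim$-equivalent to $\pi$.

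For the ``if'' direction, write $u \las u'$ for the purely applicative condition on the right-hand side. The plan is to show that $\las$ is already a precongruence, whence $u \las u'$ implies $u \lse u'$ by instantiating the congruence at the enclosing context $C$. I would follow Milner's standard context-lemma argument for a call-by-name $\lambda Y$-calculus, as developed for instance in~\cite{DBLP:books/daglib/0018087}. Concretely, define a type-indexed applicative preorder $R_\kappa$ by induction on $\kappa$: at ground type, $R_\T(u,u')$ iff every $\pi$ with $u \reds \pi$ is $\vsim$-matched by some $\pi'$ with $u' \reds \pi'$; at arrow type, $R_{\kappa_1 \ra \kappa_2}(u,u')$ iff $R_{\kappa_2}(u\,V, u'\,V)$ holds for every closed $V\COL\kappa_1$. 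Lift $R$ to open terms by pointwise-related closing substitutions. The key ``basic lemma'' is then: for any well-typed term $t$ and pointwise $R$-related closing substitutions $\sigma,\sigma'$, one has $R(\sigma t, \sigma' t)$; and more generally, replacing any subterm of $t$ by an $R$-related one preserves $R$. Instantiated at the context $C$, this is exactly the context lemma.

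The main obstacles will be compatibility of $R$ with (i) the $Y$-combinator, which is needed to encode the grammar's mutual recursion inside the $\lambda Y$-calculus, and (ii) the non-deterministic set constructor $\{u_1,\ldots,u_k\}$. For (i), the standard approach is syntactic approximation: any finite reduction of $Y\,F$ to a tree uses only finitely many unfoldings, so one may replace $Y\,F$ by its $n$-th unfolding $F^n\,\Omega$ and argue by induction on $n$, with $R$-compatibility following from monotonicity along approximants. For (ii), compatibility is direct from the reduction rules, which pick an arbitrary element from the set, so the preorder propagates elementwise and the set-constructor behaves as a join. A small auxiliary observation is also required to handle observation up to $\vsim$: since $\vsim$ is defined as the least congruence containing $\pi \vsim \Teps \ibr \pi$ and associativity of $\ibr$, tree-matching in $R_\T$ pushes through any enclosing tree context, which is what allows the ground-type observation in the definition of $\lse$ to be recovered from $R$ at higher types.
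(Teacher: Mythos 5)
Your ``only if'' direction reaches the right conclusion but for the wrong reason: this calculus is \emph{not} confluent (the non-deterministic set/choice constructor destroys confluence), so the appeal to ``confluence of \(\lambday\)-reduction with non-determinism'' is unsound. What actually saves the step is that, under the call-by-name reduction used here, the term \(C(\lambda x_1.\cdots\lambda x_n.u')\) with \(C=\lambda f.\,f\,U_1\cdots U_\ell\) has exactly one enabled redex, the head \(\beta\)-redex, so any reduction to a tree must begin with it; no confluence or standardization argument is needed (or available).

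The ``if'' direction is where the genuine gap lies. You define \(R\) at arrow types applicatively (the \emph{same} closed argument \(V\) on both sides) and then assert the ``basic lemma'': \(R(\sigma t,\sigma' t)\) for pointwise \(R\)-related closing substitutions, and more generally that replacing a subterm by an \(R\)-related one preserves \(R\). That assertion \emph{is} the context lemma, and the evident induction on \(t\) does not prove it: in the application case \(t=t_1\,t_2\), the induction hypotheses plus the applicative clause give \(R(\sigma t_1\,(\sigma t_2),\,\sigma' t_1\,(\sigma t_2))\), but to change the argument from \(\sigma t_2\) to \(\sigma' t_2\) you need that the particular closed term \(\sigma' t_1\) maps \(R\)-related arguments to \(R\)-related results --- exactly the congruence property being proved, a circularity. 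Milner-style operational proofs break it by an additional induction on the length of the reduction sequence (or, with may-nondeterminism, by Howe's method); your sketch sets up neither, and the \(Y\)-case further presupposes an unwinding/syntactic-continuity lemma that is only named. Note also that the source you cite does not contain the operational argument you describe: Streicher's proof --- and the paper's own --- is denotational, via a logical relation between the call-by-name cpo model extended with the Hoare powerdomain \(P(\VT)\) (which captures may-convergence and absorbs \(\vsim\) by quotienting trees) and closed terms, establishing \(u\lse u'\Rightarrow u\las u'\Rightarrow \ip{u}\,R_\kappa\,u'\Rightarrow u\lse u'\); there \(Y\) is handled by admissibility of the relation and choice by the powerdomain, so no syntactic approximation or Howe-style closure is needed. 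Your operational route could likely be carried out, but as written its central compatibility step is missing rather than merely elided.
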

\begin{proof}
The proof is obtained by a trivial modification of the proof of the context lemma for PCF by a logical relation
 given in~\cite{DBLP:books/daglib/0018087}.

The logical relation is between a cpo model and the syntax.
The cpo model is the standard (call-by-name) cpo model extended with Hoare powerdomain, which corresponds to
 may convergence.
Specifically, the interpretation \(\ip{\T}\) of the base type \(\T\) is defined as
\((P(\VT),\subseteq)\) where \(\VT\) is the quotient set
of the set of  trees modulo \(\vsim\), and \(P(\VT)\) is the powerset of \(\VT\).
(This is the Hoare powerdomain of the flat cpo \(\VT_{\bot}\).) The interpretation of function types
is given by the usual continuous function spaces.
The interpretation of the constants is given as follows:
\begin{align*}
\ip{\TT{br}}(L_1,L_2) &\defe \set{\coset{\TT{br}\,\pi_1\,\pi_2}\,|\,\coset{\pi_i} \in L_i}
\qquad(L_1,L_2 \in P(\VT))
\\
\ip{a} &\defe \set{\coset{a}}
\qquad(\TERMS(a)=0).
\end{align*}

Now the logical relation \(R=(R_\kappa)_\kappa\) is defined as below.
Let \(\cterm{\kappa}\) be the set of closed \lambday-terms of sort \(\kappa\).
Then \(R_\kappa \subseteq \ip{\kappa} \times \cterm{\kappa}\) is defined inductively as follows:
\begin{align*}
L\,R_{\T}\,u &\quad\text{if}\quad
\text{for any \(d\in L\) there exists \(\pi\) such that \(u \reds \pi\) and \(d=\coset{\pi}\)}
\\
f\,R_{\kappa \ra \kappa'}\,u &\quad\text{if}\quad
\text{for any \(g \in \ip{\kappa}\) and \(v \in \cterm{\kappa}\),\,
\(g\,R_{\kappa}\,v\) implies \(f(g)\,R_{\kappa'}\,(\app{u}{v})\).}
\end{align*}

For \(u,u'\in\cterm{\kappa}\),
we can show that 
\[
u \lse u'
\quad\Longrightarrow\quad
u \las u'
\quad\Longrightarrow\quad
\ip{u}\,R_{\kappa}\,u'
\quad\Longrightarrow\quad
u \lse u'
\]
whose proof is obtained in the same way as that of~\cite[Theorem~5.1]{DBLP:books/daglib/0018087}.
\end{proof}
\begin{lemma}\label{lem:envWellFormed}
Given \(\TE, x\COL\uty_1,\dots,x\COL\uty_k \p t : \uty \tr u\) where \(x \notin \dom(\TE)\),\,
\(\uty_1\land\dots\land\uty_k \ra \uty\) is well-formed.
\end{lemma}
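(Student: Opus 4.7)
The plan is to derive the lemma from a single invariant about every derivation of \(\TE \p t : \uty \tr u\). Define the \emph{unbalanced count} of an environment as the number of its bindings whose type is unbalanced, counted with multiplicity. I will prove, by induction on the derivation, the invariant

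(\(\star\))\quad if \(\uty\) is balanced, the unbalanced count of \(\TE\) is \(0\); if \(\uty\) is unbalanced, the unbalanced count of \(\TE\) is at most \(1\).

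Granting (\(\star\)), the lemma is immediate: given \(\TE, x\COL\uty_1,\ldots,x\COL\uty_k \p t : \uty \tr u\) with \(x\notin\dom(\TE)\), combining the \(\uty_i\) with the bindings of \(\TE\) and applying (\(\star\)) leaves exactly three possibilities, namely (a) \(\uty\) and every \(\uty_i\) are balanced, (b) \(\uty\) is unbalanced and every \(\uty_i\) is balanced, or (c) \(\uty\) is unbalanced and exactly one \(\uty_j\) is unbalanced with the rest balanced. These are precisely the three well-formed arrow shapes (balanced \(\ra\) balanced, balanced \(\ra\) unbalanced, unbalanced \(\ra\) unbalanced), so \(\uty_1\land\cdots\land\uty_k\ra\uty\) is well-formed in every case.

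For the induction itself, the leaf-like rules \rname{Tr1-Var}, \rname{Tr1-NT}, \rname{Tr1-Const0}, \rname{Tr1-Const1} are handled by the side condition \(\balanced(\TE)\), with \rname{Tr1-Var} supplying the sole unbalanced binding exactly when \(\uty\) is unbalanced. \rname{Tr1-Set} reuses the same \(\TE\) and \(\uty\) in every premise, so the IH transfers trivially. For \rname{Tr1-Abs1} the IH applies to the premise, and a split on whether the at-most-one unbalanced binding lies among the \(\uty_i'\) or in \(\TE\) gives (\(\star\)) for the conclusion; for \rname{Tr1-Abs2} the mandatory binding \(x\COL\T\) is itself unbalanced, so the IH forces \(\uty\) to be unbalanced and \(\TE\) to have no further unbalanced bindings.

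The only case that really needs care is \rname{Tr1-App1}: the codomain of the function type \(\uty_1'\land\cdots\land\uty_k'\ra\uty\) is the very \(\uty\) appearing in the conclusion, and I must case-split on which of the three well-formed shapes this arrow takes. In each shape, the IH on premise \(0\) controls the count in \(\TE_0\) through the classification of the whole arrow, while the IH on premise \(i\) controls the count in \(\TE_i\) through the classification of \(\uty_i'\); since the union \(\cup\) never duplicates an unbalanced binding, these counts add, and in each of the three shapes the total respects the bound demanded by \(\uty\). The analogous but shorter argument handles \rname{Tr1-App2}, where the argument type \(\T\) forces \(\T\ra\uty\) to be balanced and hence \(\uty\) to be unbalanced. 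The main anticipated obstacle is exactly this enumeration of sub-cases in \rname{Tr1-App1}; once that bookkeeping is written out carefully, everything else is routine.
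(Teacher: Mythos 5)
Your proof is correct and amounts to the same argument the paper invokes with its one-line ``straightforward induction on \(t\)'': an induction over the transformation derivation, checking the balance bookkeeping rule by rule. Your explicit strengthening to the whole-environment invariant \((\star)\) (at most one unbalanced binding when \(\uty\) is unbalanced, none when \(\uty\) is balanced) is in fact the right generalization to make the induction close -- in the \rname{Tr1-Abs1} case with an unbalanced abstracted type, the per-variable statement alone would not rule out a second unbalanced binding in \(\TE\) -- and your case analyses for \rname{Tr1-App1}, \rname{Tr1-App2} and the leaf rules are all accurate.
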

\begin{proof}
By straightforward induction on \(t\).
\end{proof}

\begin{lemma}\label{lem:envu}
Given \(\TE \p t : \uty \tr u\) %
and \(y \in \FV(u)\)
there exists \(x\COL\uty' \in \TE\) such that \(y=x_{\uty'}\).
\end{lemma}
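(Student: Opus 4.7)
The plan is to prove this by straightforward structural induction on the derivation of \(\TE \p t : \uty \tr u\). The statement is a standard "free variables come from the environment" scope lemma, complicated only by the fact that the transformation duplicates each source variable into a family of annotated variables \(x_{\uty'}\), one per intersection type used in the derivation.

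First I would dispatch the base cases. For \rname{Tr1-Var} with conclusion \(\TE, x\COL\uty \p x\COL\uty\tr x_{\uty}\), we have \(\FV(x_{\uty}) = \set{x_{\uty}}\) and the binding \(x\COL\uty\) sits in the environment, so \(y = x_\uty\) is witnessed. For \rname{Tr1-NT}, \rname{Tr1-Const0}, and \rname{Tr1-Const1}, the image is \(A_\uty\), \(\Te\), or \(a\), whose free-variable sets are empty, so the claim holds vacuously.

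Next I would handle the structural rules by splitting \(\FV\) across subterms and invoking the induction hypothesis. For \rname{Tr1-App1} and \rname{Tr1-App2}, any \(y\in \FV(v\,U_1\cdots U_k)\) lies in some \(\FV(v)\) or \(\FV(U_i)\), and by IH it has shape \(x_{\uty'}\) for some \(x\COL\uty'\in\TE_0\) or \(\TE_i\); since the conclusion environment is the union \(\TE_0\cup\TE_1\cup\cdots\cup\TE_k\), the required binding is present. The rule \rname{Tr1-Set} is immediate, as \(\FV(\set{u_1,\ldots,u_k}) = \bigcup_i \FV(u_i)\) and each \(u_i\) is typed under the same \(\TE\).

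The two abstraction rules are the only places where the bookkeeping is mildly subtle, and will be the main (though still easy) case to handle carefully. For \rname{Tr1-Abs1}, \(\FV(\lambda x_{\uty_1}\cdots\lambda x_{\uty_k}.u) = \FV(u)\setminus\set{x_{\uty_1},\ldots,x_{\uty_k}}\); by IH any \(y\) there has the form \(z_{\uty'}\) for some \(z\COL\uty'\) in \(\TE, x\COL\uty_1,\ldots,x\COL\uty_k\), and the removed variables ensure \(z\neq x\), leaving \(z\COL\uty'\in\TE\). For \rname{Tr1-Abs2}, the image is \([\Teps/x_\T]u\) whose free-variable set is \(\FV(u)\setminus\set{x_\T}\); again by IH every such \(y\) is \(z_{\uty'}\) with \(z\COL\uty'\in\TE, x\COL\T\), and the exclusion of \(x_\T\) forces \(z\COL\uty'\in\TE\). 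This completes the induction; there is no substantive obstacle, but one must be careful in \rname{Tr1-Abs2} to note that \(\Teps\) contributes no free variables, so the substitution really does just delete \(x_\T\).
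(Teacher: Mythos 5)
Your proof is correct and follows exactly the route the paper takes: the paper's own proof is just ``by straightforward induction on \(t\)'', and your rule-by-rule induction (with the only mildly delicate points being the two abstraction rules, which you handle correctly) is precisely the straightforward argument being alluded to.
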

\begin{proof}
By straightforward induction on \(t\).
\end{proof}

\begin{lemma}\label{lem:precong}
\begin{enumerate}
\item
For any \(u\), \(u_1\), \(u_2\), and \(u_3\),
\[
u \lse \Teps \ibr u
\qquad \text{and} \qquad
u_1 \ibr (u_2 \ibr u_3) \se (u_1 \ibr u_2) \ibr u_3\,.
\]
\item
For any \(\pi_1\) and \(\pi_2\),
\[
\pi_1 \se \pi_2
\qquad\text{iff}\qquad
\pi_1 \vsim \pi_2\,.
\]
\end{enumerate}
\end{lemma}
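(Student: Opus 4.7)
For Part 1, the plan is to invoke the already-proved Context Lemma (Lemma~\ref{lem:context}) in each case. Since every term appearing in the statement has the base sort \(\T\), the Context Lemma needs no additional closing arguments, so it suffices to compare top-level reductions to trees. For \(u \lse \Teps \ibr u\), any reduction \(u \reds \pi\) can be lifted to \(\Teps \ibr u \reds \Teps \ibr \pi\) by reducing inside the second argument of the constructor \(\ibr = \TT{br}\), and the base equation \(\pi \vsim \Teps \ibr \pi\) then discharges the goal. For the associativity claim \(u_1 \ibr (u_2 \ibr u_3) \se (u_1 \ibr u_2) \ibr u_3\), the key observation is that \(\ibr\) is a constructor never consumed by reduction: any reduction of either side to a tree must exhibit it in the form \(\pi_1 \ibr (\pi_2 \ibr \pi_3)\) (respectively \((\pi_1 \ibr \pi_2) \ibr \pi_3\)) with witnessing sub-reductions \(u_i \reds \pi_i\), and reusing those sub-reductions on the other side produces the rebracketed tree. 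The associativity base equation of \(\vsim\) then closes the gap in both directions of \(\se\).

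For Part 2, the \((\Rightarrow)\) direction is immediate from the definition of \(\lse\) applied with the identity observer \(C := \lambda x.\,x : \T \to \T\): we have \(C(\pi_1) \reds \pi_1\) and \(C(\pi_2) \reds \pi_2\), and since trees are already in tree-normal form any reduct of \(C(\pi_2)\) can only be \(\pi_2\) itself, so \(\pi_1 \se \pi_2\) forces \(\pi_1 \vsim \pi_2\) directly. For \((\Leftarrow)\), my strategy is to show that the restriction of \(\se\) to trees is a congruence containing both base relations of \(\vsim\); minimality of \(\vsim\) as such a congruence then yields \(\vsim \subseteq \se\). Closure of \(\se\) under the tree constructor \(\TT{br}\) is obtained by observer-composition: given \(\pi \se \pi'\) and any observer \(C\) of \(\TT{br}\,\pi\,\pi_0\), the smaller observer \(\lambda x.\,C(\TT{br}\,x\,\pi_0)\) witnesses \(\pi \lse \pi'\), which transfers to \(\TT{br}\,\pi\,\pi_0 \lse \TT{br}\,\pi'\,\pi_0\); the other argument position is analogous. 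Containment of the base relations follows from Part 1 together with its symmetric counterpart \(\Teps \ibr u \lse u\), whose proof is obtained by the same simulation.

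The main delicate point will be making the reduction-matching of Part 1 fully precise in the presence of non-determinism and \(Y\)-style recursion in the underlying \(\lambday\)-calculus, but this is essentially a routine simulation: because \(\ibr\) is never eliminated by any reduction rule and call-by-name reduction descends independently into its two argument positions, the sub-reductions \(u_i \reds \pi_i\) used on one side of the associativity law can be reused verbatim on the other. No new ideas beyond the Context Lemma and the definitions of \(\vsim\), \(\lse\), and \(\se\) are required.
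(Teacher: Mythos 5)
Your proposal is correct and follows the same route as the paper, whose entire proof of this lemma is the single remark that both items follow easily from the context lemma (Lemma~\ref{lem:context}); your write-up simply supplies the routine simulation and congruence-closure details that the authors leave implicit. The only slight imprecision is the claim that ``no additional closing arguments'' are needed: when \(u\) has free variables the context lemma still quantifies over closed instantiations \(U_1,\dots,U_n\), but since substitution distributes over \(\ibr\) and \(\Teps\) is closed, this does not affect the argument.
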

\begin{proof}
The both items can be easily shown by using the context lemma.
\end{proof}

\begin{lemma}\label{lem:subst-monotone}
If \(u \lse u'\),
then \(\mvsub u \lse \mvsub u'\).
\end{lemma}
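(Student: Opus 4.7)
The plan is to reduce the substitution monotonicity of $\lse$ to its closed-argument characterization (Lemma~\ref{lem:context}) by absorbing the substitution $\mvsub$ into the arguments supplied to the outer abstraction. Explicitly, suppose $x_1\COL\kappa_1,\ldots,x_n\COL\kappa_n \p u,u'\COL\kappa$ with $\kappa = \kappa_{n+1}\ra\cdots\ra\kappa_\ell\ra\T$. Write $t_i\defe\mvsub(x_i)$ for each $i\le n$; by $\alpha$-renaming I may assume that the binders of $u,u'$ capture no free variable of any $t_i$. Let $y_1\COL\lambda_1,\ldots,y_m\COL\lambda_m$ enumerate $\FV(\mvsub u) = \FV(\mvsub u')$, and note that $\mvsub u,\mvsub u'\COL\kappa$ in this environment, with $\lambda_{m+j} = \kappa_{n+j}$ for $1\le j\le \ell-n$. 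By Lemma~\ref{lem:context} it suffices to show $\mvsub u \las \mvsub u'$: for any closed $V_1,\ldots,V_{\ell'}$ (with $\ell' = m + (\ell-n)$) of the matching types such that $(\lambda y_1.\cdots\lambda y_m.\mvsub u)\,V_1\cdots V_{\ell'}\reds\pi$, to exhibit $\pi'$ with $(\lambda y_1.\cdots\lambda y_m.\mvsub u')\,V_1\cdots V_{\ell'}\reds\pi'$ and $\pi\vsim\pi'$.

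The key construction is to put $U_i\defe[V_1/y_1,\ldots,V_m/y_m]\,t_i$ for $i\le n$ and $U_{n+j}\defe V_{m+j}$ for $1\le j\le \ell-n$. Each $U_i$ for $i\le n$ is closed, because $\FV(t_i)\subseteq\set{y_1,\ldots,y_m}$, so the $U_i$ form a legitimate closed test for $u$ vs.\ $u'$ in the sense of Lemma~\ref{lem:context}. The standard substitution identity
\[
[U_1/x_1,\ldots,U_n/x_n]\,u \;=\; [V_1/y_1,\ldots,V_m/y_m]\,\mvsub u
\]
(valid once capture has been avoided) then implies that $(\lambda x_1.\cdots\lambda x_n.u)\,U_1\cdots U_\ell$ and $(\lambda y_1.\cdots\lambda y_m.\mvsub u)\,V_1\cdots V_{\ell'}$ both reduce, via $n$ resp.\ $m$ outermost $\beta$-steps, to the common term $([V_1/y_1,\ldots,V_m/y_m]\,\mvsub u)\,V_{m+1}\cdots V_{\ell'}$, and symmetrically for $u'$ in place of $u$.

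From $(\lambda \vec{y}.\mvsub u)\vec{V}\reds\pi$ together with this common factoring I obtain $(\lambda \vec{x}.u)\vec{U}\reds\pi$. The hypothesis $u\lse u'$ and Lemma~\ref{lem:context} then yield $\pi'$ with $(\lambda \vec{x}.u')\vec{U}\reds\pi'$ and $\pi\vsim\pi'$; reversing the factoring delivers $(\lambda \vec{y}.\mvsub u')\vec{V}\reds\pi'$, as required. The only subtle point is the substitution bookkeeping behind the displayed identity, which is routine once the Barendregt convention is enforced by $\alpha$-renaming; I do not foresee any deeper obstacle, since the statement is essentially a congruence property of $\lse$ under substitution and follows almost directly from the context lemma.
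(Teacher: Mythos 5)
Your argument is sound, but it takes a different route from the paper: the paper disposes of this lemma in one line, observing that it is immediate from the definition of \(\lse\) itself (a testing context \(D\) for \(\mvsub u\) can be absorbed into a testing context \(C \defe \lambda g.\,D(\lambda \vec{y}.\,g\,\mvsub(x_1)\cdots\mvsub(x_n))\) for \(u\), since \(C\) ranges over arbitrary \(\lambday\)-terms), whereas you first pass to the closed-argument characterization \(\las\) of Lemma~\ref{lem:context} and instantiate the substituted terms with the closed test arguments \(V_j\). Both work and neither is circular (Lemma~\ref{lem:context} does not depend on this lemma); the paper's route is lighter, needing no appeal to the context lemma, while yours trades the quantification over contexts for concrete closed instances \(U_i = [\vec V/\vec y]\,\mvsub(x_i)\), at the cost of two pieces of bookkeeping that you gloss over: (i) the environment \(\vec y\) must be chosen to cover the free variables of \emph{all} the \(\mvsub(x_i)\) (and of both \(\mvsub u\) and \(\mvsub u'\)), not just \(\FV(\mvsub u)\), or else the \(U_i\) need not be closed — writing \(\FV(\mvsub u)=\FV(\mvsub u')\) is also not warranted, so take a common sufficiently large environment; and (ii) the "common factoring" step silently uses that any reduction of \((\lambda\vec y.\mvsub u)\vec V\) (resp.\ \((\lambda\vec x.u')\vec U\)) to a tree passes through the head \(\beta\)-reducts, which is indeed forced under the call-by-name strategy used here but is the real content of that step, more so than the substitution identity you single out. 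With those points made explicit your proof is complete.
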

\begin{proof}
The proof is trivial from the definition of the contextual preorder \(\lse\).
\end{proof}

\begin{lemma}\label{lem:subst-iter}
\end{lemma}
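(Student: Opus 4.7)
The statement of Lemma~\ref{lem:subst-iter} has been omitted from the excerpt, but from its name and its placement immediately after Lemma~\ref{lem:subst-monotone}, I expect it to say that the $\Teps$-observational preorder is preserved under iterated (multi-variable, possibly parallel) substitutions: if $\mvsub$ and $\mvsub'$ are substitutions with $\dom(\mvsub)=\dom(\mvsub')$ and $\mvsub(x)\lse\mvsub'(x)$ for every $x\in\dom(\mvsub)$, then $\mvsub u\lse\mvsub' u$ for every $u$. (Alternatively, it could be the variant $[u_i/x_i]_{i\le k}\,v\lse[u'_i/x_i]_{i\le k}\,v$ whenever each $u_i\lse u'_i$, which is essentially the same statement.) The plan below sketches a proof of this; the same strategy works for the single-term version as well.

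The plan is to reduce the iterated case to the single-variable case of Lemma~\ref{lem:subst-monotone} by composing substitutions one variable at a time. First, I would $\alpha$-rename all bound variables of $u$ and of the terms in the codomains of $\mvsub,\mvsub'$ so that the bound variables of $u$ are disjoint from the $x_i$ and from the free variables of each $\mvsub(x_i)$, $\mvsub'(x_i)$. Then I would introduce fresh variables $y_1,\ldots,y_k$ (one per $x_i$) and factor $\mvsub$ as the composition of $[\mvsub(x_1)/y_1,\ldots,\mvsub(x_k)/y_k]$ with $[y_1/x_1,\ldots,y_k/x_k]$ (likewise for $\mvsub'$). After this rewriting, only the ``outer'' substitution differs between the two sides, and that outer substitution substitutes closed-enough terms for distinct variables into the same intermediate term $v = [y_i/x_i]_{i\le k}\,u$. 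Then I would peel off the variables $y_1,\ldots,y_k$ one at a time, using Lemma~\ref{lem:subst-monotone} at each step to replace $\mvsub(x_i)$ by $\mvsub'(x_i)$, and chaining the resulting preorder steps by transitivity of $\lse$.

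For the individual use of Lemma~\ref{lem:subst-monotone} at each step, the key point is that because the $y_i$ are fresh, $[\mvsub(x_j)/y_j]$ for $j\neq i$ commutes with the pending replacement at position $i$ and can be absorbed into the ambient context; the Context Lemma (Lemma~\ref{lem:context}) ensures that a change inside an arbitrary surrounding context is still observed only up to $\vsim$ of the final trees, so monotonicity really does propagate through all the outer substitutions. Handling the extended-term constructor $\{u_1,\ldots,u_k\}$ is routine once one checks that $\lse$ is a precongruence with respect to set-formation, which itself follows from the Context Lemma by using the set position as a hole.

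The main obstacle I anticipate is bookkeeping rather than a deep mathematical point: ensuring that $\alpha$-renaming and variable freshness are handled so that the parallel substitution can genuinely be sequentialized, and that Lemma~\ref{lem:subst-monotone} (stated for a fixed substitution applied to related terms) can be used when what we actually want is a fixed term with related substitutions. The cleanest way around this is the ``introduce fresh $y_i$'' trick above, which reduces the whole argument to $k$ applications of Lemma~\ref{lem:subst-monotone} together with transitivity of $\lse$; no further appeal to the structure of $u$ should be needed.
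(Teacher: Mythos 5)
You have guessed the wrong statement, and as a result the proposal proves something other than what Lemma~\ref{lem:subst-iter} asserts. In the paper this lemma is not about monotonicity of $\lse$ under related substitutions at all; it is the purely syntactic identity that two substitutions with disjoint domains can be merged: if $\dom(\mvsub)\cap\dom(\mvsub')=\emptyset$, then $\mvsub(\mvsub' u)=(\mvsub\cup\mvsub')u$. This is an \emph{equality} of (extended) terms, not a statement up to $\lse$ or $\vsim$, and it is exactly what is needed later (in the proof of Lemma~\ref{lem:red-fwd}) to collapse a sequence of substitutions $[U^1_i/(x_1)_{\uty^1_i}]_{i\le k^1},\ldots,[U^m_i/(x_m)_{\uty^m_i}]_{i\le k^m}$ into a single union substitution. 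The paper's proof is a one-line straightforward induction on the structure of $u$ (over the clauses defining substitution on variables, terminals, non-terminals, applications $u_0\,U$, and sets $\set{u_1,\ldots,u_k}$), using disjointness of the domains in the variable case; no observational reasoning, no Context Lemma, and no $\alpha$-renaming machinery is involved, since the terms being substituted are applicative extended terms without binders.

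Your argument, by contrast, is built entirely around the $\Teps$-observational preorder: factoring substitutions through fresh variables, invoking Lemma~\ref{lem:subst-monotone} repeatedly, and appealing to the Context Lemma for precongruence. None of that can yield the required syntactic equality (observational reasoning only ever gives $\lse$ or $\sim$), and the statement you set out to prove is essentially a multi-variable variant of Lemma~\ref{lem:subst-monotone}, which the paper already has and uses separately. So the gap is not in the execution of your plan but in its target: the lemma to be proved is the commutation/merging law for disjoint-domain substitutions, established by direct structural induction on $u$, with the only point worth checking being the variable case ($x\in\dom(\mvsub')$, $x\in\dom(\mvsub)$, or neither) and the fact that substitution distributes over application and over term sets.
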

If \(\dom(\mvsub) \cap \dom(\mvsub') = \emptyset\), then \(\mvsub (\mvsub' u) = (\mvsub \cup \mvsub') u\).
\begin{proof}
The proof is given by straightforward induction on \(u\).
\end{proof}

\begin{lemma}\label{lem:strengthening}
Given \(\TE,x\COL\uty' \pp t : \uty \tr u\),
if \(x \notin \FV(t)\),
then we also have \(\TE \pp t : \uty \tr u\)
and \(\uty'\) is balanced.
\end{lemma}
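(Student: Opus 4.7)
The plan is to proceed by induction on the derivation of the judgment $\TE,x\COL\uty' \pp t : \uty \tr u$, doing a case analysis on the last rule applied. Throughout, the side condition $x \notin \FV(t)$ rules out the possibility that the derivation uses $x$ as the principal variable in a \rname{Tr1-Var} step, and the usual variable convention lets us assume that any bound variable in $t$ is different from $x$.

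For the axiom cases \rname{Tr1-Var}, \rname{Tr1-NT}, \rname{Tr1-Const0} and \rname{Tr1-Const1}, each rule has the side condition $\balanced(\TE,x\COL\uty')$. Since $\balanced$ is defined pointwise on bindings, this immediately forces $\uty'$ to be balanced, and moreover $\balanced(\TE)$ still holds after deleting the $x$-binding, so the same rule can be reapplied to obtain $\TE \pp t : \uty \tr u$. In the \rname{Tr1-Var} case, one also notes that the principal variable cannot be $x$ itself because $x \notin \FV(t)$.

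For the application cases \rname{Tr1-App1} and \rname{Tr1-App2}, the environment is split as $\TE_0 \cup \cdots \cup \TE_k = \TE, x\COL\uty'$. Because $x \notin \FV(t)$ implies $x$ is free in none of the subterms appearing in the premises, the induction hypothesis applies to every premise whose context contains the binding $x\COL\uty'$. Each such application yields (i) that $\uty'$ is balanced and (ii) a premise over the strengthened context $\TE_j \setminus \{x\COL\uty'\}$. The balance of $\uty'$ legitimates the fact that $x\COL\uty'$ may have appeared in several $\TE_j$ (only balanced bindings may be duplicated across a $\cup$), and after removing it from every part the union indeed becomes $\TE$, so the rule can be reapplied. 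The case \rname{Tr1-Set}/\rname{Tr1-SetS} is analogous but easier, since the environment is shared and the induction hypothesis applied to any one (hence every) premise delivers both the strengthening and the balance of $\uty'$; the restriction on \rname{Tr1-SetS} (that $k=1$ when the type is unbalanced) is unaffected by the strengthening. The abstraction cases \rname{Tr1-Abs1} and \rname{Tr1-Abs2} go through by applying the induction hypothesis to the body, using the variable convention to keep the freshly bound $y_{\uty_i}$ disjoint from $x$, so that $x\COL\uty'$ can be viewed as sitting on the outside of the new bindings and pulled out of the strengthened context.

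I do not expect a genuine obstacle here: the main points to watch are purely bookkeeping ones, namely that (a) removing a balanced binding from a $\cup$-decomposed context preserves the decomposition, and (b) the side condition $\balanced(\TE,x\COL\uty')$ in the axiom cases both forces $\uty'$ balanced and survives the deletion. Once these are handled uniformly in the application and set cases, the remaining cases reduce directly to the induction hypothesis.
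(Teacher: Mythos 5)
Your proof is correct and takes essentially the same route as the paper, which disposes of this lemma with a one-line ``straightforward induction on the derivation''; your case analysis (the balancedness side conditions in the axiom cases forcing \(\uty'\) balanced, and the removal of a balanced binding being compatible with the \(\cup\)-splitting in the application cases) is exactly the bookkeeping the paper leaves implicit.
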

\begin{proof}
This follows by straightforward induction on \(\TE,x\COL\uty' \pp t : \uty \tr u\).
\end{proof}
\subsection{Key lemma}
\label{sec:keyLemma}

\begin{lemma}\label{lem:groundsbst}
Given \(x\COL\T \p s  :  \uty \tr v\)
where \(\odr{\uty}\le1\)
and \(\p t : \T \tr U\),
\[
([\Teps/x_{\T}]v) \ibr U \se
[U/x_{\T}]v\,.
\]
Moreover,
for any \(p \ge 0\), \(\pi\), and a reduction sequence
\[
([\Teps/x_{\T}]v) \ibr U \red^p \pi
\]
there exists \(\pi'\) such that
\[
[U/x_{\T}]v \red^p \pi'
 \vsim \pi\,.
\]
\end{lemma}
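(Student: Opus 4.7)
The plan is to prove the step-counted ``Moreover'' clause by induction on the derivation of \(x\COL\T\p s:\uty\tr v\); the contextual equivalence \(\se\) then follows from it together with the symmetric simulation in the opposite direction, via the Context Lemma (Lemma~\ref{lem:context}) and item~2 of Lemma~\ref{lem:precong}.

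The structural observation driving the induction is that \(x_\T\) occurs \emph{linearly} in \(v\): the disjointness condition on unbalanced bindings in the environment unions of \rname{Tr1-App1} and \rname{Tr1-App2} forces \(x\COL\T\) into exactly one sub-derivation at each combining step. Combined with the ill-formedness of \(\T\to\T\to\T\) (so two \(\T\)-typed arguments cannot be combined in a single application) and the unary nature of non-\(\Teps\) terminals in a word grammar, this entails that \(x_\T\) inhabits only the ``rightmost'' leaf position of any tree reachable from \(v\) by reductions. Substituting \(\Teps\) at this position and then appending \(U\) via \(\ibr\) thus yields a tree whose frontier (modulo \(\Teps\)) coincides with that obtained by substituting \(U\) for \(x_\T\) directly.

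The base case \rname{Tr1-Var} (with \(v=x_\T\)) is immediate from the left-identity law \(\Teps\ibr U\se U\), matched stepwise by reductions inside \(U\). For \rname{Tr1-App2} with \(s=s_1 t_1\) and \(v=\TT{br}\,V\,U'\), we split on whether \(x\COL\T\in\TE_0\) or \(\TE_1\): in the argument case the induction hypothesis applies directly to \(t_1\) (of order zero, so \(\odr{\T}=0\le 1\)) and the appended \(U\) is pushed inward by associativity \((\TT{br}\,V\,U')\ibr U\vsim\TT{br}\,V\,(U'\ibr U)\); in the function case the rightmost-leaf invariant for \(V\) lets us absorb \(U\) through \(U'\) toward the position of \(x_\T\). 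The cases \rname{Tr1-App1}, \rname{Tr1-Abs1}, \rname{Tr1-Abs2}, and \rname{Tr1-Set} are handled analogously, using commutation of substitutions \([\Teps/y_\T]\) (introduced by \rname{Tr1-Abs2}) with \([\Teps/x_\T]\) and \([U/x_\T]\), and distribution over sets.

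The main obstacle will be establishing the rightmost-leaf invariant in full generality, particularly its stability under non-terminal expansions. For this, I would prove a companion structural lemma by a parallel induction on derivations: every non-terminal \(A_\uty\) in the transformed grammar is defined by a rule itself obtained via the same type-directed transformation, so the invariant propagates through each rewriting step. A secondary technical point is matching the nondeterministic choices made by \rname{Tr1-Set} between the two reduction sequences --- each step picking a component of a set on the \([\Teps/x_\T]v\ibr U\) side must be mirrored by picking the \emph{same} component on the \([U/x_\T]v\) side --- but this is routine once the rightmost-leaf invariant is in place.
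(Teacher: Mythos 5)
Your core idea is the paper's: the image \(v\) keeps \(x_\T\) at the rightmost position of whatever it produces, and the lemma is then concluded via the context lemma (Lemma~\ref{lem:context}) and the unit/associativity laws for \(\ibr\) (Lemma~\ref{lem:precong}). The gap is in how you establish that invariant. You propose a (parallel) induction on the transformation derivation, arguing that the invariant ``propagates through each rewriting step'' because every rule of \(\GRAM''\) is itself an image of the transformation. But at a step \(A_\uty\,U_1\cdots U_\ell \red [U_i/x'_i]_{i\le \ell}v_0\) the arguments \(U_i\) are arbitrary reducts --- open, higher-order terms which in general are \emph{not} transformation images --- and reducts of \(v\) are not trees, so an invariant phrased as ``\(x_\T\) is the rightmost leaf of any tree reachable from \(v\)'' is not a statement you can push through such a step. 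What is needed is a formulation of the invariant for open terms at \emph{all} types, closed under substitution (respecting the linear discipline on unbalanced bindings) and preserved by single reduction steps. That is exactly the machinery the paper builds: an auxiliary refinement type system with base types \(\ot\) and \(\rt\) and balanced/unbalanced arrows, the fact that transformation images are well typed in it (Lemma~\ref{lem:UTr-typing}), a substitution lemma (Lemma~\ref{lem:rty-subst}), subject reduction (Lemma~\ref{lem:rty-sub-red}), and the normal-form characterization that any \(v'\notred\) with \(x_\T\COL\rt\p v':\rt\) is a right comb \(\pi_1\ibr(\pi_2\ibr\cdots(\pi_n\ibr x_\T)\cdots)\) (Lemma~\ref{lem:keyCtx}). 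Your ``companion structural lemma'' is, in effect, this entire package; asserting that it follows by parallel induction on derivations skips precisely the part where the work lies.

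Two further points. Once that reduction-stable invariant is in hand, no induction on the transformation derivation is needed at all: the paper factors any reduction of \([U/x_\T]v\) (resp.\ of \(([\Teps/x_\T]v)\ibr U\)) through a normal form \(v'\) of \(v\), applies the right-comb characterization, and matches the two sides; this factorization --- in particular the fact that \([\Teps/x_\T]v'\) is already a tree, so no extra steps are spent there --- is also what makes the exact step count in the ``Moreover'' clause come out, something your derivation-directed induction does not account for across non-terminal unfoldings. Finally, a small case-analysis slip: in \rname{Tr1-App2} the binding \(x\COL\T\) can only land in the argument branch, since placing it in the function branch would, by Lemma~\ref{lem:envWellFormed}, force \(\T\ra\T\ra\uty\) to be well-formed, which it is not; so the ``function case'' you describe there never arises --- the genuine difficulty is not in the syntax-directed cases but in the stability of the invariant under reduction.
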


The above lemma is the key of the proof of Theorem~\ref{th:tr1-correctness},
and says that the variable \(x_{\T}\) occurs at the rightmost position in (the trees of) \(v\).
For the proof of this lemma, we introduce 
a type system for the transformed grammar \(\GRAM''\).
The set of types is given by the following grammar.
\[
\oty ::= \ot \mid \rt \mid \oty \ra \oty
\]
Intuitively,
\(\rt\) is the type of trees that can occur only at the rightmost position of a tree
while \(\ot\) is the type of trees without any such restriction;
for example, if \(t\) has type \(\rt\) and \(t'\) has type \(\ot\), 
then \(t' * t\) is valid but \(t * t'\) is not.

We define a notion of balance/unbalance, which is similar to that for the types \(\uty\):
\begin{gather*}
\frac{}{\ot\text{ is balanced}}
\qquad
\frac{}{\rt\text{ is unbalanced}}
\qquad
\frac{\oty\text{ is balanced} \quad \oty'\text{ is balanced}}{\oty\ra\oty'\text{ is balanced}}
\\[10pt]
\frac{\oty\text{ is unbalanced} \quad \oty'\text{ is unbalanced}}{\oty\ra\oty'\text{ is balanced}}
\qquad
\frac{\oty\text{ is balanced} \quad \oty'\text{ is unbalanced}}{\oty\ra\oty'\text{ is unbalanced}}
\end{gather*}
A type \(\oty\) is \emph{well-formed} if it is either balanced or unbalanced.
We assume that all the types occurring below are well-formed.

A type environment \(\RE\) is a set of type bindings of the form \(x\COL\oty\).
We write \(\balanced(\RE)\) and say \(\RE\) is \emph{balanced} if \(\oty\) is balanced for every
\(x\COL\oty\in\RE\).
As before, we treat unbalanced types as linear types, i.e.,
the union \(\RE_1\cup\RE_2\) of \(\RE_1\) and \(\RE_2\) is defined only
if \(\balanced(\RE_1\cup\RE_2)\).

We define three type transformations \(\rf{-}\), \(\fg{-}\), and \(\st{-}\) as follows:
\begin{gather*}
\begin{aligned}
\rf{\uty} &\defe \rt
&&(\odr{\uty} \le 1, \text{ \(\uty\) is unbalanced})
\\
\rf{\uty} &\defe \ot
&&(\odr{\uty} \le 1, \text{ \(\uty\) is balanced})
\\
\rf{\land_{i\le k} \uty_i \ra \uty} &\defe
\rf{\uty_1} \ra \dots \ra \rf{\uty_k} \ra \rf{\uty}
&&(\odr{\land_{i\le k} \uty_i \ra \uty} \ge 2)
\\
\rf{x_1\COL\uty_1,\dots,x_n\COL\uty_n}
&\defe
\big((x_1)_{\uty_1}\COL\rf{\uty_1}
,\dots,
(x_n)_{\uty_n}\COL\rf{\uty_n}\big)
\\
\fg{\ot} &\defe \ot
\\
\fg{\rt} &\defe \ot
\\
\fg{\oty \ra \oty'} &\defe \fg{\oty} \ra \fg{\oty'}
\\
\fg{x_1\COL\oty_1,\dots,x_n\COL\oty_n}
&\defe
\big(x_1\COL\fg{\oty_1}
,\dots,
x_n\COL\fg{\oty_n}\big)
\\
\st{\uty} &\defe \fg{\rf{\uty}}
\\
\st{\TE} &\defe \fg{\rf{\TE}}
\end{aligned}
\end{gather*}
It is obvious that, if \(\uty\) is balanced (resp. unbalanced),
then \(\rf{\uty}\) is balanced (resp. unbalanced).

Then the typing rules are given as follows:
\infrule[\rtVar]{\balanced(\RE)
}{
\RE,x\COL \oty \p x:\oty
}

\infrule[\rtAlph]{\balanced(\RE)\andalso
\TERMS(a)=1 \text{ in } \GRAM
}{
\RE\p a : \ot
}

\noindent
\InfruleSR{0.49}{\rtBrO}{\balanced(\RE)}{
\RE\p \TT{br} : \ot \ra \ot \ra \ot
}\hspace{-.04555\textwidth}
\InfruleSR{0.55}{\rtBrR}{\balanced(\RE)}{
\RE\p \TT{br} : \ot \ra \rt \ra \rt
}
\\[2ex]

\noindent
\InfruleSR{0.49}{\rtEpsO}{\balanced(\RE)}{
\RE\p \eps : \ot
}\hspace{-.04555\textwidth}
\InfruleSR{0.55}{\rtEpsR}{\balanced(\RE)}{
\RE\p \eps : \rt
}
\\[2ex]

\noindent
\InfruleSR{0.49}{\rtNtO}{\balanced(\RE)
}{
\RE\p A_{\uty} : \st{\uty}
}\hspace{-.04555\textwidth}
\InfruleSR{0.55}{\rtNtR}{\balanced(\RE)
}{
\RE\p A_{\uty} : \rf{\uty}
}

\infrule[\rtApp]{
\RE_0 
\p v : \oty_1 \ra \oty
\andalso
\RE_1
\p U : \oty_1
}{
\RE_0 \cup \RE_1
\p \app{v}{U} : \oty
}
\infrule[\rtSet]{
\RE
\p u_i : \oty
\quad\text{(for each \(\ind{i}{k}\))}
}{
\RE
\p \set{u_1,\dots,u_k} : \oty
}
\infrule[\rtAbs]{
\RE, x\COL \oty' \p u : \oty
}{
\RE \p \lambda x.u : \oty' \ra \oty
}

We prepare some lemmas for proving Lemma~\ref{lem:groundsbst}.

\begin{lemma}\label{lem:envWellFormed2}
If \(\RE,x\COL\oty \p u :\oty'\), then \(\oty\ra\oty'\) is well-formed.
\end{lemma}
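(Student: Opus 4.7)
My plan is to prove the lemma by establishing a stronger numerical invariant from which it falls out directly. Define an integer-valued balance $\beta$ on types by $\beta(\ot)=0$, $\beta(\rt)=1$, and $\beta(\oty\ra\oty')\defe\beta(\oty')-\beta(\oty)$ whenever the arrow is well-formed. A short unfolding of the balanced/unbalanced clauses shows that $\beta$ is well-defined on the well-formed types, takes values only in $\{0,1\}$ there, and---crucially---that a hypothetical arrow $\oty\ra\oty'$ (with both sides already well-formed) is itself well-formed precisely when $\beta(\oty')\ge\beta(\oty)$. Extend $\beta$ additively to type environments by $\beta(\RE)\defe\sum_{x\COL\oty\in\RE}\beta(\oty)$, so that $\beta(\RE)=0$ if and only if $\RE$ is balanced.

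I will then prove by induction on the derivation that every derivable judgment $\RE\p u:\oty'$ satisfies $\beta(\oty')\ge\beta(\RE)$. For \rtVar{} the two sides are in fact equal; for the axiom-like rules \rtAlph, \rtBrO, \rtBrR, \rtEpsO, \rtEpsR, \rtNtO, and \rtNtR, the environment is balanced, so $\beta(\RE)=0\le\beta(\oty')$. The \rtSet{} case is immediate, and \rtAbs{} rearranges the inductive hypothesis: from $\beta(\oty_2)\ge\beta(\RE)+\beta(\oty_1)$ one reads off $\beta(\oty_1\ra\oty_2)=\beta(\oty_2)-\beta(\oty_1)\ge\beta(\RE)$, using that $\oty_1\ra\oty_2$ is well-formed by the blanket well-formedness assumption on derivations. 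The central case is \rtApp: the two inductive hypotheses give $\beta(\oty_1\ra\oty)\ge\beta(\RE_0)$ and $\beta(\oty_1)\ge\beta(\RE_1)$; adding them and substituting $\beta(\oty_1\ra\oty)+\beta(\oty_1)=\beta(\oty)$ yields $\beta(\oty)\ge\beta(\RE_0)+\beta(\RE_1)=\beta(\RE_0\cup\RE_1)$, where the last equality uses that every variable shared between $\RE_0$ and $\RE_1$ is forced to carry a balanced type and hence contributes $0$ to either side.

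The lemma is then immediate: applying the invariant to the hypothesis $\RE,x\COL\oty\p u:\oty'$ gives $\beta(\oty')\ge\beta(\RE)+\beta(\oty)\ge\beta(\oty)$, since $\beta$ is non-negative on well-formed environments, and this is precisely the condition for $\oty\ra\oty'$ to be well-formed. The main delicate point throughout is to check that every arrow whose $\beta$ we compute is already known to be well-formed---either because it occurs in the given derivation, to which the blanket assumption applies, or because it is the very arrow whose well-formedness the inductive step is currently establishing---together with the careful accounting of shared balanced variables in the \rtApp{} case.
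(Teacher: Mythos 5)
Your proof is correct, and it follows the same overall route as the paper---induction on the typing derivation---but since the paper's entire proof is the sentence ``straightforward induction on the derivation,'' your explicit strengthening is the real content, and it is in fact needed. A literal induction on the unstrengthened statement gets stuck at \rname{\rtAbs}: with conclusion \(\RE,x\COL\oty\p\lambda y.u:\oty_1\ra\oty_2\) and \(\oty\) unbalanced, the inductive hypothesis applied to the premise only gives well-formedness of \(\oty\ra\oty_2\) and of \(\oty_1\ra\oty_2\), which does not exclude the configuration where \(\oty\), \(\oty_1\), \(\oty_2\) are all unbalanced; there \(\oty_1\ra\oty_2\) is balanced and the desired conclusion for \(x\COL\oty\) would fail. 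Your invariant \(\beta(\oty')\ge\beta(\RE)\) (a balanced result type forces a balanced environment, an unbalanced one tolerates at most one unbalanced binding) is exactly the strengthening that rules this out, and your bookkeeping is right at the two delicate points: all base rules except \rname{\rtVar} demand \(\balanced(\RE)\), and in \rname{\rtApp} the linearity side condition on \(\RE_0\cup\RE_1\) (shared bindings are balanced, which is how the condition is actually invoked in the paper's subject-reduction proof) makes \(\beta\) additive on the union. You are also careful that \(\beta\) is only ever evaluated on arrows occurring in the derivation, which are well-formed by the paper's blanket assumption, so there is no circularity with the ``virtual'' arrow \(\oty\ra\oty'\) whose well-formedness is being established. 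Incidentally, the paper later leans on the same strengthened fact (e.g.\ the step ``since the substituted type is unbalanced, \(\RE\) is balanced'' in the proof of Lemma~\ref{lem:rty-subst}), so making it explicit, as you do, is a genuine gain in rigor.
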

\begin{proof}
This follows by straightforward induction on the derivation \(\RE,x\COL\oty \p u :\oty'\).
\end{proof}

\begin{lemma}[substitution]\label{lem:rty-subst}
Given
\ \(\RE,x'\COL\oty'\p v : \oty\)\ 
and
\ \(\RE' \p U : \oty'\),
we have \(\RE \cup \RE' \p [U/x']v:\oty\).
\end{lemma}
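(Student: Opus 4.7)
The plan is to proceed by structural induction on the derivation of $\RE,x'\COL\oty' \p v : \oty$. I will implicitly assume that $\RE \cup \RE'$ is well-defined (i.e.\ any variable appearing in both environments carries a balanced type); otherwise the conclusion has no content. Before starting, I would establish a routine weakening lemma: if $\RE_1 \p w : \oty$ and $\RE_1\cup\RE_2$ is well-defined, then $\RE_1\cup\RE_2 \p w : \oty$; and a free-variable lemma in the style of Lemma~\ref{lem:envu} for this system, namely that unbalanced bindings appearing in a derivation's environment must actually occur in the term.

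Most cases are dispatched quickly. In \rname{\rtVar} with $v = x'$, we have $\oty = \oty'$, $[U/x']v = U$, and the side condition forces $\RE$ to be balanced, so weakening of $\RE' \p U:\oty'$ by $\RE$ gives the goal. In \rname{\rtVar} with $v = x \neq x'$ and in the axioms \rname{\rtAlph}, \rname{\rtBrO}, \rname{\rtBrR}, \rname{\rtEpsO}, \rname{\rtEpsR}, \rname{\rtNtO}, \rname{\rtNtR}, the substitution is the identity, so weakening by $\RE'$ suffices. The rule \rname{\rtSet} is immediate from the induction hypothesis applied to each $u_i$, and \rname{\rtAbs} is routine after alpha-renaming so that the bound variable avoids $\RE'$ and $U$.

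The main obstacle is the \rname{\rtApp} case, because of the linear treatment of unbalanced bindings. Suppose $v = v_0\,V$ with $\RE_0 \p v_0 : \oty_1 \ra \oty$ and $\RE_1 \p V : \oty_1$ and $\RE_0 \cup \RE_1 = \RE, x'\COL\oty'$. I split on whether $\oty'$ is balanced. If $\oty'$ is balanced, then $x'\COL\oty'$ may appear in either or both of $\RE_0$ and $\RE_1$; writing $\RE_i = \RE_i^- \cup \{x'\COL\oty'\}^?$, I apply the induction hypothesis to each premise (reusing $\RE' \p U : \oty'$ on both sides, which is legal precisely because balanced types may be shared) and recombine with \rname{\rtApp}, obtaining a context $(\RE_0^- \cup \RE') \cup (\RE_1^- \cup \RE') = \RE \cup \RE'$. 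If $\oty'$ is unbalanced, the linearity discipline forces $x'$ to appear in exactly one of $\RE_0,\RE_1$, say $\RE_0$; the free-variable lemma then guarantees $x' \notin \FV(V)$, so $[U/x']V = V$, and applying the induction hypothesis only on the $v_0$ side and reapplying \rname{\rtApp} yields $(\RE_0^- \cup \RE') \cup \RE_1 = \RE \cup \RE'$.

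The delicate part throughout is the bookkeeping: verifying at each step that the environment unions that arise are well-defined and that balanced bindings can be freely shared while unbalanced bindings are tracked to a unique premise. No clause of \rname{\rtApp} or of any other rule introduces a new clash beyond those already present in $\RE\cup\RE'$, so once the initial well-formedness hypothesis is in hand, the induction carries through smoothly.
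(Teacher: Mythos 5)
Your proposal is correct and takes essentially the same route as the paper's proof: induction on the typing derivation, trivial axiom cases, and in the \rname{\rtApp} case a split on whether \(\oty'\) is balanced (so \(\RE'\) is itself balanced by well-formedness and can be shared across both premises) or unbalanced (so \(x'\COL\oty'\) sits in exactly one premise, the substitution is the identity on the other, and only that premise needs the induction hypothesis). One small caveat: your auxiliary weakening lemma should be stated only for adding \emph{balanced} bindings (weakening by an unbalanced binding violates the \(\balanced(\RE)\) side conditions of the axioms), which suffices for every use you make of it, since in each such case \(\oty'\), and hence \(\RE'\), is balanced.
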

\begin{proof}
The proof is given by induction on \(v\).
The base case is clear.
The remaining case is application: we have rule \rname{\rtApp}
\[
\InfruleS{}{
\RE_0 
\p v' : \oty_1 \ra \oty
\andalso
\RE_1
\p U' : \oty_1
}{
\RE_0 \cup \RE_1
\p \app{v'}{U'} : \oty
}
\]
where
\[
\RE,x'\COL\oty' = \RE_0 \cup \RE_1
\qquad
v = \app{v'}{U'}\,.
\]
Further we have \rname{\rtSet}
\[
\InfruleS{}{
\RE_1
\p u'_i : \oty_1
\quad\text{(for each \(i \in \set{1,\ldots,k}\))}
}{
\RE_1
\p \set{u'_1,\dots,u'_k} : \oty_1
}
\]
where \(U' = \set{u'_1,\dots,u'_k}\).

Now we perform a case analysis on whether \(\oty'\) is balanced or unbalanced.
\begin{itemize}
\item
Case where \(\oty'\) is balanced:
In this case, %
{\(\RE'\)} %
 is balanced.
By the induction hypotheses, we have
\[
(\mn{\RE_0}{\set{x'\COL\oty'}}) \cup \RE'
\p [U/x']v' : \oty_1 \ra \oty
\qquad
(\mn{\RE_1}{\set{x'\COL\oty'}}) \cup \RE'
\p [U/x']u'_i : \oty_1
\quad\text{(for each \(\ind{i}{k}\))}\,.
\]
and by \rname{\rtSet},
\begin{align*}&
\infer{
(\mn{\RE_0}{\set{x'\COL\oty'}}) \cup \RE'
\p \set{v'_1,\dots,v'_{k_0}} : \oty_1 \ra \oty
}{
(\mn{\RE_0}{\set{x'\COL\oty'}}) \cup \RE'
\p v'_j : \oty_1 \ra \oty
\quad\text{(for each \(j \in \set{1,\ldots,k_0}\))}
}
&
\set{v'_1,\dots,v'_{k_0}}
=&\ 
[U/x']v' 
\\[10pt]&
\infer{
(\mn{\RE_1}{\set{x'\COL\oty'}}) \cup \RE'
\p \set{u'^{i}_{1},\ldots,u'^{i}_{k_i}} : \oty_1
}{
(\mn{\RE_1}{\set{x'\COL\oty'}}) \cup \RE'
\p u'^{i}_{j} : \oty_1
\quad\text{(for each \(j \in \set{1,\ldots,k_{i}}\))}
}
&
\set{u'^{i}_{1},\ldots,u'^{i}_{k_i}}
=&\ 
[U/x']u'_i 
\end{align*}
Then, by \rname{\rtSet}
\[
(\mn{\RE_1}{\set{x'\COL\oty'}}) \cup \RE'
\p [U/x']U' : \oty_1
\]
and by \rname{\rtSet} and \rname{\rtApp}, we have
\begin{gather*}
(\mn{\RE_0}{\set{x'\COL\oty'}}) \cup \RE'
\cup (\mn{\RE_1}{\set{x'\COL\oty'}}) \cup \RE'
\p ([U/x']v')([U/x']U') : \oty
\end{gather*}
where the linearity condition is obvious, since \(\RE'\) is balanced and
\[
(\mn{\RE_0}{\set{x'\COL\oty'}}) \cap (\mn{\RE_1}{\set{x'\COL\oty'}})
\subseteq
\RE_0 \cap \RE_1 \subseteq (\text{the set of balanced bindings})\,.
\]
\item
Case where \(\oty'\) is unbalanced and \(x'\COL\oty' \in \RE_1\):
By the induction hypotheses, we have
\[
(\mn{\RE_1}{\set{x'\COL\oty'}}) \cup \RE'
\p [U/x']u'_i : \oty_1
\quad\text{(for each \(\ind{i}{k}\))}
\]
and by \rname{\rtSet}, similarly to the previous case, we have
\[
(\mn{\RE_1}{\set{x'\COL\oty'}}) \cup \RE'
\p [U/x']U' : \oty_1\,.
\]
Then by \rname{\rtApp}, we have
\begin{gather*}
\RE_0
\cup (\mn{\RE_1}{\set{x'\COL\oty'}}) \cup \RE'
\p v'([U/x']U') : \oty
\end{gather*}
as required; here the linearity condition holds as follows: Since
\(\oty'\) is unbalanced, \(\RE\) is balanced.
Now \(x'\COL\oty' \in \RE_1\), and therefore \(\RE_0\) and \(\mn{\RE_1}{\set{x'\COL\oty'}}\) are balanced.
\item
Case where \(\oty'\) is unbalanced and \(x'\COL\oty' \in \RE_0\):
By the induction hypothesis, we have
\[
(\mn{\RE_0}{\set{x'\COL\oty'}}) \cup \RE'
\p [U/x']v' : \oty_1 \to \oty\,.
\]
Then by \rname{\rtApp}, we have
\begin{gather*}
(\mn{\RE_0}{\set{x'\COL\oty'}}) \cup \RE'
\cup \RE_1
\p ([U/x']v')U' : \oty
\end{gather*}
as required; here the linearity condition holds since
\(\mn{\RE_0}{\set{x'\COL\oty'}}\) and \(\RE_1\) are balanced (similarly to the previous case).
\end{itemize}
\end{proof}

\begin{lemma}\label{lem:UTr-typing}
For any \(\TE \p s : \uty \tr v\),
we have \(\rf{\TE} \p v : \rf{\uty}\).
\end{lemma}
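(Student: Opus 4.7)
The plan is to proceed by induction on the derivation of $\TE \p s \COL \uty \tr v$, with a case analysis on the last transformation rule applied. Before starting the main induction I would record one auxiliary observation: $\rf{-}$ sends balanced $\uty$'s to balanced $\oty$'s and unbalanced $\uty$'s to unbalanced $\oty$'s. This ensures that whenever a transformation rule combines environments via $\cup$ (which the source system permits only on balanced bindings at shared variables), the corresponding union $\rf{\TE_0}\cup\rf{\TE_1}$ in the target $\oty$-system is equally legitimate.

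Most cases are immediate chains of rule applications. \rname{Tr1-Var} uses \rname{\rtVar}; \rname{Tr1-NT} uses \rname{\rtNtR}; \rname{Tr1-Const0} uses \rname{\rtEpsR} and $\rf{\T}=\rt$; \rname{Tr1-Const1} uses \rname{\rtAlph} after checking that $\T\ra\T$ is balanced, so $\rf{(\T\ra\T)}=\ot$. \rname{Tr1-Set} packages the inductive hypotheses via \rname{\rtSet}. \rname{Tr1-App1} unfolds the higher-order case $\rf{(\land_{i\le k}\uty_i\ra\uty)}=\rf{\uty_1}\ra\cdots\ra\rf{\uty_k}\ra\rf{\uty}$, which is available precisely because the rule enforces $\uty_i\neq\T$, and then types $v\,U_1\cdots U_k$ by $k$ applications of \rname{\rtApp} (one \rname{\rtSet} per argument slot). \rname{Tr1-Abs1} dually uses iterated \rname{\rtAbs}.

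The two informative cases are \rname{Tr1-App2} and \rname{Tr1-Abs2}, the ones that erase the order-$0$ argument. For \rname{Tr1-App2} the inductive hypothesis supplies $\rf{\TE_0}\p V\COL\rf{(\T\ra\uty)}$ and $\rf{\TE_1}\p U\COL\rt$, and I build $\br\,V\,U$ via \rname{\rtBrR} (or \rname{\rtBrO} when the output is required to be balanced) together with two \rname{\rtApp} steps. The essential subfact is that $\rf{(\T\ra\uty)}$ collapses to $\ot$ under the combined constraints that $\T\ra\uty$ be a well-formed intersection type (which forces it to be balanced) and that $\uty$ come from a sort restricted by the grammar's hypothesis (so that the effective erasure of $\T$ inputs by $\rf{-}$ brings the order to at most $1$). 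For \rname{Tr1-Abs2} the image is a substitution $[\Teps/x_\T]u$; with $\p\Teps\COL\rt$ from \rname{\rtEpsR}, I apply Lemma~\ref{lem:rty-subst} to the singleton set $\{\Teps\}$ to conclude $\rf{\TE}\p [\Teps/x_\T]u\COL\rf{\uty}=\rf{(\T\ra\uty)}$.

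The main obstacle will be verifying the collapse of $\rf{(\T\ra\uty)}$ in the $\T$-argument cases: the definition of $\rf{-}$ has to cooperate with the well-formedness and sort constraints so that a $\T$ input position on the source side corresponds to no input position on the target side. Once that identification is settled, the $\oty$-typing rules slot in cleanly and the remaining bookkeeping (linear-versus-non-linear environment union, and preservation of balance through the refinement) is routine.
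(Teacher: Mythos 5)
Your core induction is the same as the paper's: induct on the derivation, observe that \(\rf{-}\) preserves balancedness/unbalancedness so that environment unions remain legitimate, and discharge \rname{Tr1-Var}, \rname{Tr1-NT}, \rname{Tr1-Const0/1}, \rname{Tr1-Set}, \rname{Tr1-App1} by the matching \(\oty\)-rules, with \rname{Tr1-App2} handled exactly as in the paper (well-formedness of \(\T\ra\uty\) forces \(\uty\) unbalanced, so the premises give \(\rf{\TE_0}\p V:\ot\) and \(\rf{\TE_1}\p U:\rt\), and \rname{\rtBrR} plus \rname{\rtApp} closes the case; your parenthetical about \rname{\rtBrO} is never needed here, since the output type of \rname{Tr1-App2} is always unbalanced). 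On all of these cases your argument and the paper's coincide.

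The genuine problem is your \rname{Tr1-Abs2} case. You assert \(\rf{\uty}=\rf{(\T\ra\uty)}\) and conclude by substituting \(\Teps\COL\rt\) via Lemma~\ref{lem:rty-subst}. But well-formedness of \(\T\ra\uty\) forces \(\uty\) to be unbalanced and \(\T\ra\uty\) to be balanced, and (by the sort assumption) both have order at most \(1\); hence \(\rf{\uty}=\rt\) while \(\rf{(\T\ra\uty)}=\ot\), so the claimed identity is false. The induction hypothesis gives \(\rf{\TE},x_\T\COL\rt\p u:\rt\), and substituting \(\Teps\COL\rt\) only yields \(\rf{\TE}\p[\Teps/x_\T]u:\rt\), not the required \(\ot\): the \(\oty\)-system has no coercion from \(\rt\) to \(\ot\). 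Obtaining an \(\ot\)-typing after the \(\Teps\)-substitution requires the separate ``forget'' translation \(\fg{-}\) (the paper's Lemma~\ref{lem:refine-forget}, items 1 and 3), and even that route produces the judgement in the weakened environment \(\st{\TE}=\fg{\rf{\TE}}\) rather than \(\rf{\TE}\), so your statement for this case does not follow by this argument. The paper avoids the issue entirely: Lemma~\ref{lem:UTr-typing} is proved (and used) only for the applicative transformation rules, and the \([\Teps/x_\T]\)-substitution arising from \rname{Tr1-Abs2} is dealt with separately, via Lemma~\ref{lem:refine-forget}, inside the subject-reduction argument (Lemma~\ref{lem:rty-sub-red}), where the all-\(\ot\) typing is exactly what is wanted. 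So either restrict your lemma to applicative terms as the paper does, or restate the abstraction case with \(\st{-}\) in place of \(\rf{-}\); as written, the Abs2 step fails.
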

\begin{proof}
The proof proceeds by straightforward induction on the derivation \(\TE \p s : \uty \tr v\).
Note that, since if \(\uty\) is balanced so is \(\rf{\uty}\),
\(\balanced(\TE)\) implies \(\balanced(\rf{\TE})\).

\begin{myitemize}
\myitem
Case of \rname{Tr1-Var}:
\[
\InfruleS{}{\balanced(\TE)}
{\TE,x\COL\uty \p x\COL\uty\tr x_{\uty}}
\]
The goal:
\[
\rf{\TE},x_{\uty }\COL\rf{\uty }\p  x_{\uty} : \rf{\uty}
\]
is obtained by \rname{\rtVar}.

\myitem
Case of \rname{Tr1-Const0}:
\[
\InfruleS{}
{\balanced(\TE)}
{\TE\p \Te\COL\T\tr \Te}
\]
The goal:
\[
\rf{\TE}\p  \Te : \rt
\]
is obtained by \rname{\rtEpsR}.

\myitem
Case of \rname{Tr1-Const1}:
\[
\InfruleS{}
{\balanced(\TE)\andalso\TERMS(a)=1}
{\TE\p a\COL\T\ra\T\tr a}
\]
The goal:
\[
\rf{\TE}\p  a : \ot
\]
is obtained by \rname{\rtAlph}.

\myitem
Case of \rname{Tr1-NT}:
\[
\InfruleS{}
{\balanced(\TE)}
{\TE\p A\COL\uty \tr A_\uty}
\]
The goal:
\[
\rf{\TE}\p  A_\uty : \rf{\uty }
\]
is obtained by \rname{\rtNtR}.

\myitem
Case of \rname{Tr1-App1}:
\[
\InfruleS{}
  {\TE_0 \p s\COL \uty_1\land\cdots\land \uty_k\ra\uty\tr v \\
\TE_i\p t\COL\uty_i\tr U_i \text{ and }\uty_i\neq \T
\text{ (for each $i \in \set{1,\dots,k}$)} 
}
{\TE_0\cup\TE_1\cup\cdots\cup\TE_k\p st:\uty\tr vU_1\cdots U_k}
\]
The induction hypotheses are
\begin{align*}&
\rf{\TE_0} \p  v : \rf{\uty_1}\ra\cdots\ra \rf{\uty_k}\ra\rf{\uty}
\\&
\rf{\TE_i} \p U_i : \rf{\uty_i} \quad (\text{for each $i \in \set{1,\dots,k}$})
\end{align*}
where the latter are obtained through \rname{Tr1-Set} and \rname{\rtSet}.
The goal:
\[
\rf{\TE_0}\cup\rf{\TE_1}\cup\cdots\cup\rf{\TE_k}\p vU_1\cdots U_k:\rf{\uty}
\]
is obtained by \rname{\rtApp}.

\myitem
Case of \rname{Tr1-App2}:
\[
\InfruleS{}{
\TE_0\p s\COL \T\ra\uty\tr V
\andalso
\TE_1\p t\COL\T\tr U}
{\TE_0\cup\TE_1 \p st\COL\uty\tr \TT{br}\,V\,U}
\]
By the well-formedness, \(\uty\) is unbalanced.
Hence, the induction hypotheses are
\begin{align*}&
\rf{\TE_0}\p  V : \ot
\\&
\rf{\TE_1}\p  U : \rt\,.
\end{align*}
The goal:
\[
\rf{\TE_0}\cup\rf{\TE_1} \p  \TT{br}\,V\,U : \rt
\]
is obtained by \rname{\rtApp} and \rname{\rtBrR}.

\myitem
Case of \rname{Tr1-Set}:
\[
\InfruleS{}
{\TE\p t\COL \uty \tr u_i\mbox{ (for each $i\in\set{1,\ldots,k}$)}
\andalso k \ge 1
}
{\TE \p t\COL\uty \tr \set{u_1,\ldots,u_k}}
\]
The induction hypotheses are
\[
\rf{\TE}\p  u_i : \rf{ \uty } \qquad(i\in\set{1,\ldots,k})\,.
\]
The goal:
\[
\rf{\TE} \p  \set{u_1,\ldots,u_k} : \rf{\uty }
\]
is obtained by \rname{\rtSet}.
\end{myitemize}
\end{proof}

\begin{lemma}\label{lem:refine-forget}
\begin{enumerate}
\item\label{item:forget}
Given \(\RE\p u:\oty\),
we have \(\fg{\RE}\p u:\fg{\oty}\).
\item\label{item:forget-typing}
Given \(\TE\p t:\uty \tr u\),
we have \(\st{\TE}\p u:\st{\uty}\).
\item\label{item:forget-subst}
Given \(\TE, x\COL\T\p t:\uty \tr u\),
we have \(\st{\TE}\p [\Teps/x_\T]u:\st{\uty}\).
\end{enumerate}
\end{lemma}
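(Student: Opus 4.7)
The overall approach is a layered reduction: Part~1 (forgetting the $\rt/\ot$ distinction preserves typing) is proved directly by induction on the refined typing derivation; Part~2 follows by composing Lemma~\ref{lem:UTr-typing} with Part~1; Part~3 reduces to Part~2 followed by an application of the substitution Lemma~\ref{lem:rty-subst} to the constant $\Teps$.

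For Part~1, I would proceed by induction on the derivation of $\RE \p u : \oty$. The central observation is that any type in the image of $\fg{-}$ is built solely from $\ot$ and $\ra$, hence is balanced by the balance rules; consequently $\fg{\RE}$ is always balanced, so every $\balanced(\cdot)$ premise of a rule is trivially satisfied after forgetting. The three pairs of ``twin'' rules collapse under $\fg$: both \rtBrO and \rtBrR yield $\TT{br}\COL\ot\ra\ot\ra\ot$ after forgetting (since $\fg{\rt}=\ot$), both \rtEpsO and \rtEpsR yield $\Teps\COL\ot$, and both \rtNtO and \rtNtR yield $A_\uty\COL\st{\uty}$ (using that $\fg$ is idempotent on its image, so $\fg{\st{\uty}}=\fg{\fg{\rf{\uty}}}=\fg{\rf{\uty}}=\st{\uty}$). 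The remaining cases \rtApp, \rtSet, \rtAbs go through because $\fg$ distributes over $\ra$ and pointwise over environments, and the linearity side-condition of $\rtApp$ on $\RE_0 \cup \RE_1$ becomes vacuous after $\fg$ since every binding in $\fg{(\RE_0\cup\RE_1)}$ is balanced.

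For Part~2, apply Lemma~\ref{lem:UTr-typing} to $\TE \p t : \uty \tr u$ to obtain $\rf{\TE}\p u : \rf{\uty}$; then Part~1 gives $\fg{\rf{\TE}}\p u : \fg{\rf{\uty}}$, which is by definition $\st{\TE}\p u : \st{\uty}$.

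For Part~3, first apply Part~2 to the hypothesis $\TE,x\COL\T\p t:\uty\tr u$ to get $\st{\TE},x_\T\COL\st{\T}\p u:\st{\uty}$. Since $\T$ is unbalanced of order $0$, we have $\rf{\T}=\rt$ and hence $\st{\T}=\fg{\rt}=\ot$, so the binding for $x_\T$ is actually at type $\ot$. Next, using \rtEpsO (whose balance premise holds because $\st{\TE}$ lies in the image of $\fg$ and is therefore balanced), we obtain $\st{\TE}\p \Teps:\ot$. Finally, invoke Lemma~\ref{lem:rty-subst} with $x'=x_\T$, $\oty'=\ot$, and $U=\Teps$ to conclude $\st{\TE}\cup\st{\TE}\p[\Teps/x_\T]u:\st{\uty}$; the union is well-defined since every binding in $\st{\TE}$ is balanced, and it collapses to $\st{\TE}$, yielding the required $\st{\TE}\p [\Teps/x_\T]u:\st{\uty}$. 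The main obstacle is essentially clerical: carefully verifying in Part~1 that each of the paired rules has a common $\fg$-image and that balance premises remain satisfied; none of this is conceptually deep.
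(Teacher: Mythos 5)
Your proposal is correct and follows essentially the same route as the paper: part~1 by induction on the refined typing derivation (with the observation that all $\fg$-images are balanced, so the twin rules for $\TT{br}$, $\Teps$, and non-terminals collapse), part~2 by composing Lemma~\ref{lem:UTr-typing} with part~1, and part~3 by part~2 together with $\p\Teps:\ot$ and the substitution Lemma~\ref{lem:rty-subst}. The only cosmetic difference is that the paper invokes the substitution lemma with the empty environment for $\Teps$ rather than $\st{\TE}$, which is immaterial since $\st{\TE}$ is balanced.
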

\begin{proof}
Case of item~\ref{item:forget}:
The proof is given by straightforward induction on \(u\);
the base case is trivial, since for every terminal and non-terminal there is a typing rule for having a type of the form
 \(\st{\uty}\).
In the case of application, we have rule \rname{\rtApp}:
\[
\InfruleS{}{
\RE_0 
\p v : \oty_1 \ra \oty
\andalso
\RE_1
\p U : \oty_1
}{
\RE_0 \cup \RE_1
\p \app{v}{U} : \oty
}
\]
This case is also clear by the induction hypotheses.

Case of item~\ref{item:forget-typing}:
By Lemma~\ref{lem:UTr-typing},
\(\rf{\TE}\p u: \rf{\uty}\).
By item~\ref{item:forget},
we have
\(\st{\TE}\p u: \st{\uty}\).

Case of item~\ref{item:forget-subst}:
By item~\ref{item:forget-typing},
we have
\(\st{\TE}, x_{\T}\COL\ot\p u: \st{\uty}\).
Since \(\p\Teps:\ot\), by Lemma~\ref{lem:rty-subst},
we have \(\st{\TE}\p [\Teps/x_{\T}]u:\st{\uty}\).
\end{proof}

\begin{lemma}[subject reduction]\label{lem:rty-sub-red}
Given a reduction \(u \red u'\),
\begin{enumerate}
\item\label{item:subred-unb}
if \(x_\T\COL\rt \p u : \rt\) then \(x_\T\COL\rt \p u' : \rt\), and 
\item\label{item:subred-bal}
if \(\p u : \ot\) then \(\p u' : \ot\).
\end{enumerate}
\end{lemma}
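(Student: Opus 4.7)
The proof will be a standard subject reduction argument by case analysis on the reduction $u \red u'$; both items can be proved simultaneously, since the typing rules for $\ot$ and $\rt$ are structurally parallel and differ only in how balanced/unbalanced bindings are tracked. The reductions to consider are head $\beta$-reduction, unfolding of non-terminals via the rules of $\GRAM''$, set-choice reductions $\set{v_1,\ldots,v_k} \red v_i$, and the congruence closure under the term constructors.

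The core case is head $\beta$-reduction $(\lambda y.v)\,V \red [V/y]v$. I would invert the derivation using \rname{\rtApp} and \rname{\rtAbs} to obtain $\RE_0,\,y\COL\oty' \p v : \oty$ and $\RE_1 \p V : \oty'$ with $\RE = \RE_0 \cup \RE_1$, and then apply the substitution lemma (Lemma~\ref{lem:rty-subst}) directly to conclude. For a set-choice step, inverting \rname{\rtSet} immediately yields that each $v_i$ has the same type in the same environment. Congruence steps are handled by a routine inversion of the typing derivation combined with the inductive hypothesis on the inner reduction; note that for item~1 the path from the root to the redex may cross exactly one unbalanced binding, which is respected by the way \rname{\rtApp} splits the environment.

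The remaining case is the unfolding of a non-terminal $A_\uty \red u_A$ via a rule $A_\uty\,x'_1\cdots x'_\ell \to u_A$ of $\GRAM''$. Here the needed invariant is that every rule body produced by the transformation is typable with both the $\rf{-}$ refinement of $\NONTERMS_\GRAM(A)$ (as used by \rname{\rtNtR}, supporting item~1) and the $\st{-}$ refinement (as used by \rname{\rtNtO}, supporting item~2). This invariant follows from Lemma~\ref{lem:UTr-typing} and Lemma~\ref{lem:refine-forget}, applied to the source derivation $\emptyset \p \lambda x_1.\cdots\lambda x_k.t : \uty \tr \lambda x'_1.\cdots\lambda x'_\ell.u_A$ that witnesses the transformation of $A$'s rule via \rname{Tr1-Rule}.

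The main subtlety is the linearity of $x_\T\COL\rt$ in item~1: the variable must be neither duplicated nor discarded by a reduction. I expect this to be the only real obstacle, but it is already discharged by the design of Lemma~\ref{lem:rty-subst}, whose environment union $\RE \cup \RE'$ is defined only when the shared bindings are balanced, and whose case analysis on the kind of the substituted-for variable forbids copying an unbalanced binding into multiple premises. Thus, invoking the substitution lemma simultaneously preserves the type and the linearity of the unbalanced binding, and no additional argument is required beyond the standard subject reduction skeleton.
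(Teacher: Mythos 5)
There is a genuine gap, and it sits in the non-terminal case, which is where all the work of this lemma lies. First, an organizational point: your ``core case'' of head $\beta$-reduction is vacuous here, since the terms reduced in $\GRAM''$ are applicative extended terms; the substitution you would discharge there happens inside the unfolding of a non-terminal, so Lemma~\ref{lem:rty-subst} must be invoked in that case. The real problem is your invariant that every rule body of $\GRAM''$ is typable both at the $\rf{-}$ refinement (for \rname{\rtNtR}) and at the $\st{-}$ refinement (for \rname{\rtNtO}), ``following from Lemmas~\ref{lem:UTr-typing} and~\ref{lem:refine-forget}''. This is not what those lemmas give, and its $\rf{-}$ half fails in general. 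The rule body is not the raw image of the Tr1 derivation: by \rname{Tr1-Abs2} it is $[\Teps/x_\T]v'$, where $v'$ is the image of the source body; Lemma~\ref{lem:UTr-typing} types $v'$ at result $\rf{\T}=\rt$ with the removed order-$0$ variable still in the environment, so after substituting $\Teps$ (of type $\rt$) one only obtains result $\rt$, whereas the ``$\rf{-}$ refinement of the rule'' would demand result $\rf{\uty^0}=\ot$ whenever the order-$\le 1$ tail $\uty^0$ of $\uty$ is balanced. And it can indeed be underivable: a balanced parameter type may have an $\rf{-}$image ending in $\rt$ (e.g.\ $\rf{(\top\ra\T)\ra\T}=\rt\ra\rt$), and if the body's head is that parameter fully applied, the body can only be typed at $\rt$, never at $\ot$, under the $\rf{-}$environment, since there is no coercion from $\rt$ to $\ot$.

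The paper resolves this asymmetrically, and your plan is missing both halves of the resolution. For item~1, the conclusion type $\rt$ forces the head to have been typed by \rname{\rtNtR} with $\uty^0$ unbalanced; in that case no $\Teps$-substitution occurs ($v=v'$) and Lemma~\ref{lem:UTr-typing} gives exactly the $\rf{-}$typing of the body needed for Lemma~\ref{lem:rty-subst}. For item~2, only the $\st{-}$typing of the body is available, obtained via Lemma~\ref{lem:refine-forget} --- whose item~\ref{item:forget-subst} exists precisely to absorb the $[\Teps/x_\T]$ in the body --- and since the head may nevertheless have been typed by \rname{\rtNtR}, the typings of the actual arguments $U_i$ obtained by inverting the application spine are at $\rf{-}$types; they must be coerced by the forgetting map, $\p U_i:\fg{\oty_i}=\st{\uty^j_i}$ (Lemma~\ref{lem:refine-forget}, item~\ref{item:forget}), before the substitution lemma can be applied against the $\st{-}$typed body. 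Your proposal silently assumes the argument typings match whichever refinement the body carries, which is exactly the mismatch that makes this case delicate. The remaining ingredients you describe --- simultaneous induction on the two items (with item~1's $\br$-congruence case using item~2's hypothesis on the balanced side), routine set-choice and congruence steps, and the linearity of $x_\T\COL\rt$ being handled by the environment-union discipline of Lemma~\ref{lem:rty-subst} --- do match the paper.
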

\begin{proof}
The proof is given by induction on \(u\) simultaneously for the both items.
Since \(u \red u'\), the head of \(u\) is either \(\TT{br}\) or a non-terminal.

Case where the head of \(u\) is \(\TT{br}\):
Let \(u=\TT{br}\,U_1\,U_2\).
When \(U_1\) is reduced, the case that \(U_1\) is not a singleton is clear,
since in the rule \rname{\rtSet}, the type parts and the environment parts of judgments are common.
Suppose \(U_1 = \{u_1\}\) and \(u_1 \red u'_1\) and \(u'=\TT{br}\,u'_1\,U_2\).
First we consider item~\ref{item:subred-unb}.
For \((x_\T\COL\rt \p \TT{br}\,u_1\,U_2 : \rt)\),
\rname{\rtApp} and \rname{\rtBrR} are used,
i.e.,
\(\p \TT{br} : \ot \ra \rt \ra \rt\).
In the derivation tree,
\((x_\T\COL\rt)\) becomes an environment of either \(u_1\) or \(U_2\).
If \((x_\T\COL\rt\p u_1 : \ot)\), %
{by Lemma~\ref{lem:envWellFormed2},} \(\rt\ra\ot\) is well-formed,
 which is a contradiction;
hence, we have
\[
\p u_1 : \ot
\qquad
x_\T\COL\rt \p U_2 : \rt\,.
\]
By item~\ref{item:subred-bal} of the induction hypothesis for \(u_1\), we have
\(\p u'_1 : \ot\) and hence
\((x_\T\COL\rt \p \TT{br}\,u'_1\,U_2 : \rt)\) as required.
Item~\ref{item:subred-bal} is similar (and easier);
and the case where \(U_2\) is reduced is also similar.

Case where the head of \(u\) is a non-terminal:
Let \(u=A_{\uty}\,U_1\cdots\,U_\ell\), \(u'\in[U_i/x'_i]_{\ind{i}{\ell}}v\),
and the rule used for \(u \ra u'\) be 
\(A_{\uty}\,x'_1\,\cdots\,x'_\ell \Hra v\).
Suppose
\begin{align*}
\uty &= \land_{\ind{i}{k_1}} \uty^{1}_{i} \ra \cdots \ra \land_{\ind{i}{k_m}} \uty^{m}_{i} \ra \uty^{0}
\\
\uty^{0} &= \land_{\ind{i}{k_{m+1}}} \uty^{m+1}_{i} \ra \cdots \ra \land_{\ind{i}{k_n}} \uty^{n}_{i} \ra \T
\end{align*}
where \(\odr{\uty^{j}_{i}} \ge 1\) for \(\ind{j}{m}\) and \(\ind{i}{k_j}\) and \(\odr{\uty^{0}} \le 1\).
In the case where \(\uty^0\) is unbalanced,
\(k_j=0\) for all \(j \in \set{m+1,\ldots,n}\), and
in the case where \(\uty^0\) is balanced,
\(k_{j_0} = 1\) for some (unique) \(j_0 \in \set{m+1,\ldots,n}\).

Let
\[
\begin{aligned}&
\RE\defe(x_\T\COL\rt)
&&\oty\defe\rt
&&(\text{in the case of item~\ref{item:subred-unb}})
\\&
\RE\defe \emptyset 
&&\oty\defe\ot
&&(\text{in the case of item~\ref{item:subred-bal}}).
\end{aligned}
\]
For the hypothesis
\((\RE \p A_{\uty}\,U_1\cdots\,U_\ell : \oty)\),
\rname{\rtApp} are used \(\ell\)-times, and we have
\begin{gather}
\label{eq:sr-nt-typing}
 \p A_{\uty} : \oty_1 \ra \cdots \ra \oty_\ell \ra \oty
\\
\label{eq:sr-arg-typing}
\RE_i\p U_i : \oty_i \quad(\ind{i}{\ell})
\\
\label{eq:sr-envunion}
\RE = \RE_1\cup\cdots\cup\RE_\ell\,.
\end{gather}

The rule used for~\eqref{eq:sr-nt-typing}
is \rname{\rtNtR} or \rname{\rtNtO}:
in the former case, we have
\begin{align*}
&\oty_1 \ra \cdots \ra \oty_\ell \ra \oty
=
\rf{\uty}
\\
=\ &
\rf{\uty^{1}_{1}} \ra \cdots \ra \rf{\uty^{1}_{k_{1}}} \ra \cdots \ra 
\rf{\uty^{m}_{1}} \ra \cdots \ra \rf{\uty^{m}_{k_{m}}} \ra
\rf{\uty^{0}}
\end{align*}
i.e.,
\begin{align}
\label{eq:otyellR}
(\oty_1 , \ldots , \oty_\ell)
&=
\big(
\rf{\uty^{1}_{1}} , \ldots , \rf{\uty^{1}_{k_{1}}} , \ldots , 
\rf{\uty^{m}_{1}} , \ldots , \rf{\uty^{m}_{k_{m}}} \big)
\\
\label{eq:otyzeroR}
\oty &= \rf{\uty^{0}}
\,.
\end{align}
In the latter case, similarly we have
\begin{align}
\label{eq:otyellO}
(\oty_1 , \ldots , \oty_\ell)
&=
\big(
\st{\uty^{1}_{1}} , \ldots , \st{\uty^{1}_{k_{1}}} , \ldots , 
\st{\uty^{m}_{1}} , \ldots , \st{\uty^{m}_{k_{m}}} \big)
\\
\label{eq:otyzeroO}
\oty &= \st{\uty^{0}}
\,.
\end{align}

Meanwhile, since the rule \(A_{\uty}\,x'_1\,\cdots\,x'_\ell \Hra v\) in \(\GRAM''\) is produced by \rname{Tr1-Rule},
there is a rule \(A\,x_1\,\cdots\,x_n \Hra s\) in \(\GRAM\) such that
\[
\p \lambda x_1.\cdots\lambda x_n.s:\uty
 \tr \lambda x'_1.\cdots\lambda x'_\ell.v 
\qquad
\uty\DCOL\NONTERMS(A).
\]
Therefore,
by \rname{Tr1-Abs1} and/or \rname{Tr1-Abs2}, we have the following.
\begin{align}
\label{eq:xsvp}&
x_1\COL\uty^1_1,
\dots,
x_1\COL\uty^1_{k_1},
\dots,
x_{n}\COL\uty^{n}_{1},
\dots,
x_{n}\COL\uty^{n}_{k_{n}}
\p s : \T \tr v'
\\\label{eq:sr-vvp-unb}&
v=v'
&&\mspace{-450mu}\text{(when \(\uty^0\) is unbalanced)}
\\\label{eq:sr-vvp-bal}&
v=[\Teps/(x_{j_0})_\T]v'
&&\mspace{-450mu}\text{(when \(\uty^0\) is balanced)}
\\&
\label{eq:xell}
\big(x'_1,\ldots,x'_\ell\big)
=
\big((x_1)_{\uty^1_1},
\dots,
(x_1)_{\uty^1_{k_1}},
\dots,
(x_{m})_{\uty^{m}_{1}},
\dots,
(x_{m})_{\uty^{m}_{k_{m}}}\big)\,.
\end{align}

From now on, the proof goes separately for each item.

Case of item~\ref{item:subred-unb}:
Since \(\oty=\rt\),
we have~\eqref{eq:otyellR} and~\eqref{eq:otyzeroR}.
By~\eqref{eq:otyzeroR}, \(\uty^0\) is unbalanced, and so we have~\eqref{eq:sr-vvp-unb}.
By~\eqref{eq:xsvp} and Lemma~\ref{lem:UTr-typing} with~\eqref{eq:otyellR},~\eqref{eq:sr-vvp-unb}, and~\eqref{eq:xell}, we have
\[
x'_1\COL\oty_1,\ldots,x'_\ell\COL\oty_\ell \p v : \rt\,.
\]
Then, by~\eqref{eq:sr-arg-typing},~\eqref{eq:sr-envunion}, and Lemma~\ref{lem:rty-subst},
we have
\[
x_{\T}\COL\rt \p [U_i/x'_i]_{\ind{i}{\ell}}v : \rt
\]
and by \rname{\rtSet},
we have
\[
x_{\T}\COL\rt \p u' : \rt
\]
as required.

Case of item~\ref{item:subred-bal}:
We have
\[
(x_1)_{\uty^1_1}\COL\st{\uty^1_1},
\dots,
(x_1)_{\uty^1_{k_1}}\COL\st{\uty^1_{k_1}},
\dots,
(x_{n})_{\uty^{n}_{1}}\COL\st{\uty^{n}_{1}},
\dots,
(x_{n})_{\uty^{n}_{k_{n}}}\COL\st{\uty^{n}_{k_{n}}}
\p v : \ot
\]
either
by using~\eqref{eq:xsvp}, Lemma~\ref{lem:refine-forget}-\ref{item:forget-typing}, and~\eqref{eq:sr-vvp-unb}
when \(\uty^0\) is unbalanced,
or
by using~\eqref{eq:xsvp}, Lemma~\ref{lem:refine-forget}-\ref{item:forget-subst}, and~\eqref{eq:sr-vvp-bal}
when \(\uty^0\) is balanced.
By~\eqref{eq:sr-arg-typing} and Lemma~\ref{lem:refine-forget}-\ref{item:forget},
we have
\[
\p U_i : \fg{\oty_i} \quad (\ind{i}{\ell})\,.
\]
By either~\eqref{eq:otyellR} or~\eqref{eq:otyellO}, we have
\[
\big(\fg{\oty_1} , \ldots , \fg{\oty_\ell}\big)
=
\big(
\st{\uty^{1}_{1}} , \ldots , \st{\uty^{1}_{k_{1}}} , \ldots , 
\st{\uty^{m}_{1}} , \ldots , \st{\uty^{m}_{k_{m}}} \big).
\]
Hence, by Lemma~\ref{lem:rty-subst},
\[
\p [U_i/x'_i]_{\ind{i}{\ell}}v : \ot
\]
and by \rname{\rtSet},
we have
\[
\p u' : \rt
\]
as required.
\end{proof}

Below, we write \(u \notred\)
if \(u \red v\) does not hold for any \(v\).

\begin{lemma}\label{lem:keyCtx}
For any \(v\) such that \(v\notred\), \(x_{\T}\COL\rt \p v : \rt\), and \([U/x_\T]v \reds \pi\) for some \(U\) and \(\pi\),
there exist \(\pi_1,\ldots,\pi_n\ (n\ge 0)\) such that
\(v = \pi_1 \ibr (\pi_2 \ibr \cdots (\pi_n \ibr x_\T)\cdots)\).
\end{lemma}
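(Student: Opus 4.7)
The plan is to proceed by induction on the size of $v$, exploiting the linearity discipline of the refinement type system: by Lemma~\ref{lem:envWellFormed2}, the arrow type $\rt \ra \ot$ is ill-formed, so the unbalanced binding $x_\T \COL \rt$ can only be consumed at an unbalanced-typed position and is thereby forced along the rightmost spine of $v$. Together with $v \notred$ and $[U/x_\T]v \reds \pi$, this pins down the shape of $v$.

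Since $v \COL \rt$ has base type, $v$ is not a $\lambda$-abstraction, so $v$ is either a variable, a nullary constant, or an application. First I would dispose of the degenerate heads. The only variable typable in $\set{x_\T \COL \rt}$ at type $\rt$ is $x_\T$ itself, giving the base case $v = x_\T$ with $n = 0$. A nullary terminal is ruled out: \rname{\rtEpsR} requires a balanced environment so $v = \Teps$ is impossible, while any other nullary terminal carries type $\ot$. A fully applied non-terminal $A_\uty$ leads to a contradiction on both sides: if $A_\uty$ has a matching rewriting rule then $v \red \cdot$, contradicting $v \notred$; otherwise, the stuck $A_\uty$ survives the substitution $[U/x_\T]$, so $[U/x_\T]v$ cannot reduce to a tree. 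A strictly partial application of $A_\uty$ has arrow type and so cannot equal $v$.

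The only remaining case is $v = \TT{br}\,V_1\,V_2$. Each $V_i$ must be a singleton $\set{v_i}$, else the extended-grammar rule for non-singleton arguments would give a reduction of $v$. By \rname{\rtBrR} and \rname{\rtApp} we have $v_1 \COL \ot$ and $v_2 \COL \rt$. The environment union in \rname{\rtApp} forces $x_\T \COL \rt$ into the subenvironment for $v_2$, because Lemma~\ref{lem:envWellFormed2} applied to a derivation of $x_\T \COL \rt \p v_1 \COL \ot$ would yield the ill-formed type $\rt \ra \ot$. Hence $x_\T \notin \FV(v_1)$, and $v_2$ is a normal form of type $\rt$ typable under $x_\T \COL \rt$ whose substituted form reduces (by contextual closure of the given reduction) to the right subtree of $\pi$. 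The induction hypothesis applied to $v_2$ yields $v_2 = \pi_1 \ibr (\pi_2 \ibr \cdots (\pi_n \ibr x_\T)\cdots)$.

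It remains to check that $v_1$ is a tree $\pi_0$. Since $x_\T \notin \FV(v_1)$ the substitution is trivial on $v_1$, so within the given reduction $v_1$ reduces (by contextual closure) to the left subtree of $\pi$; but $v_1 \notred$ (inherited from $v \notred$ via the singleton-argument congruence rule), so $v_1$ already equals that subtree. Combining, $v = \pi_0 \ibr (\pi_1 \ibr \cdots \ibr (\pi_n \ibr x_\T)\cdots)$ as required. The main obstacle is the non-terminal case: one must rule out \emph{both} reducible and stuck non-terminals, which is exactly why the lemma's hypothesis combines $v \notred$ with the existence of a reduction $[U/x_\T]v \reds \pi$ to an actual tree.
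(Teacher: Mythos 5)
Your proof is correct and follows essentially the same route as the paper's: induction on \(v\) with case analysis on the head, using Lemma~\ref{lem:envWellFormed2} (ill-formedness of \(\rt\ra\ot\)) to force the unbalanced binding \(x_\T\COL\rt\) into the right argument of \(\br\), singletons from \(v\notred\), closedness of the left component to conclude it is already a tree, and the induction hypothesis on the right component. Your slightly more explicit handling of the non-terminal case (rule vs.\ no rule, and partial applications) only spells out what the paper's one-line argument compresses.
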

\begin{proof}
The proof proceeds by induction on \(v\).
\begin{itemize}
\item Case where the head of \(v\) is a non-terminal \(A\):
\(A\) has a rewriting rule since \([U/x_\T]v \reds \pi\), but it contradicts \(v\notred\).
\item Case where the head of \(v\) is a variable: Since \(x_{\T}\COL\rt \p v : \rt\), \(v=x_\T\); hence the result
holds for \(n = 0\).
\item Case where the head of \(v\) is a terminal \(a\): \(a\) must have non-zero arity 
since \(x_{\T}\COL\rt \p a : \rt\) cannot be derived;
thus \(v = \br\,V_0\,V_1\) for some \(V_0\) and \(V_1\).
Since \(v=\br\,V_0\,V_1\notred\), \(V_0\) and \(V_1\) must be singletons \(\set{v_0}\) and \(\set{v_1}\), respectively.
Now \(\br\) must has type \(\ot \ra \rt \ra \rt\) and hence we have \(\p v_0 : \ot\) and \(x_{\T}\COL\rt \p v_1 : \rt\)
since if we had
\(x_{\T}\COL\rt \p v_0 : \ot\) then \(\rt \ra \ot\) would be
well-formed by Lemma~\ref{lem:envWellFormed2},
which is a contradiction.
Also, since \([U/x_\T]v = \br\,([U/x_\T]v_0)\,([U/x_\T]v_1) \reds \pi\), there exist \(\pi_0\) and \(\pi_1\) such that
\(([U/x_\T]v_i) \reds \pi_i\).
Thus we can use the induction hypothesis for \(v_1\).
Now \(v_0\) is closed and hence \(v_0 \reds \pi_0\), but since \(v_0\notred\), we have \(v_0 = \pi_0\).
\end{itemize}
\end{proof}

\begin{proof}[Proof of Lemma~\ref{lem:groundsbst}]
First note that,
for any \(v'\) such that \(v \reds v'\), we have \(x_{\T}\COL\rt \p v' : \rt\).
This is because, from the assumption \(x\COL\T \p s  :  \uty \tr v\),
\(\uty\) is unbalanced by Lemma~\ref{lem:envWellFormed}, and hence
we have \(x_{\T}\COL\rt \p v' : \rt\) by Lemmas~\ref{lem:UTr-typing}
 and~\ref{lem:rty-sub-red}-\ref{item:subred-unb}.

Now we prove the goal of the current lemma by using the context lemma (Lemma~\ref{lem:context}).

Given 
\([U/x_\T]v \reds \pi\),
there exists \(v'\) such that %
\[
v \reds v' \notred \qquad [U/x_{\T}]v' \reds \pi\,.
\]
{By Lemma~\ref{lem:keyCtx},} there exist \(\pi_1,\ldots,\pi_n\) such that 
\[
v' = \pi_1 \ibr (\pi_2 \ibr \cdots (\pi_n \ibr x_{\T})\cdots).
\]
Since
\[
[U/x_{\T}]v' = \pi_1 \ibr (\pi_2 \ibr \cdots (\pi_n \ibr U)\cdots) \reds \pi
\]
there exist \(u \in U\) and \(\pi'\) such that
\[
u \reds \pi'
\qquad
\pi_1 \ibr (\pi_2 \ibr \cdots (\pi_n \ibr \pi')\cdots) = \pi.
\]
Therefore
\begin{align*}
[\Teps/x_\T]v \ibr U 
=&\ 
(\pi_1 \ibr (\pi_2 \ibr \cdots (\pi_n \ibr \Teps)\cdots)) \ibr U
\\
\reds&\ 
(\pi_1 \ibr (\pi_2 \ibr \cdots (\pi_n \ibr \Teps)\cdots)) \ibr \pi'
\\
\vsim&\ 
\pi_1 \ibr (\pi_2 \ibr \cdots (\pi_n \ibr \pi')\cdots) = \pi.
\end{align*}

On the other hand,
given \(([\Teps/x_{\T}]v)\ibr U \red^p \pi\),
there exist \(p_0\), \(p_1\), \(\pi'_0\), and \(\pi'_1\) such that
\[
[\Teps/x_{\T}]v \red^{p_0} \pi'_0
\qquad
\pi'_0 \ibr U \red^{p_1} \pi'_0 \ibr \pi'_1
 = \pi
\qquad
p_0+p_1 =p\,.
\]
For \([\Teps/x_{\T}]v \red^{p_0} \pi'_0\),
we have \(v'\) such that %
\[
v \red^{p_2} v' \notred \qquad [\Teps/x_{\T}]v' \red^{p_3} \pi'_0
\qquad
p_2+p_3=p_0\,.
\]
{By Lemma~\ref{lem:keyCtx},}  there exist \(\pi_1,\ldots,\pi_n\) such that 
\[
v' = \pi_1 \ibr (\pi_2 \ibr \cdots (\pi_n \ibr x_{\T})\cdots).
\]
Since 
\[
[\Teps/x_{\T}]v' 
=
\pi_1 \ibr (\pi_2 \ibr \cdots (\pi_n \ibr \Teps)\cdots)
\red^{p_3} \pi'_0,
\]
we have
\[
p_3=0
\qquad
p_2=p_0
\qquad
\pi_1 \ibr (\pi_2 \ibr \cdots (\pi_n \ibr \Teps)\cdots)
= \pi'_0\,.
\]
Hence,
\begin{align*}
[U/x_\T]v
\red^{p_2}
[U/x_\T]v'
=
\pi_1 \ibr (\pi_2 \ibr \cdots (\pi_n \ibr U)\cdots)
\red^{p_1}
\pi_1 \ibr (\pi_2 \ibr \cdots (\pi_n \ibr \pi'_1)\cdots)
\end{align*}
i.e.,
\[
[U/x_\T]v \red^{p} 
\pi_1 \ibr (\pi_2 \ibr \cdots (\pi_n \ibr \pi'_1)\cdots) \vsim 
\pi'_0 \ibr \pi'_1 = \pi.
\]
\end{proof}

\subsection{Lemmas for forward direction}
\label{sec:lemmaForwardDirection}

\begin{lemma}[de-substitution]\label{lem:desubst}
Given \(\TE \pp [t/x]s \COL \uty \tr v\) where \(t\) is closed and \(s\) and \(t\) are applicative terms,
there exist \(k \ge 0,\, (\uty_i)_{\ind{i}{k}},\, (U_i)_{\ind{i}{k}}\), and \(\des{v}\) such that
\begin{enumerate}
\item\label{item:desubs}
\(\TE, x \COL \uty_1,\dots,x\COL \uty_k \pp s \COL \uty \tr \des{v}\)
\item\label{item:desubt}
\(\pp t \COL \uty_i \tr U_i  \quad  (\ind{i}{k})\)
\item\label{item:unb}
for each \(\ind{i}{k}\), if \(\uty_i\) is unbalanced, \(U_i\) is a singleton
\item\label{item:subs}
\(v \lse 
[U_i/x_{\uty_i}]_{\ind{i}{k}} \des{v}
\).
\end{enumerate}
\end{lemma}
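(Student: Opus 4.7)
The plan is to prove this by induction on the structure of the applicative term $s$. Induction on the derivation of $\TE \pp [t/x]s \COL \uty \tr v$ would fail in the case $s = x$, where $[t/x]s = t$ and the derivation bears no structural relation to $s$.

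For the base cases, if $s$ is a non-terminal, a terminal, or a variable $y \neq x$, then $[t/x]s = s$ and we simply take $k = 0$ and $\des{v} \defe v$, so items~\ref{item:desubs}--\ref{item:subs} are immediate. The nontrivial base case is $s = x$: then $[t/x]s = t$, so $\TE \pp t \COL \uty \tr v$, and since $t$ is closed, Lemma~\ref{lem:strengthening} yields $\pp t \COL \uty \tr v$ with $\TE$ consisting only of balanced bindings. We set $k = 1$, $\uty_1 \defe \uty$, $\des{v} \defe x_\uty$, and $U_1 \defe \set{v}$, where $\pp t \COL \uty \tr \set{v}$ is obtained by one application of Tr1-SetS at arity one (which is allowed even for unbalanced $\uty$). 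Item~\ref{item:desubs} then follows by Tr1-Var; item~\ref{item:unb} holds since $U_1$ is a singleton; and item~\ref{item:subs} is witnessed by $[\set{v}/x_\uty]\,x_\uty = \set{v} \se v$, as a singleton set is observationally equivalent to its sole element in the \lambday-embedding.

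For the inductive step $s = s_1 s_2$, we split on the final rule of the given derivation, Tr1-App1 or Tr1-App2. In the Tr1-App1 case, the derivation decomposes as $\TE_0 \pp [t/x]s_1 \COL \uty'_1 \land \cdots \land \uty'_m \to \uty \tr v_0$ together with $\TE_i \pp [t/x]s_2 \COL \uty'_i \tr u^{(i)}_j$ for $\ind{i}{m}$ and $\ind{j}{k_i}$ (with $k_i = 1$ whenever $\uty'_i$ is unbalanced, enforced by the $\pp$-restriction), so that $\TE = \TE_0 \cup \cdots \cup \TE_m$ and $v = v_0\,\set{u^{(1)}_j}_{j}\cdots\set{u^{(m)}_j}_{j}$. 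Applying the IH to the $s_1$-premise gives data $(k_0, (\zeta^0_\ell, V^0_\ell)_\ell, \des{v_0})$, and applying it to each premise $\TE_i \pp [t/x]s_2 \COL \uty'_i \tr u^{(i)}_j$ gives $(k^{i,j}, (\zeta^{i,j}_\ell, V^{i,j}_\ell)_\ell, \des{u^{(i)}_j})$. We then set
\[
\des{v} \defe \des{v_0}\;\set{\des{u^{(1)}_j}}_{j \le k_1}\;\cdots\;\set{\des{u^{(m)}_j}}_{j \le k_m},
\]
and take the new family of $x$-bindings (with their associated $V$-sets) to be the union of those from the IH invocations. Item~\ref{item:desubs} is then assembled by a single application of Tr1-App1 (using Tr1-SetS to form each argument set); items~\ref{item:desubt} and~\ref{item:unb} are inherited pointwise from the IH; and item~\ref{item:subs} follows from Lemma~\ref{lem:subst-iter} (to separate the iterated substitutions), Lemma~\ref{lem:subst-monotone} (to propagate $\lse$ under substitution), together with the fact that the subscripted variables $x_\uty$ carry distinct type tags. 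The Tr1-App2 case is entirely analogous, with the $\TT{br}$ node playing the role of the function application.

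The main obstacle is showing that the combined environment $\TE,\, x\COL\zeta^0_\ell,\, x\COL\zeta^{i,j}_\ell$ is still a well-formed type environment, i.e.\ that the newly introduced unbalanced bindings on $x$ do not collide across the $s_1$- and $s_2$-subderivations. This is guaranteed by two linked invariants: first, the original decomposition $\TE = \TE_0 \cup \cdots \cup \TE_m$ already distributes the unbalanced bindings disjointly; second, the balance/unbalance discipline on intersection types forces, at each Tr1-App node, at most one argument position to be used at an unbalanced refinement (since an unbalanced domain demands a balanced codomain, and vice versa), so distinct $t$-occurrences in disjoint branches of the derivation contribute disjoint new unbalanced $x$-bindings. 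Tracking this invariant through the IH calls in the Tr1-App1 case, and carefully managing the merging of $V$-sets when the same balanced $\zeta$ arises from several branches (where the $V$-sets can be legitimately unioned), is the delicate technical part of the argument.
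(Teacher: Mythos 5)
Your overall route is the same as the paper's: induction on the structure of \(s\) with case analysis on the last rule, the identical treatment of the base case \(s=x\) via Lemma~\ref{lem:strengthening} with \(k=1\), \(U_1=\set{v}\), \(\des{v}=x_\uty\), the same shape of \(\des{v}\) in the application case, and Lemmas~\ref{lem:subst-monotone} and~\ref{lem:subst-iter} for item~\ref{item:subs}. The gap is in the one step you yourself identify as the crux: the argument that unbalanced \(x\)-bindings produced by different sub-derivations cannot collide, which is what legitimises the environment union in \rname{Tr1-App1}, the weakening needed to put the de-substituted argument images for a fixed argument position under a common environment before applying \rname{Tr1-SetS} (a step you do not mention, and which is only sound for balanced bindings), and item~\ref{item:unb} after the \(U\)-sets are merged. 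Your justification rests on the principle that ``an unbalanced domain demands a balanced codomain, and vice versa,'' which is the reverse of the refinement rules: the only excluded combination is an unbalanced domain with a balanced codomain, i.e.\ \(\urr{\ity}{\sty}\) forces the codomain to be unbalanced (giving \(\brr{\ity\ra\uty}{\sty\ra\sty'}\)), and dually a balanced codomain forces a balanced domain. As stated, the principle does not yield the invariant; moreover your other invariant, that \(\TE=\TE_0\cup\cdots\cup\TE_m\) distributes unbalanced bindings disjointly, concerns the old environment and says nothing about the freshly introduced \(x\)-bindings, which never occurred in \(\TE\) at all because \(t\) is closed.

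The repair is the paper's three-way case analysis, driven by Lemma~\ref{lem:envWellFormed} applied to each induction-hypothesis judgement. If the function branch acquires an unbalanced binding \(x\COL\uty^0_{i'}\), then well-formedness of \(\land_{i'}\uty^0_{i'}\ra(\uty'_1\land\cdots\land\uty'_{k'}\ra\uty)\) together with the correctly oriented rule (unbalanced domain \(\Rightarrow\) unbalanced codomain) forces \(\uty'_1\land\cdots\land\uty'_{k'}\) to be balanced, hence every \(\uty'_j\) is balanced, hence (balanced result type \(\Rightarrow\) balanced domain, again via Lemma~\ref{lem:envWellFormed}) every argument branch contributes only balanced bindings; this function-versus-argument case is not covered by your observation that at most one argument position is unbalanced. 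Collisions between two distinct argument positions \(j_0\neq j_1\) are excluded by well-formedness of the intersection \(\uty'_1\land\cdots\land\uty'_{k'}\), and collisions between two members \(h_0\neq h_1\) of the same argument set are excluded only because an unbalanced \(\uty'_j\) forces \(k_j=1\) under the \(\pp\)-restriction \rname{Tr1-SetS} --- which is precisely why the lemma is stated for \(\pp\) rather than \(\p\), a point your merging discussion does not use. Your construction is the right one, but this corrected and completed analysis must replace the argument you give.
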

\begin{proof}
  The proof is by induction on \(s\) and analysis on the
  last rule used for deriving \(\TE \pp [t/x]s \COL \uty \tr v\).

Case \(s=x\):
Since \(\TE \pp ([t/x]x =)\, t \COL \uty \tr v\) and \(t\) is closed, %
by Lemma~\ref{lem:strengthening}, we also have \(\pp  t \COL \uty \tr v\) and \(\TE\) is balanced.
For item~\ref{item:desubs}, we define
 \(k\defe 1\), \(\uty_1 \defe \uty\), and \(\des{v} \defe x_\uty\).
We define \(U_1\defe\set{v}\)
for items~\ref{item:desubt} and~\ref{item:unb}.
Then, item~\ref{item:subs} is clear.

Case \(s=a\), \(A\), or \(y\neq x\): Since \(x \notin \FV(s)\), \([t/x]s=s\).
So we define \(k \defe 0\), \(\des{v} \defe v\).

Case \(s\) is an application:
the last rule used for \(\TE \pp [t/x]s \COL \uty \tr v\) is
\rname{Tr1-App1} or \rname{Tr1-App2}:
\[
\InfruleS{}
  {\TEpz \pp [t/x]s'\COL \uty'_1\land\cdots\land \uty'_{k'}\ra\uty \tr v' \\
\TE'_j\pp [t/x]t'\COL\uty'_j\tr U'_j 
\quad\text{(for each $j \in \set{1,\dots,{k'}}$)} 
}
{\TEpz\cup\TE'_1\cup\cdots\cup\TE'_{k'}\pp \app{([t/x]s')}{([t/x]t')} :\uty\tr 
\begin{cases}
v'U'_1\cdots U'_{k'} &(\odr{t'} \ge 1 \lor k'=0)
\\
v' \ibr U'_1 &(\odr{t'} =0 \land k'=1)
\end{cases}
}
\]
where
\begin{align}\label{eq:desubAppEnv}
\TE
&=
\TEpz\cup\TE'_1\cup\cdots\cup\TE'_{k'}
\\\notag
s
&=
s't'
\\\notag
v
&=
\begin{cases}
v'U'_1\cdots U'_{k'} &(\odr{t'} \ge 1 \lor k'=0)
\\
v' \ibr U'_1 &(\odr{t'} =0 \land k'=1)\,.
\end{cases}
\end{align}
For each \(\ind{j}{k'}\), 
the rule used last for 
\(\TE'_j\pp [t/x]t'\COL\uty'_j\tr U'_j\) is \rname{Tr1-SetS}:
\begin{equation}\label{eq:desSetS}
\InfruleS{}{
\TE'_j\pp [t/x]t'\COL\uty'_j\tr u'_{jh} 
\quad\text{(for each $h \in \set{1,\dots,{k_j}}$)} 
\\
k_j=1\mbox{ if $\uty'_j$ is unbalanced}
}{
\TE'_j\pp [t/x]t'\COL\uty'_j\tr \set{u'_{j1},\dots,u'_{jk_{j}}} (=U'_j)
}
\end{equation}
Hence, by induction hypotheses for \(s'\) and for \(t'\),
there exist \(k^0 \ge 0,\, (\uty^0_i)_{\ind{i}{k^0}},\, (U^0_i)_{\ind{i}{k^0}}\), and \(\des{v'}\) such that
\begin{gather}\label{eq:desubIH0}
\TEpz, x \COL \uty^0_1,\dots,x\COL \uty^0_{k^0} \pp s' \COL \uty'_1\land\cdots\land \uty'_{k'}\ra\uty \tr \des{v'}
\\\label{eq:desubIH0t}
\pp t \COL \uty^0_i \tr U^0_i  \quad  (\ind{i}{k^0})
\\\label{eq:desubIH0single}
\text{for each \(\ind{i}{k^0}\), if \(\uty^0_i\) is unbalanced, \(U^0_i\) is a singleton}
\\\label{eq:desubIH0-subs}
v' \lse 
[U^0_i/x_{\uty^0_i}]_{\ind{i}{k^0}} \des{v'}
\end{gather}
and for each \(\ind{j}{k'}\) and \(\ind{h}{k_j}\)
there exist \(k^{jh} \ge 0,\, (\uty^{jh}_i)_{\ind{i}{k^{jh}}},\, (U^{jh}_i)_{\ind{i}{k^{jh}}}\), and \(\des{u_{jh}'}\) such that
\begin{gather}\label{eq:desubIHjh}
\TE'_j, x \COL \uty^{jh}_1,\dots,x\COL \uty^{jh}_{k^{jh}} \pp t' \COL \uty'_j \tr \des{u_{jh}'}
\\\label{eq:desubIHjht}
\pp t \COL \uty^{jh}_i \tr U^{jh}_i  \quad  (\ind{i}{k^{jh}})
\\\label{eq:desubIHjhsingle}
\text{for each \(\ind{i}{k^{jh}}\), if \(\uty^{jh}_i\) is unbalanced, \(U^{jh}_i\) is a singleton}
\\\label{eq:desubIHjh-subs}
u_{jh}' \lse 
[U^{jh}_i/x_{\uty^{jh}_i}]_{\ind{i}{k^{jh}}} \des{u_{jh}'}\,.
\end{gather}

For each \(\ind{j}{k'}\),
by \rname{Tr1-SetS}
and a (derived) weakening rule, we have
\[
\infers{
\TE'_j\cup
\set{
 x \COL \uty^{jh}_i
\,|\, h \le {k_j},
i \le k^{jh}
}
\pp t' \COL \uty'_j \tr 
\set{
\des{u_{jh}'}
 \,|\, \ind{h}{k_j}}
}{
\infers{
\TE'_j\cup
\set{
 x \COL \uty^{jh}_i
\,|\, h \le {k_j},
i \le k^{jh}
}
 \pp t' \COL \uty'_j \tr \des{u_{jh}'}
\quad \text{($\ind{h}{k_j}$)} 
}{
\TE'_j, x \COL \uty^{jh}_1,\dots,x\COL \uty^{jh}_{k^{jh}} \pp t' \COL \uty'_j \tr \des{u_{jh}'}
\quad \text{($\ind{h}{k_j}$)}
}
\andalso
k_j=1\mbox{ if $\uty'_j$ is unbalanced}
}
\]
where, when \(\uty'_j\) is unbalanced, since \(k_j=1\) we do not need the weakening rule;
when \(\uty'_j\) is balanced, by {Lemma~\ref{lem:envWellFormed} applied to~\eqref{eq:desubIHjh}},
\(\uty^{jh}_i\) must be balanced for each \(h\) and \(i\), and hence we can use the weakening rule.
Now we define
\[
\des{U_j'} \defe \set{
\des{u_{jh}'}
 \,|\, \ind{h}{k_j}}\,.
\]
Then, by \rname{Tr1-App1} or \rname{Tr1-App2} with~\eqref{eq:desubIH0} and~\eqref{eq:desubAppEnv},
we have
\[
\TE \cup \set{x\COL\uty^0_i\,|\,\ind{i}{k^0}}
\cup (\cup_{\ind{j}{k'}}
\set{
 x \COL \uty^{jh}_i
\,|\, h \le {k_j},
i \le k^{jh}})
\pp
\app{s'}{t'} : \uty \tr
\des{v}
\]
where
\[
\des{v} \defe
\begin{cases}
\appi{
\des{v'}
}{
\des{U_j'}
}{j}{k'}
 &(\odr{t'} \ge 1 \lor k'=0)
\\
\des{v'} \ibr \des{U_1'} &(\odr{t'} =0 \land k'=1).
\end{cases}
\]
We define \(k\) and \((\uty_i)_{\ind{i}{k}}\) as the following enumeration:
\[
\set{x\COL\uty_i\,|\, \ind{i}{k}}
\defe
\set{x\COL\uty^0_i\,|\,\ind{i}{k^0}}
\cup
\set{x\COL\uty^{jh}_i\,|\,\ind{j}{k'},\ind{h}{k_j},\ind{i}{k^{jh}}}\,.
\]
Thus we have obtained item~\ref{item:desubs}.

For each \(\ind{i}{k}\) we define
\[
U_i \defe \cup(\set{U^0_{i'}\,|\, \uty^0_{i'} = \uty_i} \cup \set{U^{jh}_{i'} \,|\, \uty^{jh}_{i'} = \uty_i})\,.
\]
By~\eqref{eq:desubIH0t}, ~\eqref{eq:desubIHjht}, and \rname{Tr1-SetS}, for each \(\ind{i}{k}\) we have
\[
\pp t : \uty_i \tr U_i
\]
where, when \(\uty_i\) is unbalanced, we use~\eqref{eq:desubIH0single} and~\eqref{eq:desubIHjhsingle} and
we can show that if \(\uty_i\) is unbalanced, then 
\(\set{i' \le k^0 \,|\, \uty^0_{i'} = \uty_i} \cup \set{(j,h,i') \,|\, \uty^{jh}_{i'} = \uty_i}\) 
is a singleton as follows.
The set is non-empty by the definition of \(\uty_i\). 
If \(\set{i' \le k^0 \,|\, \uty^0_{i'} = \uty_i}\)
is non-empty, it is a singleton,
and we show that \(\set{(j,h,i') \,|\, \uty^{jh}_{i'} = \uty_i}\) is empty.
For every \(j\), \(h\), and \(i'\), 
by {Lemma~\ref{lem:envWellFormed} applied to~\eqref{eq:desubIH0} and~\eqref{eq:desubIHjh}}
and the fact that \(\uty^0_{i'} (=\uty_i)\) is unbalanced for some \(i'\),
\(\uty'_{j}\) and
\(\uty^{jh}_{i'}\) must be balanced. Hence, \(\uty^{jh}_{i'}\neq \uty_i\) as \(\uty_i\) is unbalanced.
If \(\set{(j,h,i') \,|\, \uty^{jh}_{i'} = \uty_i}\) is non-empty,
similarly, \(\set{i' \le k^0 \,|\, \uty^0_{i'} = \uty_i}\) is empty.
To show that \(\set{(j,h,i') \,|\, \uty^{jh}_{i'} = \uty_i}\) is a singleton,
suppose \(\uty^{{j_0}{h_0}}_{i_0} = \uty^{{j_1}{h_1}}_{i_1} = \uty_i\). 
By {Lemma~\ref{lem:envWellFormed} applied to~\eqref{eq:desubIHjh}},
\(\uty'_{j_0}\) and \(\uty'_{j_1}\) are unbalanced. 
Hence \(j_0=j_1\) from the well-formedness of \(\uty'_1\land\cdots\land \uty'_{k'}\ra\uty\), and we have also \(k_{j_0}=1\) from~\eqref{eq:desSetS}.
Therefore \(h_0=h_1 \ (\le k_{j_0}=1)\), and then \(i_0=i_1\).
Thus, we have obtained items~\ref{item:desubt} and~\ref{item:unb}.

Finally we show item~\ref{item:subs}, i.e.,
 \(v \lse [U_i/x_{\uty_i}]_{i\le k} \des{v}\).
In the case where \(\odr{t'}\ge1 \lor k'=0\),
\begin{align*}
v
=&\
\appi{v'}{U'_j}{j}{k'}
\\ %
\lse&\
\appi{
([U^0_i/x_{\uty^0_i}]_{\ind{i}{k^0}}\des{v'})
}{
\left\{
[U^{jh}_i/x_{\uty^{jh}_i}]_{\ind{i}{k^{jh}}}
\des{u_{jh}'}
\,\middle|\,
\ind{h}{k_j}
\right\}
}{j}{k'}
&&\reason{by~\eqref{eq:desubIH0-subs} and~\eqref{eq:desubIHjh-subs}}
\\ \lse&\ %
\appi{
([U_i/x_{\uty_i}]_{\ind{i}{k}}\des{v'})
}{
\left\{
[U_i/x_{\uty_i}]_{\ind{i}{k}}\des{u_{jh}'}
\,\middle|\,
\ind{h}{k_j}
\right\}
}{j}{k'}
\\ %
=&\
[U_i/x_{\uty_i}]_{i\le k} \des{v}\,.
\end{align*}
In the case where \(\odr{t'}=0 \land k'=1\),
\begin{align*}
v=&\
v'\ibr U'_1
\\ %
\lse&\
([U^0_i/x_{\uty^0_i}]_{\ind{i}{k^0}} \des{v'})
\ibr
([U^{11}_i/x_{\uty^{11}_i}]_{\ind{i}{k^{11}}} \des{u_{11}'})
&&\reason{by~\eqref{eq:desubIH0-subs} and~\eqref{eq:desubIHjh-subs}, and now \(k_1=1\)}
\\ %
\lse&\
([U_i/x_{\uty_i}]_{\ind{i}{k}} \des{v'})
\ibr
([U_i/x_{\uty_i}]_{\ind{i}{k}} \des{u_{11}'})
\\ %
=&\
[U_i/x_{\uty_i}]_{i\le k} \des{v}\,.
\end{align*}
\end{proof}

The following lemma states that the transformation relation (up to \(\lse\))
is a left-to-right backward simulation relation.

\begin{lemma}[subject expansion]
\label{lem:red-fwd}
If \(t\redwith{\GRAM}t'\) and \(\pp t'\COL\T \tr u'\),
then there exists \(u\) such that 
\( \pp t\COL\T \tr u\) with \(u\redwith{\GRAM''}\gse u'\).
\end{lemma}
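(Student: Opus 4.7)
The plan is induction on the derivation of $t\redwith{\GRAM}t'$, with two cases corresponding to whether the reduction is at the head of $t$ (a non-terminal step) or inside a subterm (congruence through a unary terminal).

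The congruence case $t = a\,t_1 \redwith{\GRAM} a\,t'_1$ is immediate: since $\pp$ enforces singletons at unbalanced types and $\T$ is unbalanced, the derivation of $\pp t' : \T \tr u'$ must end with \rname{Tr1-App2}, giving $u' = \TT{br}\,a\,\{u'_1\}$ with $\pp t'_1 : \T \tr u'_1$. The induction hypothesis then supplies $u_1$ with $\pp t_1 : \T \tr u_1$ and $u_1 \redwith{\GRAM''} w_1 \gse u'_1$, so $u \defe \TT{br}\,a\,\{u_1\}$ reduces in one step by the singleton-congruence rule to $\TT{br}\,a\,\{w_1\} \gse u'$, the last step using the context lemma (Lemma~\ref{lem:context}) to lift $\gse$ through the surrounding context.

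The head case $t = A\,t_1\cdots t_k \redwith{\GRAM} [t_i/x_i]_{i\le k}\,s$ with $(A\,x_1\cdots x_k \to s)\in\RULES$ is the heart of the argument. The plan is to iterate Lemma~\ref{lem:desubst} once per (closed) $t_i$ to obtain a derivation $(x_i:\uty_{i,j})_{i,j}\pp s : \T \tr \des{u'}$, transforms $\pp t_i : \uty_{i,j} \tr U_{i,j}$ for each $(i,j)$ (a singleton when $\uty_{i,j}$ is unbalanced), and the inequality $u' \lse [U_{i,j}/(x_i)_{\uty_{i,j}}]_{i,j}\des{u'}$. Assembling the intersection type $\uty\defe\bigwedge_j\uty_{1,j}\to\cdots\to\bigwedge_j\uty_{k,j}\to\T$ (which refines $\NONTERMS(A)$) and applying \rname{Tr1-Abs1} for higher-order $x_i$ or \rname{Tr1-Abs2} for order-0 $x_i$ (where the singleton condition forces a unique $x_i:\T$ binding) yields $\pp \lambda x_1\cdots\lambda x_k.s : \uty \tr \lambda x'_1\cdots\lambda x'_\ell.v$, with $v$ obtained from $\des{u'}$ by substituting $\Teps$ for each order-0 $(x_i)_\T$; by \rname{Tr1-Rule} this is precisely the rule $A_\uty\,x'_1\cdots x'_\ell \to v$ of $\GRAM''$. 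I then build $u$ by applying $A_\uty$ to the transforms of $t_1,\ldots,t_k$ in order, using \rname{Tr1-App1} (passing $U_{i,1},\ldots,U_{i,k_i}$ for higher-order $t_i$) and \rname{Tr1-App2} (which introduces a $\TT{br}$-wrapper with the unique $U_i$ for order-0 $t_i$). One reduction step of $\GRAM''$ at the deeply nested $A_\uty$-redex (the intermediate $\TT{br}$-congruences collapse into a single outer step) produces a term $w$; iterating the key Lemma~\ref{lem:groundsbst} once per $\TT{br}$-wrapper absorbs each order-0 $U_i$ into the corresponding $[\Teps/(x_i)_\T]$-slot inside $v$, yielding $w \se [U_{i,j}/(x_i)_{\uty_{i,j}}]_{i,j}\des{u'}$, and therefore $w \gse u'$ by the de-substitution inequality together with Lemma~\ref{lem:subst-monotone}.

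The main obstacle will be this fine alignment in the head case: the $\Teps$'s inserted into $v$ by \rname{Tr1-Abs2} must correspond position-by-position to the $\TT{br}$-wrappers around $A_\uty$ in $u$ produced by \rname{Tr1-App2}, so that Lemma~\ref{lem:groundsbst} applies at exactly the right sites with the right order-0 arguments. The singleton condition on unbalanced types and the well-formedness of intersection types (Lemma~\ref{lem:envWellFormed}) ensure this indexing is unambiguous, while $\ibr$-associativity (Lemma~\ref{lem:precong}) together with the context lemma supports the iterated absorption and the final congruence step.
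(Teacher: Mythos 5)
Your proposal follows essentially the same route as the paper's proof: induction with the same two cases, iterated de-substitution (Lemma~\ref{lem:desubst}) over the closed arguments, reconstruction of the transformed rule via \rname{Tr1-Abs1}/\rname{Tr1-Abs2} and \rname{Tr1-Rule}, re-application via \rname{Tr1-App1}/\rname{Tr1-App2}, a single \(\GRAM''\)-step, and absorption of the order-0 argument by the key Lemma~\ref{lem:groundsbst} combined with Lemma~\ref{lem:subst-monotone} and the de-substitution inequalities. The only cosmetic difference is that you describe iterating the key lemma over several \(\TT{br}\)-wrappers, whereas well-formedness (Lemma~\ref{lem:envWellFormed}, as you note) guarantees at most one used order-0 argument, so the paper applies it exactly once---and does so after first passing to \(v^{m+1}\) by monotonicity, rather than before, which keeps the key lemma's premise immediate.
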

\begin{proof}
The proof is given by the induction on \(t\) and by the case analysis of the reduction \(t\redwith{\GRAM}t'\).

Case where \(t=\Te\): Trivial.

Case where \(t=\app{a}{t_1}\) and \(\TERMS(a)=1\): 
Let the last rule used for \(t\redwith{\GRAM}t'\) be
\infrule{t_1\redwith{\GRAM}t_1'}
 {a\,t_1 \redwith{\GRAM}
  a\,t_1'
}
Since \(\pp (t' =)\ a\,t_1'\COL\T \tr u'\) is derived by~\rname{Tr1-App2} and~\rname{Tr1-SetS},
we have \(\pp t'_1 : \T \tr u'_1 \) such that \(u' = \TT{br}\,a\,u'_1\).
Hence by the induction hypothesis for \(t'_1\), there exists \(u_1\) such that
\(\pp t_1 : \T \tr u_1\) and \(u_1 \redwith{\GRAM''} \gse u'_1\).
Therefore we have \(u\defe \TT{br}\,a\,u_1 \) with
\(\pp a\,t_1 : \T \tr \TT{br}\,a\,u_1\) and \(\TT{br}\,a\,u_1 \redwith{\GRAM''}\gse \TT{br}\,a\,u_1'\).

Case where \(t=A\,t_1\,\dots\,t_n\):
Let the last rule used for \(t\redwith{\GRAM}t'\) be
\infrule{A\,x_1\,\cdots\,x_n \Hra s\in \RULES}
{A\,t_1\,\cdots\,t_n \redwith{\GRAM} [t_1/x_1,\ldots,t_n/x_n]s}
and let \(\NONTERMS(A) = \kappa_1\ra\cdots\ra\kappa_n\ra\T\).
By the assumption on sorts, there exists unique \(m\) such that
\(0\le m\le n\), \(\odr{\kappa_j} \ge 1\) for all \(\ind{j}{m}\),
and \(\odr{\kappa_j}=0\) for all \(j >m\).
Let \(v^{n+1} \defe u'\) and \(\TE^{n+1} \defe \emptyset\);
then 
\[
\TE^{n+1} \pp [t_1/x_1,\ldots,t_n/x_n]s :\T \tr v^{n+1}\,.
\]
Hence by Lemma~\ref{lem:desubst},
for each \(j=n,\dots,1\),
there exist
\(
\TE^j,
k^j,
(\uty^{j}_{i})_{\ind{i}{k^j}},
(U^{j}_{i})_{\ind{i}{k^j}},
v^j
\) such that
\begin{align*}&
\TE^j=(\TE^{j+1}, x_j\COL\uty^{j}_{1},\dots,x_j\COL\uty^{j}_{k^{j}})
\\&
\phantom{\TE^j}\pp t_j : \uty^{j}_{i} \tr U^{j}_{i} \qquad(\ind{i}{k^j})
\\&
\TE^j \pp [t_{j'}/x_{j'}]_{\ind{j'}{j-1}}s : \T \tr v^{j}
\\&
\!\!\text{
for each \(\ind{i}{k^j}\), if \(\uty^{j}_{i}\) is unbalanced, \(U^{j}_{i}\) is a singleton 
}
\\&
v^{j+1} \lse 
[U^j_i/{(x_{j})}_{\uty^j_i}]_{\ind{i}{k^j}} v^{j}\,.
\end{align*}
Note that for each \(j>m\), \({k^j}\le1\), and there is at most one \(j>m\) such that \({k^j}=1\) 
by %
Lemma~\ref{lem:envWellFormed}. %

By~\rname{Tr1-NT}, we have \(\pp A : \uty \tr A_{\uty}\).
Since we have also \(\pp t_j : \uty^{j}_{i} \tr U^{j}_{i}\) (\(\ind{i}{k^j}\))
for \(j=1,\dots,m\), by using~\rname{Tr1-App1} iteratively,
we have
\[
\pp A\,t_1\cdots\,t_m :
\land_{\ind{i}{k^{m+1}}}\uty^{m+1}_i\ra\cdots\land_{\ind{i}{k^{n}}}\uty^{n}_i\ra\T
\tr
\appi{\appi{
A_{\uty}
}{U^1_i}{i}{k^1}
\!\!\!\cdots
}{U^m_i}{i}{k^m}.
\]
Then, since we have \(\pp t_j : \uty^{j}_{i} \tr U^{j}_{i}\) (\(\ind{i}{k^j}\))
for \(j=m+1,\dots,n\),
by using~\rname{Tr1-App1} where \(k=0\) and/or~\rname{Tr1-App2} iteratively,
we have
\begin{align*}&
\pp A\,t_1\cdots\,t_n :
\T
\tr u
\\&
u\defe
\begin{cases}
\Big(\appi{\appi{
A_{\uty}
}{U^1_i}{i}{k^1}
\!\!\!\cdots
}{U^m_i}{i}{k^m}\Big)
\ibr
U^{j}_1
&\text{(\(k_{j}>0\) for some (unique) \(j\in\set{m+1,\dots,n}\))}
\\
\phantom{(}
\appi{\appi{
A_{\uty}
}{U^1_i}{i}{k^1}
\!\!\!\cdots
}{U^m_i}{i}{k^m}
&\text{(otherwise)}.
\end{cases}
\end{align*}

Meanwhile, since
we have 
\(\TE^1 \pp [t_{j'}/x_{j'}]_{\ind{j'}{0}}s : \T \tr v^{1}\), so do
\(\TE^1 \p [t_{j'}/x_{j'}]_{\ind{j'}{0}}s : \T \tr v^{1}\),
i.e.,
\[
x_1\COL\uty^{1}_{1},\dots,x_1\COL\uty^{1}_{k^{1}},
\dots,
x_n\COL\uty^{n}_{1},\dots,x_n\COL\uty^{n}_{k^{n}}
\p s : \T \tr v^{1}\,.
\]
Now we define
\begin{equation}\label{eq:def-vzero}
v^0\defe
\begin{cases}
[\Teps/{(x_{j})}_{\T}]v^1 & \text{(\(k_{j}>0\) for some (unique) \(j\in\set{m+1,\dots,n}\))}
\\
v^1 & \text{(otherwise)}.
\end{cases}
\end{equation}
By~iterating \rname{Tr1-Abs1} where \(k=0\)
and/or~\rname{Tr1-Abs2},
we have
\begin{align*}&
x_1\COL\uty^{1}_{1},\dots,x_1\COL\uty^{1}_{k^{1}},
\dots,
x_m\COL\uty^{m}_{1},\dots,x_m\COL\uty^{m}_{k^{m}}
\p
\\&
\lambda x_{m+1}. \cdots \lambda x_n. s
: 
\land_{\ind{i}{k^{m+1}}}\uty^{m+1}_{i}\ra\cdots\ra\land_{\ind{i}{k^{n}}}\uty^{n}_{i}\ra
\T \tr v^{0}
\end{align*}
and by~iterating \rname{Tr1-Abs1}, we have
\begin{align*}
\p
&\lambda x_{1}. \cdots \lambda x_n. s
: 
\land_{\ind{i}{k^{1}}}\uty^{1}_{i}\ra\cdots\ra\land_{\ind{i}{k^{n}}}\uty^{n}_{i}\ra
\T \tr 
\\
&\lambda {(x_1)}_{\uty^{1}_{1}}.\cdots \lambda {(x_1)}_{\uty^{1}_{k^{1}}}.
\cdots
\lambda {(x_m)}_{\uty^{m}_{1}}.\cdots \lambda {(x_m)}_{\uty^{m}_{k^{m}}}.
v^{0}\,.
\end{align*}
Hence, by \rname{Tr1-Rule}, we have
\begin{align}&
\p
(A\,x_1\,\cdots\,x_n \Hra s)
\tr
\big(A_{\uty} 
\, {(x_1)}_{\uty^{1}_{1}}\,\cdots \, {(x_1)}_{\uty^{1}_{k^{1}}}\,
\cdots
\, {(x_m)}_{\uty^{m}_{1}}\,\cdots \, {(x_m)}_{\uty^{m}_{k^{m}}}
\Hra
v^{0}\big)
\label{eq:subexp-Xformed-rule}
\end{align}
where \(\uty \defe \land_{\ind{i}{k^{1}}}\uty^{1}_{i}\ra\cdots\ra\land_{\ind{i}{k^{n}}}\uty^{n}_{i}\ra\T\).

By Lemmas~\ref{lem:subst-monotone} and~\ref{lem:subst-iter} and
since 
\(
v^{2} \lse 
[U^1_i/{(x_{1})}_{\uty^1_i}]_{\ind{i}{k^1}} v^{1}
\),
we have
\(
v^{3} \lse
[U^{2}_i/{(x_{2})}_{\uty^{2}_i}]_{\ind{i}{k^{2}}}v^{2} \lse 
([U^1_i/{(x_{1})}_{\uty^1_i}]_{\ind{i}{k^1}} \cup [U^2_i/{(x_{2})}_{\uty^2_i}]_{\ind{i}{k^2}}) v^{1}
\).
Iterating this reasoning, we have
\begin{equation}\label{eq:subexp-subst}
v^{m+1} \lse ([U^1_i/{(x_{1})}_{\uty^1_i}]_{\ind{i}{k^1}} \cup\cdots\cup [U^m_i/{(x_{m})}_{\uty^m_i}]_{\ind{i}{k^m}})v^1\,.
\end{equation}
Further,
\begin{align}\notag
v^{n+1} 
\lse\ & ([U^{m+1}_i/{(x_{m+1})}_{\uty^{m+1}_i}]_{\ind{i}{k^{m+1}}} \cup\cdots\cup
 [U^n_i/{(x_{n})}_{\uty^n_i}]_{\ind{i}{k^n}})v^{m+1}
\\ = \ &\label{eq:subexp-subst2}
\begin{cases}
[U^j_1/{(x_{j})}_{\T}]v^{m+1} & \text{(\(k_{j}>0\) for some (unique) \(j\in\set{m+1,\dots,n}\))}
\\
v^{m+1} & \text{(otherwise)}.
\end{cases}
\end{align}

In the case where \(k_{j}>0\) for some \(j\in\set{m+1,\dots,n}\),
we have 
\begin{align*}
u=\
&\Big(\appi{\appi{
A_{\uty}
}{U^1_i}{i}{k^1}
\!\!\!\cdots
}{U^m_i}{i}{k^m}\Big)
\ibr U^j_1
\\
\redwith{\GRAM''}\ &
\big(([U^1_i/{(x_{1})}_{\uty^1_i}]_{\ind{i}{k^1}} \cup\cdots\cup [U^m_i/{(x_{m})}_{\uty^m_i}]_{\ind{i}{k^m}})v^0\big)
\ibr U^j_1
&& \reason{by~\eqref{eq:subexp-Xformed-rule}}
\\ = \ &
\big([\Teps/{(x_j)}_{\T}]([U^1_i/{(x_{1})}_{\uty^1_i}]_{\ind{i}{k^1}} \cup\cdots\cup [U^m_i/{(x_{m})}_{\uty^m_i}]_{\ind{i}{k^m}})v^1\big)
\ibr U^j_1
&& \reason{by~\eqref{eq:def-vzero}}
\\ \gse \ &
\big([\Teps/{(x_j)}_{\T}]
v^{m+1}\big)
\ibr U^j_1
&&\reason{by~\eqref{eq:subexp-subst}}
\\ \se \ &
[U^j_1/{(x_j)}_{\T}]v^{m+1}
&&\reason{Lemma~\ref{lem:groundsbst}}
\\ \gse \ &
v^{n+1} = u'
&&\reason{by~\eqref{eq:subexp-subst2}}.
\end{align*}
In the other case,
we have 
\begin{align*}
u=\
&\appi{\appi{
A_{\uty}
}{U^1_i}{i}{k^1}
\!\!\!\cdots
}{U^m_i}{i}{k^m}
\\
\redwith{\GRAM''}\ &
([U^1_i/{(x_{1})}_{\uty^1_i}]_{\ind{i}{k^1}} \cup\cdots\cup [U^m_i/{(x_{m})}_{\uty^m_i}]_{\ind{i}{k^m}})v^0
&& \reason{by~\eqref{eq:subexp-Xformed-rule}}
\\ = \ &
([U^1_i/{(x_{1})}_{\uty^1_i}]_{\ind{i}{k^1}} \cup\cdots\cup [U^m_i/{(x_{m})}_{\uty^m_i}]_{\ind{i}{k^m}})v^1
&& \reason{by~\eqref{eq:def-vzero}}
\\ \gse \ &
v^{m+1} 
&&\reason{by~\eqref{eq:subexp-subst}}
\\ \gse \ &
v^{n+1} = u'
&&\reason{by~\eqref{eq:subexp-subst2}}.
\end{align*}

\end{proof}

\subsection{Lemmas for backward direction}
\label{sec:lemmaBackwardDirection}

\newcommand{\dlt}[1]{} %

For a given \(\TE\),\,
we write \(\mn{\TE}{x}\) for \(\TE'\) such that \(\TE = (\TE' , x\COL\uty_1,\dots, x\COL\uty_n)\)
for some \(\uty_1,\dots,\uty_n\) and \(x \notin \dom(\TE')\).

\begin{lemma}[substitution]\label{lem:substOld}
Given
\(\TE, x \COL \uty_1,\dots,x\COL \uty_k \p s \COL \uty \tr v\) 
where \(x \notin \dom(\TE)\) and \(k \ge 0\), and given
\(\p t \COL \uty_i \tr U_i\) for each \(\ind{i}{k}\), we have
\[
\TE \p [t/x]s \COL \uty \tr %
\dlt{\sbs{V} \gp} [U_i/x_{\uty_i}]_{\ind{i}{k}} v\,.
\]
\end{lemma}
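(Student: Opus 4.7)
The plan is to proceed by induction on the derivation of $\TE, x\COL\uty_1,\ldots,x\COL\uty_k \p s \COL \uty \tr v$, with case analysis on the last rule used. The intuition is that each binding $x\COL\uty_i$ uniquely specifies which image $U_i$ replaces occurrences of $x_{\uty_i}$ in $v$; unbalanced types $\uty_i$ force linear (single-position) substitution, whereas balanced ones permit sharing.

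For \rname{Tr1-Var} with $s = x$, the derivation forces $\uty = \uty_j$ and $v = x_{\uty_j}$ for some $j$, so $[t/x]s = t$ and the substituted output is $U_j$; the hypothesis $\p t \COL \uty_j \tr U_j$ gives the goal after weakening by the remaining (necessarily balanced) bindings. In the remaining base cases (\rname{Tr1-Var} with $s \neq x$, \rname{Tr1-Const0}, \rname{Tr1-Const1}, \rname{Tr1-NT}) we have $x \notin \FV(s)$, so substitution is vacuous and iterated use of Lemma~\ref{lem:strengthening} discharges the extraneous bindings while simultaneously confirming that each $\uty_i$ is balanced.

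For the application rules \rname{Tr1-App1} and \rname{Tr1-App2}, the environment decomposes as $\TE_0 \cup \TE_1 \cup \cdots \cup \TE_m$ across subterms, and I would distribute each binding $x\COL\uty_i$ accordingly: for unbalanced $\uty_i$, the binding sits in a uniquely determined $\TE_{j(i)}$ (forced by the definition of $\cup$), while for balanced $\uty_i$ it may appear in several $\TE_j$'s. The induction hypothesis on each premise substitutes the appropriate $U_i$ into its image, and reassembling via the same rule produces the term $[U_i/x_{\uty_i}]_i (vU_1\cdots U_k)$, exploiting that set-substitution distributes over application. The abstraction rules \rname{Tr1-Abs1} and \rname{Tr1-Abs2} follow by $\alpha$-renaming the bound variable to be fresh for $x$ and $t$, then invoking the induction hypothesis; rule \rname{Tr1-Set} applies the hypothesis uniformly to each disjunct since the environment is shared there.

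The main technical subtlety is in the application cases: I need to verify that the chosen partition of bindings respects linearity and that the recombined substitution matches the global one. For each unbalanced $\uty_i$, exactly one sub-premise's induction yields $U_i$ in its output; for balanced $\uty_i$, the same $U_i$ is produced consistently by every sub-premise that carries the binding. Once the partition is fixed, the syntactic equality $[U_i/x_{\uty_i}]_i (vU_1\cdots U_k) = ([U_i/x_{\uty_i}]_i v)\,([U_i/x_{\uty_i}]_i U_1) \cdots ([U_i/x_{\uty_i}]_i U_k)$ makes the inductive step a routine recombination.
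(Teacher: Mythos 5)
Your proposal is correct and follows essentially the same route as the paper's proof: induction on the transformation derivation, weakening for the variable/base cases, distribution of the bindings \(x\COL\uty_i\) over the split environments in the application cases, and recombination by the same rules together with \rname{Tr1-Set} after checking the linearity side condition. The only point the paper makes explicit that you leave implicit is that the substituted head \([U_i/x_{\uty_i}]_{\ind{i}{k}}v'\) is itself a \emph{set}, so \rname{Tr1-App1}/\rname{Tr1-App2} must be applied per element and the results gathered with \rname{Tr1-Set}, with Lemma~\ref{lem:envu} justifying that restricting each substitution to the bindings actually present in a sub-environment does not change the image.
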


\begin{proof}
The proof is given by induction on \(\TE, x \COL \uty_1,\dots,x\COL \uty_k \p s \COL \uty \tr v\).
For any \(\TE\), we define
\(\res{\TE} \defe \set{i \in \set{1,\ldots,k} \,|\, x:\uty_i \in \TE}\).
The base cases are clear; in the case of variables, we use a derived rule of weakening for
balanced environments.

Case of \rname{Tr1-App1}:
\[
\InfruleS{}{
\TE'_0 \p s' : \uty'_1 \land \cdots \land \uty'_{k'} \ra \uty \tr v'
\\
\TE'_j \p t' : \uty'_j \tr U'_j
 \text{ and }\uty'_i\neq \T
\text{ (for each $i \in \set{1,\dots,k'}$)}
}{
\TE'_0 \cup \TE'_1 \cup \cdots \cup \TE'_{k'}
\p \app{s'}{t'} : \uty \tr 
\appi{v'}{U'_{j}}{j}{k'} 
}
\]
We have
\begin{align*}&
\TE, x \COL \uty_1,\dots,x\COL \uty_{k} = 
\TE'_0 \cup \TE'_1 \cup \cdots \cup \TE'_{k'}
\\&
s= \app{s'}{t'}
\\&
v= \appi{v'}{U'_{j}}{j}{k'} \,.
\end{align*}
The rule used for \((\TE'_j \p t' : \uty'_j \tr U'_j)\) is \rname{Tr1-Set}:
\[
\InfruleS{}{
\TE'_{j} \p t'\COL \uty'_j \tr u'_{jh} \ \ (\ind{h}{k_j})
}{
\TE'_{j} \p t'\COL \uty'_j \tr \set{u'_{j1},\ldots,u'_{jk_j}} \ (= U'_j)
}
\]

By the induction hypothesis for \(s'\) and \(t'\), we have
\begin{gather}
\label{eq:sbsIH0-app1}
\mn{\TE'_0}{x} \p [t/x]s' : \uty'_1 \land \cdots \land \uty'_{k'} \ra \uty \tr
\dlt{\sbs{V'} \gp} [U_i/x_{\uty_i}]_{i \in \res{\TE'_0}}v'
\\\label{eq:sbsIHjh-app1}
\mn{\TE'_{j}}{x} \p [t/x]t' : \uty'_j \tr
\dlt{\sbs{U_{jh}'} \gp}
 [U_i/x_{\uty_i}]_{i \in \res{\TE'_{j}}}u'_{jh}
\qquad (j \in \set{1,\ldots,k'}, \ind{h}{k_j})
\end{gather}
and by using \rname{Tr1-Set}, from~\eqref{eq:sbsIHjh-app1}, we have
\begin{equation}\label{eq:sbsIHjh-gather-app1}
\mn{\TE'_{j}}{x} \p [t/x]t' : \uty'_j \tr
\dlt{\sbs{U_{jh}'} \gp}
\cup_{\ind{h}{k_j}}
 [U_i/x_{\uty_i}]_{i \in \res{\TE'_{j}}}u'_{jh}
\qquad (j \in \set{1,\ldots,k'}).
\end{equation}

Now
\begin{align*}
[U_i/x_{\uty_i}]_{\ind{i}{k}}
v
&=
\appi{
([U_i/x_{\uty_i}]_{\ind{i}{k}}
v')
}{
\cup_{\ind{h}{k_j}}
[U_i/x_{\uty_i}]_{\ind{i}{k}} u'_{jh}
}{j}{k'}  
\\
&=
\appi{
([U_i/x_{\uty_i}]_{i \in \res{\TE'_0}}
v')
}{
\cup_{\ind{h}{k_j}}
[U_i/x_{\uty_i}]_{i \in \res{\TE'_{j}}} u'_{jh}
}{j}{k'}  
\\
&=
\left\{
\appi{
\sbs{v'}
}{
\cup_{\ind{h}{k_j}}
[U_i/x_{\uty_i}]_{i \in \res{\TE'_{j}}} u'_{jh}
}{j}{k'}  
\,\middle|\,
\sbs{v'} \in
[U_i/x_{\uty_i}]_{i \in \res{\TE'_0}} v'
\right\}
\end{align*}
where the second equation is shown by Lemma~\ref{lem:envu}.
For any
\(\sbs{v'} \in
[U_i/x_{\uty_i}]_{i \in \res{\TE'_0}} v'\), 
by \rname{Tr1-Set} and~\eqref{eq:sbsIH0-app1}, we have
\[
\mn{\TE'_0}{x} \p [t/x]s' : \uty'_1 \land \cdots \land \uty'_{k'} \ra \uty \tr
\sbs{v'}
\]
and hence,
by \rname{Tr1-App1} with~\eqref{eq:sbsIHjh-gather-app1}, we have
\[
\cup_{j\in\set{0,\ldots,k'}} (\mn{\TE'_{j}}{x})
\p
([t/x]s')([t/x]t') : \uty \tr
\appi{
\sbs{v'}
}{
\cup_{\ind{h}{k_j}}
[U_i/x_{\uty_i}]_{i \in \res{\TE'_{j}}} u'_{jh}
}{j}{k'}
\]
where the linearity condition is satisfied, as
\[
(\mn{\TE'_{j}}{x}) \cap
(\mn{\TE'_{j'}}{x})
\subseteq
\TE'_{j} \cap \TE'_{j'}
\subseteq
\text{ (the set of balanced terms)}
\]
for \(j\neq j'\).
Therefore, again by \rname{Tr1-Set},
\[
\cup_{j\in\set{0,\ldots,k'}} (\mn{\TE'_{j}}{x})
\p
([t/x]s')([t/x]t') : \uty \tr
[U_i/x_{\uty_i}]_{\ind{i}{k}}
v
\]
Since \(\cup_{j\in\set{0,\ldots,k'}} (\mn{\TE'_{j}}{x}) = \TE\)
and \(([t/x]s')([t/x]t')=[t/x](\app{s'}{t'})\), we have shown the required condition.

Case of \rname{Tr1-App2}:
\[
\InfruleS{}{
\TE'_0 \p s' : \T \ra \uty \tr V'
\andalso
\TE'_1 \p t' : \T \tr U'
}{
\TE'_0 \cup \TE'_1
\p \app{s'}{t'} : \uty \tr 
\TT{br}\,V'\,U'
}
\]
We have
\begin{align*}&
\TE, x \COL \uty_1,\dots,x\COL \uty_{k} = 
\TE'_0 \cup \TE'_1
\\&
s= \app{s'}{t'}
\\&
v= \TT{br}\,V'\,U'\,.
\end{align*}
The rule used for \((\TE'_0 \p s' : \T \ra \uty \tr V')\) and
\((\TE'_1 \p t' : \T \tr U')\) is \rname{Tr1-Set}:
\[
\InfruleS{}{
\TE'_{0} \p s'\COL \T \ra \uty \tr v'_{h} \ \ (\ind{h}{k_0})
}{
\TE'_0 \p s' : \T \ra \uty \tr
\set{v'_{1},\ldots,v'_{k_0}} \ (=  V')
}
\]
\[
\InfruleS{}{
\TE'_{1} \p t'\COL \T \tr u'_{h} \ \ (\ind{h}{k_1})
}{
\TE'_1 \p t' : \T \tr
\set{u'_{1},\ldots,u'_{k_1}} \ (=  U')
}
\]

By the induction hypothesis for \(s'\) and \(t'\), we have
\begin{align}
\label{eq:sbsIH0-app2}
&\mn{\TE'_0}{x} \p [t/x]s' : \T \ra \uty \tr
\dlt{\sbs{V'} \gp} [U_i/x_{\uty_i}]_{i \in \res{\TE'_0}}v'_h
&&\mspace{-100mu} (\ind{h}{k_0}) \mspace{100mu}
\\\label{eq:sbsIHjh-app2}&
\mn{\TE'_{1}}{x} \p [t/x]t' : \T \tr
\dlt{\sbs{U_{1h}'} \gp}
 [U_i/x_{\uty_i}]_{i \in \res{\TE'_{1}}}u'_h
&&\mspace{-100mu} (\ind{h}{k_1}) \mspace{100mu}
\end{align}
and by using \rname{Tr1-Set} (going and back), from~\eqref{eq:sbsIH0-app2} and~\eqref{eq:sbsIHjh-app2},
we have
\begin{align*}
&\mn{\TE'_0}{x} \p [t/x]s' : \T \ra \uty \tr
\dlt{\sbs{V'} \gp} [U_i/x_{\uty_i}]_{i \in \res{\TE'_0}}V'
\\&
\mn{\TE'_{1}}{x} \p [t/x]t' : \T \tr
\dlt{\sbs{U_{1h}'} \gp}
 [U_i/x_{\uty_i}]_{i \in \res{\TE'_{1}}}U'
\end{align*}
Hence,
by \rname{Tr1-App2}, we have
\[
(\mn{\TE'_{0}}{x}) \cup (\mn{\TE'_{1}}{x})
\p
([t/x]s')([t/x]t') : \uty \tr
\TT{br}\,
([U_i/x_{\uty_i}]_{i \in \res{\TE'_0}}V')\,
([U_i/x_{\uty_i}]_{i \in \res{\TE'_1}}U')
\]
where the linearity condition is clear as shown in the previous case.
Since
\begin{align*}
[U_i/x_{\uty_i}]_{\ind{i}{k}} v
&=
\TT{br}\,
([U_i/x_{\uty_i}]_{\ind{i}{k}}V')\,
([U_i/x_{\uty_i}]_{\ind{i}{k}}U')
\\&=
\TT{br}\,
([U_i/x_{\uty_i}]_{i \in \res{\TE'_0}}V')\,
([U_i/x_{\uty_i}]_{i \in \res{\TE'_1}}U')
\end{align*}
we have shown the required condition.
\end{proof}

The following lemma states that, roughly speaking, the transformation relation is a right-to-left forward
simulation relation; also, this can be seen as a form of subject reduction.
\begin{lemma}[]
\label{lem:red-bwd}
Given \(u \red^p_{\GRAM''} \pi\) where \(p>0\) and \( \p t\COL\T \tr u\),
there exist \(t'\), \(u'\), \(\pi'\) and \(q<p\) such that 
\(t\reds_{\GRAM}t'\), \( \p t'\COL\T \tr u'\), and \(u' \red^{q} \pi' \vsim \pi\).
\end{lemma}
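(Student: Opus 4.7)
The plan is to proceed by induction on the applicative source term $t$ of sort $\T$, with case analysis on its head. The case $t = \Teps$ is vacuous since $u = \Teps$ admits no reductions. For $t = a\,t_1$ with $\TERMS(a)=1$, rule \rname{Tr1-App2} yields $u = \TT{br}\,a\,u_1$ with $\p t_1 \COL \T \tr u_1$; because $a$ is nullary in $\GRAM''$ and $\TT{br}$ has no rewrite rule, every step of $u \red^p \pi$ must occur inside $u_1$, so $\pi = \TT{br}\,a\,\pi_1$ with $u_1 \red^p \pi_1$, and the induction hypothesis on $t_1$ directly yields the desired $t' = a\,t'_1$, $u' = \TT{br}\,a\,u'_1$, $\pi' = \TT{br}\,a\,\pi'_1$.

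The main case is $t = A\,t_1 \cdots t_n$ for a non-terminal $A$ with $\NONTERMS(A) = \kappa_1 \to \cdots \to \kappa_n \to \T$; by the sort assumption I arrange $\odr{\kappa_i} \ge 1$ for $i \le m$ and $\kappa_i = \T$ for $i > m$. Inspection of the derivation shows that $u$ has the form of a $\TT{br}$-spine $C[v_m, W_{i_1}, \ldots, W_{i_\ell}]$, where $v_m \defe A_\uty\,U_1^{(1)}\cdots U_m^{(k_m)}$ sits at the left-deepest leaf and each $W_{i_j}$ (the transformation of $t_{i_j}$ by $\p t_{i_j} \COL \T \tr W_{i_j}$) is a right sibling, with $\{i_1,\ldots,i_\ell\} \subseteq \{m+1,\ldots,n\}$ enumerating the order-$0$ arguments whose intersection type is $\T$ rather than $\top$. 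Since $v_m$'s head is a non-terminal and $\pi$ is a tree, the sequence $u \red^p \pi$ must apply the rewrite rule for $A_\uty$ at some step $s_0$, with $1 \le s_0 \le p$; it decomposes into $s_0 - 1$ reductions inside the $W_{i_j}$'s (via the terminal reduction rule under $\TT{br}$, producing $W'_{i_j}$'s), the head reduction $v_m \red v'_m$, and $p - s_0$ further reductions taking $C[v'_m, W'_{i_1}, \ldots, W'_{i_\ell}]$ to $\pi$.

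The source step to take is $t \redwith{\GRAM} t^* \defe [t_i/x_i]_{i \le n} s$, where $A\,x_1 \cdots x_n \to s$ is the rule for $A$ in $\GRAM$. Unwinding how the transformed rule is built via \rname{Tr1-Rule}, \rname{Tr1-Abs1}, and \rname{Tr1-Abs2}, one obtains $x_1\COL\uty_{1,1},\ldots,x_n\COL\uty_{n,k_n} \p s \COL \T \tr v^1$, with $v^0 = [\Teps/x_{i_j}^{\T}]_j v^1$ the body of the transformed $A_\uty$-rule. Applying the substitution Lemma~\ref{lem:substOld} with the transformations $\p t_i \COL \uty_{i,j} \tr U_i^{(j)}$ and $\p t_{i_j} \COL \T \tr W_{i_j}$ yields $\p t^* \COL \T \tr u^*$ with $u^* \defe [U_i^{(j)}/x_i^{(j)}]_{i \le m, j}[W_{i_j}/x_{i_j}^{\T}]_j v^1$. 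Since $v'_m = [U_i^{(j)}/x_i^{(j)}]_{i \le m, j} v^0$, the post-head term $C[v'_m, W'_{i_1}, \ldots, W'_{i_\ell}]$ differs from $u^{**} \defe [U_i^{(j)}/x_i^{(j)}]_{i \le m, j}[W'_{i_j}/x_{i_j}^{\T}]_j v^1$ only in that the former uses the pattern $([\Teps/x_{i_j}^{\T}]\cdot) * W'_{i_j}$ in place of $[W'_{i_j}/x_{i_j}^{\T}]\cdot$ at each order-$0$ position.

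I then set $u' \defe u^*$ and close the argument by (i) reducing the $W_{i_j}$-subterms inside $u^*$ for $s_0 - 1$ steps to obtain $u^* \red^{s_0 - 1} u^{**}$, which is legal because each $x_{i_j}^{\T}$ appears uniquely (by linearity of unbalanced types) at a leaf beneath the $\TT{br}$-spine of $v^1$, as reflected in the structural fact behind Lemma~\ref{lem:keyCtx}; and (ii) iterating the key Lemma~\ref{lem:groundsbst} across the positions $x_{i_1}^{\T}, \ldots, x_{i_\ell}^{\T}$ to convert the $p - s_0$ post-head reductions $C[v'_m, W'_{i_1}, \ldots] \red^{p - s_0} \pi$ into $u^{**} \red^{p - s_0} \pi' \vsim \pi$. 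Composing yields $u^* \red^{p - 1} \pi' \vsim \pi$, so $q = p - 1 < p$ works. The main obstacle is the iterated use of Lemma~\ref{lem:groundsbst}: its precondition $x\COL\T \p s \COL \uty \tr v$ with $\odr{\uty} \le 1$ is phrased in terms of a source term, whereas the intermediate partially-substituted targets are produced by purely target-level manipulations. Resolving this requires either a slight generalization of the lemma to the refinement $\rt$/$\ot$ typing system (where the essential hypothesis reduces to $x_{\T}\COL\rt \p v \COL \rt$, obtainable from subject reduction and Lemma~\ref{lem:UTr-typing}) or a careful bookkeeping that presents each intermediate form as the transformation of a derived source term.
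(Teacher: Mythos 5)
Your overall strategy matches the paper's: take the single source step $t\redwith{\GRAM}[t_i/x_i]s$, use the substitution lemma (Lemma~\ref{lem:substOld}) to obtain $\p t'\COL\T\tr u'$ with $u'=[\vec U/\vec x][W/x^{\T}]v^1$, and use the key lemma (Lemma~\ref{lem:groundsbst}) to transfer the remaining reduction with $q=p-1$. But the final assembly has a genuine gap, which you partly acknowledge but do not close. Step (i) — replaying the $s_0-1$ pre-head steps on the $W$'s \emph{inside} $u^{*}$ — is not justified: in the extended grammars of this paper, reduction is permitted only at the root of a non-terminal redex or directly under a \emph{terminal} ($\br$) in a singleton argument, and before the $A_\uty$-redex is contracted the substituted occurrences of $W$ in $[\vec U/\vec x][W/x^{\T}]v^1$ may sit inside argument sets of non-terminal applications, where no step is allowed. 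The claim that $x^{\T}$ ``appears uniquely at a leaf beneath the $\br$-spine of $v^1$'' is exactly the content of Lemma~\ref{lem:keyCtx}, which holds only for \emph{normal forms} ($v\notred$) of $\rt$-typed terms, not for $v^1$ itself; moreover linearity of unbalanced bindings does not prevent several syntactic copies of $x^{\T}$ across elements of a set (rule \rname{Tr1-Set} shares the environment), so even where replay is possible it could cost more than $s_0-1$ steps and break $q<p$. Step (ii) then applies Lemma~\ref{lem:groundsbst} with $U:=W'$, a term that is not a transformation image, so the lemma's hypotheses are not met as stated; your two suggested repairs (re-proving the lemma at the level of the $\rt/\ot$ system via Lemmas~\ref{lem:UTr-typing} and~\ref{lem:rty-sub-red}, or bookkeeping intermediate forms as images of derived source terms) are plausible but are real work that the proposal leaves undone.

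The paper avoids both problems by reordering rather than replaying: since no reduction can occur under the non-terminal head of $v'$, every step preceding the head step lies in the right child of the top-level $\br$ and commutes with the head step without changing the count or the resulting tree, so one may assume the $A_\uty$-step comes first; then a \emph{single} application of the quantitative part of Lemma~\ref{lem:groundsbst}, with $U$ the original $W$ (which \emph{is} an image, $\p t'_1\COL\T\tr W$) and with $v:=[\vec U/\vec x]v^1$ exhibited as the image of the partially substituted source body $[t''_j/x^j]_j s$, absorbs all remaining $p-1$ steps at once, including those you tried to replay. Note also that well-formedness of intersection types forces at most \emph{one} order-0 argument to carry type $\T$ (an unbalanced component to the left of an unbalanced argument would make the arrow ill-formed), so no iteration of the key lemma is ever needed; your ``$\br$-spine $C[v_m,W_{i_1},\ldots,W_{i_\ell}]$'' in fact has $\ell\le 1$. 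With these corrections your argument collapses into essentially the paper's proof.
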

\begin{proof}
The proof is given by the induction on \(t\) and by the case analysis of the head of \(u\).
Since \(u \red^{+} \pi\), the head of \(u\) must be \(\TT{br}\) or a non-terminal.

Case where \(u=\TT{br}\,V'\,U'\):
In the reduction \(\br\,V'\,U'\red^p\pi\) 
suppose that \(v' \in V'\) and \(u' \in U'\) are chosen.
The last rule used for \(\p t:\T \tr \TT{br}\,V'\,U'\) is either
\rname{Tr1-App1}:
\[
\InfruleS{}{
\p s'\COL \top\ra\T\tr \TT{br}\,V'\,U'
}{
\p s't':\T\tr \TT{br}\,V'\,U'
}
\]
or \rname{Tr1-App2}:
\[
\InfruleS{}{
\p s'\COL \T\ra\T\tr V'
\andalso
\p t'\COL\T\tr U'
}{
\p \app{s'}{t'} \COL\T\tr \TT{br}\,V'\,U'
}
\]
In the former case above, we can iterate this reasoning, and then
there exist \({n'}\ge1\), \(s'\), \(t'_1,\ldots,t'_{n'}\) such that
\(t = s'\,t'_1\,\cdots\,t'_{n'}\) and the following:
\begin{center}
\infers[(Tr1-App1)]{
\p s'\,t'_1\,\cdots\,t'_{n'} :\T \tr \TT{br}\,V'\,U'
}{\infers[(Tr1-App1)]{
\p s'\,t'_1\,\cdots\,t'_{{n'}-1} :\top\ra\T\tr \TT{br}\,V'\,U'
}{\infers[(Tr1-App1)]{
\vdots}{\infers[(Tr1-App2)]{
\p s'\,t'_1 :\top\ra\cdots\ra\top\ra\T\tr \TT{br}\,V'\,U'
}{\infers[(Tr1-Set)]{
\p s'\COL \T\ra\top\ra\cdots\ra\top\ra\T\tr V'
}{
\p s'\COL \T\ra\top\ra\cdots\ra\top\ra\T\tr v'
\andalso \cdots
}
\andalso
\infers[(Tr1-Set)]{
\p t'_1\COL\T\tr U'
}{
\p t'_1\COL\T\tr u'
\andalso \cdots
}
}}}}
\end{center}

The head of \(v'\) is not \(\br\) since
if it is \(\br\), by the same reasoning as above we have
some \(s''\) and \(v''\) with 
\[
\p s''\COL \T\ra\top\ra\cdots\ra\top\ra\T\ra\top\ra\cdots\ra\top\ra\T\tr v''
\]
which contradicts the well-formedness condition on types.
Hence, the head of \(v'\) must be a nullary terminal or a non-terminal.

In the case where the head of \(v'\) is a nullary terminal \(a\),
by \(\p s'\COL \T\ra\top\ra\cdots\ra\top\ra\T\tr v'\),
\(s'\) is a unary non-terminal and hence \(s'=v'=a\) and \({n'}=1\).
Since \(v'\) is a  tree, reduction of \(u\) goes on \(u'\)-side.
Thus there exist \(p'\le p\) and \(\pi'\) such that \(u' \red^{p'} \pi'\) and \(\br\,a\,\pi'=\pi\).

If \(p'>0\), by the induction hypothesis for \(t'_1\), 
there exist \(t''_1\), \(u''_1\), \(\pi''_1\) and \(q''_1<p'\) such that 
\[
t'_1\reds_{\GRAM}t''_1
\qquad
\p t''_1\COL\T \tr u''_1
\qquad
u''_1 \red^{q''_1} \pi''_1 \vsim \pi'\,.
\]
Then, we have
\[
t=\app{a}{t'_1} \reds \app{a}{t''_1}
\qquad
\p \app{a}{t''_1} : \T \tr \br\,a\,u''_1
\qquad
\br\,a\,u''_1 \red^{q''_1} \br\,a\,\pi''_1 \vsim \br\,a\,\pi' = \pi\,.
\]
If \(p'=0\),
\[
t=\app{a}{t'_1} \red^0 \app{a}{t'_1}
\qquad
\p \app{a}{t'_1} : \T \tr \br\,a\,u'
\qquad
\br\,a\,u' \red^{0} \br\,a\,\pi' = \pi\,.
\]

In the case where the head of \(v'\) is a non-terminal,
let \(v' = A_\uty\,U_1\,\cdots\,U_\ell\).
The rule used for
\[
\p s'\COL \T\ra\top\ra\cdots\ra\top\ra\T\tr v'( = A_\uty\,U_1\,\cdots\,U_\ell)
\]
is \rname{Tr1-NT} or \rname{Tr1-App1}; in the latter case, we have:
\[
\InfruleS{}{
\p s''\COL 
\uty_{\ell'+1}\land\cdots\land\uty_{\ell} \ra
\T\ra\top\ra\cdots\ra\top\ra\T\tr A_\uty\,U_1\,\cdots\,U_{\ell'}
\quad(\ell'\le\ell)
\\
\p t''\COL\uty_i\tr U_i \text{ and }\uty_i\neq \T
\text{ (for each $i \in \set{\ell'+1,\dots,\ell}$)} 
}{
\p (s'=) \app{s''}{t''}\COL \T\ra\top\ra\cdots\ra\top\ra\T\tr A_\uty\,U_1\,\cdots\,U_\ell
}
\]
Here if \(\odr{\uty_i}=0\) then \(\ell'=\ell\).
Repeating this reasoning to the function side (i.e., \(s''\)) terminates at the case of \rname{Tr1-NT}.
Thus,
there exist \(m\), \(m'\), \(t''_1,\ldots,t''_{m'}\), \(\ell_0,\ldots,\ell_m\)
such that
\begin{align*}&
m \le m' \qquad \ell_0 = 0 \qquad \ell_m = \ell
\\&
s' = A\,t''_1\,\cdots\,t''_{m'}
\\&
\odr{t''_j} \ge 1 \quad(j\in{1,\ldots,m})
\qquad
\odr{t''_j} = 0 \quad(j\in{m+1,\ldots,m'})
\\&
\p t''_j\COL\uty_i\tr U_i \text{ and }\uty_i\neq \T
\qquad(j \in \set{1,\ldots,m}, i \in \set{\ell_{j-1}+1,\dots,\ell_j})
\\&
\uty=
\uty_{1}\land\cdots\land\uty_{\ell_{1}} \ra \cdots \ra
\uty_{\ell_{m-1}+1}\land\cdots\land\uty_{\ell_{m}} \ra 
\top\ra\cdots\ra\top\ra
\T\ra\top\ra\cdots\ra\top\ra\T
\end{align*}

For the reduction sequence
\(u= \br\,V'\,U' \red^p \pi\),
we can assume that
\(V'=\set{v'}\) for simplicity and that the first reduction of the reduction sequence is 
on \(v'\).
This does not lose generality since we can choose an argument to be reduced arbitrarily.
Suppose that \(v'\) is reduced by a rule
\(A_{\uty}\,x'_1\,\cdots\,x'_\ell 
\Hra v\).
Since this is produced by \rname{Tr1-Rule}, there is a rule 
\(A \,x^1\,\cdots\,x^n
\Hra s\) in \(\GRAM\) such that
\[
\p \lambda x^1.\cdots\lambda x^n. s : \uty \tr
\lambda x'_1.\cdots\lambda x'_\ell. v\,.
\]
Then we have the following derivation tree:
\begin{center}
\infers[(Tr1-Abs1)]{
\p \lambda x^1.\cdots\lambda x^n. s : \uty \tr
\lambda x'_1.\cdots\lambda x'_\ell. v
}{
\infers[(Tr1-Abs1)]{
x^1\COL\uty_1,\ldots,x^1\COL\uty_{\ell_1}
\p \lambda x^2.\cdots\lambda x^n. s : \uty^1 \tr
\lambda x'_{\ell_1+1}.\cdots\lambda x'_\ell. v
}{
\infers[(Tr1-Abs1)]{
\vdots}{
\infers[(Tr1-Abs1)]{
x^1\COL\uty_1,\ldots,
x^1\COL
\uty_{\ell_1},
\ldots,
x^m\COL\uty_{\ell_{m-1}},\ldots,
x^m\COL
\uty_{\ell_{m}}
\p \lambda x^{m+1}.\cdots\lambda x^n. s : \uty^m \tr v
}{
\infers[(Tr1-Abs1)]{
\vdots
}{
\infers[(Tr1-Abs2)]{
x^1\COL\uty_1,\ldots,
x^1\COL
\uty_{\ell_1},
\ldots,
x^m\COL\uty_{\ell_{m-1}},\ldots,
x^m\COL
\uty_{\ell_{m}}
\p \lambda x^{m'+1}.\cdots\lambda x^n. s : \uty^{m'} \tr v = [\Teps/x^{m'+1}_{\T}]v_0
}{
\infers[(Tr1-Abs1)]{
x^1\COL\uty_1,\ldots,
x^1\COL
\uty_{\ell_1},
\ldots,
x^m\COL\uty_{\ell_{m-1}},\ldots,
x^m\COL
\uty_{\ell_{m}},
x^{m'+1}\COL\T
\p \lambda x^{m'+2}.\cdots\lambda x^n. s : \uty^{m'+1} \tr v_0
}{
\infers[(Tr1-Abs1)]{
\vdots
}{
x^1\COL\uty_1,\ldots,
x^1\COL
\uty_{\ell_1},
\ldots,
x^m\COL\uty_{\ell_{m-1}},\ldots,
x^m\COL
\uty_{\ell_{m}},
x^{m'+1}\COL\T
\p s : \T \tr v_0
}}}}}}}}
\end{center}
where 
\(\uty^0 \defe \uty\) and
\(\uty^j\) is the codomain type of \(\uty^{j-1}\)
for each \(j \le n\);
especially,
for each \(j \le m\),
\[
\uty^j \defe
\uty_{\ell_{j}+1}\land\cdots\land\uty_{\ell_{j+1}} \ra \cdots \ra
\uty_{\ell_{m-1}+1}\land\cdots\land\uty_{\ell_{m}} \ra 
\top\ra\cdots\ra\top\ra
\T\ra\top\ra\cdots\ra\top\ra\T
\]
and
\[
\uty^{m'} = \T\ra\top\ra\cdots\ra\top\ra\T\,.
\]
Thus we find that
\[
x^j_{\uty_i} = x'_i
\qquad(j \le m, i \in \set{\ell_{j-1}+1,\ldots,\ell_{j}}).
\]

Now 
\[
v' = A_\uty\,U_1\,\cdots\,U_\ell
\red [U_i/x'_i]_{\ind{i}{\ell}} v
= [\Teps/x^{m'+1}_{\T}] [U_i/x'_i]_{\ind{i}{\ell}} v_0
\]
and hence
\begin{equation}\label{eq:remainreduction}
u= \br\,v'\,U' 
\red 
\br\,
(
[\Teps/x^{m'+1}_{\T}] [U_i/x'_i]_{\ind{i}{\ell}} v_0)
\,U'
\red^{p-1} \pi
\end{equation}
while
\[
t = s'\,t'_1\,\cdots\,t'_{n'}
= A\,t''_1\,\cdots\,t''_{m'}\,t'_1\,\cdots\,t'_{n'}
\red [t'_j/x^{m'+j}]_{\ind{j}{n'}} [t''_j/x^{j}]_{\ind{j}{m'}} s\,.
\]
Recall that \(\p t'_1\COL\T\tr U'\);
hence by Lemma~\ref{lem:substOld}, we have
\[
\p [t'_j/x^{m'+j}]_{\ind{j}{n'}} [t''_j/x^{j}]_{\ind{j}{m'}} s
: \T \tr
[U'/x^{m'+1}_{\T}][U_i/x'_i]_{\ind{i}{\ell}} v_0\,.
\]

Thus we define
\[
t' \defe [t'_j/x^{m'+j}]_{\ind{j}{n'}} [t''_j/x^{j}]_{\ind{j}{m'}} s
\qquad
u' \defe [U'/x^{m'+1}_{\T}][U_i/x'_i]_{\ind{i}{\ell}} v_0\,.
\]
Now by Lemma~\ref{lem:substOld}
\[
x^{m'+1}\COL\T \p [t''_j/x^{j}]_{\ind{j}{m}} s : \T \tr
[U_i/x'_i]_{\ind{i}{\ell}} v_0
\]
and hence by the key lemma (Lemma~\ref{lem:groundsbst}),
\begin{align*}
u' = [U'/x^{m'+1}_{\T}] ([U_i/x'_i]_{\ind{i}{\ell}} v_0)
\se&\ 
([\Teps/x^{m'+1}_{\T}] ([U_i/x'_i]_{\ind{i}{\ell}} v_0)) \ibr U'
\end{align*}
but from~\eqref{eq:remainreduction} furthermore we have
\[
u' = [U'/x^{m'+1}_{\T}] ([U_i/x'_i]_{\ind{i}{\ell}} v_0)
\red^{p-1} \pi' \vsim \pi
\]
for some \(\pi'\).

Case where the head of \(u\) is non-terminal:
This case is similar
to the above case where the head of \(v'\) is a non-terminal
(and easier in the sense that we do not need the key lemma):
replace \(v'\), \(s'\), \(\T\ra\top\ra\cdots\ra\top\ra\T\)
with
\(u\), \(t\), \(\T\), respectively.
\end{proof}

\section{Proof of Theorem~\ref{th:tr2-correctness}}
\label{sec:proof-step2}
In this section, we sometimes abbreviate a sequence 
\(t_{1,1}\,\cdots\,t_{1,m_1}\,\cdots\,t_{k,1}\,\cdots\,t_{k,m_k}\)
to \(\mylongseq{t_{i,j}}{i\in \set{1,\ldots,k},j\in \set{1,\ldots,m_i}}\).
We also write \([t_i/x_i]_{i\in \set{1,\ldots,k}}\) for \([t_1/x_1,\ldots,t_k/x_k]\).

We first prepare some lemmas.
First, we show that the transformation preserves typing.
We extend \(\ttytosty{\cdot}\) to the operation on type environments by:
\(\ttytosty{\TTE} = \set{x_{\tty}\COL \ttytosty{\tty}\mid x\COL\tty\in \TTE}\). 
\begin{lemma}
\label{lem:tr2-preserves-types}
If \(\TTE\p t:\tty\tr u\), then 
\(\ttytosty{\TTE}\p u:\ttytosty{\tty}\) holds.
\end{lemma}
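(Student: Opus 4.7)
The plan is to prove the lemma by straightforward induction on the derivation of $\TTE \p t : \tty \tr u$, doing a case analysis on the last rule applied. At a high level, each transformation rule directly corresponds to a simple-type derivation in the target grammar; the map $\ttytosty{\cdot}$ is designed precisely so that this correspondence goes through. Before beginning, I would record the obvious extension of $\ttytosty{\cdot}$ to an intersection $\tty_1 \land \cdots \land \tty_k \to \tty$ via $\ttytosty{\tty_1} \to \cdots \to \ttytosty{\tty_k} \to \ttytosty{\tty}$, and observe that the transformed non-terminal $A_\tty$ has simple type $\ttytosty{\tty}$ in $\NONTERMS'$ by the definition of \rname{Tr2-Gram}, and similarly each transformed variable $x_\tty$ is in $\ttytosty{\TTE}$ at type $\ttytosty{\tty}$.

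The easy cases are \rname{Tr2-Var} and \rname{Tr2-NT}, which use the simple-type variable rule applied to $x_\tty$ resp.~$A_\tty$, and \rname{Tr2-Const0}, \rname{Tr2-Const1}, which apply the simple-type constant rule to $\Teps$ resp.~$a$ (both having simple arity $0$, hence simple type $\T = \ttytosty{\Tempty} = \ttytosty{\Tplus}$). The \rname{Tr2-Set} case is immediate from the induction hypothesis combined with the set typing rule of the extended calculus, and \rname{Tr2-App} follows by using the IH on $s$ to obtain $\ttytosty{\TTE} \pK v : \ttytosty{\tty_1} \to \cdots \to \ttytosty{\tty_k} \to \ttytosty{\tty}$, the IH on each $t$-premise (together with the set typing rule) to obtain $\ttytosty{\TTE} \pK U_i : \ttytosty{\tty_i}$, and then applying the application rule $k$ times. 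For \rname{Tr2-Abs}, the IH yields $\ttytosty{\TTE}, x_{\tty_1}\COL\ttytosty{\tty_1}, \ldots, x_{\tty_k}\COL\ttytosty{\tty_k} \pK u : \ttytosty{\tty}$, and iterating the abstraction rule $k$ times produces a term of type $\ttytosty{\tty_1 \land \cdots \land \tty_k \to \tty}$ as required.

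The only case requiring a little attention is \rname{Tr2-Const2}, which has three subcases. When $\tty_0 = \tty_1 = \Tplus$, the IH gives $\ttytosty{\TTE} \pK u_0, u_1 : \T$; since $\TT{br}$ has simple type $\T \to \T \to \T$, the term $\TT{br}\,u_0\,u_1$ has type $\T = \ttytosty{\Tplus}$. When only one side has type $\Tplus$, the result is just $u_i$ of type $\T$ from the IH, and the target type $\Tplus$ again maps to $\T$. When both sides have type $\Tempty$, the result is $\Teps$ of type $\T$, matching $\ttytosty{\Tempty} = \T$. No step poses a genuine obstacle; the lemma is essentially a bookkeeping result whose statement is tailored to match the rules.
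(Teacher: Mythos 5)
Your proposal is correct and takes essentially the same route as the paper, which proves this lemma by exactly the straightforward induction on the derivation of \(\TTE\p t:\tty\tr u\) that you spell out (the paper leaves the case analysis implicit). Your elaboration of the cases, including \rname{Tr2-Const2} and the use of the set-typing rule for argument sets, is just the expected bookkeeping and introduces no deviation.
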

\begin{proof}
This follows by straightforward induction on the derivation of \(\TTE\p t:\tty\tr u\).
\end{proof}

\begin{definition}
The relations \(u \subterm u'\) 
and \(U\subterm U'\) on (extended) terms and term sets are defined by:
\infrule{\mbox{$h$ is a variable, a non-terminal, or a terminal}}{h \subterm h}
\infrule{u\subterm u'\andalso U\subterm U'}
    {u U \subterm u'U'}
\infrule{\forall v\in U.\exists v'\in U'.v\subterm v'}
    {U \subterm U'}
\infrule{u\subterm u'}{\lambda x.u \subterm \lambda x.u'}
\end{definition}
It is clear that the two relations \(\subterm\) are pre-order relations.
\begin{lemma}[de-substitution]
\label{lem:desubstitution-tr2}
Let \(t\) be an applicative term.
If \(\TTE \p [s/x]t:\tty\tr u\),
then 
\[
\begin{array}{l}
 \TTE,x\COL\tty_1,\ldots,x\COL\tty_k \p t:\tty \tr u'\\
u \subterm \in [U_1/x_{\tty_1},\ldots,U_k/x_{\tty_k}]u'\\
\TTE\p s:\tty_i \tr U_i \text{ for each \(i\in\set{1,\ldots,k}\)}
\end{array}
\]
for some \(\tty_1,\ldots,\tty_k,U_1,\ldots,U_k,u'\).
Similarly, if \(\TTE \p [s/x]t:\tty\tr U\),
then 
\[
\begin{array}{l}
 \TTE,x\COL\tty_1,\ldots,x\COL\tty_k \p t:\tty \tr U'\\
U \subterm  [U_1/x_{\tty_1},\ldots,U_k/x_{\tty_k}]U'\\
\TTE\p s:\tty_i \tr U_i \text{ for each \(i\in\set{1,\ldots,k}\)}
\end{array}
\]
for some \(\tty_1,\ldots,\tty_k,U_1,\ldots,U_k,U'\).
\end{lemma}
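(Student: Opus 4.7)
The plan is to prove both statements simultaneously by structural induction on the applicative term $t$, with case analysis on the last rule used in the derivation of $\TTE \p [s/x]t:\tty\tr u$ (resp.\ $\tr U$). Since $[s/x]t$ is also applicative, the candidate rules are Tr2-Var, Tr2-Const0/1/2, Tr2-NT, Tr2-App, and Tr2-Set; no abstraction case arises.

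For the base case $t = x$, take $k := 1$, $\tty_1 := \tty$, $u' := x_\tty$, and $U_1 := \set{u}$; the hypothesis $\TTE \p s : \tty \tr u$ yields $\TTE \p s : \tty_1 \tr U_1$ by a single application of Tr2-Set, Tr2-Var gives $\TTE, x\COL\tty \p x : \tty \tr x_\tty$, and $[U_1/x_\tty]\,x_\tty = \set{u}$, so $u \subterm u \in \set{u}$. For $t = y \neq x$, $t = A$, or $t = a$, the substitution acts trivially, and one may take $k := 0$ and $u' := u$.

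For the application cases, consider first $t = \TT{br}\,t_0\,t_1$ with the derivation ending in Tr2-Const2: invoke the first statement of the induction hypothesis on $[s/x]t_0$ and $[s/x]t_1$ to obtain auxiliary bindings $(\tty^j_i, U^j_i)$ and de-substituted terms $u'_0, u'_1$; then re-apply Tr2-Const2 (with the same case split on $(\tty_0, \tty_1)$) to form $u'$ over the union of the auxiliary environments. In each clause of Tr2-Const2 the output is either $\TT{br}\,u_0\,u_1$, one of $u_0, u_1$, or $\Teps$, so the subterm relation lifts directly from the IHs. The general application case $t = t_0\,t_1$ (head not $\TT{br}$) with last rule Tr2-App is analogous: the first-statement IH on the function side delivers a single term $v'$, and the second-statement IH on the argument side delivers a set $U'_i$ for each component $\tty_i$ of the intersection domain. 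Reassembly via Tr2-App, combined with the set-substitution equality $\mvsub(v'\,U'_1\cdots U'_k) = \set{v''\,(\mvsub U'_1)\cdots(\mvsub U'_k) \mid v'' \in \mvsub v'}$, yields a witness of $u \subterm \in [U_1/x_{\tty_1},\ldots] u'$.

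The second statement reduces to the first: since the output $U$ is a set, the derivation must conclude with Tr2-Set, so $U = \set{u_1, \ldots, u_n}$ with $\TTE \p [s/x]t : \tty \tr u_i$ for each $i$. Applying the first statement to each $u_i$ yields $u'_i$ together with environments $(\tty^i_j, U^i_j)$; after merging these into a single environment (with $U^i_j$'s for identical $\tty$'s combined by union) and weakening each derivation to use the common environment, Tr2-Set rebuilds $U' := \set{u'_1, \ldots, u'_n}$, and the set subterm relation lifts pointwise. The main technical nuisance will be this bookkeeping across sub-derivations and the tacit weakening of auxiliary environments---harmless here because Step 2's type system, unlike Step 1's, imposes no linearity constraint on environments.
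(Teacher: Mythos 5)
Your proposal is correct and follows essentially the same route as the paper's proof: simultaneous structural induction on the applicative term \(t\), reducing the set-valued statement to the term-valued one by inverting \rname{Tr2-Set} and merging the auxiliary \(x\)-bindings (union of the \(U\)'s per type, with the harmless implicit weakening), with the same witnesses in the variable/constant cases and the same reassembly via \rname{Tr2-Const2} and \rname{Tr2-App} in the compound cases. No gaps worth noting.
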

\begin{proof}
The proof proceeds by simultaneous induction on the structure of \(t\).
We first show that the latter property follows from the former one for the same \(t\).
Suppose \(\TTE \p [s/x]t:\tty\tr \set{u_1,\ldots,u_\ell}\),
i.e., \(\TTE \p [s/x]t:\tty\tr u_j\) for each \(j\in\set{1,\ldots,\ell}\).
By the former property, we have:
\[
\begin{array}{l}
\TTE,x\COL\tty_{j,1},\ldots,x\COL\tty_{j,k_j} \p t:\tty \tr u_j'  \\
u_j \subterm\in [U_{j,1}/x_{\tty_{j,1}},\ldots,U_{j,k_j}/x_{\tty_{j,k_j}}]u'_j \\
\TTE \p s:\tty_{j,i}\tr U_{j,i} \mbox{ for each $i\in\set{1,\ldots,k_j}$}
\end{array}
\]
for each $j\in\set{1,\ldots,\ell}$.
Let
\(\set{\tty_1,\ldots,\tty_k} = \set{\tty_{j,i}\mid j\in\set{1,\ldots,\ell}, i\in \set{1,\ldots,k_j}}\),
and:
\[
U_i = \bigcup_{j,i'} \set{U_{j,i'} \mid \tty_{j,i'} = \tty_i}
\]
for each \(i\in\set{1,\ldots,k}\).
Then, the required result holds for \(U' = \set{u'_1,\ldots,u'_\ell}\).

Now, we show the former property (using the latter property for strict subterms of \(t\)).
\begin{itemize}
\item Case \(t=x\):
In this case, we have \(\TTE\p s:\tty \tr u\).
The result holds for \(k=1, \tty_1=\tty, U_1=\set{u}\), and \(u' = x_{\tty}\).
\item Case \(t=y\neq x\): We have \(u=y_\tty\). 
The result holds for \(k=0\) and \(u'=y_{\tty}\).
\item Case where \(t\) is a non-terminal or a constant: similar to the previous case.
\item Case \(t=\TT{br}\,t_0\,t_1\).
In this case, we have:
\[
\begin{array}{l}
\TTE \p [s/x]t_j: \tty'_j\tr u_j \mbox{ for each $j\in\set{0,1}$}\\
(u,\tty) = 
\left\{
  \begin{array}{ll}
    (\TT{br}\,u_0\,u_1, \Tplus) & \mbox{ if \(\tty'_0=\tty'_1=\Tplus\)}\\
    (u_j, \Tplus) & \mbox{ if \(\tty'_j=\Tplus\) and \(\tty'_{1-j}=\Tempty\)}\\
    (\Teps, \Tempty) & \mbox{ if \(\tty'_0=\tty'_1=\Tempty\)}\\
  \end{array}\right.
\end{array}
\]
By applying the induction hypothesis to 
\(\TTE \p [s/x]t_j: \tty'_j\tr u_j\), we obtain:
\[
\begin{array}{l}
\TTE, x:\tty_{j,1},\ldots,x:\tty_{j,k_j}\p t_j:\tty'_j\tr u_j'\\
u_j'\subterm \in [U_{j,1}/x_{\tty_{j,1}},\ldots,U_{j,k_j}/x_{\tty_{j,k_j}}]u_j'\\
\TTE \p s:\tty_{j,i}\tr U_{j,i} \mbox{ for $j\in \set{0,1}, i\in \set{1,\ldots,k_j}$}
\end{array}
\]
Let \(\set{\tty_1,\ldots,\tty_k} = \set{\tty_{j,i}\mid j\in\set{0,1}, i\in \set{1,\ldots,k_j}}\),
and:
\[
U_i = \bigcup_{j,i'} \set{U_{j,i'} \mid \tty_{j,i'} = \tty_i}
\]
for each \(i\in\set{1,\ldots,k}\).
Then, we have the required result for 
\[u' = 
\left\{
  \begin{array}{ll}
    (\TT{br}\,u'_0\,u'_1, \Tplus) & \mbox{ if \(\tty'_0=\tty'_1=\Tplus\)}\\
    (u'_j, \Tplus) & \mbox{ if \(\tty'_j=\Tplus\) and \(\tty'_{1-j}=\Tempty\)}\\
    (\Teps, \Tempty) & \mbox{ if \(\tty'_0=\tty'_1=\Tempty\)}\\
  \end{array}\right.
\]
\item Case \(t=t_0t_1\), where the head symbol of \(t_0\) is not \(\TT{br}\).
In this case, we have:
\[
\begin{array}{l}
u = u_0 V_1\cdots V_\ell\\
\TTE \p [s/x]t_0:\tty'_1\land \cdots \land\tty'_\ell\to \tty \tr u_0\\
\TTE \p [s/x]t_1:\tty'_j \tr V_j \mbox{ for each $j\in\set{1,\ldots,\ell}$}
\end{array}
\]
By applying the induction hypothesis, we obtain
\[
\begin{array}{l}
\TTE, x\COL{\tty_{0,1}},\ldots, x\COL{\tty_{0,k_0}}\p t_0:
\tty'_1\land \cdots \land\tty'_\ell\to \tty \tr u_0'\\
u_0\subterm\in [U_{0,1}/x_{\tty_{0,1}},\ldots,U_{0,k_0}/x_{\tty_{0,k_0}}]u_0'\\
\TTE \p s:\tty_{0,i}\tr U_{0,i} \mbox{ for each $i\in\set{1,\ldots,k_0}$}\\
\end{array}
\]
and
\[
\begin{array}{l}
\TTE, x\COL{\tty_{j,1}},\ldots, x\COL{\tty_{j,k_j}}\p t_1:
  \tty'_j\tr V_j'\\
V_j\subterm [U_{j,1}/x_{\tty_{j,1}},\ldots,U_{j,k_j}/x_{\tty_{j,k_j}}]V_j'\\
\TTE \p s:\tty_{j,i}\tr U_{j,i}\mbox{ for each $i\in\set{1,\ldots,k_j}$}
\end{array}
\]
for each $j\in\set{1,\ldots,\ell}$.
Let \(\set{\tty_1,\ldots,\tty_k} = 
\set{\tty_{j,i}\mid j\in\set{0,\ldots,\ell}, i\in \set{1,\ldots,k_j}}\),
and
\[
U_i = \bigcup_{j,i'} \set{U_{j,i'} \mid \tty_{j,i'} = \tty_i}
\]
for each \(i\in\set{1,\ldots,k}\).
Then, we have the required result for 
\(u' = u_0'V_1'\cdots V_\ell'\).
\end{itemize}
\end{proof}
Now we prove that the transformation is a left-to-right backward simulation.
\begin{lemma}[subject expansion]
\label{lem:tr2-g''-simulated-by-g}
Suppose \(\GRAM\tr \GRAM'\).
If \(s\redwith{\GRAM} s'\) with
\(\p s':\tty\tr u'\) and \(\tty\DCOL\T\),
then there exist \(u\) and \(u''\) such that 
\(u\redswith{\GRAM'} u''\) with
\(\p s:\tty\tr u\) and \(u'\subterm u''\).
\end{lemma}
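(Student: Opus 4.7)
The plan is to proceed by structural induction on $s$ with case analysis on the shape of the reduction step $s \redwith{\GRAM} s'$. Since $\GRAM$ is the tree grammar produced by the first transformation, the only non-nullary terminal is $\TT{br}$ (of arity two), and it always appears fully applied; hence every reduction step is either a non-terminal redex at the root of $s$ or a context reduction strictly inside one of the two arguments of some $\TT{br}$.

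For the base (non-terminal) case, $s = A\,t_1\,\cdots\,t_n$ reduces via a rule $A\,x_1\,\cdots\,x_n \ra t$ to $s' = [t_i/x_i]_{i \le n}\,t$. The plan is to iterate Lemma~\ref{lem:desubstitution-tr2} on $\p s' : \tty \tr u'$, peeling off one substitution $[t_i/x_i]$ at a time, exploiting the fact that each $t_i$ is closed so that the environment of its transformation can be taken empty. This produces, for each $i$, intersection types $\tty_{i,1},\ldots,\tty_{i,k_i}$ (with $k_i$ possibly zero), extended term sets $U_{i,j}$ with $\p t_i : \tty_{i,j} \tr U_{i,j}$, a derivation
\[
x_1\COL\tty_{1,1},\dots,x_n\COL\tty_{n,k_n} \p t : \tty \tr t''
\]
and a witness $v^*$ among the elements of $[U_{1,1}/(x_1)_{\tty_{1,1}},\ldots,U_{n,k_n}/(x_n)_{\tty_{n,k_n}}]\,t''$ satisfying $u' \subterm v^*$. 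Iterated applications of \rname{Tr2-Abs} followed by \rname{Tr2-Rule} then certify that $A_{\tty'} \ra \lambda (x_1)_{\tty_{1,1}}\,\cdots\,(x_n)_{\tty_{n,k_n}}.\,t''$ is a rule of $\GRAM'$, where $\tty' = (\tty_{1,1}\land\cdots\land\tty_{1,k_1})\ra\cdots\ra(\tty_{n,1}\land\cdots\land\tty_{n,k_n})\ra\tty$ (with an empty intersection degenerating to $\top$ whenever $k_i=0$). With the premises of \rname{Tr2-NT} now fulfilled, one obtains $\p A : \tty' \tr A_{\tty'}$; iterated \rname{Tr2-App} yields $\p s : \tty \tr A_{\tty'}\,U_{1,1}\cdots U_{n,k_n}$, and this term rewrites in a single $\GRAM'$-step to $v^*$.

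For the inductive (context) case, say $s = \TT{br}\,t_0\,t_1 \redwith{\GRAM} \TT{br}\,t_0'\,t_1 = s'$ with $t_0 \redwith{\GRAM} t_0'$ (the right-argument case being symmetric), I would invert $\p s' : \tty \tr u'$ by \rname{Tr2-Const2} to extract $\p t_0' : \tty_0' \tr u_0'$, $\p t_1 : \tty_1' \tr u_1$, and the determination of $(u',\tty)$ from $(\tty_0',\tty_1')$. The induction hypothesis applied to $t_0 \redwith{\GRAM} t_0'$ supplies $u_0, u_0''$ with $\p t_0 : \tty_0' \tr u_0$, $u_0 \redswith{\GRAM'} u_0''$, and $u_0' \subterm u_0''$; a case split on $(\tty_0',\tty_1')$ then reassembles the pieces via \rname{Tr2-Const2} to produce the required $u$ and $u''$ (in the cases where the $t_0$-side is discarded by a $\Tempty$-typing, the outer reduction shrinks to zero steps). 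The main obstacle is concentrated in the base case: one must verify that $\subterm$ is preserved by extended substitution so that the $\subterm$-witnesses from the successive peelings of Lemma~\ref{lem:desubstitution-tr2} compose into the single relation $u' \subterm v^*$, and that the independent type collections produced by the iterations package correctly into a single well-formed intersection type $\tty'$ that is simultaneously compatible with \rname{Tr2-Rule} and \rname{Tr2-NT}.
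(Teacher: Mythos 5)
Your proposal is correct and follows essentially the same route as the paper's proof: induction on the structure of $s$ with the non-terminal-head case handled by the de-substitution lemma (Lemma~\ref{lem:desubstitution-tr2}) and reassembled through \rname{Tr2-Abs}/\rname{Tr2-Rule}/\rname{Tr2-NT}/\rname{Tr2-App} so that $A_{\tty'}\,\vec{U}$ reduces in one step to the $\subterm$-witness, and the $\TT{br}$ case handled by the induction hypothesis plus \rname{Tr2-Const2} with the zero-step subtlety for discarded $\Tempty$-sides. The only difference is that you iterate the single-variable de-substitution lemma explicitly (flagging the closure/composition obligations), whereas the paper applies it in one shot for the simultaneous substitution — the same content, just made more explicit.
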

\begin{proof}
This follows by induction on the structure of \(s\),
with case analysis on the head symbol of \(s\).
\begin{itemize}
\item Case where \(s\) is of the form \(A\,s_1\,\cdots\,s_k\):
In this case, we have:
\[
\begin{array}{l}
A\,x_1\,\cdots\,x_k \to t \in \GRAM\\
s' = [s_1/x_1,\ldots,s_k/x_k]t
\end{array}
\]
By Lemma~\ref{lem:desubstitution-tr2} and \(\p [s_1/x_1,\ldots,s_k/x_k]t:\tty\tr u'\),
we have:
\[
\begin{array}{l}
x_1\COL \IT_{j=1,\ldots,m_1}\tty_{1,j}, \ldots,x_k\COL \IT_{j=1,\ldots,m_k}\tty_{k,j} \p 
 t:\tty \tr v\\
u' \subterm u'' \in [U_{i,j}/(x_{i})_{\tty_{i,j}}]_{i\in\set{1,\ldots,k},j\in\set{1,\ldots,m_i}} v\\
\p s_i: \tty_{i,j}\tr U_{i,j}
\end{array}
\]
By the first condition, we have:
\[
\begin{array}{l}
\p (A\,x_1\,\cdots\,x_k \to t) \tr (A_{\tty'}\,
      \mylongseq{(x_{i})_{\tty_{i,j}}}{i\in\set{1,\ldots,k},j\in\set{1,\ldots,m_i}} \to v)
\\
\p A : \tty' \tr A_{\tty'}
\end{array}
\]
where \(\tty' = \IT_{j=1,\ldots,m_1}\tty_{1,j}\to\cdots\to 
\IT_{j=1,\ldots,m_k}\tty_{k,j} \to \tty\).
Thus, the required result holds for
\(u = A_{\tty'}\,
\mylongseq{U_{i,j}}{i\in\set{1,\ldots,k},j\in\set{1,\ldots,m_i}}\)
and the above \(u''\). %
\item Case where \(s\) is of the form \(\TT{br}\,s_0\,s_1\):
In this case, we have:
\[
\begin{array}{l}
s' = \TT{br}\,s_0'\,s'_1\\
s_i\redwith{\GRAM} s_i'\qquad s_{1-i}=s'_{1-i}\mbox{ for some \(i\in\set{0,1}\)}\\
\p s'_j\COL \tty_j\tr u'_j \mbox{ for each $j\in\set{0,1}$}\\
\end{array}
\]
By the induction hypothesis, we also have:
\( \p s_i: \tty_i \tr u_i\) with \(u_i\redwith{} u''_i\) and \(u'_i\subterm u''_i\).
Let \(u_{1-i} = u'_{1-i}=u''_{1-i}\).
The required condition holds for 
\[ (u,u'') = \left\{
  \begin{array}{ll}
    (\TT{br}\,u_0\,u_1, \TT{br}\,u''_0\,u''_1) & \mbox{ if \(\tty_0=\tty_1=\Tplus\)}\\
    (u_j,u_j'') & \mbox{ if \(\tty_j=\Tplus\) and \(\tty_{1-j}=\Tempty\)}\\
    (\Teps,\Teps) & \mbox{ if \(\tty_0=\tty_1=\Tempty\)}\\
  \end{array}\right.
\]
\end{itemize}
\end{proof}
\begin{lemma}
\label{lem:subterm-sim}
For applicative terms \(u\) and \(v\), if \(u\subterm v\) and \(u\redwith{\GRAM'} u'\), then
\(v\redswith{\GRAM'} v'\) and \(u'\subterm v'\) for some \(v'\).
\end{lemma}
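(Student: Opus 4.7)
The plan is to prove the lemma by induction on the derivation of $u \redwith{\GRAM'} u'$, after extracting some structural consequences of $u \subterm v$ for applicative $u$. Since a variable, non-terminal, or terminal is related by $\subterm$ only to itself, if $u = h\,U_1\,\cdots\,U_k$ with head $h$, then $v$ must have the form $h\,V_1\,\cdots\,V_k$ with $U_i \subterm V_i$ for each $i$; in particular $v$ is also applicative.

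Before the case analysis, I will establish a substitution lemma: for any extended term $t$ (possibly containing lambdas) and for term sets $U_i \subterm V_i$ with $\ind{i}{k}$, every element of $[U_1/x_1,\ldots,U_k/x_k]t$ is $\subterm$-related to some element of $[V_1/x_1,\ldots,V_k/x_k]t$. This follows by induction on $t$, unwinding the definition of $\subterm$ on sets (``every element on the left has a witness on the right'') through the clauses of the substitution; at application and set nodes we use that $\subterm$ is componentwise, and at variables we directly invoke the hypothesis $U_i \subterm V_i$.

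The case analysis is then on the reduction rule used. First, if $u = A_\tty\,U_1\,\cdots\,U_k \redwith{\GRAM'} u'$ via a rewriting rule $A_\tty\,x_1\,\cdots\,x_k \to t$, the same rule applies to $v = A_\tty\,V_1\,\cdots\,V_k$ producing some $v'$, and the substitution lemma delivers $u' \subterm v'$. Second, if $u = a\,U_1\,\cdots\,\set{u_0}\,\cdots\,U_k$ reduces by congruence to $a\,U_1\,\cdots\,\set{u'_0}\,\cdots\,U_k$ with $u_0 \redwith{\GRAM'} u'_0$, then $\set{u_0} \subterm V_i$ yields some $v_0 \in V_i$ with $u_0 \subterm v_0$; the induction hypothesis gives $v_0 \redswith{\GRAM'} v'_0$ with $u'_0 \subterm v'_0$. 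To simulate on $v$, I first apply the set-picking rule (unless $V_i$ is already a singleton) to replace $V_i$ by $\set{v_0}$, then iterate the congruence rule to reduce $v_0$ to $v'_0$ inside that singleton. Third, if the step in $u$ is itself an instance of the set-picking rule, selecting some $w \in U_i$ with $U_i$ non-singleton, I choose a witness $w' \in V_i$ with $w \subterm w'$: when $V_i$ is non-singleton the same rule applies to $v$; when $V_i$ already equals $\set{w'}$ no reduction on $v$ is needed, since $\set{w} \subterm \set{w'}$ is immediate from $w \subterm w'$.

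The main obstacle I anticipate is the congruence case: because $\subterm$ on sets is asymmetric and only requires left-to-right matching, $V_i$ may have a very different cardinality from $U_i$, and the simulation must stitch together a zero-or-one set-picking step (isolating a singleton in $v$) with the inductively provided reduction sequence inside that singleton. The substitution lemma used in the non-terminal case is the other step requiring real care, but it is a routine structural induction.
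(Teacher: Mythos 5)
Your proposal is correct and follows essentially the same route as the paper's proof: an induction (the paper inducts on \(u\), you on the reduction derivation, which amounts to the same thing), a substitution lemma stating that \(\subterm\) is preserved by the extended substitution in the non-terminal case, and, in the congruence case, isolating the witness \(v_0\in V_i\) by a set-picking step before replaying the reduction inside the resulting singleton. The only difference is cosmetic: you spell out the set-picking case and work with a general terminal head, whereas the paper treats only \(\br\) and dismisses the remaining cases as ``similar or clear.''
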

\begin{proof}
This follows by straightforward induction on term \(u\)
and case analysis on the shape of \(u\).
\begin{itemize}
\item
Case \(u=A\,U_1\,\cdots\,U_k\): In this case \(v=A\,V_1\,\cdots\,V_k\) and
\(U_i \subterm V_i\) for each \(i\in\set{1,\ldots,k}\).
Since \(u=A\,U_1\,\cdots\,U_k \redwith{\GRAM'} u'\), we have
\((A\,x_1\,\cdots\,x_k \red u_0) \in \GRAM'\) such that 
\(u' = [U_i/x_i]_{i\in\set{1,\ldots,k}}u_0\).
We can show that
\(U \subterm V\) and \(u \subterm v\) implies \([U/x]u \subterm [V/x]v\), by induction on \(u\subterm v\).
Hence, as required,
\[
\begin{array}{l}
v=A\,V_1\,\cdots\,V_k \redwith{\GRAM'} [V_i/x_i]_{i\in\set{1,\ldots,k}}u_0
\\ {}
[U_i/x_i]_{i\in\set{1,\ldots,k}}u_0 \subterm [V_i/x_i]_{i\in\set{1,\ldots,k}}u_0.
\end{array}
\]
\item
Case \(u=\br\,U_1\,U_2\):
We consider only the case where \(U_1=\set{u_1}\) is reduced; the other cases are similar or clear.
Thus, \(u_1 \redwith{\GRAM'} u'_1\) and \(u'=\br\,u'_1\,U_2\) for some \(u'_1\).
Since \(u \subterm v\), \(v=\br\,V_1\,V_2\) for some \(V_1\) and \(V_2\) such that
\(U_i \subterm V_i\) for \(i\in\{1,2\}\).
As \(U_1 \subterm V_1\), \(u_1 \subterm v_1\) for some \(v_1 \in V_1\).
Hence by the induction hypothesis, there exists \(v'_1\) such that
\(v_1 \redwith{\GRAM'} v'_1\) and \(u'_1 \subterm v'_1\).
Then the required result holds for \(v' = \br\,v'_1\,V_2\).
\end{itemize}
\end{proof}
\begin{lemma}
\label{lem:tr2-ground-tree-fwd}
Suppose \(\GRAM\tr \GRAM'\).
{For any \(\pi\), there exist \(\tty\DCOL\T\) and \(\pi'\) such that
\(\p \pi:\tty\tr \pi'\) and
\begin{align*}
\leaves(\pi')&=\remeps{\leaves(\pi)} \mspace{-100mu} & \tty&=\Tplus \mspace{-100mu}&& (\text{if \(\remeps{\leaves(\pi)} \neq \empword\)})
\\
\pi'&=\Te & \tty&=\Tempty \mspace{-100mu}&& (\text{if \(\remeps{\leaves(\pi)} = \empword\)}).
\end{align*}}
\end{lemma}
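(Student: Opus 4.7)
The plan is to prove the statement by straightforward structural induction on the tree $\pi$. Recall that since we assume $\TT{br}$ only occurs fully applied and $\TERMS = \set{\TT{br}\mapsto 2,\Te\mapsto 0}\cup\set{a\mapsto 0\mid \TERMS(a)=1\ \text{in the source}}$, every $\pi \in \TREE{\TERMS}$ is either a nullary terminal or of the form $\TT{br}\,\pi_0\,\pi_1$.

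For the base cases, if $\pi = \Te$, then $\leaves(\pi) = \Te$, so $\remeps{\leaves(\pi)} = \empword$, and rule \rname{Tr2-Const0} gives $\p \Te:\Tempty \tr \Te$, which matches the second clause with $\pi' = \Te$. If $\pi = a$ for some nullary terminal $a \neq \Te$, then $\remeps{\leaves(\pi)} = a \neq \empword$, and \rname{Tr2-Const1} gives $\p a:\Tplus \tr a$, with $\leaves(a) = a$, matching the first clause.

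For the inductive case $\pi = \TT{br}\,\pi_0\,\pi_1$, I apply the induction hypothesis to $\pi_0$ and $\pi_1$, obtaining types $\tty_0,\tty_1 \DCOL \T$ and trees $\pi'_0,\pi'_1$ with $\p \pi_j:\tty_j \tr \pi'_j$. I then invoke rule \rname{Tr2-Const2} and proceed by case analysis on the pair $(\tty_0,\tty_1)$. If both are $\Tplus$, then both $\remeps{\leaves(\pi_j)}$ are non-empty by the IH; the rule yields $(\TT{br}\,\pi'_0\,\pi'_1, \Tplus)$, and $\leaves(\TT{br}\,\pi'_0\,\pi'_1) = \leaves(\pi'_0)\leaves(\pi'_1) = \remeps{\leaves(\pi_0)}\remeps{\leaves(\pi_1)} = \remeps{\leaves(\pi)}$, which is non-empty. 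If $\tty_i = \Tplus$ and $\tty_{1-i} = \Tempty$, then the rule yields $(\pi'_i,\Tplus)$; since $\pi'_{1-i} = \Te$ by the IH, $\remeps{\leaves(\pi_{1-i})} = \empword$, so $\remeps{\leaves(\pi)} = \remeps{\leaves(\pi_i)} = \leaves(\pi'_i)$ and this sequence is non-empty. If both $\tty_j = \Tempty$, the rule yields $(\Te,\Tempty)$, and by the IH each $\remeps{\leaves(\pi_j)} = \empword$, so $\remeps{\leaves(\pi)} = \empword$, matching the second clause.

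There is essentially no obstacle here: the three branches of \rname{Tr2-Const2} are designed exactly so that the invariant distinguishing $\Tplus$ (non-empty leaf sequence) from $\Tempty$ (empty leaf sequence) is preserved, and the induction closes immediately. The only mild care needed is to check, in each of the three subcases, that the type assigned matches the emptiness/non-emptiness of $\remeps{\leaves(\pi)}$, which follows from the IH applied to $\pi_0$ and $\pi_1$.
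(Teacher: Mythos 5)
Your proof is correct and follows essentially the same route as the paper's: structural induction on \(\pi\), handling \(\Te\) and nullary \(a\) via \rname{Tr2-Const0}/\rname{Tr2-Const1}, and the case \(\TT{br}\,\pi_0\,\pi_1\) by the induction hypothesis together with the three-way case analysis built into \rname{Tr2-Const2}. The extra detail you give on matching leaf sequences in each subcase is exactly the check the paper leaves implicit.
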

\begin{proof}
This follows by induction on \(\pi\):
\begin{itemize}
\item Case \(\pi=a\):
The result follows for \(\pi'\defe a\) and \(\tty\defe \Tplus\).
\item Case \(\pi=\Te\):
The result follows for \(\pi'\defe \Te\) and \(\tty\defe \Tempty\).
\item Case \(\pi = \TT{br}\,\pi_0\,\pi_1\):
By induction hypothesis, we have \(\p \pi_i:\tty_i\tr \pi'_i\) such that
\begin{align*}
\leaves(\pi_i')&=\remeps{\leaves(\pi_i)} \mspace{-100mu} & \tty_i&=\Tplus \mspace{-100mu}&& (\text{if \(\remeps{\leaves(\pi_i)} \neq \empword\)})
\\
\pi_i'&=\Te & \tty_i&=\Tempty \mspace{-100mu}&& (\text{if \(\remeps{\leaves(\pi_i)} = \empword\)})
\end{align*}
for each \(i=0,1\).
The result follows for
\[
  (\pi',\tty) \defe \left\{
  \begin{array}{ll}
    (\TT{br}\,\pi'_0\,\pi'_1, \Tplus) & \mbox{ if \(\tty_0=\tty_1=\Tplus\)}\\
    (\pi'_i, \Tplus) & \mbox{ if \(\tty_i=\Tplus\) and \(\tty_{1-i}=\Tempty\)}\\
    (\Teps, \Tempty) & \mbox{ if \(\tty_0=\tty_1=\Tempty\).}\\
  \end{array}\right.
\]
\hspace{-15pt}\qedhere
\end{itemize}
\end{proof}

The following lemma shows that the transformation is complete in the sense that
for any tree generated by the source grammar, the tree obtained by removing \(\Te\)
can be generated by the target grammar.
\begin{lemma}[completeness of the transformation]
\label{lem:tr2-completeness}
If \(\GRAM\tr \GRAM'\), then
\(\remeps{\LLang(\GRAM)}\subseteq \LLangE(\GRAM')\).
\end{lemma}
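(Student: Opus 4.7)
The plan is to show membership directly: given \(w\in\remeps{\LLang(\GRAM)}\), witnessed by a reduction \(S=t_0\redwith{\GRAM}t_1\redwith{\GRAM}\cdots\redwith{\GRAM}t_n=\pi\) with \(\remeps{\leaves(\pi)}=w\), I will build a corresponding reduction \(S'\redwith{\GRAM'}S_\tty\redswith{\GRAM'}\pi'\) in \(\GRAM'\) for a suitable \(\tty\in\set{\Tempty,\Tplus}\) and a ground tree \(\pi'\) witnessing \(w\in\LLangE(\GRAM')\). The first step is to invoke Lemma~\ref{lem:tr2-ground-tree-fwd} to obtain \(\tty\) and \(\pi'\) with \(\p\pi:\tty\tr\pi'\): when \(w\neq\empword\) we get \(\tty=\Tplus\) and \(\leaves(\pi')=w\), and when \(w=\empword\) we get \(\tty=\Tempty\) and \(\pi'=\Te\), so in either case the existence of a reduction \(S'\redswith{\GRAM'}\pi'\) immediately settles the lemma via the rule \(S'\to S_\tty\in \RULES_{\GRAM'}\).

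Next I would traverse the reduction \(S=t_0\redswith{\GRAM}t_n=\pi\) in reverse using Lemma~\ref{lem:tr2-g''-simulated-by-g} (subject expansion). Starting from \(\p t_n:\tty\tr u_n\defe\pi'\), each application produces \(u_i\) and \(v_i\) satisfying \(\p t_i:\tty\tr u_i\), \(u_i\redswith{\GRAM'}v_i\), and \(u_{i+1}\subterm v_i\), for \(i=n-1,n-2,\ldots,0\). At the endpoint \(t_0=S\), the only typing rule applicable to a bare non-terminal is \rname{Tr2-NT}, forcing \(u_0=S_\tty\); as a byproduct this ensures that a rewriting rule for \(S_\tty\) actually appears in \(\GRAM'\).

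I would then splice these local reductions into a single reduction \(S_\tty\redswith{\GRAM'}\pi'\) by induction on \(i\), constructing \(w_i\) with \(S_\tty\redswith{\GRAM'}w_i\) and \(u_{i+1}\subterm w_i\). Setting \(w_0\defe v_0\), suppose we have \(w_{i-1}\) with \(u_i\subterm w_{i-1}\); iterating Lemma~\ref{lem:subterm-sim} along \(u_i\redswith{\GRAM'}v_i\) yields \(w_i\) with \(w_{i-1}\redswith{\GRAM'}w_i\) and \(v_i\subterm w_i\), and transitivity of \(\subterm\) together with \(u_{i+1}\subterm v_i\) gives \(u_{i+1}\subterm w_i\). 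At the end \(\pi'=u_n\subterm w_{n-1}\). To finish, I would establish by a short induction on \(\pi'\) that \(\pi'\subterm u'\) with \(\pi'\) a ground tree implies \(u'\redswith{\GRAM'}\pi'\) in the extended grammar: for \(\pi'=a\,\pi_1\,\cdots\,\pi_k\) we have \(u'=a\,U_1\,\cdots\,U_k\) with some \(v_j\in U_j\) satisfying \(\pi_j\subterm v_j\), so the non-deterministic choice rules reduce each \(U_j\) to \(\set{v_j}\) and recursion finishes the job. Applied to \(w_{n-1}\), this produces the desired \(S_\tty\redswith{\GRAM'}\pi'\).

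The main delicate point is the splicing step: subject expansion only provides \(\subterm\)-control between immediately adjacent stages of the source reduction, so it is Lemma~\ref{lem:subterm-sim} together with the transitivity of \(\subterm\) that let us transport each local reduction along the entire chain and accumulate everything into a single derivation from \(S_\tty\); the closing observation about ground trees being extractable from any \(\subterm\)-larger term is then what converts the final \(\subterm\)-witness into an honest reduction to \(\pi'\).
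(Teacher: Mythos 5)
Your proof is correct and follows essentially the same route as the paper's: apply Lemma~\ref{lem:tr2-ground-tree-fwd} to the generated tree, then combine repeated subject expansion (Lemma~\ref{lem:tr2-g''-simulated-by-g}) with Lemma~\ref{lem:subterm-sim} and transitivity of \(\subterm\) to pull the reduction back to \(S_\tty\), and finally use the observation that \(\pi'\subterm u\) implies \(u\redswith{\GRAM'}\pi'\). The only difference is that you spell out explicitly the splicing step that the paper compresses into ``by repeated applications of Lemmas~\ref{lem:tr2-g''-simulated-by-g} and \ref{lem:subterm-sim}''.
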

\begin{proof}
Suppose \(w\in\remeps{\LLang(\GRAM)}\), i.e.,
\(w=\remeps{\leaves(\pi)}\) for some 
tree \(\pi\) such that \(S\redswith{\GRAM}\pi\).
By Lemma~\ref{lem:tr2-ground-tree-fwd}, 
there exist \(\tty\DCOL\T\) and \(\pi'\) such that
\(\p \pi:\tty\tr \pi'\) and
\begin{align*}
\leaves(\pi')&=\remeps{\leaves(\pi)} \mspace{-200mu}&& (\text{if \(\remeps{\leaves(\pi)} \neq \empword\)})
\\
\pi'&=\Te  \mspace{-100mu}&& (\text{if \(\remeps{\leaves(\pi)} = \empword\)}).
\end{align*}
By repeated applications of Lemmas~\ref{lem:tr2-g''-simulated-by-g}
and \ref{lem:subterm-sim}, 
there exists \(u\) such that \(S_{\tty}\redswith{\GRAM'} u\) and \(\pi'\subterm u\).
We can easily show that for any \(\pi\) and \(u\) if \(\pi \subterm u\) then \(u \redswith{\GRAM'} \pi\) by induction on \(\pi \subterm u\);
thus \(S' \redwith{\GRAM'} S_{\tty}\redswith{\GRAM'} u \redswith{\GRAM'} \pi'\).
If \(w = \remeps{\leaves(\pi)} \neq \empword\), then
\((\Te \neq)\, w=\leaves(\pi')\in \LLangE(\GRAM')\) as required.
If \(w = \remeps{\leaves(\pi)} = \empword\), then
\(\Te=\leaves(\pi') \in \LLang(\GRAM')\) and hence
\(w=\empword \in \LLangE(\GRAM')\) as required.
\end{proof}

We now turn to prove the soundness of the transformation
(Lemma~\ref{lem:tr2-soundness} below).
We again prepare several lemmas.
\begin{lemma}[substitution]
\label{lem:tr2:substitution}
Suppose \(x\not\in\dom(\TTE)\).
If \(\TTE,x\COL\tty_1,\ldots,x\COL\tty_k \p t:\tty\tr u\)
and \(\TTE\p s:\tty_i \tr U_i\) for each \(i\in\set{1,\ldots,k}\), then
\(\TTE \p [s/x]t:\tty\tr [U_1/x_{\tty_1},\ldots,U_k/x_{\tty_k}]u\).
\end{lemma}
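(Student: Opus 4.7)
The plan is to prove the lemma by induction on the derivation of the judgment $\TTE,x\COL\tty_1,\ldots,x\COL\tty_k \p t:\tty\tr u$, with case analysis on the last rule applied. Since this is the step-2 transformation, where type environments are shared across premises (unlike step 1, there is no linearity constraint), the induction should go through smoothly and be considerably simpler than the analogous Lemma~\ref{lem:substOld} for step~1. I would assume throughout that the variable $x$ in the hypothesis is distinct from every bound variable encountered, using $\alpha$-conversion where necessary.

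First I would handle the base cases. For \rname{Tr2-Var}, there are two subcases: if $t = x$, then the derivation must have extended the environment with one of the bindings $x\COL\tty_i$, so $u = x_{\tty_i}$ and $[U_1/x_{\tty_1},\ldots,U_k/x_{\tty_k}]u = U_i$, which matches $\TTE\p s:\tty_i\tr U_i$; if $t = y \neq x$, then $u = y_{\tty}$, the substitution has no effect, and the conclusion follows by weakening (available since \rname{Tr2-Var} does not require a balanced environment here) from $\TTE,y\COL\tty\p y:\tty\tr y_{\tty}$. The constant cases \rname{Tr2-Const0}, \rname{Tr2-Const1}, and the non-terminal case \rname{Tr2-NT} are immediate, since the subject is not affected by substitution and $\TTE,x\COL\tty_1,\ldots,x\COL\tty_k$ can be replaced by $\TTE$ in the premise.

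For the inductive cases, \rname{Tr2-App}, \rname{Tr2-Set}, and \rname{Tr2-Const2} are direct applications of the induction hypothesis to each premise, followed by reassembling the conclusion (in the \rname{Tr2-Const2} case, the choice of output $(u,\tty)$ is determined purely by the types $\tty_0,\tty_1$ of the subterms, which are preserved under substitution by the induction hypothesis, so the same branch of the case split applies). For \rname{Tr2-Abs}, we rename the bound variable $y$ so that $y \notin \envdom(\TTE)\cup\{x\}\cup\FV(s)$, apply the induction hypothesis to the body, and close off with \rname{Tr2-Abs}; since substitution commutes with abstraction under this freshness condition, the resulting output term matches $[U_1/x_{\tty_1},\ldots,U_k/x_{\tty_k}]u$.

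The only mildly delicate point will be the bookkeeping in the \rname{Tr2-App} case: the same argument $t_1$ is transformed under multiple types $\tty'_1,\ldots,\tty'_{\ell'}$, and each of these sub-derivations mentions the bindings $x\COL\tty_i$; the induction hypothesis must be applied to each, and the resulting sets $[U_1/x_{\tty_1},\ldots,U_k/x_{\tty_k}]U'_j$ plugged back in. No obstacle arises here because the rule \rname{Tr2-App} freely shares $\TTE$ among all premises (this is exactly the difference from the step-1 transformation that was flagged in Section~\ref{sec:leaf2}), so there is no linearity-splitting to juggle. Thus the main work is purely syntactic, and I expect the entire proof to fit in a page.
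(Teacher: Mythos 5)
Your proposal matches the paper's own proof: induction on the derivation of \(\TTE,x\COL\tty_1,\ldots,x\COL\tty_k \p t:\tty\tr u\), where only \rname{Tr2-Var} needs the two-way case split (using the hypothesis on \(s\) when \(t=x\), and strengthening/weakening of the shared environment when \(t\neq x\)), while the remaining rules follow directly from the induction hypotheses, exactly as you describe for \rname{Tr2-App}, \rname{Tr2-Set}, and \rname{Tr2-Const2}. Your extra treatment of \rname{Tr2-Abs} is harmless; the paper simply omits it since the lemma is applied to applicative rule bodies.
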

\begin{proof}
This follows by induction on the derivation of \(\TTE,x\COL\tty_1,\ldots,x\COL\tty_k \p t:\tty\tr u\),
with case analysis on the last rule used. We discuss only the case for \rname{Tr2-Var};
the other cases follow immediately from the induction hypothesis.
\begin{itemize}
\item Case \rname{Tr2-Var}: In this case,
\[
\begin{array}{l}
\TTE,x\COL\tty_1,\ldots,x\COL\tty_k = \TTE', y\COL\tty\\
t=y\qquad
u = y_{\tty}
\end{array}
\]
If \(x\neq y\), then \([U_1/x_{\tty_1},\ldots,U_k/x_{\tty_k}]u = \set{u}\).
\(\TTE \p [s/x]y:\tty\tr u\) follows immediately from 
\(\TTE,x\COL\tty_1,\ldots,x\COL\tty_k \p t:\tty\tr u\), as \([s/x]y=y=t\)
by the (derived) strengthening rule.

If \(x=y\), then \(\tty=\tty_i\) for some \(i\).
We have \([U_1/x_{\tty_1},\ldots,U_k/x_{\tty_k}]u = U_i\). 
The required result \(\TTE \p [s/x]t:\tty\tr U_i\) follows from 
\([s/x]t=s\) and \(\TTE\p s:\tty_i \tr U_i\).
\item Cases \rname{Tr2-Const0}, \rname{Tr2-Const1}, and \rname{Tr2-NT}: Similar to the previous case where \(x\neq y\).
\item Cases \rname{Tr2-Const2} and \rname{Tr2-App}: These cases are straightforward from induction hypotheses;
we consider only the case \rname{Tr2-App}.
We have:
\[
\begin{array}{l}
t = t_0\,t_1 \qquad u=u_0\,V_1\,\cdots\,V_\ell \\
\TTE,x\COL\tty_1,\ldots,x\COL\tty_k \p t_0: \tty_1\land \cdots\land \tty_\ell\to \tty\tr u_0\\
\TTE,x\COL\tty_1,\ldots,x\COL\tty_k \p t_1: \tty_j\tr V_j \text{ for each \(j\in\set{1,\ldots,\ell}\)}
\end{array}
\]
By the induction hypotheses and \rname{Tr2-Set}, we have:
\[
\begin{array}{l}
\TTE \p [s/x]t_0: \tty_1\land \cdots\land \tty_\ell\to \tty \tr [U_1/x_{\tty_1},\ldots,U_k/x_{\tty_k}]u_0
\\
\TTE \p [s/x]t_1: \tty_j \tr [U_1/x_{\tty_1},\ldots,U_k/x_{\tty_k}]V_j
\end{array}
\]
Hence, we have the required result by \rname{Tr2-App}.
\end{itemize}
\end{proof}
Next we show that the transformation is a right-to-left forward simulation.
\begin{lemma}[subject reduction]
\label{lem:tr2:fwdsim-right2left}
Suppose \(\GRAM\tr \GRAM'\).
If \(u\redwith{\GRAM'} u'\) with
\(\p t:\tty\tr u\) and \(\tty\DCOL\T\),
then there exists \(t'\) such that 
\(t\redswith{\GRAM} t'\) with
\(\p t':\tty\tr u'\).
\end{lemma}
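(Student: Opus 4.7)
The plan is to prove the lemma by induction on the structure of the applicative term $t$, with case analysis on its head symbol and on the reduction step $u\redwith{\GRAM'}u'$. Since $\tty\DCOL\T$, the type $\tty$ is either $\Tempty$ or $\Tplus$, and the subterms on which the induction hypothesis is invoked (namely the immediate children of a $\TT{br}$-application) also have types refining $\T$, so the induction remains within the stated form of the lemma.

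If $t$ is a variable, a non-terminal alone, or a nullary constant, then $u$ admits no reduction and the case is vacuous. For $t=\TT{br}\,t_0\,t_1$, rule Tr2-Const2 gives three subcases according to the types $\tty_0,\tty_1$ of $t_0,t_1$. If both are $\Tplus$, then $u=\TT{br}\,u_0\,u_1$, the reduction occurs inside some $u_i$, and the induction hypothesis yields $t_i\redswith{\GRAM}t_i'$ with $\p t_i'\COL\tty_i\tr u_i'$; reapplying Tr2-Const2 to $\TT{br}\,t_0'\,t_1'$ (with the other component unchanged) produces the required image. If exactly one side has type $\Tplus$, then $u=u_i$ is the $\Tplus$-side, the induction hypothesis on $t_i$ suffices, and the reduced term still falls into the same subcase of Tr2-Const2. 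The third subcase ($u=\Teps$) admits no reduction.

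The main case is $t=A\,t_1\cdots t_k$. Iterated inversion on Tr2-App, Tr2-Set, and Tr2-NT produces $u=A_{\tty_0}\,U_1\cdots U_m$ where $\tty_0=\IT_{j}\tty_{1,j}\to\cdots\to\IT_{j}\tty_{k,j}\to\tty$ and $\p t_i\COL\tty_{i,j}\tr U_{i,j}$, with the flat sequence $U_1,\ldots,U_m$ being the concatenation of the families $(U_{i,j})_j$. Because the propagation rules in the extended grammar apply only under terminal heads, the reduction must occur at the head of $u$, via some rule $A_{\tty_0}\,\overrightarrow{(x_i)_{\tty_{i,j}}}\to v$ in $\GRAM'$ with $u'\in[U_{i,j}/(x_i)_{\tty_{i,j}}]v$. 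By Tr2-Rule this rule arises from some $A\,\overrightarrow{x_i}\to s$ in $\GRAM$ together with $\emptyset\p\lambda\overrightarrow{x_i}.s\COL\tty_0\tr\lambda\overrightarrow{(x_i)_{\tty_{i,j}}}.v$; inverting the iterated Tr2-Abs yields $\overrightarrow{x_i\COL\tty_{i,j}}\p s\COL\tty\tr v$. The substitution lemma (Lemma~\ref{lem:tr2:substitution}) then gives $\p[t_i/x_i]s\COL\tty\tr[U_{i,j}/(x_i)_{\tty_{i,j}}]v$, and inverting Tr2-Set selects the particular $u'$ in this set. Taking $t'=[t_i/x_i]s$ yields $t\redwith{\GRAM}t'$ and $\p t'\COL\tty\tr u'$ as required. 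The main obstacle will be the careful bookkeeping of the correspondence between the flat argument sequence of $A_{\tty_0}$ and the doubly-indexed family $(U_{i,j})_{i,j}$, and between the specific rule chosen in $\GRAM'$ (one of potentially many produced for $A$) and the rule of $\GRAM$ from which it originates; singling out the particular $u'$ from the set-valued substitution result additionally requires an explicit inversion of Tr2-Set.
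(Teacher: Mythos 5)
Your proposal is correct and follows essentially the same route as the paper's proof: induction over the transformation derivation with case analysis on the shape of \(t\), the observation that under a non-terminal head only the head-rewriting rule of the extended grammar can fire, inversion of \rname{Tr2-Rule}/\rname{Tr2-Abs}/\rname{Tr2-Set}, and the substitution lemma (Lemma~\ref{lem:tr2:substitution}) to close the non-terminal case; the \(\TT{br}\) subcases are handled exactly as in the paper. One small inaccuracy: a bare non-terminal is \emph{not} a vacuous case, since for \(t=A\) with \(\NONTERMS(A)=\T\) the image \(A_\tty\) does reduce whenever \(\GRAM'\) has a rule for it; this is harmless because it is exactly your main case \(t=A\,t_1\cdots t_k\) with \(k=0\), where the argument (with the empty substitution) goes through verbatim, but you should fold it in there rather than dismiss it.
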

\begin{proof}
This follows by induction on the derivation of
\(\p t:\tty\tr u\).
We perform case analysis on the shape of \(t\).
\begin{itemize}
\item Case \(t=\TT{br}\,t_0\,t_1\):
We first consider the case where
\(t_0\) or \(t_1\) has type \(\Tempty\). Since \(u\redwith{\GRAM'}u'\), we have:
\[
\begin{array}{l}
t = \TT{br}\,t_0\,t_1\\
\tty = \tty_i=\Tplus\qquad \tty_{1-i}=\Tempty\\
\p t_i:\Tplus \tr u
\end{array}
\]
for some \(i\in\set{0,1}\).
By the induction hypothesis, there exists \(t_i'\) such that
\(t_i\redswith{\GRAM}t_i'\) and \(\p t_i':\Tplus \tr u'\).
The required result holds for \(\TT{br}\,t_0'\,t_1'\), where
\(t_{1-i}' \defe t_{1-i}\).

In the other case, for some \(i\in\set{0,1}\), we have:
\[\begin{array}{l}
u=\TT{br}\,u_0\,u_1\\
u_i \redwith{\GRAM'}u_i'\qquad u'_{1-i}=u_{1-i}\\
u' = \TT{br}\,u_0'\,u'_1\\
\p t_j\COL\Tplus \tr u_j\mbox{ for each $j\in\set{0,1}$}
\end{array}
\]
By the induction hypothesis, there exists \(t_i'\) such that
\(\empty\p t_i'\COL\Tplus\tr u_i'\) and \(t_i\redswith{\GRAM}t_i'\).
Let \(t'_{1-i}=t_{1-i}\). Then,
the result holds for \(t'=\TT{br}\,t_0'\,t_1'\).

\item \(t = A\,t_1\,\cdots\,t_k\):
In this case, we have:
\[\begin{array}{l}
u = A_{\tty'}\,U_{1,1}\,\cdots\,U_{1,\ell_1}\,\cdots\,U_{k,1}\,\cdots\,U_{k,\ell_k}\\
\p A:\tty' \tr A_{\tty'}\\
\p t_i:\tty_{i,j}\tr U_{i,j} \mbox{ for each $i\in\set{1,\ldots,k},j\in\set{1,\ldots,\ell_i}$}\\
\tty' = \IT_j \tty_{1,j} \to \cdots \to \IT_j\tty_{k,j}\to \tty\\
\big(A_\tty'\,(x_{1})_{\tty_{1,1}}\cdots\,(x_{1})_{\tty_{1,\ell_1}}\cdots\,
(x_{k})_{\tty_{k,1}}\cdots\,(x_{k})_{\tty_{k,\ell_k}}\to u_0\big)\in \GRAM'\\
u' \in [U_{i,j}/(x_{i})_{\tty_{i,j}}]_{i\in\set{1,\ldots,k},j\in\set{1,\ldots,\ell_i}}u_0
\end{array}
\]
By the condition
\(\big(A_\tty'\,(x_{1})_{\tty_{1,1}}\cdots\,(x_{1})_{\tty_{1,\ell_1}}\cdots\,
(x_{k})_{\tty_{k,1}}\cdots\,(x_{k})_{\tty_{k,\ell_k}}\to u_0\big)\in \GRAM'\),
we have
\[
\begin{array}{l}
(A\,x_1\,\cdots\,x_k\to t_0) \in \GRAM\\
x_1\COL\tty_{1,1},\ldots,x_1\COL\tty_{1,\ell_1},\ldots,
x_k\COL\tty_{k,1},\ldots,x_k\COL\tty_{k,\ell_k}\p t_0:\tty \tr u_0.
\end{array}
\]
Let \(t'=[t_1/x_1,\ldots,t_k/x_k]t_0\).
By Lemma~\ref{lem:tr2:substitution},
we have \(\p t':\tty\tr u'\) and
\(t\redswith{\GRAM}t'\) as required.
\end{itemize}
\end{proof}

\begin{lemma}
\label{lem:tr2:weak-normalization}
If \(\p t:\Tempty \tr u\), then \(t\redswith{\GRAM} \pi\) for some \(\pi\) such that
\(\leaves(\pi)\in\Te^*\).
\end{lemma}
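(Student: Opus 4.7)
The plan is to prove this by a reducibility (logical relation) argument over intersection types. First I would define, for each intersection type $\tty$, a set $\mathcal{R}_\tty$ of closed source terms of the matching simple type: at base types, $\mathcal{R}_\Tempty \defe \{t : t \redswith{\GRAM} \pi\text{ with }\leaves(\pi)\in \Te^*\}$ and $\mathcal{R}_\Tplus \defe \{t : t \redswith{\GRAM} \pi\text{ for some tree }\pi\}$; at function types,
\[
\mathcal{R}_{\tty_1\land\cdots\land\tty_k \to \tty'} \defe
\{t : \forall s \in \mathcal{R}_{\tty_1}\cap\cdots\cap\mathcal{R}_{\tty_k},\ t\,s \in \mathcal{R}_{\tty'}\}.
\]
A routine induction on $\tty$ yields the backward-closure property: if $t \redwith{\GRAM} t'$ and $t' \in \mathcal{R}_\tty$, then $t \in \mathcal{R}_\tty$, using that reductions of $t$ propagate under applicative contexts.

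Next I would prove the fundamental lemma: if $\TTE \p t : \tty \tr u$ with $t$ an applicative term and $\sigma$ is a substitution closing $t$ with $\sigma(x) \in \bigcap_{(x,\tty')\in\TTE}\mathcal{R}_{\tty'}$ for every $x$, then $\sigma t \in \mathcal{R}_\tty$. The target lemma follows immediately by taking $\TTE=\emptyset$ and $\sigma=\emptyset$, since $\mathcal{R}_\Tempty$ is exactly the desired set. The induction on the typing derivation handles \rname{Tr2-Var}, \rname{Tr2-Const0}, and \rname{Tr2-Const1} directly; \rname{Tr2-Set} is trivial because the source term is unchanged; \rname{Tr2-App} follows by unfolding the definition of reducibility at a functional type and applying the induction hypotheses on both premises; and \rname{Tr2-Const2} is a case analysis on $(\tty_0,\tty_1)$, where, e.g., when $\tty_0=\tty_1=\Tempty$ the two induction hypotheses yield $t_i \reds \pi_i$ with $\leaves(\pi_i)\in\Te^*$, whence $\br\,t_0\,t_1 \reds \br\,\pi_0\,\pi_1$ with leaves still in $\Te^*$.

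The hard case — and the whole reason the premise of \rname{Tr2-NT} demands a witness derivation of the body — is the non-terminal case. Given $\TTE \p A : \tty \tr A_\tty$ with $\tty = \bigwedge_{i\le m_1}\tty_{1,i}\to\cdots\to\bigwedge_{i\le m_k}\tty_{k,i}\to\tty_0$ and rule $A\,x_1\,\cdots\,x_k \to t'$ witnessing the premise, I must show that for arbitrary $s_j \in \bigcap_i\mathcal{R}_{\tty_{j,i}}$ the application $A\,s_1\,\cdots\,s_k$ lies in $\mathcal{R}_{\tty_0}$. Inverting the \rname{Tr2-Abs} steps of the premise derivation yields the strictly smaller sub-derivation $\{x_j:\tty_{j,i}\}_{j,i} \p t' : \tty_0 \tr u'$. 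Applying the induction hypothesis to this sub-derivation with the substitution sending each $x_j$ to $s_j$ gives $[s_1/x_1,\dots,s_k/x_k]t' \in \mathcal{R}_{\tty_0}$. Since $A\,s_1\,\cdots\,s_k \redwith{\GRAM} [s_1/x_1,\dots,s_k/x_k]t'$, backward closure delivers the goal. The delicate point to check is that the premise derivation is genuinely smaller than the derivation of $\TTE \p A:\tty\tr A_\tty$ so that the induction terminates; this is immediate since the premise is a sub-derivation of the $\rname{Tr2-NT}$ step. This is exactly where derivations of divergent non-terminals (as for $A\to A$) fail to exist at type $\Tempty$, which is the entire content of the restriction in \rname{Tr2-NT}.
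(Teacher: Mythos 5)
Your proof is correct and essentially the paper's own argument: a unary reducibility predicate indexed by the intersection types, a fundamental lemma proved by induction on the transformation derivation, and the \rname{Tr2-NT} case discharged by applying the induction hypothesis to the body's sub-derivation (exactly why that premise exists) followed by closure under head expansion at base type. The only, harmless, deviation is that your $\mathcal{R}_{\Tplus}$ requires mere convergence to some tree rather than, as in the paper, convergence to a tree whose frontier is not in $\Te^*$; this weaker invariant still suffices in every case since the lemma only concerns $\Tempty$.
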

\begin{proof}
We define the unary logical relation \(\REL_\tty\) by:
\begin{itemize}
\item \(\REL_{\Tempty}(t)\) if \(t\redswith{\GRAM} \pi\) for some \(\pi\) such that \(\leaves(\pi)\in\Te^*\).
\item \(\REL_{\Tplus}(t)\) if \(t\redswith{\GRAM} \pi\) for some \(\pi\) such that 
\(\leaves(\pi)\not\in\Te^*\).
\item \(\REL_{\tty_1\land\cdots\land\tty_k\to \tty}(t)\) if,
whenever \(\REL_{\tty_i}(s)\) for every \(i\in\set{1,\ldots,k}\), \(\REL_\tty(ts)\).
\end{itemize}
For a substitution \(\theta = [s_1/x_1,\ldots,s_\ell/x_\ell]\),
we write \(\theta \models \TTE\) if
\(\TTE = \set{x_i\COL \tty_{i,j} \mid i\in\set{1,\ldots,\ell},j\in\set{1,\ldots,m_i}}\)
and \(\forall i\in\set{1,\ldots,\ell}.\,\forall j\in\set{1,\ldots,m_i}.\,\REL_{\tty_{i,j}}(s_i)\).
We write \(\TTE\models t:\tty\) 
when \(\theta\models \TTE\) implies \(\REL_{\tty}(\theta t)\).
We show that, for every applicative term \(t\),
\(\TTE \p t:\tty \tr u\) implies \(\TTE \models t:\tty\), from which the result follows.
The proof proceeds by induction on the derivation of \(\TTE \p t:\tty \tr u\), with case analysis on
the last rule used.
\begin{itemize}
\item Cases for \rname{Tr2-Var}, \rname{Tr2-Const0}, and \rname{Tr2-Const1}: trivial.
\item Case for \rname{Tr2-Const2}:
  In this case, we have:
\[
t= \br\,t_0\,t_1
\qquad
\TTE \p t_i :\tty_i \tr u_i
\qquad
\tty_i = 
\begin{cases}
\Tplus & (\exists i.\, \tty_i=\Tplus)
\\
\Tempty & (\forall i.\, \tty_i=\Tempty)
\end{cases}
\]
Then, \(\TTE \models \br\,t_0\,t_1 :\tty\) follows from the induction hypotheses
\(\TTE \models t_i : \tty_i\).
\item Case for \rname{Tr2-NT}:
  In this case, we have:
\[ 
\begin{array}{l}
t = A\qquad \tty = \IT_{j\in\set{1,\ldots,\ell_1}} \tty_{1,j}\to \cdots \to 
\IT_{j\in\set{1,\ldots,\ell_k}} \tty_{k,j}\to \tty_0
\qquad \tty_{0} = \Tplus \text{ or } \Tempty \\
A\,x_1\,\cdots\,x_k\to t_0 \in \GRAM \\
\set{x_i\COL \tty_{i,j}\mid i\in\set{1,\ldots,k},j\in\set{1,\ldots,\ell_i}}\p t_0:\tty_0 \tr u
\end{array}
\]
To show \(\REL_{\tty}(A)\),
suppose \(\REL_{\tty_{i,j}}(s_i)\) for every \(i\in\set{1,\ldots,k}\) and \(j\in\set{1,\ldots,\ell_i}\).
By applying the induction hypothesis to the last condition, we have
\[\set{x_i\COL \tty_{i,j}\mid i\in\set{1,\ldots,k},j\in\set{1,\ldots,\ell_i}}\models t_0:\tty_0.\]
(Note that 
the derivation for
\(\set{x_i\COL \tty_{i,j}\mid i\in\set{1,\ldots,k},j\in\set{1,\ldots,\ell_i}}\p t_0:\tty_0 \tr u\)
is a sub-derivation for \(\p A:\tty\tr u\); this is the reason why
we have included the rightmost premise in the rule \rname{Tr2-NT}.)
Therefore, we have 
\(\REL_{\tty_0}([s_1/x_1,\ldots,s_k/x_k]t_0)\), which implies
\(\REL_{\tty_0}(A\,s_1\,\cdots\,s_k)\) as required.
\item Case for \rname{Tr2-App}:
In this case, we have:
\[
\begin{array}{l}
t = t_0\,t_1\\
\TTE \p t_0: \tty_1\land \cdots\land \tty_k\to \tty\tr u_0\\
\TTE \p t_1: \tty_i\tr U_i
\end{array}
\]
Suppose \(\theta \models \TTE\). We need to show \(\REL_{\tty}(\theta t)\).
By applying the induction hypothesis to the transformation judgments for \(t_0\) and \(t_1\), we have
\(\TTE\models t_0:\tty_1\land \cdots\land \tty_k\to \tty\)
and \(\TTE\models t_1:\tty_i\) for every \(i\in\set{1,\ldots,k}\).
Thus, we have also \(\REL_{\tty_1\land \cdots\land \tty_k\to \tty}(\theta t_0)\)
and \(\REL_{\tty_i}(\theta t_1)\). Therefore, we have \(\REL_{\tty}(\theta t)\) as required.
\end{itemize}
\end{proof}

\begin{lemma}
\label{lem:tr2:fwd-sim-ground-tree}
If \(\p t:\tty\tr \pi'\), {then there exists \(\pi\) such that
\(t\redswith{\GRAM} \pi\) and 
\[
\begin{aligned}
\pi' &\neq \Te \qquad \remeps{\leaves(\pi)}=\leaves(\pi') \quad&& (\text{if \(\tty=\Tplus\)})
\\
\pi' &= \Te \qquad \remeps{\leaves(\pi)}=\empword && (\text{if \(\tty=\Tempty\)}).
\end{aligned}
\]}
\end{lemma}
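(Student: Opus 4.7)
The plan is to prove the lemma by induction on the derivation of \(\p t:\tty\tr\pi'\), with a case analysis on the last rule applied; I restrict attention to closed applicative \(t\), which is the case of interest since the lemma is applied to terms obtained from \(S\) by \(\GRAM\)-reductions, and these preserve applicativity. The driving observation is that \(\pi'\) being a ground tree — containing only terminals, with no variables, non-terminals, \(\lambda\)-abstractions, or syntactic sets — severely restricts which rule can be the last step.

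First I would rule out the infeasible cases. The rules \rname{Tr2-Var} and \rname{Tr2-NT} respectively produce outputs \(x_\tty\) and \(A_\tty\), neither of which is a ground tree. \rname{Tr2-Abs} cannot apply since \(t\) is applicative. \rname{Tr2-Set} produces a set, not a term. And \rname{Tr2-App} with \(k\geq 1\) yields \(vU_1\cdots U_k\) whose argument positions literally contain the sets \(U_i\), which is incompatible with \(\pi'\) being a ground tree.

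The delicate case is \rname{Tr2-App} with \(k=0\), which gives \(t=sr\) and \(\p s:\top\to\tty\tr v=\pi'\). I would dispose of this by an auxiliary head observation: for any applicative \(s\) of arrow simple sort \(\sigma\to\T\) and any derivation \(\TTE\p s:\tty'\tr u\), the leftmost head symbol of \(u\) is a variable or a non-terminal (annotated with an intersection type). A short side-induction on \(s\) establishes this, exploiting that the head of \(s\) itself must be a variable or a non-terminal: nullary terminals have sort \(\T\), and by the standing assumption \(\br\) occurs only fully applied (hence also at sort \(\T\)). Applied to our \(s\), this observation makes the leftmost head of \(\pi'\) a variable or non-terminal, contradicting \(\pi'\) being a ground tree.

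The remaining constructive cases are \rname{Tr2-Const0}, \rname{Tr2-Const1}, and \rname{Tr2-Const2}. The first two are immediate by taking \(\pi=t\). For \rname{Tr2-Const2} with \(t=\br\,t_0\,t_1\), the rule splits into three subcases according to \((\tty_0,\tty_1)\). When both are \(\Tplus\), we have \(\pi'=\br\,u_0\,u_1\) and the induction hypothesis applied to \(\p t_i:\Tplus\tr u_i\) (\(i\in\{0,1\}\)) yields \(\pi_i\) with \(\remeps{\leaves(\pi_i)}=\leaves(u_i)\); take \(\pi=\br\,\pi_0\,\pi_1\). When exactly one of \(\tty_i\) is \(\Tempty\), we have \(\pi'\) equal to the other \(u_i\); I would combine the induction hypothesis on the \(\Tplus\) premise with Lemma~\ref{lem:tr2:weak-normalization} on the \(\Tempty\) premise and take \(\pi=\br\,\pi_0\,\pi_1\), noting that \(\remeps{\cdot}\) erases the \(\Te\)-only frontier of the \(\Tempty\) side. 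When both are \(\Tempty\), \(\pi'=\Te\) and Lemma~\ref{lem:tr2:weak-normalization} applied to both premises suffices. The dichotomies \(\pi'\neq\Te\) when \(\tty=\Tplus\) and \(\pi'=\Te\) when \(\tty=\Tempty\) fall out of the same case analysis inductively. The main obstacle is the \rname{Tr2-App}-with-\(k=0\) impossibility, which relies on the head observation together with the fully-applied-\(\br\) assumption.
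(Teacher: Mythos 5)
Your proof is correct and takes essentially the same route as the paper's: induction on the derivation, observing that only \rname{Tr2-Const0}, \rname{Tr2-Const1}, \rname{Tr2-Const2} can output a ground tree, and then handling \rname{Tr2-Const2} by the three-way case split on \((\tty_0,\tty_1)\) using the induction hypothesis on \(\Tplus\) premises and Lemma~\ref{lem:tr2:weak-normalization} on \(\Tempty\) premises — the paper simply asserts the rule restriction, while you make the justification explicit via the head observation (head of the transformed term is an annotated variable or non-terminal, using the fully-applied-\(\br\) assumption). One minor imprecision: dismissing \rname{Tr2-App} with \(k\ge 1\) merely because the output "contains sets \(U_i\)" is not conclusive on its own, since a ground tree viewed as an extended term is exactly an application to singleton sets (recall \(u_0\,u_1\) abbreviates \(u_0\set{u_1}\)); but your head observation applies verbatim to the function part in that case as well (it has arrow sort, and no rule outputs a bare or partially applied \(\br\)), so that case is excluded for the same reason as \(k=0\).
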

\begin{proof}
This follows by induction on the derivation of \(\p t:\tty\tr \pi'\).
Since the output of transformation is a tree, 
the last rule used to derive \(\p t:\tty\tr \pi'\) must be 
\rname{Tr2-Const0}, \rname{Tr2-Const1}, or \rname{Tr2-Const2}.
The cases for \rname{Tr2-Const0} and \rname{Tr2-Const1} are trivial.
If the last rule is \rname{Tr2-Const2}, we have:
\(t=\TT{br}\,t_0\,t_1\) with \(\p t_i:\tty_i\tr u_i\) for \(i\in\set{0,1}\).
We perform case analysis on \(\tty_0\) and \(\tty_1\).
\begin{itemize}
\item Case \(\tty_0=\tty_1=\Tplus\):
In this case, \(\pi'=\TT{br}\,u_0\,u_1\) and \(\tty=\Tplus\). 
For each \(i\in\set{0,1}\), by the induction hypothesis
there exists \(\pi_i\) such that
\(t_i\redswith{\GRAM} \pi_i\) %
and \(\remeps{\leaves(\pi_i)}=\leaves(u_i)\).
Thus, we have 
\(t\redswith{\GRAM} \pi\), \(\pi'\neq\Te\), and \(\remeps{\leaves(\pi)}=\leaves(\pi')\) for \(\pi=\TT{br}\,\pi_0\,\pi_1\).
\item Case \(\tty_i=\Tplus\) and \(\tty_{1-i}=\Tempty\) for some \(i\in\set{0,1}\).
In this case, \(\pi'= u_i\) and \(\tty=\Tplus\). 
By the induction hypothesis, there exists \(\pi_i\) such that
\(t_i\redswith{\GRAM} \pi_i\), \(\pi' \neq \Te\), and 
\(\remeps{\leaves(\pi_i)}=\leaves(\pi')\).
By Lemma~\ref{lem:tr2:weak-normalization}, there exists \(\pi_{1-i}\) such that \(t_{1-i}\redswith{\GRAM}
      \pi_{1-i}\) and \(\leaves(\pi_{1-i}) \in \Te^*\).
Thus, we have 
\(t\redswith{\GRAM} \pi\), \(\pi' \neq \Te\), and \(\remeps{\leaves(\pi)}=\leaves(\pi')\) for \(\pi=\TT{br}\,\pi_0\,\pi_1\).
\item Case \(\tty_0=\tty_1=\Tempty\). 
In this case, \(\pi'=\Te\) and \(\tty=\Tempty\).
By Lemma~\ref{lem:tr2:weak-normalization}, for each \(i\in\set{0,1}\)
there exists \(\pi_{i}\) such that \(t_{i}\redswith{\GRAM}
      \pi_{i}\) and \(\leaves(\pi_{i}) \in \Te^*\).
Thus, we have 
\(t\redswith{\GRAM} \pi\), \(\pi'=\Te\), and \(\remeps{\leaves(\pi)}=\empword\) for \(\pi=\TT{br}\,\pi_0\,\pi_1\).
\end{itemize}

\end{proof}

We are now ready to prove soundness of the transformation.
\begin{lemma}[soundness]
\label{lem:tr2-soundness}
If \(\GRAM\tr \GRAM'\), then
\(\remeps{\LLang(\GRAM)}\supseteq \LLangE(\GRAM')\).
\end{lemma}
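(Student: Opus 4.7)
The plan is to combine the right-to-left forward simulation (Lemma~\ref{lem:tr2:fwdsim-right2left}) with the ``concretization'' Lemma~\ref{lem:tr2:fwd-sim-ground-tree} that turns a transformed tree back into a concrete reduction in the source grammar. Suppose \(w\in\LLangE(\GRAM')\). By the definition of \(\LLangE\), either (a) there is a tree \(\pi'\) with \(S'\redswith{\GRAM'}\pi'\) and \(w=\leaves(\pi')\neq\Te\), or (b) \(w=\empword\) and \(S'\redswith{\GRAM'}\Te\). In both cases the first reduction step of the target reduction must use one of the two seed rules \(S'\to S_{\Tempty}\) or \(S'\to S_{\Tplus}\) introduced in \rname{Tr2-Gram}, so we may fix \(\tty\in\set{\Tempty,\Tplus}\) and assume \(S_{\tty}\redswith{\GRAM'}\pi'\).

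Since \(\p S:\tty\tr S_{\tty}\) holds by rule \rname{Tr2-NT}, I would now iterate Lemma~\ref{lem:tr2:fwdsim-right2left} along the reduction \(S_\tty\redswith{\GRAM'}\pi'\): each single \(\GRAM'\)-step \(u\redwith{\GRAM'}u'\) with \(\p t:\tty\tr u\) is lifted to a (possibly several-step) reduction \(t\redswith{\GRAM}t'\) with \(\p t':\tty\tr u'\). A straightforward induction on the number of \(\GRAM'\)-steps then yields an applicative term \(t\) such that \(S\redswith{\GRAM}t\) and \(\p t:\tty\tr\pi'\). Applying Lemma~\ref{lem:tr2:fwd-sim-ground-tree} to \(\p t:\tty\tr\pi'\) produces a source tree \(\pi\) with \(t\redswith{\GRAM}\pi\) satisfying: if \(\tty=\Tplus\) then \(\pi'\neq\Te\) and \(\remeps{\leaves(\pi)}=\leaves(\pi')\), and if \(\tty=\Tempty\) then \(\pi'=\Te\) and \(\remeps{\leaves(\pi)}=\empword\). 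Concatenating the two reductions gives \(S\redswith{\GRAM}\pi\), hence \(\remeps{\leaves(\pi)}\in\remeps{\LLang(\GRAM)}\).

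A quick case analysis finishes the argument. In case (a), \(\leaves(\pi')=w\neq\Te\) rules out \(\tty=\Tempty\) (which would force \(\pi'=\Te\)), so \(\tty=\Tplus\) and \(\remeps{\leaves(\pi)}=\leaves(\pi')=w\). In case (b), \(\pi'=\Te\) rules out \(\tty=\Tplus\), so \(\tty=\Tempty\) and \(\remeps{\leaves(\pi)}=\empword=w\). Either way \(w\in\remeps{\LLang(\GRAM)}\), which is what we need.

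I do not expect any real obstacle at this stage: the soundness claim is essentially a corollary of the two simulation lemmas already proved. The only tiny technical point to check is that the iteration of Lemma~\ref{lem:tr2:fwdsim-right2left} is well-defined even when an intermediate target term \(u\) is not itself a ground tree, but this is immediate because that lemma only assumes \(\tty\DCOL\T\), which is preserved throughout and holds for \(\tty\in\set{\Tempty,\Tplus}\). The genuinely delicate work, namely ensuring that \(\Tempty\)-typed terms really do reduce only to trees whose frontier lies in \(\Te^*\) (Lemma~\ref{lem:tr2:weak-normalization}) and that \(\br\)-elimination in \rname{Tr2-Const2} is sound, has already been absorbed into Lemmas~\ref{lem:tr2:fwdsim-right2left} and~\ref{lem:tr2:fwd-sim-ground-tree}.
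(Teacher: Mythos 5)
Your argument follows essentially the same route as the paper's proof: fix \(\tty\in\set{\Tempty,\Tplus}\) from the first (seed-rule) step, iterate Lemma~\ref{lem:tr2:fwdsim-right2left} along \(S_\tty\redswith{\GRAM'}\pi'\) to get \(S\redswith{\GRAM}t\) with \(\p t\COL\tty\tr\pi'\), then apply Lemma~\ref{lem:tr2:fwd-sim-ground-tree} and do the \(\Tplus\)/\(\Tempty\) case split. The conclusion and the case analysis are correct.

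The one step you assert without justification is ``\(\p S\COL\tty\tr S_\tty\) holds by rule \rname{Tr2-NT}''. That rule is not free: its middle and rightmost premises require that \emph{some} rule body of \(S\) in \(\RULES\) can itself be transformed at type \(\tty\) (this premise is exactly the design point that makes \rname{Tr2-Const2} sound, as you note yourself at the end). If, say, no rule of \(S\) is transformable at \(\Tempty\), then \(\p S\COL\Tempty\tr S_{\Tempty}\) is simply not derivable, even though the seed rule \(S'\to S_{\Tempty}\) is present in \(\RULES'\). The paper discharges the premise before invoking \rname{Tr2-NT}: since \(\pi'\) is a tree and \(S_\tty\) is not, the reduction \(S_\tty\redswith{\GRAM'}\pi'\) has at least one step, so its first step uses some rule \((S_\tty\to u)\in\RULES'\); by \rname{Tr2-Rule} this rule was generated from a source rule \((S\to t)\in\RULES\) with \(\p t\COL\tty\tr u\), which is precisely the missing premise, and only then does \rname{Tr2-NT} yield \(\p S\COL\tty\tr S_\tty\). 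This is a small, easily repaired omission, but you should add it; as written, the starting judgment of your iteration is unproved exactly in the situation the side condition was introduced to guard against.
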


\begin{proof}
Suppose \(w\in \LLangE(\GRAM')\). 
Then there exist \(\tty\in\set{\Tplus,\Tempty}\) and \(\pi'\) such that \(S_\tty\redswith{\GRAM'} \pi'\) and
\[
\begin{array}{ll}
w=\leaves(\pi') &\quad (\text{if \(\pi' \neq \Te\)})
\\
w=\empword &\quad (\text{if \(\pi' = \Te\)}).
\end{array}
\]
Now \(S_{\tty} \redwith{\GRAM'} u \redswith{\GRAM'} \pi'\) for some \(u\),
and so \((S_{\tty} \red u) \in \GRAM'\).
By \rname{Tr2-Rule}, there exists \(t\) such that \((S \red t) \in \GRAM\) and \(\p t : \tty \tr u\).
Hence, by \rname{Tr2-NT}, we have \(\p S : \tty \tr S_{\tty}\).
By repeated applications of Lemma~\ref{lem:tr2:fwdsim-right2left},
we have \(S\redswith{\GRAM} t\) and \(\p t:\tty\tr \pi'\).
By Lemma~\ref{lem:tr2:fwd-sim-ground-tree},
\(\tty=\Tempty\) iff \(\pi' = \Te\),
and we have \(t\redswith{\GRAM}\pi\) and \(\remeps{\leaves(\pi)}=w\) for some \(\pi\).
Thus, we have \(S\redswith{\GRAM} \pi\) and 
\(\remeps{\leaves(\pi)}=w\) as required.
\end{proof}

\begin{pfof}{Theorem~\ref{th:tr2-correctness}}
The fact that \(\GRAM'\) is a tree grammar of order at most \(n\) follows immediately from Lemma~\ref{lem:tr2-preserves-types}. 
\(\remeps{\LLang(\GRAM)}=\LLangE(\GRAM')\) 
follows immediately from Lemmas~\ref{lem:tr2-completeness} and \ref{lem:tr2-soundness}.
\qed
\end{pfof}

\section{Proof of Theorem~\ref{th:main-converse}}
\label{sec:ProofConverseMainTheorem}

First we give a formal definition of the construction for Theorem~\ref{th:main-converse}.
Let \(\GRAM'=(\TERMS',\NONTERMS',\RULES',S')\) be an order-\(n\) tree grammar 
such that no word in \(\LLangE(\GRAM')\) contains \(\Te\).
We define a grammar \(\GRAM=(\TERMS,\NONTERMS,\RULES,S)\) as:
\begin{align*}
\TERMS\defe&\,\set{a\mapsto 1 \mid \TERMS'(a)=0, a\neq \Te} \cup \set{\Te \mapsto 0}
\\
\NONTERMS\defe&\, \set{A\COL\conv{\kappa} \mid (A\COL\kappa) \in \NONTERMS'}
\cup\set{E \COL \T\to\T,\ \Br \COL (\T\to\T)\to(\T\to\T)\to(\T\to\T),\ S\COL \T}
\\
\RULES\defe&\, \set{A \, x_1 \,\cdots \, x_\ell \, x \to \conv{t} x \mid (A \, x_1 \,\cdots \, x_\ell \to t) \in \RULES'}
\\ \cup&\, 
\set{E\,x\to x,\ \Br\,f\,g\,x\to f(g\, x),\ S \to S' \Te}
\end{align*}
where \(E\), \(\Br\), and \(S\) are fresh non-terminals, and \(\conv{(-)}\) is defined as follows.
\begin{align*}
&\conv{\T}\defe \T\to\T \qquad \conv{(\kappa_1 \to \kappa_2)} \defe \conv{\kappa_1} \to \conv{\kappa_2}
\\
&\conv{x} \defe x \qquad
 \conv{\Te} \defe E \qquad \conv{\br} \defe \Br \qquad
\conv{a} \defe a \ \ (\TERMS'(a)=0, a\neq \Te) 
\\&
\conv{A} \defe A \qquad
\conv{(s\,t)} \defe \conv{s}\,\conv{t} %
\end{align*}

If \(n=0\), then the above \(\GRAM\) is an order-2 grammar.
In this case, any occurrence of \(\br\) in \(\GRAM'\) must be fully applied, 
hence we replace \(\conv{\br} \defe \Br\) in the above definition with
\[
\conv{(\br\,s\,t)} \defe A_{s,t}
\]
and add a fresh non-terminal \(A_{s,t}\COL\T\to\T\) and a rule
\[
A_{s,t}\,x \to \conv{s}\,(\conv{t}\,x)
\]
for each \(s\) and \(t\) such that \(\br\,s\,t\) occurs in the right hand side of some rule of \(\GRAM'\).
We write this modified grammar as \(\GRAM_0\).

\begin{lemma}
For \((A \, x_1 \,\cdots \, x_\ell \to t) \in \RULES'\)
where \(\NONTERMS'(A) = \kappa_1\ra\cdots\ra\kappa_\ell\ra\T\),
\[
\NONTERMS,x_1\COL\conv{\kappa_1},\ldots,x_\ell\COL\conv{\kappa_\ell},x\COL\T \pK \conv{t}\,x:\T.
\]
\end{lemma}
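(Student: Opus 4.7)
\begin{pfof}{Lemma}
The plan is to establish a more general typing preservation statement for the translation $\conv{(\cdot)}$ on arbitrary terms, and then specialize it to the right-hand side of a rule.

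First, I would extend $\conv{(\cdot)}$ pointwise to type environments, setting $\conv{\Gamma} \defe x_1\COL\conv{\sigma_1},\ldots,x_m\COL\conv{\sigma_m}$ whenever $\Gamma = x_1\COL\sigma_1,\ldots,x_m\COL\sigma_m$. I would then prove the general statement: for every applicative term $s$, if $\NONTERMS',\Gamma \pK s:\kappa$ in the source grammar, then $\NONTERMS,\conv{\Gamma} \pK \conv{s}:\conv{\kappa}$ in the target grammar. The proof proceeds by straightforward induction on the structure of $s$ (equivalently, on the typing derivation). The variable case is immediate from the definitions of $\conv{x}$ and $\conv{\Gamma}$. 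The non-terminal case follows because $\NONTERMS$ is defined precisely so that $\NONTERMS(A) = \conv{\NONTERMS'(A)}$. For the application case $s = s_1\,s_2$, if $s_1\COL\kappa' \ra \kappa$ and $s_2\COL\kappa'$, the induction hypotheses give $\conv{s_1}\COL\conv{\kappa'}\ra\conv{\kappa}$ and $\conv{s_2}\COL\conv{\kappa'}$, hence $\conv{s_1}\,\conv{s_2}\COL\conv{\kappa}$, as required.

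The terminal cases are the places to check carefully. For $\Te$ (originally of sort $\T$), $\conv{\Te} = E$ and $\NONTERMS(E) = \T\ra\T = \conv{\T}$. For $\br$ (originally of sort $\T\ra\T\ra\T$), $\conv{\br} = \Br$ and $\NONTERMS(\Br) = (\T\ra\T)\ra(\T\ra\T)\ra(\T\ra\T) = \conv{\T\ra\T\ra\T}$. For any other nullary terminal $a\neq\Te$ (originally of sort $\T$), $\conv{a} = a$ and $\TERMS(a) = 1$, so $a$ has sort $\T\ra\T = \conv{\T}$. All three cases match, so the inductive step for terminals holds.

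Specializing to the rule $(A\,x_1\cdots x_\ell \to t) \in \RULES'$, the well-formedness of $\GRAM'$ gives $\NONTERMS', x_1\COL\kappa_1, \ldots, x_\ell\COL\kappa_\ell \pK t:\T$. Applying the general statement yields $\NONTERMS, x_1\COL\conv{\kappa_1}, \ldots, x_\ell\COL\conv{\kappa_\ell} \pK \conv{t}:\conv{\T}$, i.e., $\conv{t}$ has type $\T\ra\T$. Weakening with $x\COL\T$ and applying $\conv{t}$ to $x$ gives $\NONTERMS, x_1\COL\conv{\kappa_1}, \ldots, x_\ell\COL\conv{\kappa_\ell}, x\COL\T \pK \conv{t}\,x:\T$, which is exactly the claim. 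No step is genuinely difficult; the only point requiring attention is verifying that each auxiliary non-terminal introduced into $\NONTERMS$ ($E$, $\Br$) is given precisely the sort that $\conv{(\cdot)}$ assigns to the symbol it replaces.
\end{pfof}
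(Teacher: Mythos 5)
Your proposal is correct and follows essentially the same route as the paper: the paper's proof likewise generalizes to the statement that \(\NONTERMS',x_1\COL\kappa_1,\ldots,x_\ell\COL\kappa_\ell\pK t:\kappa\) implies \(\NONTERMS,x_1\COL\conv{\kappa_1},\ldots,x_\ell\COL\conv{\kappa_\ell}\pK \conv{t}:\conv{\kappa}\), proved by induction on \(t\), and then specializes to \(\kappa=\T\) and applies \(\conv{t}\) to \(x\COL\T\). You merely spell out the terminal, variable, non-terminal, and application cases in more detail, which matches the intended argument.
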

\begin{proof}
By definition we have \(\NONTERMS',x_1\COL\kappa_1,\ldots,x_\ell\COL\kappa_\ell\pK t:\T\),
and we can show by induction on \(t\) that
\(\NONTERMS',x_1\COL\kappa_1,\ldots,x_\ell\COL\kappa_\ell\pK t:{\kappa}\) implies
\(\NONTERMS,x_1\COL\conv{\kappa_1},\ldots,x_\ell\COL\conv{\kappa_\ell} \pK \conv{t}:\conv{{\kappa}}\).
\end{proof}
By the above lemma, we can show that \(\GRAM\) is well-defined as a grammar, and
clearly \(\odr{\conv{\kappa}} = \odr{\kappa}+1\); thus \(\GRAM\) is an order-\((n+1)\) grammar if \(n\ge 1\).
When \(n=0\), similarly we can show that \(\GRAM_0\) is an order-\(1\) grammar.

In the rest of this section, we show \(\Wlang(\GRAM) = \LLangE(\GRAM')\) for \(n\ge 0\); 
note that it is clear that \(\Wlang(\GRAM) = \Wlang(\GRAM_0)\) 
since the above modification for defining \(\GRAM_0\) is just unfolding of the (unique) rule for \(\Br\).

To show the goal, we use a logical relation for cpo semantics (cf. e.g., \cite{DBLP:books/daglib/0018087}
and the proof of Lemma~\ref{lem:context} in Appendix~\ref{sec:app1-basic}).
First, recall from Appendix~\ref{sec:app1-basic} that we can embed a grammar \(\GRAM\) into a \lambday-calculus,
which we write as \(\lambdayg{\GRAM}\),
and here we use binary choice operators \(+_\kappa\COL\kappa \to \kappa \to \kappa\) to represent the finite
nondeterminism;
thus, \(\lambdayg{\GRAM}\) is the \lambday-calculus that has the terminals of \(\GRAM\) and \(+\) as constants.
We extend the definition of \(\conv{(-)}\) to \lambdayg{\GRAM'}:
\begin{align*}
\conv{(\lambda x\COL\kappa .t)} \defe \lambda x\COL\conv{\kappa} .\conv{t}
\qquad
\conv{(Y_\kappa)} \defe Y_{\conv{\kappa}}
\qquad
\conv{(+_\kappa)} \defe +_{\conv{\kappa}}
\end{align*}

For a grammar \(\GRAM\), let \(\vtg{\GRAM}\) be the set of all the trees which consist of terminals of \(\GRAM\),
and we define \(\ip{-}_{\GRAM}\) as:
\begin{align*}
\ip{\T}_\GRAM &\defe (P(\vtg{\GRAM}),\subseteq)
\\
\ip{\kappa_1\to\kappa_2}_\GRAM &\defe \ip{\kappa_1}_\GRAM \cfun \ip{\kappa_2}_\GRAM
\\
\ip{a}_\GRAM(L_1)\cdots(L_n) &\defe \set{a\,\pi_1\,\cdots\,\pi_n \mid \pi_i \in L_i} \in P(\vtg{\GRAM}) 
 \qquad (\TERMS(a)=n,\ L_i \in P(\vtg{\GRAM}))
\end{align*}
where \(A \cfun B\) is the continuous function space from \(A\) to \(B\).
The adequacy theorem, i.e., \(\ip{t}_\GRAM = \set{\pi \mid t \reds \pi}\), holds as usual.
(\(\ip{t}_\GRAM \supseteq \set{\pi \mid t \reds \pi}\) is obtained by showing that
\(t \red t'\) implies \(\ip{t}_\GRAM \supseteq \ip{t'}_\GRAM\) by induction on the derivation of \(t \red t'\).
The converse immediately follows from the abstraction theorem 
{(i.e., \(\ip{t}\,R_{\kappa}\,t\) for any \(t\in\cterm{\kappa}\))} 
of the logical relation \(R\) given in the proof of Lemma~\ref{lem:context} in Appendix~\ref{sec:app1-basic} 
(cf.~\cite[Theorem~4.6]{DBLP:books/daglib/0018087}),
{with a trivial modification for the difference that here we are considering trees up to the equality
rather than up to \(\vsim\)}.)

We consider a logical relation \((\R_\kappa)_\kappa\)
between two models for \(\lambdayg{\GRAM'}\).
One model is the model \(\ip{-}_{\GRAM'}\) and
we define another model \(\ipc{-}\) for \(\lambdayg{\GRAM'}\), a model reflecting the translation \(\conv{(-)}\).
The interpretation of types is given by:
\begin{align*}
\ipc{\T}&\defe\ip{\conv{\T}}_{\GRAM} = P(\vtg{\GRAM}) \cfun P(\vtg{\GRAM})
\\
\ipc{\kappa_1\to\kappa_2}&\defe \ipc{\kappa_1}\cfun\ipc{\kappa_2}.
\end{align*}
For a constant \(c\), the interpretation is given as \(\ipc{c} \defe \ip{\conv{c}}_{\GRAM}\); specifically, for terminals,
\begin{align*}
\ipc{\Te}&\defe \id : P(\vtg{\GRAM}) \cfun P(\vtg{\GRAM})
\\
\ipc{a}&\defe (L \mapsto \set{a\,\pi \mid \pi \in L}) : P(\vtg{\GRAM}) \cfun P(\vtg{\GRAM}) \qquad
 (\TERMS'(a)=0,a\neq \Te)
\\
\ipc{\br}&\defe (f \mapsto (g \mapsto f \circ g)) : \ipc{\T} \cfun \ipc{\T} \cfun \ipc{\T}.
\end{align*}
Then, we can easily show that \(\ip{\conv{t}}_{\GRAM} = \ipc{t}\) for any term \(t\), by induction on \(t\).

The logical relation \((\R_\kappa)_\kappa\) is determined by \(\R_\T\), which is defined as:
\begin{align*}
&\R_\T \subseteq \ip{\T}_{\GRAM'} \times \ipc{\T} = P(\vtg{\GRAM'}) \times (P(\vtg{\GRAM}) \cfun P(\vtg{\GRAM}))
\\
&L'\,\R_\T\,f \text{ if for any \(L \in P(\vtg{\GRAM})\),}
\\&
\remeps{\set{\leaves(\pi) \mid \pi\in L'}} \concat
\set{a_1\cdots a_n \mid a_1(\cdots (a_n\,\TT{e})\cdots) \in L}
\\&=
\set{a_1\cdots a_n \mid a_1(\cdots (a_n\,\TT{e})\cdots) \in f(L)}
\end{align*}
where \(A\concat B\) is the concatenation of word languages \(A\) and \(B\).
If we show the abstraction theorem 
(i.e., \(\ip{t}_{\GRAM'}\,\R_\kappa\,\ipc{t}\) for any closed term \(t\)), 
then we can show our goal as follows:
Since \(\ipc{t} = \ip{\conv{t}}_{\GRAM}\), 
\(\ipc{S'}(\set{\Te})
= \ip{\conv{S'}}_{\GRAM} (\ip{\Te}_{\GRAM}) 
= \ip{S'}_{\GRAM} (\ip{\Te}_{\GRAM}) 
= \ip{S}_{\GRAM} \).
Now we have \(\ip{S'}_{\GRAM'}\,\R_\T\,\ipc{S'}\), which means---with \(L=\set{\Te}\) and the adequacy---that
\(\remeps{\LLang(\GRAM')} \concat \set{\empword} = \Wlang(\GRAM)\).
By the assumption that no word in \(\LLangE(\GRAM')\) contains \(\Te\), we have
\(\remeps{\LLang(\GRAM')}=\LLangE(\GRAM')\).

To show the abstraction theorem, we only have to show
\(\ip{c}_{\GRAM'}\,\R\,\ipc{c}\) for each constant \(c \in \set{Y, \Te, a, \br, +}\).
The cases of \(\Te\), \(a\), and \(+\) are clear.
As usual, the case of \(Y\) follows from showing that
\(\R_\T\) is closed under the least upper bound of increasing chains and contains the bottom, which is clear.
To show the case of \(\br\), suppose that \(L'_1\,\R_\T\,f_1\) and \(L'_2\,\R_\T\,f_2\);
then, for \(L \in P(\vtg{\GRAM})\), we have:
\begin{align*}
&\remeps{\set{\leaves(\pi) \mid \pi\in \ip{\br}_{\GRAM'}(L'_1)(L'_2)}} \concat
\set{a_1\cdots a_n \mid a_1(\cdots (a_n\,\TT{e})\cdots) \in L}
\\ =\ &
\remeps{\set{\leaves(\pi) \mid \pi\in L'_1}} \concat
\remeps{\set{\leaves(\pi) \mid \pi\in L'_2}} \concat
\set{a_1\cdots a_n \mid a_1(\cdots (a_n\,\TT{e})\cdots) \in L}
\\ =\ &
\remeps{\set{\leaves(\pi) \mid \pi\in L'_1}} \concat
\set{a_1\cdots a_n \mid a_1(\cdots (a_n\,\TT{e})\cdots) \in f_2(L)}
\\ =\ &
\set{a_1\cdots a_n \mid a_1(\cdots (a_n\,\TT{e})\cdots) \in f_1(f_2(L))}
\\ =\ &
\set{a_1\cdots a_n \mid a_1(\cdots (a_n\,\TT{e})\cdots) \in \ipc{\br}(f_1)(f_2)(L)}.
\end{align*}

\section{Transformation for the assumption on the order of simple types}
\label{sec:TransAssumpOrder}

Let \(\GRAM=(\TERMS,\NONTERMS,\RULES,S)\) be a given word grammar.
In this section, we treat the non-terminals \(A\) of \(\GRAM\) differently from the variables \(x\).

We first define a transformation of simple types by induction:
\begin{align*}
\up{\T} &\defe \T
\\
\up{(\sty'\to\sty)} &\defe
\begin{cases}
\up{\sty'} \to \up{\sty} & (\sty' \neq \T \lor \odr{\sty} \le 1)
\\
(\T \to \T) \to \up{\sty} & (\sty' = \T \land \odr{\sty} > 1)
\end{cases}
\end{align*}
Note that if \(\odr{\sty} \le 1\) then
\(\up{(\sty)} = \sty\).
We extend this transformation to simple type environments \(\SE\); the extended transformation \(\eup{-}{\sty}\) is
parameterized by a simple type \(\sty\):
\begin{align*}
\eup{\nSE}{\sty} &\defe \nSE
\\
\eup{\cSE{x\COL\sty'}{\SE}}{\sty} &\defe
\begin{cases}
\cSE{x\COL\up{\sty'}}{\eup{\SE}{\sty}} & (\sty' \neq \T \lor \order{(\SE \to \sty)} \le 1)
\\
\cSE{x\COL\T\to\T}{\eup{\SE}{\sty}} & (\sty' = \T \land \order{(\SE \to \sty)} > 1)
\end{cases}
\\
\eup{\cSE{A\COL\sty'}{\SE}}{\sty} &\defe
\cSE{A\COL\up{\sty'}}{\eup{\SE}{\sty}}
\end{align*}
where \(\SE \to \sty\) is defined by induction on \(\SE\):
\(\nSE \to \sty \defe \sty\) and \((\cSE{x\COL\sty'}{\SE})\to\sty\defe \sty'\to(\SE\to\sty)\).

Let \(K\) be a fresh non-terminal of type \(\T\to\T\to\T\).
We define a transformation of derived terms in context (so that the next lemma holds):
\begin{align*}
\pup{\SE', x\COL\sty, \SE \pK x : \sty} \defe&
\begin{cases}
x & (\sty \neq \T \lor \order{(\SE \to \sty)} \le 1)
\\
x\,\Te & (\sty = \T \land \order{(\SE \to \sty)} > 1)
\end{cases}
\\
\pup{\SE', A\COL\sty, \SE \pK A : \sty} \defe&\ A
\\
\pup{\SE \pK a : \T^n\to\T} \defe&\ a
\\
\pup{\SE\pK \app{t_1}{t_2}:\sty}
\defe&
\begin{cases}
\up{t_1}\,\up{t_2}
 & (\sty' \neq \T \lor \odr{\sty}\le 1)
\\
\up{t_1}\,\big(\K\,\up{t_2}\big)
 & (\sty' =\T \land \odr{\sty}  > 1)
\end{cases}
\end{align*}
where suppose that we have \(\SE\pK t_2:\sty'\)
and 
\begin{align*}
\up{t_1} \defe&\ \pup{\SE\pK t_1:\sty'\ra\sty}
\\
\up{t_2} \defe&\ \pup{\SE\pK t_2:\sty'}.
\end{align*}

Now we define the new grammar: \(\GRAM' \defe (\TERMS,\NONTERMS',\RULES',S)\) where
\begin{align*}
\NONTERMS' \defe&\ 
\set{K\COL\T\to\T\to\T} \cup \set{A \COL \up{\sty} \mid \NONTERMS(A) = \sty}
\\
\RULES' \defe&\ 
\set{K\,x\,y \to x} 
\\ \cup&\ 
\set{ A\,x_1\,\cdots\,x_\ell \to \pup{\NONTERMS,x_1\COL\kappa_1,\ldots,x_\ell\COL\kappa_\ell\pK t:\T} 
\mid A\,x_1\,\cdots\,x_\ell \to t \in \RULES}.
\end{align*}
To show that the grammar \(\GRAM'\) is well-defined, we need the following lemma.

\begin{lemma}
Given a derived term in context:
\[
\SE \pK t :\sty
\]
we have
\[
K\COL\T\to\T\to\T, \eup{\SE}{\sty} \pK 
\pup{\SE\pK t:\sty} : \up{\sty}.
\]
\end{lemma}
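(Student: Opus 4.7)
The proof will proceed by induction on the derivation of $\SE \pK t:\sty$, with case analysis on the last rule used: variable, non-terminal, terminal, or application. Since the statement concerns only simple types (no intersection types), weakening, strengthening, and the standard application rule are all available throughout.

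For the three base cases, verification is essentially by inspection of the definitions. For a variable $\SE',x:\sty,\SE\pK x:\sty$, the case split in $\pup$---emit $x\,\Te$ exactly when $\sty=\T$ and $\order(\SE\to\sty)>1$---matches exactly the case split that $\eup{-}{\sty}$ applies at the binding $x:\sty$, namely assign $x:\T\to\T$ under precisely the same condition; hence either $x:\up{\sty}$ types directly, or $x\,\Te:\T=\up{\T}$ types via the $\T\to\T$ binding. Non-terminals always produce the symbol $A$ against the binding $A:\up{\sty}$. For a terminal $a:\T^n\to\T$ a short induction on $n$ gives $\up{(\T^n\to\T)}=\T^n\to\T$, because $\order(\T^{n-1}\to\T)\le 1$ so no arrow is ever rewritten.

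For the application case $\SE\pK t_1\,t_2:\sty$ with $t_1:\sty'\to\sty$ and $t_2:\sty'$, I split on whether $\sty'=\T$ and $\order(\sty)>1$. Outside this condition $\up{(\sty'\to\sty)}=\up{\sty'}\to\up{\sty}$, so combining the two IHs by a standard application step yields $\up{t_1}\,\up{t_2}:\up{\sty}$. Inside the condition, $\up{(\T\to\sty)}=(\T\to\T)\to\up{\sty}$; the IHs give $\up{t_1}:(\T\to\T)\to\up{\sty}$ and $\up{t_2}:\T$, and since $K:\T\to\T\to\T$ is in the outer environment, $K\,\up{t_2}:\T\to\T$ supplies the wrapped argument to produce $\up{\sty}$.

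The hard part will be that the induction hypothesis applied to $t_i$ delivers a typing in $\eup{\SE}{\text{type of } t_i}$ rather than in the goal environment $\eup{\SE}{\sty}$; these can disagree at a $\T$-typed binding $y$ whose trailing environment $\SE_y$ is short enough that the condition $\order(\SE_y\to(-))>1$ is parameter-sensitive. To reconcile, I plan to exploit that at every free occurrence of such a $y$ the variable rule of $\pup$ has already fixed the emitted form using $\order(\SE_y\to\T)>1$ (because the variable's own type is $\T$), so the two environments need only agree on bindings that actually occur free in the subterm. Strengthening the induction hypothesis to permit any parameter that keeps this invariant on free $\T$-variables---an invariance automatic at top level, where the outer target is $\T$---should suffice, and carrying out this bookkeeping is where the main care will be needed.
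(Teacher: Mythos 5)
Your induction and the type computations are exactly those of the paper's proof: inspection of the three base cases, then the two-way split in the application case, reading $\up{(\sty'\to\sty)}$ and $\up{\sty'}$ off the induction hypotheses and inserting $\K$ in the branch $\sty'=\T\land\odr{\sty}>1$. Where you go beyond the paper is the environment bookkeeping, and you are right to insist on it: the paper's proof of the application case says nothing about the fact that the induction hypotheses live in $\eup{\SE}{\sty'\to\sty}$ and $\eup{\SE}{\sty'}$ while the goal is stated in $\eup{\SE}{\sty}$. Your worry is not hypothetical. Read literally, the parameterized statement even fails: take $\SE=(\ldots,F\COL\T\to(\T\to\T)\to\T,\ldots,y\COL\T)$ with $y$ the last binding, and $t=F\,y$ of type $(\T\to\T)\to\T$. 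Since $\odr{(\T\to\T)\to\T}=2>1$, the application clause gives $\pup{\SE\pK F\,y:(\T\to\T)\to\T}=F\,(\K\,y)$, where the occurrence of $y$ stays $y$ (not $y\,\Te$) because its own type is $\T$ and $\order(\nSE\to\T)=0\le 1$; but $\eup{\SE}{(\T\to\T)\to\T}$ rebinds $y\COL\T\to\T$ because the parameter has order $2$, so $\K\,y$ is ill-typed in that environment, while the same term is well typed under $\eup{\SE}{\T}$, where $y\COL\T$. So some adjustment of the invariant, as you propose, is genuinely needed, and the paper's one-line treatment of the application case glosses over it.

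Your repair works: allow any parameter whose treatment of the free $\T$-typed variables agrees with the occurrence rule; unused bindings are harmless in simple typing, and $\FV(t_i)\subseteq\FV(t_1\,t_2)$ lets you instantiate both induction hypotheses at the goal's parameter, while at a variable the invariant forces the binding to match the emitted form. But you can drop the free-variable bookkeeping entirely: prove the judgement once and for all in the single environment $\eup{\SE}{\T}$ (equivalently, with any parameter of order at most $1$ --- all such parameters yield the same environment). At a variable occurrence the side condition of $\pup{-}$ is computed from the variable's own type $\T$ and its trailing bindings, so it agrees with $\eup{\SE}{\T}$ by construction, and in the application case all three judgements then share one environment, making the step immediate; the parameter-$\T$ instance is exactly the one invoked afterwards to type the right-hand sides of the rules of $\GRAM'$, so nothing is lost. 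One small correction to your diagnosis: the parameter-sensitivity of a binding $y\COL\T$ is not a matter of the trailing environment being ``short'' but of its orders --- it occurs precisely when every binding to the right of $y$ has order $0$.
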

\begin{proof}
The proof proceeds by induction on \(t\).

The case where \(t\) is a non-terminal or a terminal is clear.
The case \(t=x\) is also clear since for an environment of the form \(\SE', x\COL\sty, \SE\) we have:
\begin{align*}
\eup{\SE', x\COL\sty, \SE}{\sty} =
\begin{cases}
\dots,x\COL\up{\sty},\eup{\SE}{\sty}
& (\sty \neq \T \lor \order{(\SE \to \sty)} \le 1)
\\
\dots,x\COL\T\to\T,\eup{\SE}{\sty}
& (\sty = \T \land \order{(\SE \to \sty)} > 1)
\end{cases}
\end{align*}

In the case where \(t=t_1\,t_2\) and we have \(\SE\pK t_2:\sty'\),
by the induction hypotheses, 
the types of \(\pup{\SE\pK t_1:\sty'\ra\sty}\) and \(\pup{\SE\pK t_2:\sty'}\)
are \(\up{(\sty'\to\sty)}\) and \(\up{\sty'}\), respectively. Since
\begin{align*}
(\up{(\sty'\to\sty)}, \up{\sty'}) =
\begin{cases}
(\up{\sty'} \to \up{\sty}, \up{\sty'})
 & (\sty' \neq \T \lor \odr{\sty}\le 1)
\\
((\T\to\T) \to \up{\sty}, \T)
 & (\sty' =\T \land \odr{\sty}  > 1)
\end{cases}
\end{align*}
\(\pup{\SE\pK \app{t_1}{t_2}:\sty}\) has type \(\up{\sty}\).
\end{proof}

The grammar \(\GRAM'\) is well-defined:
For any rule \(A\,x_1\,\cdots\,x_\ell \to t\) in \(\RULES\) with
\begin{align*}&
\NONTERMS(A) = \kappa_1 \to \dots \to \kappa_\ell \to \T
\\&
\NONTERMS,x_1\COL\kappa_1,\ldots,x_\ell\COL\kappa_\ell\pK t:\T\,,
\end{align*}
by the above lemma, we have
\begin{align*}&
K\COL\T\to\T\to\T, \eup{\NONTERMS,x_1\COL\kappa_1,\ldots,x_\ell\COL\kappa_\ell}{\T} \pK 
\pup{\NONTERMS,x_1\COL\kappa_1,\ldots,x_\ell\COL\kappa_\ell\pK t:\T} :\T
\\&
K\COL\T\to\T\to\T, \eup{\NONTERMS,x_1\COL\kappa_1,\ldots,x_\ell\COL\kappa_\ell}{\T}
=
\NONTERMS',\eup{x_1\COL\kappa_1,\ldots,x_\ell\COL\kappa_\ell}{\T}
\end{align*}
and we can show that
\begin{align*}&
\eup{x_1\COL\kappa_1,\ldots,x_\ell\COL\kappa_\ell}{\T} \to \T
=
\up{((x_1\COL\kappa_1,\ldots,x_\ell\COL\kappa_\ell) \to \T)}
\end{align*}
by induction on the length \(\ell\).
Hence, the new rule in \(\RULES'\) is well-typed.

We can prove that \(\Lang(\GRAM) = \Lang(\GRAM')\), by showing that the graph relation
of the function \(\up{(-)}\) is a bisimulation relation
in a straightforward way.

 \else\fi


\begin{thebibliography}{10}

\bibitem{AehligSafety}
Klaus Aehlig, Jolie~G. de~Miranda, and C.-H.~Luke Ong.
\newblock Safety is not a restriction at level 2 for string languages.
\newblock In {\em Proceedings of FoSSaCS 2005}, volume 3441 of {\em LNCS},
  pages 490--504. Springer, 2005.

\bibitem{Ong09safety}
William Blum and C.{-}H.~Luke Ong.
\newblock The safe lambda calculus.
\newblock {\em Logical Methods in Computer Science}, 5(1), 2009.

\bibitem{Parys16diagonality}
Lorenzo Clemente, Pawel Parys, Sylvain Salvati, and Igor Walukiewicz.
\newblock The diagonal problem for higher-order recusion schemes is decidable.
\newblock In {\em Proceedings of LICS 2016}, 2016.

\bibitem{Czerwinski14}
Wojciech Czerwinski and Wim Martens.
\newblock A note on decidable separability by piecewise testable languages.
\newblock {\em CoRR}, abs/1410.1042, 2014.

\bibitem{DammTCS}
Werner Damm.
\newblock The {IO}- and {OI}-hierarchies.
\newblock {\em Theor. Comput. Sci.}, 20:95--207, 1982.

\bibitem{Engelfriet91}
Joost Engelfriet.
\newblock Iterated stack automata and complexity classes.
\newblock {\em Info. Comput.}, 95(1):21--75, 1991.

\bibitem{EngelfrietHTT}
Joost Engelfriet and Heiko Vogler.
\newblock High level tree transducers and iterated pushdown tree transducers.
\newblock {\em Acta Inf.}, 26(1/2):131--192, 1988.

\bibitem{Hague16POPL}
Matthew Hague, Jonathan Kochems, and C.{-}H.~Luke Ong.
\newblock Unboundedness and downward closures of higher-order pushdown
  automata.
\newblock In {\em Proceedings of POPL 2016}, pages 151--163, 2016.

\bibitem{Knapik01TLCA}
Teodor Knapik, Damian Niwinski, and Pawel Urzyczyn.
\newblock Deciding monadic theories of hyperalgebraic trees.
\newblock In {\em TLCA 2001}, volume 2044 of {\em LNCS}, pages 253--267.
  Springer, 2001.

\bibitem{Kobayashi13JACM}
Naoki Kobayashi.
\newblock Model checking higher-order programs.
\newblock {\em Journal of the ACM}, 60(3), 2013.

\bibitem{Kobayashi13LICS}
Naoki Kobayashi.
\newblock Pumping by typing.
\newblock In {\em Proceedings of LICS 2013}, pages 398--407. IEEE Computer
  Society, 2013.

\bibitem{KIT14FOSSACS}
Naoki Kobayashi, Kazuhiro Inaba, and Takeshi Tsukada.
\newblock Unsafe order-2 tree languages are context-sensitive.
\newblock In {\em Proceedings of FoSSaCS 2014}, volume 8412 of {\em LNCS},
  pages 149--163. Springer, 2014.

\bibitem{KMS13HOSC}
Naoki Kobayashi, Kazutaka Matsuda, Ayumi Shinohara, and Kazuya Yaguchi.
\newblock Functional programs as compressed data.
\newblock {\em Higher-Order and Symbolic Computation}, 2013.

\bibitem{KO09LICS}
Naoki Kobayashi and C.-H.~Luke Ong.
\newblock A type system equivalent to the modal mu-calculus model checking of
  higher-order recursion schemes.
\newblock In {\em Proceedings of LICS 2009}, pages 179--188. IEEE Computer
  Society Press, 2009.

\bibitem{Salvati15OI}
Gregory~M. Kobele and Sylvain Salvati.
\newblock The {IO} and {OI} hierarchies revisited.
\newblock {\em Inf. Comput.}, 243:205--221, 2015.

\bibitem{Ong06LICS}
C.-H.~Luke Ong.
\newblock On model-checking trees generated by higher-order recursion schemes.
\newblock In {\em LICS 2006}, pages 81--90. IEEE Computer Society Press, 2006.

\bibitem{Parys14}
Pawel Parys.
\newblock How many numbers can a lambda-term contain?
\newblock In {\em Proceedings of FLOPS 2014}, volume 8475 of {\em LNCS}, pages
  302--318. Springer, 2014.

\bibitem{SalvatiW15}
Sylvain Salvati and Igor Walukiewicz.
\newblock Typing weak {MSOL} properties.
\newblock In Andrew~M. Pitts, editor, {\em Proceedings of FoSSaCS 2015}, volume
  9034 of {\em LNCS}, pages 343--357. Springer, 2015.

\bibitem{DBLP:books/daglib/0018087}
Thomas Streicher.
\newblock {\em Domain-theoretic foundations of functional programming}.
\newblock World Scientific, 2006.

\bibitem{Tsukada14LICS}
Takeshi Tsukada and C.{-}H.~Luke Ong.
\newblock Compositional higher-order model checking via
  \emph{{\(\omega\)}}-regular games over b{\"{o}}hm trees.
\newblock In {\em Proceedings of {CSL-LICS} '14}, pages 78:1--78:10. {ACM},
  2014.

\bibitem{Zetzsche15}
Georg Zetzsche.
\newblock An approach to computing downward closures.
\newblock In {\em Proceedings of ICALP 2015}, volume 9135 of {\em LNCS}, pages
  440--451. Springer, 2015.

\end{thebibliography}
\end{document}